\newcommand{\op}{\mathrm{op}}
\newtheoremstyle{exampstyle}
{8pt} % Space above
{8pt} % Space below
{\it} % Body font
{} % Indent amount
{\bfseries} % Theorem head font
{.} % Punctuation after theorem head
{.5em} % Space after theorem head
{} % Theorem head spec (can be left empty, meaning `normal')
\theoremstyle{exampstyle}
\newtheorem{theorem}{Theorem}
\newtheorem{lemma}{Lemma}
\newtheorem{remark}{Remark}
\newtheorem{proposition}{Proposition}
\newtheorem{defn}{Definition}
\newtheorem{assump}{Assumption}
\numberwithin{equation}{section}
\numberwithin{example}{section}
\numberwithin{theorem}{section}
\numberwithin{lemma}{section}
\numberwithin{corollary}{section}
\numberwithin{proposition}{section}
\numberwithin{defn}{section}
\numberwithin{remark}{section}
\tikzset{
	treenode/.style = {shape=rectangle, rounded corners,
		draw, align=center,
		top color=white, bottom color=blue!20},
	root/.style     = {treenode, font=\Large, bottom color=yellow},
	env/.style      = {treenode, font=\ttfamily\normalsize},
	con/.style      = {treenode, font=\ttfamily, bottom color=green!25},
	nocon/.style    = {treenode, font=\ttfamily, bottom color=red!30},
	dummy/.style    = {circle,draw}
}
\newcommand{\eat}[1]{}
\newcommand{\indep}{\perp \!\!\! \perp}
\renewcommand{\hat}[1]{\widehat{#1}}
\renewcommand{\tilde}[1]{\widetilde{#1}}
\renewcommand{\P}{\mathbbm{P}}
\newcommand*{\rom}[1]{\expandafter\@slowromancap\romannumeral #1@}
\definecolor{LightCyan}{rgb}{0.88,1,1}
\definecolor{Gray}{gray}{0.9}
\newcommand{\HS}{\mathrm{HS}}
\newcommand{\F}{\mathcal{F}}
\newcommand{\V}{\mathcal{V}}
\newcommand{\G}{\mathcal{G}}
\newcommand{\E}{\mathbb{E}}
\newcommand{\p}{\mathbb{P}}
\newcommand{\R}{\mathbb{R}}
\newcommand{\h}{\mathcal{H}}
\newcommand{\X}{\mathcal{X}}
\newcommand{\W}{\mathcal{W}}
\newcommand{\Y}{\mathcal{Y}}
\newcommand{\Z}{\mathcal{Z}}
\newcommand{\Cov}{\mathrm{Cov}}
\newcommand{\Var}{\mathrm{Var}}
\newcommand{\ran}{\mathrm{ran}\, }
\newcommand{\mcg}{\mathcal{G}}
\newcommand{\emgn}{\mathcal{E}(\mathcal{G}_n)}
\newcommand{\tmk}{{\mathcal{M}}_{k}}
\begin{document}

\begin{frontmatter}
\title{Kernel Partial Correlation Coefficient --- a Measure of Conditional Dependence}
%\runtitle{Extensions of iso. reg. and TV denoising}
\runtitle{Kernel Partial Correlation Coefficient}
% indicate corresponding author with \corref{}
% \author{\fnms{John} \snm{Smith}\corref{}\ead[label=e1]{smith@foo.com}\thanksref{t1}}
% \thankstext{t1}{Thanks to somebody}
% \address{line 1\\ line 2\\ printead{e1}}
% \affiliation{Some University}

% \begin{aug}
%     \author{\fnms{Zhen} \snm{Huang}, \ead[label=e1]{zh2395@columbia.edu}}
%     % \address{393 Evans Hall \\ Berkeley, CA 94720 \\ \printead{e1}}
%     \author{\fnms{Nabarun} \snm{Deb},  \ead[label=e2]{nd2560@columbia.edu}}
%     % \address{423 Evans Hall \\ Berkeley, CA 94720 \\ \printead{e2}}
%     \and
%     \author{\fnms{Bodhisattva} \snm{Sen}\thanksref{t3}  \ead[label=e3]{bodhi@stat.columbia.edu}}
%     % \address{1255 Amsterdam Ave. \\ New York, NY 10027 \\ \printead{e3}}
%     \affiliation{
%     University of California, Berkeley\thanksmark{a1} and
%     Columbia University\thanksmark{a2}
%     }
    
%     %\thankstext{t1}{Supported by a NSF Graduate Research Fellowship}
%     %\thankstext{t2}{Supported by NSF CAREER Grant DMS-1654589}
%     \thankstext{t3}{Supported by NSF grant DMS-2015376}
    
%     \runauthor{Huang, Deb, and Sen}
    
%     \address{1255 Amsterdam Avenue \\
%     New York, NY 10027\\
%     \printead{e1}\\
%     \phantom{E-mail:}}
    
%     \address{1255 Amsterdam Avenue \\
%     New York, NY 10027\\
%     \printead{e2}\\
%     \phantom{E-mail:}}
    
%     \address{1255 Amsterdam Avenue \\
%     New York, NY 10027\\
%     \printead{e3}
%     }
%     \end{aug}

\begin{aug}
\author{\fnms{Zhen} \snm{Huang},}
% \address{393 Evans Hall \\ Berkeley, CA 94720 \\ \printead{e1}}
\author{\fnms{Nabarun} \snm{Deb},}
% \address{423 Evans Hall \\ Berkeley, CA 94720 \\ \printead{e2}}
\and
\author{\fnms{Bodhisattva} \snm{Sen}\thanksref{t3}}
% \address{1255 Amsterdam Ave. \\ New York, NY 10027 \\ \printead{e3}}
\affiliation{
University of California, Berkeley\thanksmark{a1} and
Columbia University\thanksmark{a2}
}

%\thankstext{t1}{Supported by a NSF Graduate Research Fellowship}
%\thankstext{t2}{Supported by NSF CAREER Grant DMS-1654589}
\thankstext{t3}{Supported by NSF grant DMS-2015376}

\runauthor{Huang, Deb, and Sen}

\address{1255 Amsterdam Avenue \\
New York, NY 10027\\
e-mail: \href{mailto:zh2395@columbia.edu}{\em \ttfamily zh2395@columbia.edu}\\
\phantom{E-mail:}}

\address{1255 Amsterdam Avenue \\
New York, NY 10027\\
e-mail: \href{mailto:nd2560@columbia.edu}{\em \ttfamily nd2560@columbia.edu}\\
\phantom{E-mail:}}

\address{1255 Amsterdam Avenue \\
New York, NY 10027\\
e-mail: \href{mailto:bodhi@stat.columbia.edu}{\em \ttfamily bodhi@stat.columbia.edu}
}
\end{aug}

\begin{abstract}
In this paper we propose and study a class of simple, nonparametric, yet interpretable measures of conditional dependence between two random variables $Y$ and $Z$ given a third variable $X$, all taking values in general topological spaces. The population version of any of these nonparametric measures --- defined using the theory of reproducing kernel Hilbert spaces (RKHSs) --- captures the strength of conditional dependence and it is 0 if and only if $Y$ and $Z$ are conditionally independent given $X$, and 1 if and only if $Y$ is a measurable function of $Z$ and $X$.
Thus, our measure --- which we call \emph{kernel partial correlation} (KPC) coefficient --- can be thought of as a nonparametric generalization of the classical partial correlation coefficient that possesses the above properties when $(X,Y,Z)$ is jointly normal. %In fact, if the \emph{linear kernel} is used, our measure reduces to the classical partial correlation squared \red{ for Gaussian data} which justifies the name \emph{kernel partial correlation coefficient} (KPC). Further, we show that KPC has a monotonicity property as the strength of conditional dependence increases between $Y$ and $Z$ given $X$.
We describe two consistent methods of estimating KPC. Our first method of estimation is graph-based and utilizes the general framework of geometric graphs, including $K$-nearest neighbor graphs and minimum spanning trees. A sub-class of these  estimators can be computed in near linear time and converges at a rate that automatically adapts to the intrinsic dimensionality of the underlying distribution(s). %In fact, as far as we are aware, these are the only procedures that possess all the above mentioned desirable properties. 
Our second strategy involves direct estimation of conditional mean embeddings using cross-covariance operators in the RKHS framework. %, and it reduces to the empirical version of the classical partial correlation squared when linear kernels are used under mild assumption.
%Both methods are shown to be consistent and are easily implementable.
Using these empirical measures we develop forward stepwise (high-dimensional) nonlinear variable selection algorithms. We show that our algorithm, using the graph-based estimator, yields a provably consistent model-free variable selection procedure, even in the high-dimensional regime when the number of covariates grows exponentially with the sample size, under suitable sparsity assumptions. %\red{We apply our methods to the problem of the problem of variable selection and .... }.
%which holds without any additional distributional assumption on $(X,Y,Z)$. 
Extensive simulation and real-data examples illustrate the superior performance of our methods compared to existing procedures. The recent conditional dependence measure proposed by~\citet{azadkia2019simple} can also be viewed as a special case of our general framework.
\end{abstract}

\begin{keyword}[class=MSC]
\kwd[Primary ]{62H20, 62G08}
% \kwd{geometric graphs}
\kwd[; secondary ]{60D05}
\end{keyword}

\begin{keyword}
\kwd{adaptation to the intrinsic dimensionality, characteristic kernel, conditional mean embedding, cross-covariance operator, geometric graph functionals, maximum mean discrepancy, metric-space valued random variables, minimum spanning tree,  model-free nonlinear variable selection algorithm, nearest neighbor methods, reproducing kernel Hilbert spaces}
\end{keyword}

\end{frontmatter}

%\tableofcontents
\section{Introduction}\label{sec:introduction}

%Suppose that $(X,Y,Z) \sim $ is a random element on $\X \times \Y \times \Z$ with marginal distributions $P_X$, $P_Y$ and $P_Z$, on $\X$, $\Y$ and $\Z$, respectively. Let $P_{Y|xz}$ denote the regular conditional distribution of $Y$ given $X=x$ and $Z=z$, and let $P_{Y|x}$ denote the regular conditional distribution of $Y$ given $X=x$; 
	
Conditional independence is an important concept in modeling causal relationships~\cite{Dawid-79,pearl2000models}, in graphical models~\cite{Lauritzen-96, KF-09}, in economics~\cite{Chiappori00}, and in the literature of program evaluations~\cite{Heckman97}, among other fields. Measuring conditional dependence has many important applications in statistics such as Bayesian network learning \cite{pearl2000models,spirtes1993causation}, variable selection \cite{george2000variable,azadkia2019simple}, dimension reduction \cite{cook2002dimension,fukumizu2004dimensionality,li2018sufficient}, and conditional independence testing \cite{linton1996conditional, bergsma2004testing, su2007consistent,  song2009testing, huang2010testing, zhang2012kernel, su2014testing,  doran2014permutation, wang2015conditional, PatraEtAl-16, runge2018conditional, Ke2019, shah2020hardness}.

Suppose that $(X,Y,Z) \sim P$ where $P$ is supported on a subset of some topological space $\X\times \Y \times \Z$ with marginal distributions $P_{X}, P_Y$ and $P_{Z}$ on $\X,\Y$ and $\Z$ respectively. In this paper, we propose and study a class of simple, nonparametric, yet interpretable measures $\rho^2 \equiv \rho^2(Y,Z|X)$ and their empirical counterparts, that capture the strength of {\it conditional dependence} between ${Y}$ and ${Z}$, given $X$. %In addition, these empirical measures can be readily used as test statistics for testing the hypothesis of mutual independence between $X$ and $Y$.

To explain our motivation, consider the case when $\Y=\Z=\R$ and $(X,Y,Z)$ is jointly Gaussian, and suppose that we want to measure the strength of association between $Y$ and $Z$, with the effect of $X$ removed. In this case a well-known measure of this conditional dependence is the \emph{partial correlation} coefficient $\rho_{YZ\cdot X}$. In particular, the partial correlation squared $\rho_{YZ\cdot X}^2$ is: (i) A deterministic number between $[0,1]$; (ii) $0$ if and only if $Y$ is conditionally independent of $Z$ given $X$ (i.e., $Y\indep Z|X$); and (iii) $1$ if and only if $Y$ is a (linear) function of $Z$ given $X$. Moreover, any value between 0 and 1 of $\rho_{YZ\cdot X}^2$ conveys an idea of the strength of the relationship between $Y$ and $Z$ given $X$.

In this paper we answer the following question in the affirmative:  {\it ``Is there a nonparametric generalization of $\rho_{YZ\cdot X}^2$ having the above properties, that is applicable to random variables $X,Y,Z$ taking values in general topological spaces and having any joint distribution?''.}

% {\it ``Is there a nonparametric generalization of $\rho_{YZ\cdot X}^2$ having the above properties that is applicable to more general-valued random variables $X,Y,Z$ (e.g., belonging to any topological space) and more general distribution other than Gaussian?''.}

In particular, we define a generalization of $\rho_{YZ\cdot X}^2$ --- the {\it kernel partial correlation (KPC) coefficient} --- which
measures the strength of the conditional dependence between $Y$ and $Z$ given $X$, that can deal with any distribution $P$ of $(X,Y,Z)$ and is capable of detecting any nonlinear relationships between $Y$ and $Z$ (conditional on $X$). Moreover, given i.i.d.~data from $P$, we develop and study two different strategies to estimate this population quantity --- one that is based on geometric graph-based methods (see~\cref{sec:GraphEst} below) and the other based on kernel methods using cross-covariance operators (see~\cref{sec:KernelEst} below). We conduct a systematic study of the various statistical properties of these two classes of estimators, including their consistency and (automatic) adaptive properties.
%Further, we show that these empirical measures consistently estimate the population quantity.
A sub-class of the proposed graph-based estimators can even be computed in {\it near linear time} (i.e., in $O(n\log n)$ time). We use these measures to  develop a provably consistent model-free (high-dimensional) variable selection algorithm in regression. % and exhibit a rate of convergence that adapts automatically to the intrinsic dimensionality of $X$ and $(X,Z)$.

%Further, it reduces to the classical partial correlation squared in multivariate Gaussian case if linear kernel is used.

\subsection{Kernel partial correlation (KPC) coefficient}\label{sec:KPC}
Our measure of conditional dependence between $Y$ and $Z$ given $X$ is defined using the  framework of \emph{reproducing kernel Hilbert spaces} (RKHSs); see~\cref{sec:Prelim} for a brief introduction to this topic. A function $k(\cdot,\cdot):\Y\times\Y\to\R$ is said to be a {\it kernel} if it is symmetric and nonnegative definite. Let $P_{Y|xz}$ denote the regular conditional distribution of $Y$ given $X=x$ and $Z=z$, and $P_{Y|x}$ denote the regular conditional distribution of $Y$ given $X=x$.

We define the \emph{kernel partial correlation} (KPC) coefficient $\rho^2(Y,Z|X)$ (depending on the kernel $k(\cdot,\cdot)$) as:
\begin{equation}\label{eq:eta}
\rho^2 \equiv \rho^2(Y,Z|X) := \frac{\E[{\rm MMD}^2(P_{Y|XZ},P_{Y|X})]}{\E[{\rm MMD}^2({\delta_Y},P_{Y|X})]},
\end{equation}
where MMD is the \emph{maximum mean discrepancy} --- a distance metric between two probability distributions depending on the kernel $k(\cdot,\cdot)$ (see Definition~\ref{defn:MMD}) --- and $\delta_Y$ denotes the Dirac measure at $Y$. In~\cref{lem:Well-Def} we show that $\rho^2$ in~\eqref{eq:eta} is well-defined when $Y$ is not a deterministic function of $X$ and the kernel $k(\cdot,\cdot)$ satisfies some mild regularity conditions.

We show in Theorem~\ref{thm:Eta} that $\rho^2(Y,Z|X)$  satisfies the following three properties for any joint distribution $P$ of $(X,Y,Z)$:
\begin{enumerate}
    \item[(i)] $\rho^2 \in[0,1]$;
    \item[(ii)] $\rho^2=0$ if and only if $Y\indep Z|X$; 
    \item[(iii)] $\rho^2 = 1$ if and only if $Y$ is a measurable function of $Z$ and $X$.
\end{enumerate}
Further, we show that $\rho^2(Y,Z|X)$ monotonically increases as the `dependence' between $Y$ and $Z$ given $X$ becomes stronger: We illustrate this in Propositions~\ref{prop:class_parcor} and~\ref{prop:monotone_lambda} where we consider different kinds of dependence between $Y$ and $Z$ (given $X$). In~\cref{prop:class_parcor}  we show that $\rho^2$, for a large class of kernels $k(\cdot,\cdot)$ and $Y$ and $Z$ scalars, is a monotonic function of $\rho_{YZ\cdot X}^2$ (the squared partial correlation coefficient), for Gaussian data. Moreover, we show that when the linear kernel is used (i.e., $k(y,y') = y y'$ for $y,y' \in \R$), $\rho^2$ reduces exactly to $\rho_{YZ\cdot X}^2$. Thus, our proposed measure KPC is indeed a generalization of the classical partial correlation and captures the strength of conditional association. % --- any value of $\rho^2$ between 0 and 1 conveys an idea of the strength of dependence between $Y$ and $Z$, given $X$.  

In~\citet{azadkia2019simple} a measure satisfying properties (i)-(iii) was proposed and studied, when $Y$ is a scalar (and $Z$ and $X$ are Euclidean). %In fact, the paper~\cite{azadkia2019simple} has served as the main motivation behind our current work. 
We show in Lemma~\ref{lem:AC19} that this measure is a special case of our general framework by taking a specific choice of the kernel $k(\cdot,\cdot)$. The advantage of our general framework is that we no longer require $Y$ to be a scalar. In fact, $(X,Y,Z)$ can even belong to a non-Euclidean space, as long as a kernel function can be defined on $\Y \times \Y$, and $\X$ and $\Z$ are, for example, metric spaces. Further, $\rho^2$, as defined in~\eqref{eq:eta}, provides a lot of flexibility to the practitioner as there are a number of kernels known in the literature for different spaces $\Y$ (\cite{lafferty2002information, fukumizu2009characteristic, danafar2010characteristic}), some of which may have better properties than others, depending on the application at hand. %Further, in Proposition \ref{prop:inv_conti} we show that $\rho^2$ is invariant to any bijective transformation of $X$, and any bijective transformation of $Z$. For a large class of kernel, it is also invariant to orthogonal transformations and translations of $Y$.

%We can also use our framework to measure unconditional dependence/association between $Y$ and $Z$; see~\cref{rem:Ind}. In this case, $\rho^2(Y,Z|\emptyset)$ reduces to the measure proposed and studied in detail in the recent paper~\citet{deb2020kernel}; also see~\cite[Section 2.4]{Ke2019}.

In spite of all the above nice properties of $\rho^2$, it is not immediately clear if we can estimate $\rho^2$ efficiently, given i.i.d.~data. In the following we introduce two methods of estimating $\rho^2$ that are easy to implement and enjoy many desirable statistical properties.

%\red{As a consequence of our general estimation strategy, we can study the problem of measuring the strength of mutual dependence between $Y$ and $Z$, and estimate it using the CME-based estimator. This complements the graph-based approach of estimating the strength of mutual dependence as developed in~\cite{??}}

%It has a simple expression and can be estimated by a variety of methods. In the paper, we will propose two consistent estimators, and the calculation can be made as fast as in $O(n\log n)$ time.

%Both sample analogues are asymmetric in $Y$ and $Z$. Similar to the coefficient proposed in \cite{chatterjee2019new,azadkia2019simple}, we are interested in understanding whether $Y$ is a measurable function of $Z$ given $X$, and not just if one variable is a function of the other.

\subsection{Estimation of KPC using geometric graph-based methods}\label{sec:GraphEst}
Suppose that we are given an i.i.d.~sample $\{(X_i,Y_i,Z_i): i =1,\ldots, n\}$ from $P$. Our first estimation strategy is based on {\it geometric graphs}; e.g.,~$K$-nearest neighbor ($K$-NN) graphs and minimum spanning trees (MSTs). It relies on the following alternate expression of $\rho^2(Y,Z|X)$ (see Lemma~\ref{lem:eta-2}):
\begin{eqnarray}
\rho^2 \equiv \rho^2(Y,Z|X) & = & \frac{\E\left[k(Y_2,Y_2')\right] - \E\left[ k(Y_1,Y_1')\right]  }{\E[k(Y,Y)]-\E[k(Y_1,Y_1')]} \nonumber \\
& = & \frac{\E\left[\E[k(Y_2,Y_2')|X,Z]\right] - \E\left[\E[k(Y_1,Y_1')|X]\right]  }{\E[k(Y,Y)]-\E[\E[k(Y_1,Y_1')|X]]}, \label{eq:K-Exp}
\end{eqnarray}
where $Y \sim P_Y$ and $(X,Y_1,Y_1')$ has the following distribution: $X \sim P_X$, and then $Y_1,Y_1'$ are drawn i.i.d.~from the conditional distribution $P_{Y|X}$. Similarly, $(X, Z, Y_2,Y_2')$ has the following distribution: First draw $(X,Z)$ from its joint distribution $P_{XZ}$, and then $Y_2,Y_2'$ are drawn i.i.d.~from the conditional distribution $P_{Y|XZ}$. Thus, while $Y_1$ and $Y_1'$ (respectively $Y_2$ and $Y_2'$) have the same marginal distribution $P_Y$, they are coupled through $X$ (resp.~$X$ and $Z$) and hence dependent.

From the expression of $\rho^2$ in~\eqref{eq:K-Exp} it is clear that 
\begin{equation}\label{eq:2-Terms}
\E\left[\E[k(Y_1,Y_1')|X]\right]\qquad \mbox{ and }\qquad \E\left[\E[k(Y_2,Y_2')|X,Z]\right]
\end{equation} 
are the two non-trivial terms to estimate; note that $\E[k(Y,Y)]$ can be easily estimated by $n^{-1} \sum_{i=1}^n k(Y_i,Y_i)$. In our first estimation strategy, we use {\it geometric graphs} to approximate the above two terms.  

In order to motivate our geometric graph-based estimators let us first consider estimating $\E\left[\E[k(Y_1,Y_1')|X]\right]$. For simplicity, let us assume that $\X$ is a metric space, and we begin by constructing the $1$-\emph{nearest neighbor graph} ($1$-NN graph) of the data points $\{X_i\}_{i=1}^n$, i.e., a graph with vertices $X_1,\ldots ,X_n$ where every vertex $X_i$ shares an edge with its $1$-NN. The $1$-NN graph has the following property which makes it useful for our application --- the node pairs defining the edges represent points that tend to be `close' together (small distance or dissimilarity). For $i \in \{1,\ldots, n\}$, let $X_{N(i)}$ be the NN of $X_i$ and $Y_{N(i)}$ be the corresponding $Y$-value (for the ${N(i)}$-th observation). Then, $k(Y_i,Y_{N(i)})$ can be \emph{informally} viewed as a naive empirical analogue of $\E [k(Y_1,{Y}_1')|X'=X_i]$. Then, we can estimate the first term in~\eqref{eq:2-Terms}  by $$ \frac{1}{n}\sum_{i=1}^n k(Y_i,Y_{N(i)}).$$ 
%Ideally, to estimate $\E\left[\E[k(Y_1,Y_1')|X]\right]$ (in~\eqref{eq:2-Terms}), we need two independent observations $Y_1,Y_1'$ from $P_{Y|X}$, the conditional distribution $Y|X$. However, for each $i \in \{1,\ldots, n\}$, only $Y_i$ is observed from the conditional distribution $Y|X=X_i$ (i.e., $P_{Y|X_i}$). In order to obtain an `approximate' observation, we may choose $Y_{N(i)}$ --- the corresponding $Y$-value for $X_{N(i)}$ --- and therefore  (which is from $Y|X=X_{N(i)}$) is approximately coming from $Y|X=X_i$. $\frac{1}{n}\sum_{i=1}^n k(Y_i,Y_{N(i)})$ can then be taken as an estimator of $\E\left[\E[k(Y_1,Y_1')|X]\right]$.
We can estimate $\E\left[\E[k(Y_2,Y_2')|X,Z]\right]$ similarly, by considering the 1-NN graph on the data points $\{(X_i,Z_i)\}_{i=1}^n$. Combining all this we obtain an empirical estimator of $\rho^2(Y,Z|X)$:
\begin{equation}\label{eq:Rho-Hat-1NN}
\hat{\rho^2} := \frac{\frac{1}{n}\sum_{i=1}^n k(Y_i,Y_{\ddot{N}(i)}) - \frac{1}{n}\sum_{i=1}^n k(Y_i,Y_{N(i)})}{\frac{1}{n}\sum_{i=1}^n k (Y_i,Y_i)- \frac{1}{n}\sum_{i=1}^n k(Y_i,Y_{N(i)})}
\end{equation} 
where, for $i \in \{1,\ldots, n\}$,  $(X_{\ddot{N}(i)},Z_{\ddot{N}(i)})$ is the NN of $(X_i,Z_i)$ and $Y_{\ddot{N}(i)}$ is the corresponding $Y$-value. Note that~\eqref{eq:Rho-Hat-1NN} is just one example from the class of estimators we propose later; see Section~\ref{sec:Est_graph} for the details.

What is quite remarkable about this simple estimator $\hat{\rho^2}$ in~\eqref{eq:Rho-Hat-1NN} is that it is {\it consistent} in estimating $\rho^2$ under very weak conditions; see Theorem~\ref{thm:graph_consistency}. %We also derive its rate of convergence in Section~\ref{sec:rate_of_conv} showing that it adapts to the intrinsic dimensionality of $X$ and $(X,Z)$ automatically. %Our proving techniques are mainly motivated by \cite{azadkia2019simple}.
%But instead of the 1NN used there, we are now free to use more general $K$-NN and even more complicated graphs, and they can actually yield better results as will be seen in the simulation section.
%The computation can be as fast as $O(Kn\log n)$ time if $K$NN graph is used within a Euclidean space.
In the following we briefly summarize some of the key features of our geometric graph-based estimator $\hat{\rho^2}$:
\begin{enumerate}
	\item It can be computed for continuous and categorical/discrete data on Euclidean domains and also for random variables $X, Y, Z$ taking values in general topological spaces, e.g., $\X$ and $\X\times \Z$ being metric spaces suffices. This can be particularly useful in functional regression~\cite{morris2015functional}, real-life machine learning and human actions recognition \cite{danafar2010characteristic}, dynamical systems \cite{song2009hilbert}, etc. %Besides continuous variables, our methods can also handle categorical/discrete data.
		
		\item It has a simple interpretable form, yet it is fully nonparametric. No estimation of conditional densities or distributions (and/or characteristic functions) is involved.
		
		\item It converges to a limit in $[0,1]$ which equals $0$ if and only if $Y$ and $Z$ are conditionally independent given $X$; and equals $1$ if and only if $Y$ is a (noiseless) measurable function of $X$ and $Z$ (see Theorems~\ref{thm:Eta}~and~\ref{thm:graph_consistency}). The consistency of $\hat{\rho^2}$ requires only mild moment assumptions on the kernel (see Theorem~\ref{thm:graph_consistency}); in particular, consistency always holds if a bounded kernel is used.
		
		\item It is $O_p(n^{-1/2})$ concentrated around a population quantity, under mild assumptions on the kernel (see~\cref{prop:concen_moment}). We further establish rates of convergence for $\hat{\rho^2}$ (constructed using $K$-NN graphs, for $K \ge 1$) to $\rho^2$ that adapt to the {\it intrinsic dimensions} of $X$ and $(X,Z)$; see~\cref{thm:conv_rate}.

    \item It can be calculated in almost {\it linear} time (i.e., in $O(n\log n)$ time) for a broad variety of metric spaces $\X$ and $\X\times\Z$ (including all Euclidean spaces). %; see e.g.,~\cite{bentley1975multidimensional, beygelzimer2006cover}.

    \item It can be used for variable selection in regression where the response and predictors can take values in general topological spaces. In particular, it provides a stopping criterion for a {forward stepwise variable selection algorithm} which we call \emph{kernel feature ordering by conditional independence} (KFOCI) --- inspired by the variable selection algorithm FOCI in~\cite{azadkia2019simple} --- that automatically determines the number of predictor variables to choose; see Section~\ref{sec:application} for the details. We study the properties of KFOCI which is model-free and is provably consistent even in the high-dimensional regime where the number of covariates grows exponentially with the sample size, under suitable sparsity assumptions (see Theorem~\ref{thm:var_select}). By allowing general kernel functions and different geometric graphs, KFOCI can achieve superior performance when compared to FOCI as we illustrate in Section~\ref{sec:simulation}. 

\end{enumerate}

As far as we are aware, our methods are the only procedures that possess all the above mentioned desirable properties.

\subsection{Estimation of KPC using RKHS-based methods}\label{sec:KernelEst}
As the population version of the KPC coefficient $\rho^2$ (as in~\eqref{eq:K-Exp}) is expressed in terms of the kernel function $k(\cdot,\cdot)$, it is natural to ask if kernel methods can be directly used to estimate $\rho^2$. This is precisely what we do in our second estimation strategy. Observe that the MMD between two distributions is the distance between their \emph{kernel mean embeddings} (see~\cref{Defn:Mean-Emb}) in the RKHS; see e.g.,~\cite{muandet2017kernel, Berlinet2004,smola2007hilbert}. Further, the {kernel mean embedding} of a conditional distribution, which is usually called the \emph{conditional mean embedding} (CME; see~\cref{defn:CME}), can be expressed in terms of \emph{cross-covariance operators} (see~\cref{def:CC}) between the two RKHSs; see e.g.,~\cite{baker1973joint, fukumizu2004dimensionality, song2009hilbert, song2013kernel, klebanov2019rigorous}.
As cross-covariance operators can be easily estimated empirically, we can use a plug-in approach to estimate $\rho^2$, and denote it by $\tilde{\rho^2} \equiv \tilde{\rho^2}(Y,Z|X)$. %See Section~\ref{sec:Est} for a formal treatment of these concepts and ideas. 
We refer to $\tilde{\rho^2}$ as the RKHS-based estimator.

We study this estimation strategy in detail in Section~\ref{sec:Est}. In particular, $\tilde{\rho^2}$ can be computed using simple matrix operations of kernel matrices (see Proposition \ref{lem:eta}). The computation can also be accelerated using incomplete Cholesky decomposition (see Remark~\ref{rk:approx_inchol}). We derive the consistency of this estimator in~\cref{thm:kernel_consistency}. In the process of deriving this result we prove the consistency of the plug-in estimator of the CME (see~\cref{thm:CME_result}); this answers an open question stated in~\cite{klebanov2019rigorous} and may be of independent interest. Furthermore, $\tilde{\rho^2}$ reduces to the empirical (classical) partial correlation squared when the linear kernel is used; see Proposition~\ref{prop:reduce_classi}.
Through extensive simulation studies we show that $\tilde{\rho^2}$ has good finite sample performance in a variety of tasks
(see Section~\ref{sec:simulation}).

A forward stepwise variable selection algorithm, like KFOCI, can also be devised using the RKHS-based estimator $\tilde{\rho^2}$. However, unlike KFOCI, we need to prespecify the number of variables to be chosen beforehand. Both variable selection algorithms --- one based on $\tilde{\rho^2}$ and the other on $\hat{\rho^2}$ --- perform very well in simulations (see~\cref{sec:Simul}) and real data examples (see~\cref{sec:Real-Data}) as illustrated in the thorough finite sample studies in Section~\ref{sec:simulation}.

As a consequence of our general RKHS-based estimation strategy, we can also study the problem of measuring the strength of mutual dependence between $Y$ and $Z$ (when there is no $X$), and estimate it using $\tilde{\rho^2}(Y,Z|\emptyset)$; see~\cref{rem:Kernel-MAc}. This complements the graph-based approach of estimating ${\rho^2}(Y,Z|\emptyset)$ as developed in~\cite{deb2020kernel}.

Besides having superior performance on Euclidean spaces, the two proposed estimators are applicable in much more general spaces.
In Section~\ref{sec:simulation} we illustrate this by considering two such typical examples: One where the response $Y$ takes values in the \emph{special orthogonal group} ${\rm SO}(3)$,  and the other where we have \emph{compositional data} ($Y$ taking values in the simplex $\Y = \{y\in \R^d:y_1+\ldots + y_d = 1,y_i\geq 0\}$). In addition, $\hat{\rho^2}$ and $\tilde{\rho^2}$ can also be easily applied in the existing model-X framework \cite{candes2018panning} to yield valid tests for conditional independence (see~\cref{sec:Testing}) and variable selection algorithms with finite sample FDR control (see~\cref{sec:ModelXFDR}).

\subsection{Related works}

A plethora of procedures --- parametric and nonparametric, applicable to discrete and continuous data --- have been proposed in the literature, over the last 60 years, to detect conditional dependencies between $Y$ and $Z$ given $X$; see e.g.,~\cite{cochran1954some, mantel1959statistical, linton1996conditional, bergsma2004testing, su2008nonparametric, song2009testing, doran2014permutation, candes2018panning, neykov2020minimax} and the references therein. However none of these methods satisfy property (ii) (as mentioned in~\cref{sec:KPC}). While these methods are indeed useful in practice, they have one common problem: They are all designed primarily for testing conditional independence, and not for measuring the strength of conditional dependence between $Y$ and $Z$ given $X$.

In this paper we are interested in nonparametrically measuring the strength of conditional dependence. While there is a rich literature on measuring unconditional dependence between two random variables/vectors (see~\cite{josse2016measuring} for a review),
conditional dependence has been somewhat less explored, especially in the nonparametric setting. Measures of conditional dependence such as $I^{COND}$ (defined as the Hilbert-Schmidt norm of a normalized conditional cross-covariance operator) \cite{fukumizu2008kernel}, HSCIC (defined as the Hilbert-Schmidt independence criterion, HSIC \cite{MR2255909}, between $P_{Y|X}$ and $P_{Z|X}$) \cite{park2020measure}, conditional distance covariance CdCov (defined as the distance covariance \cite{SzekelyCorrDist07} between $P_{Y|X}$ and $P_{Z|X}$) \cite{wang2015conditional} and HS${\rm \ddot{C}}$IC (defined as the Hilbert-Schmidt norm of a conditional cross-covariance operator) \cite{sheng2019distance,fukumizu2004dimensionality} have the property of always being nonnegative, and they attain the value 0 if and only if $Y\indep Z|X$. However, they do not satisfy properties (i) and (iii) (mentioned in~\cref{sec:KPC}). The conditional distance correlation CdCor \cite{wang2015conditional} is normalized between $[0,1]$, but it is not guaranteed to satisfy property (iii). Moreover, HSCIC~\cite{park2020measure} and CdCor~\cite{wang2015conditional} are actually a family of measures indexed by the value of $X$, rather than a single number. The idea of finding a normalized measure of dependence, like ${\rho^2}(Y,Z|X)$, has been explored in the recent paper~\cite{Ke2019}, but their estimation strategy is very different from the two class of estimators proposed here.

The most relevant work to this paper, and the main motivation behind our work, is the recent paper~\cite{azadkia2019simple}, where a measure satisfying properties (i)-(iii) was proposed. However, their measure is only applicable to a scalar $Y$. Our measure KPC provides a general framework for measuring conditional dependence that is flexible enough to allow the user to choose any kernel of their liking and can handle variables taking values in general topological spaces. 
%The graph-based method has been proposed in a recent paper \cite{deb2020kernel}, where its theoretical properties are thoroughly studied.
%CME methods are also seen to have powerful applications in various areas of machine learning such as dimension reduction \cite{fukumizu2004dimensionality}, dynamic systems \cite{song2009hilbert}, hidden Markov models \cite{song2010hilbert}, and Bayesian inference \cite{fukumizu2013kernel}.
%A recent paper \cite{klebanov2019rigorous} summarizes the methods of CME and proposed a rigorous framework, which will be used in this paper.

\subsection{Organization}
In Section~\ref{sec:Measure}, we introduce and study the population version of the KPC coefficient $\rho^2(Y,Z|X)$. In Section~\ref{sec:Est_graph}, we describe our first method of estimation, based on geometric graphs such as $K$-NNs and MSTs. Section~\ref{sec:Est} describes our second estimation strategy, using a RKHS-based method. Applications to variable selection, including a consistency theorem for variable selection, are provided in Section~\ref{sec:application}. A detailed simulation study and real data analyses are presented in Section~\ref{sec:simulation}.~\cref{sec:gendis} contains some general discussions that were deferred from the main text of the paper. In~\cref{sec:Proofs} we provide the proofs of our main results, while~\cref{sec:techlem} gives some other auxiliary results.

\section{Kernel partial correlation (KPC)}\label{sec:Measure}
Let $\X,\Y,\Z$ be topological spaces equipped with complete Borel probability measures and let $\X\times \Y\times\Z$ be the completion of the product space. Suppose that $(X,Y,Z) \sim P$ is a random element on $\X \times \Y \times \Z$ with marginal distributions $P_X$, $P_Y$ and $P_Z$, on $\X$, $\Y$ and $\Z$, respectively. Let $P_{XZ}$ denote the joint distribution of $(X,Z)$. Recall the notation $P_{Y|xz}$, $P_{Y|x}$ from the Introduction; we will assume the existence of these regular conditional distributions.

\subsection{Preliminaries}\label{sec:Prelim}
Let $\h_\Y$ be an RKHS with kernel $k(\cdot,\cdot)$ on the space $\Y$. By a kernel function $k:\Y\times\Y \to\mathbb{R}$ we mean a symmetric and nonnegative definite function such that $k(y,\cdot)$ is a (real-valued) measurable function on $\Y$, for all $y\in\Y$. Thus, $\h_\Y$ is a Hilbert space of real-valued functions on $\Y$ such that, for any $f \in \h_\Y$, we have $f(y) = \langle f, k(y, \cdot) \rangle_{\h_\Y}$, for all $y \in \Y$; this is usually referred to as the {\it reproducing property} of the kernel $k(\cdot,\cdot)$. Let us denote the inner product and norm on $\h_\Y$ by $\langle\cdot,\cdot\rangle_{\h_\Y}$ and $\|\cdot\|_{\h_\Y}$ respectively. For an introduction to the theory of RKHS and its applications in statistics we refer the reader to~\cite{BT-A04,SVM}.

In the following we define two concepts that will be crucial in defining the KPC coefficient satisfying properties (i)-(iii) mentioned in the Introduction. % Let $L^p(\W,Q)$ denote the Banach space of $Q$-integrable real-valued functions $f:\W \to \R$ with norm 

\begin{defn}[Kernel mean embedding]\label{Defn:Mean-Emb} Suppose that $W$ has a probability distribution  $Q$ on $\Y$ such that $\E_Q[\sqrt{k(W,W)}] < \infty$. There exists a unique $\mu_Q \in \h_\Y$ satisfying
\begin{equation}\label{eq:Embedding}
\langle \mu_Q, f\rangle_{\h_\Y} = \E_Q[f(W)],\qquad \mbox{for all } \,f \in \h_\Y,
\end{equation} 
which is called the (kernel) {\it mean element} of the distribution $Q$ (or the {\it mean embedding} of $Q$ into $\h_\Y$).  
\end{defn}

\begin{defn}[Maximum mean discrepancy]\label{defn:MMD}
The difference between two probability distributions $Q_1$ and $Q_2$ on $\Y$ can then be conveniently measured by $${\rm MMD}(Q_1,Q_2) := \|\mu_{Q_1} - \mu_{Q_2}\|_{\h_\Y}$$ (here $\mu_{Q_i}$ is the mean element of $Q_i$, for $i=1,2$) which is called the {\it maximum mean discrepancy} (MMD) between $Q_1$ and $Q_2$. The following alternative representation of the squared MMD is also known (see e.g.,~\cite[Lemma 6]{Gretton12}):
\begin{equation}\label{eq:MMD}
{\rm MMD}^2(Q_1,Q_2) = \E[k(S,S')] + \E[k(T,T')] - 2 \E[k(S,T)],
\end{equation}
where $S, S' \stackrel{iid}{\sim} Q_1$ and $T, T' \stackrel{iid}{\sim} Q_2$.
\end{defn}
Let $\mathcal{P}$ be the class of all Borel probability distributions on $\Y$.
%We can define a map $\m_k:\mathcal{P} \to \h_\Y$ such that $\m_k: Q \mapsto \mu_Q$, where $Q \in \mathcal{P}$. 
The kernel mean embedding defines a map from $\mathcal{P}$ to $\h_\Y$ such that $Q \mapsto \mu_Q$, where $Q \in \mathcal{P}$.

\begin{defn}[Characteristic kernel]\label{defn:Char}
The kernel $k(\cdot,\cdot)$ is said to be {\it characteristic} if and only if the kernel mean embedding is injective, i.e., $$\mu_{Q_1} = \mu_{Q_2} \quad  \implies \quad Q_1=Q_2.$$
Note that the last condition is equivalent to $\E_{S \sim Q_1}[f(S)] = \E_{T \sim Q_2}[f(T)]$, for all $f \in \h_\Y$, implies $Q_1=Q_2$; this implicitly assumes that the associated RKHS is rich enough.
%See e.g.,~\cite{fukumizu2008kernel,sriperumbudur2008injective,fukumizu2009characteristic,sriperumbudur2010,charac-Bharath,szabo2017characteristic} for sufficient conditions on $\W$ and $k$ that imply that $k$ is characteristic.
\end{defn}

\begin{remark}[Examples of characteristic kernels]\label{rem:excharker}
A number of popular characteristic kernels have been studied in the literature. Some popular ones in $\R^d$ include the Gaussian kernel (i.e., $k(u,v) := \exp(-\sigma\|u - v\|^2)$ with $\sigma>0$) \cite{charac-Bharath}, the Laplace kernel (i.e., $k(u,v) = \exp(-\sigma\|u - v\|_1)$ with $\sigma>0$ where $\|\cdot\|_1$ denotes the $L_1$-norm) \cite{charac-Bharath} and the distance kernel \cite{EquivRKHS13}
\begin{equation}\label{eq:Dist-kernel}
k(u,v):= {2}^{-1}\Big(\lVert u\rVert^{\alpha} +\lVert v\rVert^{\alpha} - \lVert u-v\rVert^{\alpha} \Big), \qquad \mbox{for all } \,u,v \in\R^d,
\end{equation}
for $\alpha \in (0,2)$. See~\cite{fukumizu2008kernel, fukumizu2009characteristic,danafar2010characteristic} for other examples of characteristic kernels on more general topological spaces. Sufficient conditions for a kernel to be characteristic are discussed in~\cite{fukumizu2008kernel,sriperumbudur2008injective,fukumizu2009characteristic,sriperumbudur2010,charac-Bharath,szabo2017characteristic}.
\end{remark}

%Many popular kernels are characteristic, such as the Gaussian kernel (i.e., $k(u,v) := \exp(-\sigma\|u - v\|^2)$ with $\sigma>0$) and the Laplacian kernel (i.e., $k(u,v) = \exp(-\sigma\|u - v\|_1)$ where $\|\cdot\|_1$ denotes the $L_1$-norm) on Euclidean spaces; see e.g.,~\cite{fukumizu2008kernel}.
%Characteristic kernels implicitly assume that the associated RKHS is rich enough:
%If $K(\cdot,\cdot)$ is measurable and bounded, then $K$ is characteristic if and only if $\h_\W + \R$ is dense in $L^2(\W,Q)$ for all probability measures $Q \in \mathcal{P}$ (Proposition 5 of \cite{fukumizu2009kernel}).
%\red{If $\W$ is a Hausdorff topological space, and $K(\cdot,\cdot)$ is measurable and bounded, then $K$ is characteristic if and only if $\h_\W + \R$ is dense in $L^p(\W,Q)$ for all Radon\footnote{\color{blue} A Radon probability measure $Q$ is a Borel probability measure such that $Q(B)=\sup \{ Q(C): C\subset B, C\ {\rm is\ compact}\}$ for any Borel set $B \subset \W$. If $\W$ is a Polish space, then by Ulam's theorem, every Borel probability measure is Radon; see e.g.,~\cite{charac-Bharath}} probability measures $Q \in \mathcal{P}$ and some $p\in[1,\infty)$; see~\cite{charac-Bharath}.}

%If $K$ is a characteristic kernel, it can be shown that ${\rm MMD}(Q_1, Q_2)$ is an integral probability metric so that it is zero if and only if $Q_1 = Q_2$. 

\subsection{KPC: The population version}\label{sec:Pop}
We are now ready to formally define and study our measure \emph{kernel partial correlation} (KPC) coefficient $\rho^2 \equiv \rho^2(Y,Z|X)$. Let $\h_\Y$ be an RKHS on $\Y$ with kernel $k(\cdot,\cdot)$. Recall the definition of $\rho^2$ from~\eqref{eq:eta}:
$$\rho^2(Y,Z|X) := \frac{\E[{\rm MMD}^2(P_{Y|XZ},P_{Y|X})]}{\E[{\rm MMD}^2({\delta_Y},P_{Y|X})]}, $$ where $\delta_Y$ denotes the Dirac measure at $Y$. %\red{We can then simply define the \emph{kernel partial correlation} $\rho(Y,Z|X)$ as $\sqrt{\rho^2(Y,Z|X)}$}.
To study the various properties of $\rho^2$ defined above we will assume the following regularity conditions: 
\begin{assump}\label{assump:characteristic_kernel}
    $k(\cdot,\cdot)$ is characteristic and $\E[k(Y,Y)]<\infty$. % \blue{Consider $Y$ is a function of $X,Z$. Then the numerator $\E\left[\E[k(Y_2,Y_2')|X,Z]\right]=\E [k(Y,Y)]$ has to be finite.}
\end{assump}
\begin{assump}\label{assump:separable_RKHS}
    $\mathcal{H}_\Y$ is separable. 
\end{assump}
\begin{assump}\label{assump:nondegenrate}
    $Y$ is not a measurable function of $X$; equivalently, $Y|X = x$ is not degenerate for almost every (a.e.) $x$. 
\end{assump}

%\begin{remark}[About Assumption \ref{assump:separable_RKHS}]\label{rem:Separable} 
%Assumption \ref{assump:separable_RKHS} is needed for the following three purposes: (a)  the measurability of the feature map $y\mapsto k(y,\cdot)$, for $y \in \Y$\footnote{Note that the canonical feature map $w\mapsto k(w,\cdot)$ is measurable if $\h_\W$ is separable (see e.g.,~\cite[Lemma 4.25]{SVM}), and 
%    $\mathcal{H}_\W$ is separable if $\W$ is separable and $K$ is continuous on $\W\times\W$ (see e.g.,~\cite[Theorem 7]{hein2004kernels}).}; (b) as any Borel measurable function $g:\Y \to \h_\Y$, for $\h_\Y$ separable, is strongly measurable\footnote{A function is strongly measurable if it is Borel measurable and has a separable range.}, the Bochner integral (see e.g.,~\cite[Appendix E]{Cohn2013measure} for its formal definition) $\E [g(Y)]$ is well-defined whenever $\E \|g(Y)\|_{\h_\Y}<\infty$. Moreover, for any bounded linear functional $T:\h_\Y\to \R$, we have $\E \left[T(g(Y))\right]=T\left(\E[g(Y)]\right)$, so if $\E\sqrt{k(Y,Y)}<\infty$, then $\mu_Y$ has a Bochner integral representation $\mu_Y =\E k(Y,\cdot)$;
%(c) ensuring that relevant cross-covariance operators  are Hilbert-Schmidt; see Section~\ref{sec:Est} for the details. Thus, for simplicity, we assume the more natural Assumption \ref{assump:separable_RKHS}, rather than the conditions (a)-(c) above.
%\end{remark}
\begin{remark}[{On Assumptions \ref{assump:characteristic_kernel}--\ref{assump:nondegenrate}}]\label{rem:Assumptions-1-3}
Note that $\E[k(Y,Y)]<\infty$ in Assumption \ref{assump:characteristic_kernel} is very common in the kernel literature (see e.g.,~\cite{baker1973joint,fukumizu2007statistical,fukumizu2008kernel,fukumizu2009kernel,park2020measure}). See~\cref{characteristic_kernel} below for a brief explanation as to why a characteristic kernel is necessary; see~\cref{rem:excharker} for some examples of characteristic kernels. Assumption \ref{assump:separable_RKHS} is needed for technical reasons and can be ensured under mild conditions\footnote{For example, if $\Y$ is a separable space and $k(\cdot,\cdot)$ is continuous (see e.g.,~\cite[Lemma 4.33]{SVM}).}; see~\cref{rem:Separable} for a detailed discussion. Assumption \ref{assump:nondegenrate} just ensures that $Y$ is not degenerate given $X$, so that the denominator of $\rho^2(Y,Z|X)$ is not $0$; see~\cref{footnote:degenerate_imply_function} for a proof of this equivalence.
\end{remark}

We will assume that Assumptions \ref{assump:characteristic_kernel}--\ref{assump:nondegenrate} hold throughout the paper, unless otherwise specified. The following lemma (proved in~\cref{sec:Well-Def}) shows that $\rho^2(Y,Z|X)$, as defined in~\eqref{eq:eta}, is well-defined.
\begin{lemma}\label{lem:Well-Def}
Under Assumptions \ref{assump:characteristic_kernel}--\ref{assump:nondegenrate}, $\rho^2(Y,Z|X)$ in~\eqref{eq:eta} is well-defined.
\end{lemma}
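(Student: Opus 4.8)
The plan is to verify that both the numerator and denominator of $\rho^2$ in~\eqref{eq:eta} are well-posed finite real numbers, and that the denominator is strictly positive so the ratio is meaningful. Throughout I would work with the alternative representation of the squared MMD in~\eqref{eq:MMD}, conditioning on the appropriate variables, and split the task into (a) existence of the relevant mean embeddings, (b) finiteness of the two expectations, and (c) strict positivity of the denominator.

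First I would check that all the mean embeddings implicit in~\eqref{eq:eta} exist. To form $\mu_{P_{Y|x}}$, $\mu_{P_{Y|xz}}$, and $\mu_{\delta_Y}$ via Definition~\ref{Defn:Mean-Emb} I need the integrability condition $\E[\sqrt{k(W,W)}]<\infty$ under each distribution. From Assumption~\ref{assump:characteristic_kernel} we have $\E[k(Y,Y)]<\infty$, hence $\E[\sqrt{k(Y,Y)}]<\infty$ by Jensen, and by the tower property $\E[\sqrt{k(Y,Y)}\mid X]<\infty$ and $\E[\sqrt{k(Y,Y)}\mid X,Z]<\infty$ for a.e.\ realization; this yields $\mu_{P_{Y|X}}$ and $\mu_{P_{Y|XZ}}$ a.s., while $\sqrt{k(Y,Y)}<\infty$ a.s.\ gives $\mu_{\delta_Y}$. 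Separability of $\h_\Y$ (Assumption~\ref{assump:separable_RKHS}) ensures the maps $x\mapsto\mu_{P_{Y|x}}$ and $(x,z)\mapsto\mu_{P_{Y|xz}}$ are measurable, so the MMD terms are measurable functions whose expectations are well-posed.

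For finiteness, expanding via~\eqref{eq:MMD} and using the reproducing-kernel Cauchy--Schwarz inequality $|k(y,y')|\le\sqrt{k(y,y)}\,\sqrt{k(y',y')}$, every term in each MMD$^2$ is dominated by products of $\sqrt{k(\cdot,\cdot)}$, and by conditional Jensen their expectations are all bounded by $\E[k(Y,Y)]<\infty$; hence both $\E[{\rm MMD}^2(P_{Y|XZ},P_{Y|X})]$ and $\E[{\rm MMD}^2(\delta_Y,P_{Y|X})]$ are finite, and nonnegative since MMD$^2\ge 0$. The main point is strict positivity of the denominator. Conditioning on $X=x$, the observed $Y$ is distributed as $P_{Y|x}$, so the denominator equals $\E_X\big[\E_{Y\sim P_{Y|X}}[{\rm MMD}^2(\delta_Y,P_{Y|X})]\big]$, which (after the bookkeeping reconciling it with~\eqref{eq:K-Exp}) is $\E[k(Y,Y)]-\E[\E[k(Y_1,Y_1')\mid X]]$. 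Because $k$ is characteristic, MMD is a genuine metric, so ${\rm MMD}^2(\delta_y,P_{Y|x})=0$ iff $\delta_y=P_{Y|x}$; thus the inner expectation over $Y\sim P_{Y|x}$ vanishes iff $P_{Y|x}=\delta_Y$ for $P_{Y|x}$-a.e.\ $Y$, i.e.\ iff $P_{Y|x}$ is degenerate. Assumption~\ref{assump:nondegenrate} rules this out for a.e.\ $x$, so the inner expectation is strictly positive on a set of $x$ of positive probability, forcing the denominator to be strictly positive.

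I expect the positivity argument to be the crux, as it is precisely where the characteristic property and the non-degeneracy Assumption~\ref{assump:nondegenrate} combine; the existence, measurability, and finiteness steps are routine given the moment bound $\E[k(Y,Y)]<\infty$ and separability. The only care point is the conditioning bookkeeping---that the observed $Y$ given $X$ is itself distributed as $P_{Y|X}$---so that the cross term collapses and the denominator reduces to the form in~\eqref{eq:K-Exp}.
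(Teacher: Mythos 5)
Your proposal is correct and takes essentially the same route as the paper's proof: finiteness of numerator and denominator via the reproducing-property Cauchy--Schwarz bound $k(Y_1,Y_1')\le \|k(Y_1,\cdot)\|_{\h_\Y}\|k(Y_1',\cdot)\|_{\h_\Y}$ combined with $\E[k(Y,Y)]<\infty$, and strict positivity of the denominator from the characteristic property of $k(\cdot,\cdot)$ together with Assumption~\ref{assump:nondegenrate}. The only cosmetic difference is in the positivity step: the paper argues that $\E[{\rm MMD}^2(\delta_Y,P_{Y|X})]=0$ forces $k(Y,\cdot)=\E[k(Y,\cdot)|X]$ a.s.\ and invokes injectivity of the feature map $y\mapsto k(y,\cdot)$ to conclude degeneracy, whereas you invoke injectivity of the mean embedding at the level of measures (${\rm MMD}(\delta_y,P_{Y|x})=0$ iff $\delta_y=P_{Y|x}$) --- the same use of the characteristic assumption.
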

The following lemma (proved in~\cref{sec:eta-2}) gives another alternate form for $\rho^2$ which will be especially useful to us while constructing estimators of $\rho^2$.
\begin{lemma}\label{lem:eta-2}
Suppose that Assumptions \ref{assump:characteristic_kernel}--\ref{assump:nondegenrate} hold. Then, we have
\begin{equation}\label{eta_formula}
    \begin{aligned}
        \rho^2(Y,Z|X) &= \frac{\E\left[\E[k(Y_2,Y_2')|X,Z]\right] - \E\left[\E[k(Y_1,Y_1')|X]\right]  }{\E[k(Y,Y)]-\E[\E[k(Y_1,Y_1')|X]]},\\
    \end{aligned}
\end{equation}
where in the denominator $(X,Y_1, Y_1')$ has the following joint distribution: 
\begin{equation}\label{eq:Y_1}
X \sim P_X, \qquad Y_1|X\stackrel{}{\sim} P_{Y|X}, \qquad Y_1'|X\stackrel{}{\sim} P_{Y|X},  \quad \mbox{and} \quad Y_1 \indep Y_1' |X;
\end{equation}
and in the numerator $(X,Y_2, Y_2',Z)$  has the following joint distribution: 
\begin{equation}\label{eq:Y_2}
(X, Z) \sim P_{XZ}, \quad Y_2|X,Z\stackrel{}{\sim} P_{Y|XZ}, \quad Y_2'|X,Z\stackrel{}{\sim} P_{Y|XZ}, \quad \mbox{and} \quad Y_2 \indep Y_2' |X,Z.
\end{equation}
\end{lemma}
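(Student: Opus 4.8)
The plan is to expand both the numerator and the denominator of~\eqref{eq:eta} using the alternative representation of the squared MMD in~\eqref{eq:MMD}, take the outer expectation over the relevant joint distribution in each case, and then show that all three resulting cross terms collapse to the single quantity $\E[\E[k(Y_1,Y_1')|X]]$. The identity~\eqref{eta_formula} then follows by elementary algebra.

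First I would handle the denominator $\E[{\rm MMD}^2(\delta_Y,P_{Y|X})]$. Fixing $(X,Y)$ and applying~\eqref{eq:MMD} with $Q_1=\delta_Y$ (so that $S=S'=Y$) and $Q_2=P_{Y|X}$ (so that $T,T'$ are i.i.d.\ $P_{Y|X}$, i.e.\ distributed as $Y_1,Y_1'$ in~\eqref{eq:Y_1}) gives
\[{\rm MMD}^2(\delta_Y,P_{Y|X}) = k(Y,Y) + \E[k(Y_1,Y_1')|X] - 2\,\E_{Y_1\sim P_{Y|X}}[k(Y,Y_1)].\]
Taking the outer expectation over $(X,Y)\sim P_{XY}$, the crucial observation is that, conditionally on $X$, the observed $Y$ has law $P_{Y|X}$ and the fresh draw $Y_1$ is an independent copy from the same law; hence $(Y,Y_1)\mid X \stackrel{d}{=} (Y_1,Y_1')\mid X$, so the last term equals $\E[\E[k(Y_1,Y_1')|X]]$. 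This yields the denominator $\E[k(Y,Y)] - \E[\E[k(Y_1,Y_1')|X]]$.

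Next I would expand the numerator $\E[{\rm MMD}^2(P_{Y|XZ},P_{Y|X})]$. Fixing $(X,Z)$ and applying~\eqref{eq:MMD} with $Q_1=P_{Y|XZ}$ (giving $Y_2,Y_2'$ as in~\eqref{eq:Y_2}) and $Q_2=P_{Y|X}$ (giving $Y_1,Y_1'$) produces
\[{\rm MMD}^2(P_{Y|XZ},P_{Y|X}) = \E[k(Y_2,Y_2')|X,Z] + \E[k(Y_1,Y_1')|X] - 2\,\E_{S\sim P_{Y|XZ},\,T\sim P_{Y|X}}[k(S,T)].\]
The key step here is to show that the cross term again reduces to $\E[\E[k(Y_1,Y_1')|X]]$ after taking the outer expectation over $(X,Z)\sim P_{XZ}$. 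For this I would note two facts: (a) marginalizing $Z$ out of $P_{Y|XZ}$ recovers $P_{Y|X}$, so that $S\mid X\sim P_{Y|X}$; and (b) the independent draw $T$ is, given $X$, an independent copy from $P_{Y|X}$ and is independent of $(S,Z)$ given $X$. Together these give $(S,T)\mid X \stackrel{d}{=} (Y_1,Y_1')\mid X$, whence $\E[k(S,T)]=\E[\E[k(Y_1,Y_1')|X]]$. Substituting and simplifying gives the numerator $\E[\E[k(Y_2,Y_2')|X,Z]] - \E[\E[k(Y_1,Y_1')|X]]$, and combining with the denominator yields~\eqref{eta_formula}.

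The routine but necessary part is justifying the interchange of expectation and integration (Fubini--Tonelli) implicit in every step above, together with the finiteness of all the kernel moments. I would control these using the reproducing-kernel Cauchy--Schwarz bound $|k(y,y')|\le \sqrt{k(y,y)}\,\sqrt{k(y',y')}$, which, combined with $\E[k(Y,Y)]<\infty$ from \cref{assump:characteristic_kernel} and Jensen's inequality, makes every mixed expectation absolutely integrable; well-definedness of the ratio (a nonzero denominator) is already guaranteed by~\cref{lem:Well-Def}. I expect the main conceptual obstacle to be the careful bookkeeping of the regular conditional distributions underlying the distributional identities $(Y,Y_1)\mid X \stackrel{d}{=}(Y_1,Y_1')\mid X$ and $(S,T)\mid X \stackrel{d}{=}(Y_1,Y_1')\mid X$ --- in particular verifying the conditional-independence-given-$X$ claims and the mixture identity that collapses $P_{Y|XZ}$ to $P_{Y|X}$ --- rather than any hard estimate.
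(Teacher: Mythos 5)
Your proposal is correct and follows essentially the same route as the paper's proof: the paper expands $\E\big[\|\mu_{\delta_Y}-\mu_{P_{Y|X}}\|_{\h_\Y}^2\big]$ and $\E\big[\|\mu_{P_{Y|XZ}}-\mu_{P_{Y|X}}\|_{\h_\Y}^2\big]$ into three terms via RKHS inner products of mean embeddings --- which is precisely your conditional application of~\eqref{eq:MMD} --- and collapses the cross terms by the same coupling/tower-property argument, namely $(Y,Y_1)\mid X\stackrel{d}{=}(Y_1,Y_1')\mid X$ for the denominator and the fact that marginalizing $Z$ out of $P_{Y|XZ}$ recovers $P_{Y|X}$ for the numerator. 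Your integrability bookkeeping via Cauchy--Schwarz matches what the paper delegates to Lemma~\ref{lem:Well-Def}, so there is no gap.
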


Our first main result,~\cref{thm:Eta} (proved in~\cref{sec:Eta}), shows that indeed, under the above assumptions (i.e., Assumptions \ref{assump:characteristic_kernel}--\ref{assump:nondegenrate}), our measure of conditional dependence satisfies the three desirable properties (i)-(iii) mentioned in the Introduction.
\begin{theorem}\label{thm:Eta}
Under Assumptions \ref{assump:characteristic_kernel}--\ref{assump:nondegenrate}, $\rho^2(Y,Z|X)$ in~\eqref{eq:eta} satisfies:
\begin{enumerate}
    \item[(i)] $\rho^2(Y,Z|X)  \in[0,1]$;
    \item[(ii)] $\rho^2(Y,Z|X)=0 \quad$ if and only if $\quad Y$ is {\it conditionally independent} of $Z$  given $X$ (i.e., $Y\indep Z|X$); equivalently, $\rho^2=0$ if and only if  $P_{Y|XZ}=P_{Y|X}$ almost surely (a.s.); 
    \item[(iii)] $\rho^2(Y,Z|X) = 1 \quad$ if and only if $\quad Y$ is a measurable function of $Z$ and $X$ a.s. (equivalently, $Y$ given $X$ and $Z$ is a degenerate random variable a.s.).
\end{enumerate}
\end{theorem}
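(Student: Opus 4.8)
The plan is to recast the definition of $\rho^2$ entirely in terms of conditional mean embeddings viewed as Hilbert-space-valued conditional expectations, and then to exploit an orthogonal (Pythagorean) decomposition in $L^2(\Omega;\h_\Y)$ that is the exact RKHS analogue of the classical analysis-of-variance identity. Writing $\phi(Y) := k(Y,\cdot)$ for the canonical feature map (so that $\phi(Y)$ is the mean embedding $\mu_{\delta_Y}$ of the Dirac measure), and letting $\mu_{Y|X} := \E[\phi(Y)\mid X]$ and $\mu_{Y|XZ} := \E[\phi(Y)\mid X,Z]$ denote the conditional mean embeddings (which coincide with the Bochner conditional expectations by the reproducing property), the two MMD quantities become $\mathrm{MMD}^2(P_{Y|XZ},P_{Y|X}) = \|\mu_{Y|XZ} - \mu_{Y|X}\|_{\h_\Y}^2$ and $\mathrm{MMD}^2(\delta_Y,P_{Y|X}) = \|\phi(Y)-\mu_{Y|X}\|_{\h_\Y}^2$. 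The first thing I would establish is the decomposition
\[
\E\|\phi(Y)-\mu_{Y|X}\|_{\h_\Y}^2 \;=\; \E\|\phi(Y)-\mu_{Y|XZ}\|_{\h_\Y}^2 \;+\; \E\|\mu_{Y|XZ}-\mu_{Y|X}\|_{\h_\Y}^2,
\]
obtained by expanding the left side around $\mu_{Y|XZ}$ and noting that the cross term $\E\langle \phi(Y)-\mu_{Y|XZ},\,\mu_{Y|XZ}-\mu_{Y|X}\rangle_{\h_\Y}$ vanishes: conditioning on $(X,Z)$ makes the second factor measurable while the conditional expectation of the first factor is zero.

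In this notation the denominator of $\rho^2$ is the left-hand side, the numerator is the last term on the right, and the middle term equals $\E[\mathrm{MMD}^2(\delta_Y,P_{Y|XZ})]$; all three are nonnegative, and the denominator is strictly positive by Assumption~\ref{assump:nondegenrate} (Lemma~\ref{lem:Well-Def}). Part (i) is then immediate: $\rho^2$ is a ratio of a nonnegative quantity to a larger-or-equal positive one, so $\rho^2\in[0,1]$. Moreover the decomposition shows that $\rho^2=0$ exactly when the numerator vanishes, and $\rho^2=1$ exactly when the middle term $\E\|\phi(Y)-\mu_{Y|XZ}\|_{\h_\Y}^2$ vanishes, reducing (ii) and (iii) to identifying when a nonnegative expectation is zero.

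For (ii), $\rho^2=0$ forces $\mu_{Y|XZ}=\mu_{Y|X}$ a.s.; since $k$ is characteristic (Assumption~\ref{assump:characteristic_kernel}), the mean embedding is injective, so applying injectivity at a.e.\ $(x,z)$ yields $P_{Y|XZ}=P_{Y|X}$ a.s., which is precisely $Y\indep Z\mid X$; the converse is read off the same chain of equivalences. For (iii), $\rho^2=1$ forces $\phi(Y)=\mu_{Y|XZ}$ a.s., i.e.\ $\mu_{\delta_Y}=\mu_{Y|XZ}$ a.s.; injectivity of the embedding gives $\delta_Y = P_{Y|XZ}$ a.s., so $P_{Y|XZ}$ is a.s.\ degenerate, which by the equivalence recorded in the footnote to Assumption~\ref{assump:nondegenrate} is the same as $Y$ being a measurable function of $(X,Z)$ a.s.; again the converse follows by reversing the steps.

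I expect the genuine work to lie not in the algebra above but in the measure-theoretic bookkeeping that makes it rigorous. The delicate points are: (a) justifying that $\mu_{Y|X}$ and $\mu_{Y|XZ}$ are bona fide $\h_\Y$-valued conditional expectations (Bochner integrability of $\phi(Y)$ and strong measurability of the embedding maps), for which separability of $\h_\Y$ (Assumption~\ref{assump:separable_RKHS}) is exactly what is needed; and (b) upgrading the pointwise injectivity of the characteristic kernel to the almost-sure statements in (ii) and (iii), which requires that the family of regular conditional distributions and their embeddings be jointly measurable so that the sets $\{\mu_{Y|XZ}=\mu_{Y|X}\}$ and $\{\phi(Y)=\mu_{Y|XZ}\}$ are measurable and carry full probability. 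The finiteness conditions guaranteeing that every embedding in sight is well defined come from $\E[k(Y,Y)]<\infty$ via Jensen and Fubini, as already handled in Lemma~\ref{lem:Well-Def}.
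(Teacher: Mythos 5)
Your proposal is correct, and it reaches the same endgame as the paper (injectivity of the kernel mean embedding for part (ii), injectivity of the feature map plus the degenerate-implies-function observation of Remark~\ref{footnote:degenerate_imply_function} for part (iii)) but via a different key device. The paper's proof runs through the vector-valued Jensen inequality of Lemma~\ref{lem:Jensens} (Perlman's theorem): part (i) follows by applying the conditional Jensen inequality to the strictly convex functional $f\mapsto\|f\|_{\h_\Y}^2$ and $W=k(Y,\cdot)-\E[k(Y,\cdot)\,|\,X]$, and part (iii) is extracted from the \emph{equality case} of that strict inequality, which forces $k(Y,\cdot)=\E[k(Y,\cdot)\,|\,X,Z]$ a.s. You instead prove the exact Pythagorean (ANOVA-type) identity
\begin{equation*}
\E\bigl[\|\phi(Y)-\mu_{Y|X}\|_{\h_\Y}^2\bigr]
=\E\bigl[\|\phi(Y)-\mu_{Y|XZ}\|_{\h_\Y}^2\bigr]
+\E\bigl[\|\mu_{Y|XZ}-\mu_{Y|X}\|_{\h_\Y}^2\bigr],
\end{equation*}
whose cross term vanishes by the tower property, so that denominator $=$ numerator $+\;\E[{\rm MMD}^2(\delta_Y,P_{Y|XZ})]$. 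This buys you three things at once: (i) is immediate without invoking any convexity machinery; both (ii) and (iii) reduce uniformly to the vanishing of a single nonnegative expectation (in particular, the equality case of Jensen, which in the paper requires the strict-convexity half of Lemma~\ref{lem:Jensens}, is replaced by the transparent statement that the middle term is zero); and you get an exact identity rather than an inequality, which is of independent interest. What the paper's route buys in exchange is reusability: Lemma~\ref{lem:Jensens} applies to general lower-semicontinuous convex functionals on Banach spaces, whereas your argument is tied to the Hilbert structure of $\h_\Y$ --- but for this theorem the Hilbert structure is all that is needed. Your flagged measure-theoretic caveats (Bochner conditional expectations well defined under Assumptions~\ref{assump:characteristic_kernel}--\ref{assump:separable_RKHS}, and upgrading pointwise injectivity to a.s.\ statements via regular conditional distributions) are exactly the ones the paper also relies on, through Lemma~\ref{lem:Well-Def}, Remark~\ref{rem:Separable}, and Remark~\ref{characteristic_kernel}, so nothing is missing.
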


\begin{remark}[Characteristic kernel]
    \label{characteristic_kernel}
    A close examination of the proof of Theorem~\ref{thm:Eta} (in Section~\ref{sec:Eta}) reveals that $k(\cdot,\cdot)$ being characteristic is used for: (a) Proving $\rho^2 = 0$ implies $Y\indep Z |X$; (b) proving that $\rho^2 = 1$ implies $Y$ is  a function of $X$ and $Z$ (here actually the weaker assumption that the feature map $y\mapsto k(y,\cdot)$ is injective would have sufficed); (c) the denominator of $\rho^2$ is non-zero.
\end{remark}

%The following two propositions, proved in Section~\ref, shows that KPC indeed characterizes conditional association in some sense. For Gaussian data, it is strictly monotonic in classical partial correlation squared for a large class of kernels, and equals the classical partial correlation squared if linear kernel is used.

In the following we provide two results that go beyond~\cref{thm:Eta} and illustrate that KPC indeed measures conditional association --- any value of KPC between 0 and 1 conveys an idea of the strength of the association between $Y$ and $Z$, given $X$. In~\cref{prop:class_parcor} below (proved in~\cref{pf:1}) we show that when the underlying distribution is multivariate Gaussian, $\rho^2$ is a strictly monotonic function of the classical partial correlation coefficient squared $\rho_{YZ\cdot X}^2$ (for a large class of kernels), and equals $\rho_{YZ\cdot X}^2$ if the linear kernel (i.e., $k(u,v) = u^\top v$ for $u, v \in \R^d, d\ge 1$) is used. In the following we will restrict attention to kernels having the following form: 
\begin{equation}\label{eq:h-1-2-3}
k(u,v)=h_1(u) + h_2(v) + h_3(\|u-v\|),\qquad \mbox{ for}\;\;  u,v \in \R^d
\end{equation}
where $h_i:\R^{d}\to\R$ for $i=1,2,3$ are arbitrary functions, and $h_3$ is nonincreasing. Note that the Gaussian and distance kernels in~\cref{rem:excharker} and the linear kernel (and the Laplace kernel when $d=1$) are all of this form.

\begin{proposition}[Monotonicity with classical partial correlation]\label{prop:class_parcor}
    Suppose $(Y,Z,X)$ are jointly normal with $Y$ and $Z$ (given $X$) having (classical) partial correlation $\rho_{YZ\cdot X}$.
    Suppose the kernel $k(\cdot,\cdot)$ has the form in~\eqref{eq:h-1-2-3} where $h_3$ is assumed to be strictly decreasing. Then:
\begin{itemize}
	\item[(a)] $\rho^2(Y,Z|X)$ (in~\eqref{eq:eta}) is a strictly increasing function of $\rho_{YZ\cdot X}^2$, provided ${\rm Var}(Y|X)>0$ is held fixed (as we change the joint distribution of $(Y,Z,X)$). 
	
	\item[(b)] $\rho^2(Y,Z|X)=0$ (resp. 1) if and only if $\rho_{YZ\cdot X}^2=0$ (resp. 1).
	
	\item[(c)] If the distance kernel is used (see~\eqref{eq:Dist-kernel}), $\rho^2(Y,Z|X)$ is a strictly increasing function of $\rho_{YZ\cdot X}^2$, irrespective of the value of ${\rm Var}(Y|X)$.
    
    	\item[(d)] If the linear kernel is used, then $\rho^2(Y,Z|X)=\rho_{YZ\cdot X}^2$.
\end{itemize}
\end{proposition}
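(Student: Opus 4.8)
The plan is to start from the alternate formula for $\rho^2$ in~\cref{lem:eta-2} and exploit both the additive structure of the kernel~\eqref{eq:h-1-2-3} and the Gaussianity of every conditional law. First I would substitute $k(u,v)=h_1(u)+h_2(v)+h_3(|u-v|)$ into~\eqref{eta_formula}. Because $Y_1,Y_1'$ share the conditional marginal $P_{Y|X}$ (and $Y_2,Y_2'$ share $P_{Y|XZ}$), the $h_1$ and $h_2$ contributions to both $\E[\E[k(Y_1,Y_1')|X]]$ and $\E[\E[k(Y_2,Y_2')|X,Z]]$ equal the common value $\E[h_1(Y)]+\E[h_2(Y)]$ and cancel in the numerator; since $\E[k(Y,Y)]=\E[h_1(Y)]+\E[h_2(Y)]+h_3(0)$, they also cancel in the denominator. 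This collapses $\rho^2$ to an expression in $h_3$ alone, evaluated at the pairwise differences $|Y_1-Y_1'|$ and $|Y_2-Y_2'|$.

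Next I would invoke joint normality. As $(Y,Z,X)$ is Gaussian with $Y,Z$ scalar, the conditional laws $Y\mid X$ and $Y\mid X,Z$ are Gaussian whose variances $\sigma_1^2:=\Var(Y|X)$ and $\sigma_2^2:=\Var(Y|X,Z)$ do not depend on the conditioning values. Hence $Y_1-Y_1'\mid X \sim N(0,2\sigma_1^2)$ and $Y_2-Y_2'\mid X,Z\sim N(0,2\sigma_2^2)$, with conditional laws free of the conditioning variables, so the outer expectations are trivial. Writing $\phi(s):=\E[h_3(\sqrt{s}\,|G|)]$ for $G\sim N(0,1)$ and noting $\phi(0)=h_3(0)$, the coefficient becomes
\[
\rho^2(Y,Z|X)=\frac{\phi(2\sigma_2^2)-\phi(2\sigma_1^2)}{\phi(0)-\phi(2\sigma_1^2)}.
\]
The classical Gaussian identity $\sigma_2^2=\sigma_1^2(1-\rho_{YZ\cdot X}^2)$ then ties the two variances to the partial correlation.

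For part~(a) I would first record that $\phi$ is strictly decreasing when $h_3$ is: for $s_1<s_2$ we have $h_3(\sqrt{s_1}\,|G|)\ge h_3(\sqrt{s_2}\,|G|)$ with strict inequality on the full-measure set $\{G\neq 0\}$, whence $\phi(s_1)>\phi(s_2)$ (the integrability needed is supplied by Assumption~\ref{assump:characteristic_kernel}). Holding $\sigma_1^2$ fixed makes the denominator $\phi(0)-\phi(2\sigma_1^2)$ a fixed positive constant, and as $\rho_{YZ\cdot X}^2$ increases $\sigma_2^2$ decreases, so $\phi(2\sigma_2^2)$ strictly increases; thus $\rho^2$ is a strictly increasing function of $\rho_{YZ\cdot X}^2$. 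Part~(b) follows from the same display by injectivity of $\phi$: the numerator vanishes iff $\sigma_2^2=\sigma_1^2$ iff $\rho_{YZ\cdot X}^2=0$, while $\rho^2=1$ iff $\phi(2\sigma_2^2)=\phi(0)$ iff $\sigma_2^2=0$ iff $\rho_{YZ\cdot X}^2=1$.

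Parts~(c) and~(d) reduce to explicit computation. For the distance kernel $h_3(t)=-\tfrac12 t^\alpha$, so $\phi(s)=-\tfrac12\E[|G|^\alpha]\,s^{\alpha/2}$ is homogeneous of degree $\alpha/2$ with $\phi(0)=0$; substituting into the display, both the constant $\E[|G|^\alpha]$ and the scale $\sigma_1^2$ cancel, leaving $\rho^2=1-(1-\rho_{YZ\cdot X}^2)^{\alpha/2}$, which is strictly increasing in $\rho_{YZ\cdot X}^2$ for every value of $\Var(Y|X)$. The linear kernel $k(u,v)=uv$ is precisely the distance kernel with $\alpha=2$ via $uv=\tfrac12(u^2+v^2-(u-v)^2)$, so the same formula gives $\rho^2=1-(1-\rho_{YZ\cdot X}^2)=\rho_{YZ\cdot X}^2$. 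I expect the main obstacle to be the two probabilistic reductions underlying the display: the cancellation of the additive kernel pieces, which relies on the exchangeability of each coupled pair, and the vanishing of the outer expectations, which is special to the Gaussian case where the conditional law of $Y_1-Y_1'$ (resp.\ $Y_2-Y_2'$) is constant in the conditioning variables. Everything downstream --- strict monotonicity of $\phi$ and the closed forms in~(c)--(d) --- is then elementary.
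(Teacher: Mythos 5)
Your proof is correct and takes essentially the same route as the paper's: both exploit the additive kernel form together with the Gaussian identity $\Var(Y|X,Z)=(1-\rho_{YZ\cdot X}^2)\Var(Y|X)$ to collapse $\rho^2$ to the one-dimensional expression $\rho^2=\bigl(\E[h_3(\sqrt{1-\rho_{YZ\cdot X}^2}\,|\xi|)]-\E[h_3(|\xi|)]\bigr)/\bigl(h_3(0)-\E[h_3(|\xi|)]\bigr)$ with $\xi\sim N(0,2\Var(Y|X))$, from which (a)--(d) follow. If anything, your writeup is more complete than the paper's, since you explicitly verify the cancellation of the $h_1,h_2$ terms, the strict monotonicity of $\phi$, and the homogeneity computation $\rho^2=1-(1-\rho_{YZ\cdot X}^2)^{\alpha/2}$ underlying part (c), all of which the paper leaves implicit.
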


Next, we consider the two extreme cases: (i) If $Y= g(X)$, then $Y\indep Z|X$; (ii) if $Y = f(X,Z)$, then $Y$ is a function of $Z$ given $X$ and consequently $\rho^2=1$. The following proposition (proved in~\cref{pf:2}) states that if $Y$ follows a regression model with noise, and we are somewhere in between the above two extreme cases (i) and (ii), expressed as a convex combination, then $\rho^2(Y,Z|X)$ is monotonic in the weight used in the convex combination, regardless of how complicated the dependencies $g(X)$ and $f(X,Z)$ are.

%\begin{proposition}[Monotonicity with weight]\label{prop:monotone_lambda}
%Consider the following regression model with arbitrary measurable functions $f$ and $g$, and $\lambda\in[0,1]$:
%    $$Y = (1-\lambda)g(X) + \lambda f(X,Z) + \epsilon.$$
%\red{where $\epsilon \sim N(0,\sigma^2)$, which is independent of $X$ and $Z$. Then $\rho^2(Y,Z|X)$ is monotone increasing in $\lambda$ when a kernel of the form~\eqref{eq:h-1-2-3} is used.}
%\end{proposition}

\begin{proposition}[Monotonicity]\label{prop:monotone_lambda}
Consider the following regression model with $\Y = \R$, and arbitrary measurable functions $g:\X \to \Y$ and $f:\X \times \Z \to \Y$, and $\lambda\in[0,1]$:
    $$Y = (1-\lambda)g(X) + \lambda f(X,Z) + \epsilon,$$
where $\epsilon$ is the noise variable (independent of $X$ and $Z$) such that for another independent copy $\epsilon'$, $\epsilon - \epsilon'$ is unimodal\footnote{A random variable with distribution function $F$ is unimodal about $v$ if $F(x)=p\delta_v(x) + (1-p)F_1(x)$ for some $p\in[0,1]$, where $\delta_v$ is the distribution function of the Dirac measure at $v$, and $F_1$ is an absolutely continuous distribution function with density increasing on $(-\infty,v]$ and decreasing on $[v,\infty)$ \cite{purkayastha1998simple}. If $\epsilon$ is unimodal, then $\epsilon - \epsilon'$ is symmetric and unimodal about 0 \cite[Theorem 2.2]{purkayastha1998simple}.}. Then $\rho^2(Y,Z|X)$ is monotone nondecreasing in $\lambda$, when a kernel of the form~\eqref{eq:h-1-2-3} is used to define $\rho^2$.
\end{proposition}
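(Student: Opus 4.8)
The plan is to substitute the regression model directly into the alternate formula for $\rho^2$ from Lemma~\ref{lem:eta-2} and exploit the additive structure of the kernel in~\eqref{eq:h-1-2-3}. Write $Y = W + \epsilon$ with $W := (1-\lambda)g(X) + \lambda f(X,Z)$. In the numerator term $\E[\E[k(Y_2,Y_2')\mid X,Z]]$ the pair is generated as $Y_2 = W + \epsilon_2$, $Y_2' = W + \epsilon_2'$ with $\epsilon_2,\epsilon_2'$ i.i.d.\ copies of $\epsilon$, so that $Y_2 - Y_2' = \epsilon_2 - \epsilon_2'$; in the denominator term $\E[\E[k(Y_1,Y_1')\mid X]]$ the pair is $Y_1 = (1-\lambda)g(X) + \lambda f(X,Z_1) + \epsilon_1$ and $Y_1' = (1-\lambda)g(X) + \lambda f(X,Z_1') + \epsilon_1'$ with $Z_1,Z_1' \sim P_{Z\mid X}$ i.i.d.\ and $\epsilon_1,\epsilon_1'$ independent, so that $Y_1 - Y_1' = \lambda D + E$ where $D := f(X,Z_1) - f(X,Z_1')$ and $E := \epsilon_1 - \epsilon_1'$. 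Since $Y_1,Y_1',Y_2,Y_2'$ all have marginal law $P_Y$, the $h_1$ and $h_2$ contributions reduce to $\E[h_1(Y)]$ and $\E[h_2(Y)]$ in every term and cancel. Writing $a := \E[h_3(|E|)]$, $c := h_3(0)$ and $\psi(\lambda) := \E[h_3(|\lambda D + E|)]$, this collapses $\rho^2$ to
\[
\rho^2(Y,Z\mid X) = \frac{a - \psi(\lambda)}{c - \psi(\lambda)}.
\]

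Next I would reduce the monotonicity in $\lambda$ to a single scalar claim. Viewing the right-hand side as $f(\psi(\lambda))$ with $f(t) = (a-t)/(c-t)$, a one-line computation gives $f'(t) = (a-c)/(c-t)^2 \le 0$, since $a \le c$ (because $h_3$ is nonincreasing and $|E|\ge 0$), while the standing assumptions guarantee $c - \psi(\lambda) > 0$ through well-definedness (Lemma~\ref{lem:Well-Def}). Hence $f$ is nonincreasing, and it suffices to prove that $\lambda \mapsto \psi(\lambda)$ is nonincreasing on $[0,1]$. Using that $D$ and $E$ are independent (as $\epsilon$ is independent of $(X,Z)$), I would write $\psi(\lambda) = \E_D[G(\lambda D)]$ with $G(s) := \E_E[h_3(|s+E|)]$; if $G$ is symmetric and nonincreasing in $|s|$, then for every realization of $D$ the map $\lambda \mapsto G(\lambda D) = G(\lambda|D|)$ is nonincreasing on $\lambda \ge 0$, and averaging over $D$ preserves this.

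The crux is therefore the claim that $G(s) = \E_E[h_3(|s+E|)]$ is symmetric and nonincreasing in $|s|$. Symmetry follows at once from the symmetry of $E$. For monotonicity, write $\tilde h(w) := h_3(|w|)$, which is symmetric and nonincreasing in $|w|$, and use its layer-cake representation as a nonnegative superposition of indicators $\mathbf{1}_{(-b,b)}$ of symmetric intervals (first truncating $h_3$ from below by $-M$ to make it bounded, so that this applies e.g.\ to the distance kernel where $h_3(r) = -\tfrac12 r^\alpha$, and letting $M \to \infty$ at the end by dominated convergence using $\E[k(Y,Y)]<\infty$). This reduces the claim to showing that, for each $b>0$, the function $s \mapsto \P(s + E \in (-b,b)) = \P(E \in (-b-s,\, b-s))$ is symmetric and nonincreasing in $|s|$. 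But this is precisely the one-dimensional Anderson-type fact that, for a symmetric unimodal law, the mass assigned to an interval of fixed width is maximized when the interval is centered at the mode and is nonincreasing as the center moves away from it; here the center is $-s$, so the mass is nonincreasing in $|s|$. The symmetric unimodality of $E = \epsilon_1 - \epsilon_1'$ needed here is exactly what the hypothesis on $\epsilon - \epsilon'$ supplies (via~\cite[Theorem 2.2]{purkayastha1998simple}), and this preservation-of-unimodality step is the main obstacle, as the conclusion genuinely fails without it.
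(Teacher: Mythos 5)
Your proof is correct, and it shares the paper's skeleton: both arguments collapse $\rho^2$ (via the kernel form \eqref{eq:h-1-2-3} and \cref{lem:eta-2}) to $\bigl(a-\psi(\lambda)\bigr)/\bigl(c-\psi(\lambda)\bigr)$ with $\psi(\lambda)=\E\bigl[h_3(|\lambda D+E|)\bigr]$, $D=f(X,Z_1)-f(X,Z_1')$, $E\sim\epsilon-\epsilon'$, and both ultimately rest on the same one-dimensional Anderson-type fact that a symmetric unimodal law puts nonincreasing mass on a fixed-width interval as the center of the interval moves away from the mode. Where you differ is the packaging of the key step. The paper isolates a stochastic-ordering lemma (\cref{lem:sto_order}): conditional on $X$, $|\lambda D+E|$ is stochastically nondecreasing in $\lambda$; its proof integrates over $D$ and exploits the symmetry $D\overset{d}{=}-D$, after which monotonicity of $\psi$ follows because $h_3$ is nonincreasing. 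You instead condition on $D$ first and prove the scalar claim that $G(s)=\E_E\bigl[h_3(|s+E|)\bigr]$ is symmetric and nonincreasing in $|s|$, via a layer-cake decomposition of $h_3(|\cdot|)$ into indicators of symmetric intervals; the symmetry you exploit is that of $E$, not of $D$, so your route never uses that $D$ is symmetrically distributed (it is, being an i.i.d.\ difference, but in your argument that hypothesis becomes superfluous). Your write-up also makes explicit two points the paper passes over silently: that $t\mapsto(a-t)/(c-t)$ is nonincreasing because $a\le c$ and $c-\psi(\lambda)>0$ (the latter via \cref{lem:Well-Def}), and the truncation argument needed to apply the layer-cake representation when $h_3$ is unbounded below (monotone convergence suffices there; dominated convergence is not needed). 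The trade-off is essentially stylistic: the paper's stochastic-dominance formulation avoids the layer-cake bookkeeping, while yours localizes the unimodality argument into a self-contained statement about $G$ and is marginally more general at that step.
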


%\begin{proposition}\label{prop:inv_conti}
%The following two properties are easily seen by Lemma \ref{lem:eta-2}:
%\begin{enumerate}
%    \item \emph{Invariance.} $\rho^2(Y,Z|X)$ is invariant to any bijective transformation of $X$, and any bijective transformation of $Z$. If $\Y =\R^d$ and the kernel is of the form~\eqref{eq:h-1-2-3} then $\rho^2$ is also invariant to orthogonal transformations and translations of $Y$.
%    \item \emph{Continuity.}  Let $\Y$ be separable metric space, and $(Y_{n,1},Y_{n,1}')$, $(Y_{n,2},Y_{n,2}')$ be generated from the distribution of $(X_n,Y_n, Z_n)$ as is described in \eqref{eq:Y_2}. \red{What about $Z$?}
%    If $Y_n\overset{d}{\to} Y$, $(Y_{n,i},Y_{n,i}')\overset{d}{\to}(Y_i,Y_i')$, $\limsup_{n\to\infty}\E k^{1+\varepsilon}(Y_{n,i},Y_{n,i}')<\infty$ ($i=1,2$) for some $\varepsilon>0$, then $\rho^2(Y_n,Z_n|X_n)\to\rho^2(Y,Z|X)$ as $n\to\infty$. 
%\end{enumerate}
%\end{proposition}

In addition to the properties already described above, $\rho^2$ also possesses important \emph{invariance} and \emph{continuity} properties; see~\cref{sec:Inv-Cont} for a detailed discussion on this.

%\begin{remark}[Connection to Azadkia and Chatterjee's measure]\label{rk:data_dep_ker}

When $Y$ is a scalar, in~\citet[Equation (2.1)]{azadkia2019simple} a measure $T(Y, Z| X)$, satisfying properties (i)-(iii) of~\cref{thm:Eta} was proposed and studied. More specifically,
$$T(Y,Z|X):=\frac{\int \E({\rm Var}(\p(Y\geq t|Z,X)|X))dP_Y(t)}{\int\E ({\rm Var}(1_{Y\geq t}|X))dP_Y(t)}.$$
Indeed, as we see in the following result (see~\cref{sec:Cha-Stat} for a proof), $T(Y, Z| X)$ can actually be seen as a special case of $\rho^2$, for a suitable choice of the kernel $k(\cdot,\cdot)$.

\begin{lemma}\label{lem:AC19}
$T(Y, Z| X) = \rho^2(Y,Z|X)$ when we use the kernel\footnote{Note that this kernel is not characteristic, but the distance kernel mentioned later in the lemma is characteristic.} $k(y_1,y_2) :=\int 1_{y_1\geq t} 1_{y_2\geq t} d P_{Y}(t)$, for $y_1,y_2 \in \R$. Further, if $Y$ has a continuous cumulative distribution function $F_Y$, then $T(Y, Z| X) = \rho^2(F_Y(Y),Z|X)$ where we consider the distance kernel, as in~\eqref{eq:Dist-kernel} with $\alpha =1$. 
%Note that $F_Y$ can be viewed as the population rank map. It is one-to-one in the sense that if we let $F_Y^{-1}(t):=\inf \{y |F_Y(y)\geq t\}$ be the non-decreasing right-inverse and $U\sim{\rm unif}[0,1]$, then $F_Y(Y)\overset{d}{=}U$, $F_Y^{-1}(U)\overset{d}{=}Y$, $F_Y^{-1}(F_Y(Y))\overset{a.s.}{=}Y$ and $F_Y(F_Y^{-1}(U))\overset{a.s.}{=}U$.
\end{lemma}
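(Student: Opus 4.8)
The plan is to evaluate both the numerator and the denominator of $\rho^2(Y,Z|X)$ for the given kernel and match them, term by term, with those of $T(Y,Z|X)$. First I would record that $k(y_1,y_2)=\int 1_{y_1\geq t}1_{y_2\geq t}\,dP_Y(t)=\langle \phi(y_1),\phi(y_2)\rangle_{L^2(P_Y)}$ with feature map $\phi(y)=1_{y\geq\,\cdot}$, so $k$ is a genuine (symmetric, nonnegative definite) kernel, and moreover $k(y_1,y_2)\leq k(y,y)=F_Y(y)\leq 1$, so $k$ is bounded and every expectation below is finite. The one subtlety is that this kernel is \emph{not} characteristic, so \cref{lem:eta-2} cannot be cited verbatim. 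However, the alternate form \eqref{eta_formula} is proved purely by expanding ${\rm MMD}^2$ via \eqref{eq:MMD} together with the tower property (the cross terms $\E[\E[k(Y_2,\tilde Y_1)|X,Z]]$ and $\E[\E[k(Y,Y_1)|X]]$ both collapse to $\E[\E[k(Y_1,Y_1')|X]]$ after integrating out the relevant conditioning), and this derivation never uses the characteristic property. Hence \eqref{eta_formula} remains valid for our bounded $k$, and I would invoke it in that form.

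Next I would compute the three nontrivial pieces of \eqref{eta_formula}. Using Fubini (justified by boundedness) and the conditional independence in \eqref{eq:Y_1}--\eqref{eq:Y_2}, I obtain $\E[\E[k(Y_2,Y_2')|X,Z]]=\int \E[\,\p(Y\geq t|X,Z)^2\,]\,dP_Y(t)$ and $\E[\E[k(Y_1,Y_1')|X]]=\int \E[\,\p(Y\geq t|X)^2\,]\,dP_Y(t)$, because conditionally $Y_2,Y_2'$ (resp.\ $Y_1,Y_1'$) are i.i.d., so the product of indicators factors into $\p(Y\geq t|X,Z)^2$ (resp.\ $\p(Y\geq t|X)^2$). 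Likewise $\E[k(Y,Y)]=\E[F_Y(Y)]=\int\p(Y\geq t)\,dP_Y(t)$.

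The matching step is where the variance decompositions enter. For the numerator, the law of total variance gives ${\rm Var}(\p(Y\geq t|Z,X)\,|\,X)=\E[\p(Y\geq t|Z,X)^2|X]-\p(Y\geq t|X)^2$ (using $\E[\p(Y\geq t|Z,X)|X]=\p(Y\geq t|X)$), so integrating over $X$ and then against $dP_Y(t)$ shows the numerator of $T$ equals $\int\big(\E[\p(Y\geq t|X,Z)^2]-\E[\p(Y\geq t|X)^2]\big)\,dP_Y(t)$, exactly the numerator of $\rho^2$. For the denominator, $1_{Y\geq t}^2=1_{Y\geq t}$ gives ${\rm Var}(1_{Y\geq t}|X)=\p(Y\geq t|X)-\p(Y\geq t|X)^2$, and since $\int\E[\p(Y\geq t|X)]\,dP_Y(t)=\int\p(Y\geq t)\,dP_Y(t)=\E[F_Y(Y)]$, the denominator of $T$ equals $\E[k(Y,Y)]-\E[\E[k(Y_1,Y_1')|X]]$, the denominator of $\rho^2$. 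This proves the first assertion $T=\rho^2$.

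For the second assertion I would observe that $k(y_1,y_2)=P_Y((-\infty,\min(y_1,y_2)])=\min(F_Y(y_1),F_Y(y_2))$ (monotonicity of $F_Y$), while the distance kernel \eqref{eq:Dist-kernel} with $\alpha=1$ restricted to $[0,1]$ is $\tfrac12(u+v-|u-v|)=\min(u,v)$. Hence $k(y_1,y_2)=k_{\rm dist}(F_Y(y_1),F_Y(y_2))$, i.e.\ $k$ is the distance kernel pulled back through $F_Y$. Since every quantity defining $\rho^2$ depends on the data only through kernel evaluations at pairs of $Y$-values, replacing $Y$ by $\tilde Y=F_Y(Y)$ and the kernel $k$ by the distance kernel leaves all these evaluations unchanged, so $\rho^2(Y,Z|X)=\rho^2(F_Y(Y),Z|X)$ with the distance kernel; combined with the first part this gives $T(Y,Z|X)=\rho^2(F_Y(Y),Z|X)$. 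Continuity of $F_Y$ is used here to guarantee $\tilde Y\sim U[0,1]$ and to make the $\min$ identity transfer cleanly. I expect the main obstacle to be bookkeeping rather than depth: justifying the Fubini interchanges and, more importantly, articulating carefully why the non-characteristic alternate form \eqref{eta_formula} is legitimate for this $k$.
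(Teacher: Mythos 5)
Your proof is correct. For the first identity ($T=\rho^2$ with the kernel $k(y_1,y_2)=\int 1_{y_1\geq t}1_{y_2\geq t}\,dP_Y(t)$) you follow essentially the same path as the paper: apply~\eqref{eta_formula}, use Fubini and the conditional i.i.d.\ structure in~\eqref{eq:Y_1}--\eqref{eq:Y_2} to get $\E[\E[k(Y_2,Y_2')|X,Z]]=\int\E[\p(Y\geq t|X,Z)^2]\,dP_Y(t)$ and its analogues, and match numerator and denominator of $T$ via the variance decomposition. Your explicit justification that~\eqref{eta_formula} survives for this non-characteristic kernel (its derivation only expands ${\rm MMD}^2$ and uses the tower property) is a point the paper leaves implicit, and it is worth stating.

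Where you genuinely diverge is the second identity. The paper proves $T(Y,Z|X)=\rho^2(F_Y(Y),Z|X)$ by direct recomputation: it writes the distance kernel on $[0,1]$ as $\min\{u,v\}=\int_0^1 1_{u\geq t}1_{v\geq t}\,dt$, uses that $F_Y(Y')\sim{\rm Uniform}[0,1]$ for an independent copy $Y'$ (this is exactly where continuity of $F_Y$ enters), and the right-inverse identity $F_Y^{-1}(F_Y(Y))=Y$ a.s.\ to convert $\int_0^1\p(F_Y(Y)\geq t|X,Z)^2\,dt$ back into $\int\p(Y\geq t|X,Z)^2\,dP_Y(t)$, reducing to the first computation. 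You instead observe the pointwise pullback identity $k(y_1,y_2)=F_Y(\min\{y_1,y_2\})=\min\{F_Y(y_1),F_Y(y_2)\}=k_{\rm dist}(F_Y(y_1),F_Y(y_2))$, and then argue that since every term in~\eqref{eta_formula} is an expectation of kernel evaluations at pairs drawn (conditionally i.i.d.) from conditional laws of the response, and the conditional law of $F_Y(Y)$ given $(X,Z)$ is the pushforward under $F_Y$ of that of $Y$, all terms are literally unchanged when one replaces $(Y,k)$ by $(F_Y(Y),k_{\rm dist})$. This invariance argument is cleaner and, notably, never actually uses continuity of $F_Y$ (the pullback identity and the pushforward of conditional laws hold for any cdf), so it proves a slightly stronger statement than the lemma; your closing remark that continuity is needed ``to guarantee $\tilde Y\sim U[0,1]$'' describes the paper's argument rather than yours. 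The trade-off: the paper's route stays entirely inside the indicator-integral calculus and needs no discussion of pushforward conditional distributions, while yours isolates the structural reason the two kernels give the same coefficient.
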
 
Thus, our measure $\rho^2$ can be thought of as a generalization of $T(Y, Z| X)$ in~\cite{azadkia2019simple}, allowing $Y$ (and $(X,Z)$) to take values in more general spaces. In fact, our framework is more general, as we allow for `any' choice of the kernel $k(\cdot,\cdot)$.

\begin{remark}[Measuring association between $Y$ and $Z$]\label{rem:Ind} Consider the special case when there is no $X$, i.e., 
    $$\begin{aligned}
        \rho^2(Y,Z|X) = \rho^2(Y,Z|\emptyset) = \frac{\mathbb{E}[{\rm MMD}^2(P_{Y|Z},P_{Y})]}{\mathbb{E}[{\rm MMD}^2(P_{\delta_Y},P_{Y})]}.
    \end{aligned}$$
Now, $\rho^2(Y,Z|\emptyset)$ can be used to measure the unconditional dependence between $Y$ and $Z$. This measure $\rho^2(Y,Z|\emptyset)$ has been proposed and studied in detail in the recent paper~\cite{deb2020kernel}; also see \cite[Section 2.4]{Ke2019}. In particular, as illustrated in~\cite{deb2020kernel}, $\rho^2(Y,Z|\emptyset)$ can be effectively estimated using graph-based methods (in near linear time) and can also be readily used to test the hypothesis of mutual independence between $Y$ and $Z$. In Section~\ref{sec:Est} we also develop an RKHS-based estimator of $\rho^2(Y,Z|\emptyset)$.
\end{remark}

Let us now discuss some properties of $\rho^2$ when we use the linear kernel (i.e., $k(u,v)=u^\top v$, for $u,v \in \Y = \R^d$, with $d \ge 1$). Suppose $Y=(Y^{(1)},\ldots,Y^{(d)})^\top \in \Y = \R^d$. Then, from~\eqref{eta_formula}, we have the following expression for $\rho^2(Y,Z|X)$:
\begin{equation}\label{eq:LinKernel}
        \rho^2(Y,Z|X) = \frac{\sum_{i=1}^d \E\left(\Var\left[\E(Y^{(i)}|X,Z)|X \right] \right)}{\sum_{i=1}^d \E [\Var(Y^{(i)}|X)]}.
        \end{equation}
\begin{remark}[Connection to~\citet{jansonfloodgate}]
When $d=1$, the numerator of $\rho^2(Y,Z|X)$ is equal to the minimum mean squared error gap (mMSE gap): $\E \left[(Y-\E[Y|X])^2 \right] - \E \left[(Y-\E[Y|X,Z])^2 \right]$, which has been used to quantify the conditional relationship between $Y$ and $Z$ given $X$ in the recent paper~\cite{jansonfloodgate}. Note that mMSE gap is not invariant under arbitrary scalings of $Y$, but $\rho^2(Y,Z|X)$ (which is equal to the squared partial correlation $\rho^2_{YZ\cdot X}$; see Proposition \ref{prop:class_parcor}) is. Thus,  $\rho^2(Y,Z|X)$ can be viewed as a normalized version of the mMSE gap.
\end{remark}
        
\begin{remark}[{Linear kernel and~\cref{thm:Eta}}] \label{rk:lin_ker_gaussian}
As the linear kernel is not characteristic, $\rho^2$, defined with the linear kernel, does not satisfy all the three properties in~\cref{thm:Eta}. It satisfies (i) and (iii): If $Y$ is a function of $X$ and $Z$, then $\rho^2 = 1$.
Conversely, if $\rho^2 = 1$, then $\E [\Var(Y^{(i)}|X,Z)] = 0$ for all $i$ in~\eqref{eq:LinKernel}, which implies that $Y^{(i)}|X,Z$ is degenerate for all $i$, i.e., $Y$ is a function of $X$ and $Z$. However, it is not guaranteed to satisfy (ii): If $Y\indep Z | X$, then indeed $\rho^2 = 0$; but $\rho^2 = 0$ does not necessarily imply that $Y\indep Z |X$\footnote{ A counter example is: Let $X,Z,U$ be i.i.d. having a continuous distribution with mean 0. Let $Y:= X+ZU$. Then $\rho^2 = 0$ but $Y$ is not independent of $Z$ given $X$.}. This is because the linear kernel is not characteristic. However, if $(Y,X,Z)$ is jointly normal, then $\rho^2 = 0$ does imply $Y\indep Z |X$ (see Lemma \ref{multi_normal}) and  $\rho^2$ satisfies all the three properties in~\cref{thm:Eta}.
\end{remark}

\section{Estimating KPC with geometric graph-based methods}\label{sec:Est_graph}
%\subsection{The geometric graph functional}\label{graph_functional}
Suppose that $(X_1,Y_1, Z_1),\ldots, (X_n,Y_n,Z_n)$ are i.i.d.~observations from $P$ on $\X\times \Y\times \Z$. Here we assume that $\X$ and $\X\times \Z$ are metric spaces and we have a kernel function $k(\cdot,\cdot)$ defined on $\Y$. In fact, $\X$ and $\X\times \Z$ can be even more general spaces --- with metrics being replaced by certain semimetrics, ``similarity'' functions or ``divergence'' measures; see e.g.,~\cite{boytsov2013learning,athitsos2004boostmap,miranda2013very,jacobs2000classification,Gottlieb2017}. In this section we propose and study a general framework --- using {\it geometric graphs} --- to estimate $\rho^2(Y,Z|X)$ (as in~\eqref{eq:K-Exp}). We will estimate each of the terms in~\eqref{eq:K-Exp} separately to obtain our final estimator $\hat{\rho^2}$ of  $\rho^2(Y,Z|X)$.  %$$\rho^2(Y,Z|X) =\frac{\mathbb{E}\left[\mathbb{E}[k(Y_2,Y_2')|X,Z]\right] - \mathbb{E}\left[\mathbb{E}[k_(Y_1,Y_1')|X]\right]  }{\mathbb{E}[k(Y,Y)]-\mathbb{E}[\mathbb{E}[k(Y_1,Y_1')|X]]}.$$
%Hence one direction is to find a way to estimate $\mathbb{E}[\mathbb{E}[k(Y_1,Y_1')|X]]$.

Note that $\mathbb{E}[k (Y,Y)]$ in~\eqref{eq:K-Exp} can be easily estimated by $\frac{1}{n} \sum_{i=1}^n k(Y_i,Y_i)$. So we will focus on estimating the two other terms --- $\mathbb{E}[\mathbb{E}[k(Y_1,Y_1')|X]]$ and $\mathbb{E}[\mathbb{E}[k(Y_2,Y_2')|X,Z]]$; recall the joint distributions of $(X,Y_1,Y_1')$ and $(X,Z,Y_2,Y_2')$ as mentioned in~\eqref{eq:K-Exp}. As our estimation strategy for both the above terms is similar, let us first focus on estimating $$T := \mathbb{E}[\mathbb{E}[k(Y_1,Y_1')|X]].$$ 
%Our proposal here heavily borrows ideas form the recent work~\cite{deb2020kernel}. 
To motivate our estimator of $T$, let us consider a simple case, where $X_i$'s are categorical, i.e., take values in a finite set. A natural estimator for $T$ in that case would be
\begin{equation}\label{eq:targeterm}
	\frac{1}{n}\sum_{i=1}^n \frac{1}{\#\{j:X_j=X_i\}} \sum_{j:X_j=X_i} k(Y_i,Y_j),
\end{equation}
%If in the observed data $X_i$'s take $L$ distinct levels $x^{(1)},\ldots,x^{(L)}$, and each level has $n_l$ observations $\{(Y_t^{(l)},x^{(l)})$, $t=1,\cdots,n_l\}$, for $l=1,\cdots,L$, then $\mathbb{E}[\mathbb{E}[k(Y_1,Y_1')|X]]$ can be naturally estimated by
%$$\frac{1}{n}\sum_{l=1}^L \frac{1}{n_l}\sum_{i,j=1}^{n_l}k(Y_i^{(l)},Y_j^{(l)})$$
as in \cite[Section 3.1]{Ke2019}. In this section, instead of assuming $X_i$'s are categorical, we will focus on general distributions for $X_i$'s, typically continuous.

 We will use the notion of \emph{geometric graph functionals} on $\mathcal{X}$; see~\cite{deb2020kernel, bhattacharya2019general}. $\mcg$ is said to be a  geometric graph functional on $\mathcal{X}$ if, given any finite subset $S$ of $\mathcal{X}$, $\mcg(S)$ defines a graph with vertex set $S$ and corresponding edge set, say $\mathcal{E}(\mcg(S))$, which is invariant under any permutation of the elements in $S$. The graph can be both directed or undirected, and we will restrict ourselves to simple graphs, i.e., graphs without multiple edges and self loops. Examples of such functionals include minimum spanning trees (MSTs) and $K$-nearest neighbor ($K$-NN) graphs, as described below. Define $\mathcal{G}_n\coloneqq \mathcal{G}(X_1,\ldots ,X_n)$ where $\mathcal{G}$ is some graph functional on $\mathcal{X}$.
\begin{enumerate}
    \item \textbf{$K$-NN graph}: The directed $K$-NN graph $\mathcal{G}_n$ puts an edge from each node $X_i$ to its $K$-NNs among $X_1,\ldots,X_{i-1},X_{i+1},\ldots,X_n$ (so $X_i$ is excluded from the set of its $K$-NNs). Ties will be broken at random if they occur, to ensure the out-degree is always $K$. The undirected $K$-NN graph is obtained by ignoring the edge direction in the directed $K$-NN graph and removing multiple edges if they exist.
    
    \item \textbf{MST}: An MST is a subset of edges of an edge-weighted undirected graph which connects all the vertices with the least possible sum of edge weights and contains no cycles. For instance, in a metric (say $\rho_\X(\cdot,\cdot)$) space, given a set of points $X_1,\ldots,X_n$ one can construct an MST for the complete graph with vertices as $X_i$'s and edge weights $\rho_\X(X_i,X_j)$. 
\end{enumerate}

As is explained in the Introduction, in order to estimate $\mathbb{E}[\mathbb{E}[k(Y_1,Y_1')|X]]$, ideally we would like to have multiple $Y_i$'s (say $Y_i'$'s) from the conditional distribution $P_{Y|X_i}$, so as to average over all such $k(Y_i,Y_i')$. However, this is rarely possible in real data (if $X_1$ is continuous, for example). As a result, our strategy is to find $X_j$'s that are ``close" to $X_i$ and average over all such $k(Y_i,Y_j)$. The notion of geometric graph functionals comes in rather handy in formalizing this notion. The key intuition is to define a graph functional $ \mathcal{G}$ where $X_i$ and $X_j$ are connected (via an edge) in $\mathcal{G}_n \coloneqq \mathcal{G}(X_1,\ldots ,X_n)$ provided they are ``close". Towards this direction, let us define the following statistic (as in~\cite{deb2020kernel}):
\begin{align}\label{eq:statest}
T_n(Y,X) \equiv T_n\coloneqq \frac{1}{n}\sum_{i=1}^n\frac{1}{d_i}\sum_{j:(i,j)\in\emgn} k(Y_i,Y_j),
\end{align}
where $\mathcal{G}_n = \mathcal{G}(X_1,\ldots ,X_n)$ for some graph functional $\mathcal{G}$ on $\mathcal{X}$, and $\emgn$ denotes the set of (directed/undirected) edges of $\mathcal{G}_n$, i.e.,~$(i,j)\in\emgn$ if and only if there is an edge from $i\to j$ if $\mathcal{G}_n$ is a directed graph, or an edge between $i$ and $j$ if $\mathcal{G}_n$ is an undirected graph. Here $d_i:=|\{j:(i,j)\in \emgn \}|$ denotes the degree (or out-degree in a directed graph) of $X_i$ in $\mathcal{G}_n$. Note that when $\mathcal{G}_n$ is undirected, $(i,j) \in \emgn$ if and only if $(j,i) \in \emgn$.

The next natural question is: ``Does $T_n$, as defined in~\eqref{eq:statest}, consistently estimate $T$?". The following result in~\cite[Theorem 3.1]{deb2020kernel} answers this question in the affirmative, under appropriate assumptions on the graph functional. 
%The following consistency theorem is given in~\cite[Theorem 3.1]{deb2020kernel}.
\begin{theorem}[{\citet[Theorem 3.1]{deb2020kernel}}]\label{theo:consis}
Suppose $\mathcal{G}_n$ satisfies Assumptions~\ref{assump:conv_nn}--\ref{assump:degupbd} (detailed in~\cref{sec:Assump-Graph}) and $\mathcal{H}_{\Y}$ is separable. For $\theta >0$, let $\tmk^{\theta}(\Y)$ be the collection of all Borel probability measures $Q$ over $\Y$ such that $\E_{Q}\left[k^\theta (Y,Y) \right]<\infty$. If $P_Y\in\tmk^{2+\epsilon}(\Y)$ for some fixed $\epsilon>0$, then $T_n\overset{p}{\to} T$. Further, if $P_Y\in\tmk^{4+\epsilon}(\Y)$ for some fixed $\epsilon>0$, then $T_n\overset{a.s.}{\longrightarrow} T$.
\end{theorem}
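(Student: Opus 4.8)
The plan is to condition on the design $\mathbf{X} := (X_1,\ldots,X_n)$ (which fixes the graph $\mcg_n$) and decompose
\[
T_n = \E[T_n \mid \mathbf{X}] + \bigl(T_n - \E[T_n \mid \mathbf{X}]\bigr),
\]
treating the conditional mean (the ``bias'') and the conditional fluctuation separately. The starting observation is that, given $\mathbf{X}$, the $Y_i$ are independent with $Y_i \sim P_{Y|X_i}$ and the graph has no self-loops, so for each edge $(i,j)\in\emgn$ the reproducing property together with an interchange of expectation and inner product (justified by Bochner integrability, using separability of $\h_\Y$ and $\E\sqrt{k(Y,Y)}<\infty$) gives $\E[k(Y_i,Y_j)\mid \mathbf{X}] = \langle \mu_{P_{Y|X_i}}, \mu_{P_{Y|X_j}}\rangle_{\h_\Y}$. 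Writing $h(x) := \mu_{P_{Y|x}} \in \h_\Y$ for the conditional mean embedding, this identifies the target as $T = \E\,\|h(X)\|_{\h_\Y}^2$ and the conditional mean as
\[
\E[T_n \mid \mathbf{X}] = \frac1n\sum_{i=1}^n \frac{1}{d_i} \sum_{j:(i,j)\in\emgn} \langle h(X_i), h(X_j)\rangle_{\h_\Y}.
\]

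For the bias term I would add and subtract $\|h(X_i)\|^2$ on each edge, splitting $\E[T_n\mid\mathbf{X}]$ into $\frac1n\sum_i \|h(X_i)\|_{\h_\Y}^2$ plus a remainder $-\frac1n\sum_i \frac{1}{d_i}\sum_j \langle h(X_i),\, h(X_i)-h(X_j)\rangle$. The first piece converges to $T$ by the law of large numbers, using $\E\|h(X)\|^2 \le \E[k(Y,Y)] < \infty$ (the latter following from $P_Y\in\tmk^{2+\epsilon}(\Y)$). The remainder is where the graph assumptions enter: since $x\mapsto h(x)$ is only measurable, I would approximate it in $L^2(P_X;\h_\Y)$ by a bounded continuous $\h_\Y$-valued map $\tilde h$ (again separability of $\h_\Y$ makes the Bochner theory and the density of continuous functions available), and then split each inner product into a ``continuity'' term $\langle h(X_i), \tilde h(X_i)-\tilde h(X_j)\rangle$ and two ``approximation'' terms involving $h-\tilde h$. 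The convergence-of-neighbors assumption (\cref{assump:conv_nn}) forces the continuity term to vanish because graph-neighbors become close and $\tilde h$ is (uniformly) continuous, while the degree and in-degree bounds (through \cref{assump:degupbd}) let me replace the reweighted neighbor-averages of $\|h-\tilde h\|$ by ordinary empirical averages, so Cauchy--Schwarz plus the law of large numbers bounds the approximation terms by a constant multiple of the $L^2$-approximation error, which is arbitrarily small.

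For the fluctuation term I would bound the conditional second moment $\E[(T_n-\E[T_n\mid\mathbf{X}])^2\mid\mathbf{X}]$. Expanding the double sum, only pairs of edges sharing a $Y$-vertex contribute nonzero conditional covariance (edges on disjoint vertices are conditionally independent), so the bounded-degree assumption keeps the number of contributing pairs at $O(n)$; each contribution is controlled via the RKHS Cauchy--Schwarz inequality $k(y,y')^2 \le k(y,y)\,k(y',y')$, which reduces everything to moments of $k(Y,Y)$, and with $P_Y\in\tmk^{2+\epsilon}(\Y)$ a truncation argument yields a conditional variance of order $n^{-1}$, giving $T_n-\E[T_n\mid\mathbf{X}]\to 0$ in probability and hence $T_n\overset{p}{\to}T$. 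For almost-sure convergence I would instead control a higher moment (e.g.\ the conditional fourth moment) of the fluctuation: the same sparsity-from-bounded-degree keeps the number of contributing edge-tuples of lower order, and the stronger moment condition $P_Y\in\tmk^{4+\epsilon}(\Y)$ is precisely what makes the resulting bound summable in $n$, so that Borel--Cantelli delivers a.s.\ convergence (combined with the a.s.\ version of the bias analysis, for which the same approximation argument goes through using a.s.\ forms of \cref{assump:conv_nn}).

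The main obstacle is the bias term, specifically the fact that the conditional mean embedding $x\mapsto \mu_{P_{Y|x}}$ is a Hilbert-space-valued map that is merely measurable rather than continuous, so one cannot directly pass from closeness of neighboring $X$'s to closeness of the embeddings. The crux is therefore the $L^2(P_X;\h_\Y)$ approximation by continuous functions together with the degree bounds needed to transfer the approximation error through the irregular, data-dependent reweighting $1/d_i$ in the graph average. A secondary difficulty is the heavy-tailed behavior of $k(Y,Y)$ in the fluctuation analysis, which is exactly what separates the $2+\epsilon$ moment condition for weak consistency from the $4+\epsilon$ condition required for the almost-sure statement.
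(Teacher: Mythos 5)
First, a point of context: this paper does not prove Theorem~\ref{theo:consis} itself --- it is imported verbatim from \citet[Theorem 3.1]{deb2020kernel} --- so the comparison is against that proof, whose key ingredients the present paper signals elsewhere (Lusin's theorem for the bias, and the concentration bound underlying \cref{prop:concen_moment} for the fluctuations). For the in-probability claim your argument is essentially sound and close in spirit to theirs. Conditioning on $\mathbf{X}=(X_1,\ldots,X_n)$, identifying $\E[k(Y_i,Y_j)\mid\mathbf{X}]=\langle h(X_i),h(X_j)\rangle_{\h_\Y}$ with $h(x):=\mu_{P_{Y|x}}$, and handling the merely measurable map $x\mapsto h(x)$ by approximation with a continuous function is exactly the Lusin-type device used in \cite{deb2020kernel}; and your Cauchy--Schwarz bookkeeping with $d_i\geq r_n$, in-degrees at most $t_n$, and $t_n/r_n=O(1)$ to push the $L^2$-approximation error through the reweighted graph average is correct. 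Your fluctuation control differs from theirs --- you bound conditional second (and fourth) moments by counting edge pairs that share a vertex, whereas they use a bounded-difference concentration inequality built on the graph-stability Assumption~\ref{assump:degree} --- but both routes are valid for the $\overset{p}{\to}$ statement.

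The genuine gap is in the almost-sure part, and it sits exactly where you wave your hands: the bias term. The quantity $\E[T_n\mid\mathbf{X}]$ is a random sequence (a function of $X_1,\ldots,X_n$ only), and Assumption~\ref{assump:conv_nn} is intrinsically an in-probability statement about a uniformly sampled neighbor of $X_1$; there is no ``a.s.\ form'' of it available under the theorem's hypotheses, and even postulating $\rho_{\X}(X_1,X_{N(1)})\overset{a.s.}{\longrightarrow}0$ would not give almost-sure convergence of $\E[T_n\mid\mathbf{X}]$, because your Lusin/$L^2$ argument controls this term only through its expectation (via exchangeability), hence only in probability. Your fourth-moment-plus-Borel--Cantelli step disposes of $T_n-\E[T_n\mid\mathbf{X}]$, but leaves $\E[T_n\mid\mathbf{X}]-T$ untouched. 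The fix --- and what \cite{deb2020kernel} actually does --- is to center at the deterministic quantity $\E[T_n]$: prove a concentration inequality for $T_n$ around $\E[T_n]$ using the single-point-replacement stability of the graph in Assumption~\ref{assump:degree} (a McDiarmid-type bounded-differences argument applied after truncation, with the $4+\epsilon$ moment controlling the truncation error), apply Borel--Cantelli, and then establish the deterministic convergence $\E[T_n]\to T$, which needs only the in-expectation bias analysis you already have. Tellingly, your proof never invokes Assumption~\ref{assump:degree} at all; that unused hypothesis is precisely the ingredient needed to close this gap.
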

Note that Assumptions~\ref{assump:conv_nn}--\ref{assump:degupbd} required on the graph functional $\mathcal{G}_n$ for the above result were made in~\cite[Theorem 3.1]{deb2020kernel}; see~\cite[Section 3]{deb2020kernel} for a detailed discussion on these assumptions.
%\begin{assump}\label{assump:conv_nn}
%    Given the graph $\mathcal{G}_n$, let $N(1),\ldots ,N(n)$ be  independent random variables where $N(i)$ is a uniformly sampled index from among the (out-)neighbors of $X_i$ in $\mathcal{G}_n$, i.e., $\{j:(i,j)\in \emgn \}$. Assume that $\X$ is a metric space with metric $\rho_\X$, and that
%	$$\rho_{\X}(X_1,X_{N(1)})\overset{p}{\to}0,\quad {\rm as}\ n\to\infty.$$
%\end{assump}
%\begin{assump}\label{assump:degree}
%    Assume that there exists a deterministic positive sequence $r_n\geq 1$ (may or may not be bounded), such that $$\min_{1\leq i\leq n} d_i\geq r_n\qquad \mbox{a.s.}$$
%    Let $\mathcal{G}_{n,i}$ denote the graph obtained from $\mathcal{G}_n$ by replacing $X_i$ with an i.i.d. random element $X_i'$. Assume that there exists a deterministic positive sequence $q_n$ (may or may not be bounded), such that
%    $$\max_{1\leq i\leq n}\max\{|\mathcal{E}(\mathcal{G}_n)\setminus \mathcal{E}(\mathcal{G}_{n,i})|,|\mathcal{E}(\mathcal{G}_{n,i})\setminus \mathcal{E}(\mathcal{G}_{n})|\}\leq q_n\ \ a.s.\qquad {\rm and}\qquad \frac{q_n}{r_n} = O(1).$$
%\end{assump}
%\begin{assump}\label{assump:degupbd}
%    There exists a deterministic sequence $t_n$ (may or may not be bounded) such that:
%		$$\max_{1\leq i\leq n} d_i\leq t_n\quad a.s. \qquad {\rm and} \qquad  \frac{t_n}{r_n}=O(1). $$
%\end{assump}
It can be further shown that for the $K$-NN graph ($K=o(n/\log n)$) and the MST these assumptions are satisfied under mild conditions; see~\cite[Proposition 3.2]{deb2020kernel}.
%\begin{proposition}\label{prop:graph_examples}
%    Suppose that $\X$ is a Euclidean space. Then:
%    \begin{enumerate}
%        \item The $K_n$-NN graphs (directed or undirected) satisfy Assumptions~\ref{assump:conv_nn}--\ref{assump:degupbd} if $K_n=o\left({n}/{\log n}\right)$ and $\|X_1-X_2\|$ has a continuous distribution.
%        \item The MST graph (undirected) satisfies Assumptions~\ref{assump:conv_nn}--\ref{assump:degupbd} if $X$ has a density.
%    \end{enumerate}
    %Furthermore, the upper bound of $\{\frac{s_n}{r_n}\}$ is $C(d)$ for directed $K$-nearest neighbor graph,
    %Moreover, $\frac{s_n}{r_n}$ is bounded by $C(d) + 1$ for $K$NN graphs, and $2C(d)$ for MST, where $C(d)$ is a positive integer depending on $d$, the dimension of $\X$.
%\end{proposition}

%
%For $\theta >0$, let $\tmk^{\theta}(\Y)$ be the collection of all Borel probability measures $P_Y$ over $\Y$ such that $\E_{P_Y}\left[k^\theta (Y,Y) \right]<\infty$. Recall that $T:=\mathbb{E}[\mathbb{E}[k(Y_1,Y_1')|X]]$. The following consistency theorem is given in~\cite[Theorem 3.1]{deb2020kernel}.
%\begin{theorem}\label{theo:consis}
%Suppose $\mathcal{G}_n$ satisfies Assumptions~\ref{assump:conv_nn}--\ref{assump:degupbd}, and $\mathcal{H}_{\Y}$ is separable. If $P_Y\in\tmk^{2+\epsilon}(\Y)$ for some fixed $\epsilon>0$, then $T_n\overset{p}{\to} T$. Further, if $P_Y\in\tmk^{4+\epsilon}(\Y)$ for some fixed $\epsilon>0$, then $T_n\overset{a.s.}{\longrightarrow} T$.
%\end{theorem}
%these assumptions were also made in~\cite[Section 3]{deb2020kernel}

\subsection{Estimation of $\hat{\rho^2}$}\label{sec:Est_graph2}
In the previous subsection we constructed a very general geometric graph-based consistent estimator for $T$, the first term in~\eqref{eq:2-Terms}. We will use a similar strategy to estimate the second term in~\eqref{eq:2-Terms}, namely, $\mathbb{E}[\mathbb{E}[k(Y_2,Y_2')|X,Z]]$, i.e., we define a geometric graph functional on the space $\X \times \Z$ and construct an estimator like $T_n$ (in~\eqref{eq:statest}) but now the geometric graph is defined on $\X \times \Z$ with the data points $\{(X_i,Z_i)\}_{i=1}^n$. 

For simplicity of notation, we let ${\ddot{X}} := (X,Z)$ and $\ddot{\X}:=\X\times \Z$. Let $\mathcal{G}_n^{X}$ (resp.~$\mathcal{G}_n^{\ddot{X}}$) be the graph constructed based on $\{X_i\}_{i=1}^n$ (resp.~$\{\ddot{X}_i \equiv (X_i,Z_i)\}_{i=1}^n$).
Let $d_i^{{X}}$ (resp. $d_i^{\ddot{X}}$) be the degree of $X_i$ (resp. $\ddot{X}_i$) in $\mathcal{G}_n^{{X}}$ (resp. $\mathcal{G}_n^{\ddot{X}}$), for $i=1,\ldots, n$.
We are now ready to define our graph-based estimator of $\rho^2$:
\begin{equation}\label{eq:Rho-Hat}
\hat{\rho^2} \equiv \hat{\rho^2}(Y,Z|X) := \frac{\frac{1}{n}\sum \limits_{i=1}^n \frac{1}{d_i^{\ddot{X}}}\sum \limits_{j:(i,j)\in\mathcal{E}(\mathcal{G}_n^{\ddot{X}})} {k (Y_i,Y_j)} - \frac{1}{n}\sum \limits_{i=1}^n \frac{1}{d_i^{{X}}} \sum \limits_{j:(i,j)\in\mathcal{E}(\mathcal{G}_n^{{X}})} {k (Y_i,Y_j)} }{\frac{1}{n}\sum \limits_{i=1}^n k (Y_i,Y_i)- \frac{1}{n}\sum \limits_{i=1}^n \frac{1}{d_i^{{X}}} \sum \limits_{j:(i,j)\in\mathcal{E}(\mathcal{G}_n^{{X}})} k (Y_i,Y_j) }.
\end{equation}
%In the following we describe some of the salient features of $\hat{\rho^2}$.

%In practice, one can truncate $\hat{\rho}^2$ when it exceeds 1 or falls below 0, since $\rho^2$ is always between 0 and 1.

%\subsubsection{Consistency of $\hat{\rho^2}$}
The estimator $\hat{\rho^2}$ is consistent for $\rho^2$; this follows easily\footnote{By the strong law of large numbers, $\frac{1}{n}\sum_{i=1}^n k (Y_i,Y_i)\overset{a.s.}{\longrightarrow}\E [k (Y,Y)]$. The result now follows from Theorem~\ref{theo:consis} and the continuous mapping theorem.} from Theorem~\ref{theo:consis}. We formalize this in the next result.
\begin{theorem}[Consistency]\label{thm:graph_consistency}
Suppose that Assumptions~\ref{assump:conv_nn}--\ref{assump:degupbd} (see~\cref{sec:Assump-Graph}) hold for both $\mathcal{G}_n^{X}$ and $\mathcal{G}_n^{\ddot{X}}$. If $P_Y\in {\mathcal{M}}_k^{2+\varepsilon}(\mathcal{Y})$ (as in Theorem~\ref{theo:consis}) for some $\varepsilon>0$, then
    $\hat{\rho^2}(Y,Z|X)\overset{p}{\to}{\rho}^2(Y,Z|X).$
    If $P_Y\in {\mathcal{M}}_k^{4+\varepsilon}(\mathcal{Y})$ for some $\varepsilon>0$, then $\hat{\rho^2}(Y,Z|X)\overset{a.s.}{\longrightarrow}{\rho}^2(Y,Z|X).$
\end{theorem}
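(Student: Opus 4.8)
The plan is to recognize that $\hat{\rho^2}$ in \eqref{eq:Rho-Hat} is a fixed smooth function of three sample averages, each of which is either directly covered by \cref{theo:consis} or by the law of large numbers, and then to conclude by the continuous mapping theorem. Write
\begin{align*}
A_n &:= \frac{1}{n}\sum_{i=1}^n \frac{1}{d_i^{\ddot{X}}}\sum_{j:(i,j)\in\mathcal{E}(\mathcal{G}_n^{\ddot{X}})} k(Y_i,Y_j), \qquad B_n := \frac{1}{n}\sum_{i=1}^n \frac{1}{d_i^{X}}\sum_{j:(i,j)\in\mathcal{E}(\mathcal{G}_n^{X})} k(Y_i,Y_j),\\
C_n &:= \frac{1}{n}\sum_{i=1}^n k(Y_i,Y_i),
\end{align*}
so that $\hat{\rho^2} = (A_n - B_n)/(C_n - B_n)$. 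The corresponding population limits, read off from the alternate representation \eqref{eta_formula} in \cref{lem:eta-2}, are $a := \E[\E[k(Y_2,Y_2')|X,Z]]$, $b := \E[\E[k(Y_1,Y_1')|X]]$, and $c := \E[k(Y,Y)]$, and with this notation $\rho^2 = (a-b)/(c-b)$.

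First I would establish $B_n \overset{p}{\to} b$: this is a verbatim application of \cref{theo:consis} to the graph functional $\mathcal{G}_n^{X}$ on $\X$, since the hypotheses of the present theorem supply exactly what that result needs, namely $P_Y \in \mathcal{M}_k^{2+\varepsilon}(\Y)$, separability of $\h_\Y$, and Assumptions~\ref{assump:conv_nn}--\ref{assump:degupbd} for $\mathcal{G}_n^{X}$. Second, I would establish $A_n \overset{p}{\to} a$ by applying the same theorem, but now to the pair $(Y,\ddot{X})$ with conditioning variable $\ddot{X}=(X,Z)$ living in the metric space $\ddot{\X}=\X\times\Z$ and graph functional $\mathcal{G}_n^{\ddot{X}}$; the limit delivered by \cref{theo:consis} is then $\E[\E[k(Y_2,Y_2')|X,Z]] = a$. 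Third, since $P_Y \in \mathcal{M}_k^{2+\varepsilon}(\Y)$ forces $\E[k(Y,Y)]<\infty$, the strong law of large numbers gives $C_n \overset{a.s.}{\longrightarrow} c$, hence also in probability.

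Finally I would upgrade the three marginal convergences to joint convergence $(A_n,B_n,C_n)\overset{p}{\to}(a,b,c)$ (automatic for finitely many sequences converging in probability) and apply the continuous mapping theorem to the map $(u,v,w)\mapsto (u-v)/(w-v)$. The hard part — really the only point needing an explicit sentence — is that this map is continuous at $(a,b,c)$, which requires $c-b\neq 0$; but $c-b = \E[k(Y,Y)] - \E[\E[k(Y_1,Y_1')|X]]$ is precisely the denominator of $\rho^2$, which is strictly positive by \cref{lem:Well-Def} under \cref{assump:nondegenrate}. This yields $\hat{\rho^2}\overset{p}{\to}\rho^2$. The almost sure statement follows identically: under the stronger hypothesis $P_Y\in\mathcal{M}_k^{4+\varepsilon}(\Y)$, \cref{theo:consis} upgrades the convergence of $A_n$ and $B_n$ to almost sure, $C_n\overset{a.s.}{\longrightarrow} c$ already holds, and the continuous mapping theorem applies on the intersection of the three full-measure events. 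I expect no substantive obstacle here; beyond the denominator positivity, the one step deserving a remark is the reuse of \cref{theo:consis} for $A_n$, where I must note that the theorem is agnostic to which metric space carries the conditioning coordinates, so it applies on $\ddot{\X}=\X\times\Z$ once Assumptions~\ref{assump:conv_nn}--\ref{assump:degupbd} are assumed for $\mathcal{G}_n^{\ddot{X}}$, as they are in the hypothesis.
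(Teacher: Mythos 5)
Your proposal is correct and matches the paper's own argument: the paper likewise applies Theorem~\ref{theo:consis} to both $\mathcal{G}_n^{X}$ and $\mathcal{G}_n^{\ddot{X}}$, invokes the strong law of large numbers for $\frac{1}{n}\sum_{i=1}^n k(Y_i,Y_i)$, and concludes by the continuous mapping theorem. Your explicit remark that the denominator $c-b>0$ (via Lemma~\ref{lem:Well-Def} and Assumption~\ref{assump:nondegenrate}) fills in a step the paper leaves implicit, but the route is identical.
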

A salient aspect of~\cref{thm:graph_consistency} is that consistency of $\hat{\rho^2}(Y,Z|X)$ does not need any continuity assumptions of the conditional distributions $P_{Y|x}$ and $P_{Y|xz}$, as $x \in \X$ and $z \in \Z$ vary. Our approach leverages Lusin's Theorem \cite{lusin1912proprietes}, which states that
any measurable function agrees with a continuous function on a ``large" set.
This is a generalization of the technique used in \cite{azadkia2019simple} and has also been used in \cite{chatterjee2020new}.
%\begin{proof}
%    By the strong law of large numbers, $\frac{1}{n}\sum_{i=1}^n k (Y_i,Y_i)\overset{a.s.}{\longrightarrow}\E [k (Y,Y)]$. The result now follows from Theorem~\ref{theo:consis} and the continuous mapping theorem.
%\end{proof}
%\begin{remark}
%If $\|X_1-X_2\|$ is not continuous (e.g., when~$X_1$ is discrete), then Assumptions~\ref{assump:degree} and \ref{assump:degupbd} do not follow from Proposition \ref{prop:graph_examples}. Nonetheless, if 1-NN graph and a bounded continuous kernel are used in constructing $\hat{\rho^2}$, we still have $\hat{\rho}^2(Y,X|Z)\overset{a.s.}{\longrightarrow}\rho^2(Y,Z|X)$; see~\cite[Proposition 7.2]{deb2020kernel}.
%\end{remark}

The following result (see~\cref{sec:pf_concen_moment} for a proof) provides a concentration bound for $T_n$ and states that $\hat{\rho^2}$ is $O_p(n^{-1/2})$-concentrated around a population quantity, if the underlying kernel is bounded.
%	$$\rho_{\X}(X_1,X_{N(1)})\overset{p}{\to}0,\quad {\rm as}\ n\to\infty.$$

%In fact, a sub-Gaussian concentration bound can be proved for the term in~\eqref{eq:analogtarget} under additional assumptions on the kernel $K(\cdot,\cdot)$. The following result (see~\cref{pf:bdiffcon} for a proof) makes this precise.

\begin{proposition}[Concentration]\label{prop:concen_moment}
    Under the same assumptions as in Theorem \ref{thm:graph_consistency} (except Assumption \ref{assump:conv_nn}) on the two graphs $\mathcal{G}_n^{X}$  and $\mathcal{G}_n^{\ddot{X}}$, and provided $\sup_{y\in\Y} k(y,y)\leq M$ for some $M>0$,  there exists a fixed positive constant $C^*$ (free of $n$ and $t$), such that for any $t>0$, the following holds:
		\begin{equation}\label{eq:bdiffcon1}
		\P\left[\bigg|\frac{1}{n}\sum \limits_{i=1}^n \frac{1}{d_i^{{X}}} \sum \limits_{j:(i,j)\in\mathcal{E}(\mathcal{G}_n^{{X}})} {k (Y_i,Y_j)} -\E[k(Y_1,Y_{N(1)})]\bigg|\geq t\right]\leq 2\exp(-C^* nt^2),
		\end{equation}
where ${N}(1)$ is a uniformly sampled index from the neighbors (or out-neighbors in a directed graph) of $X_1$ in $\mathcal{G}_n^{X}$.
A similar sub-Gaussian concentration bound also holds for the term $\frac{1}{n}\sum \limits_{i=1}^n \frac{1}{d_i^{\ddot{X}}}\sum \limits_{j:(i,j)\in\mathcal{E}(\mathcal{G}_n^{\ddot{X}})} {k (Y_i,Y_j)}$, when centered around $\E[k(Y_1,Y_{\ddot{N}(1)})]$; here $\ddot{N}(1)$ is a uniformly sampled neighbor of $(X_1,Z_1)$ in $\mathcal{G}_n^{\ddot{X}}$. Consequently, 
$$\sqrt{n}\left(\hat{\rho^2} -\frac{\E [k(Y_1,Y_{\ddot{N}(1)})] - \E [k(Y_1,Y_{{N}(1)}) ]}{\E [k(Y_1,Y_1)] - \E [k(Y_1,Y_{{N}(1)})]} \right) = O_p(1).$$
\end{proposition}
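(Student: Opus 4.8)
The plan is to establish the two sub-Gaussian tail bounds via \emph{McDiarmid's bounded differences inequality} and then to deduce the $O_p(1)$ statement for $\hat{\rho^2}$ by a standard ratio (delta-method) argument. Write
$$T_n^{X} := \frac{1}{n}\sum_{i=1}^n \frac{1}{d_i^{X}} \sum_{j:(i,j)\in \mathcal{E}(\mathcal{G}_n^{X})} k(Y_i,Y_j),$$
and view it as a function $f$ of the i.i.d.\ pairs $(X_1,Y_1),\dots,(X_n,Y_n)$, noting that $\mathcal{G}_n^{X}$ is itself a permutation-invariant functional of $X_1,\dots,X_n$. First I would record two preliminary facts. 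By Cauchy--Schwarz in $\h_\Y$, $|k(y,y')| \le \sqrt{k(y,y)}\sqrt{k(y',y')} \le M$, so boundedness of $k$ on the diagonal yields a uniform bound $|k|\le M$. Moreover, by exchangeability of the data and permutation-invariance of the functional, $\E[T_n^{X}] = \E\big[\tfrac{1}{d_1^{X}}\sum_{j:(1,j)} k(Y_1,Y_j)\big] = \E[k(Y_1,Y_{N(1)})]$, so the centering constant in \eqref{eq:bdiffcon1} is exactly the mean of $T_n^{X}$, which is what McDiarmid requires.

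The crux is to bound, uniformly over $m$, the change in $f$ produced by replacing one coordinate $(X_m,Y_m)$ by $(X_m',Y_m')$. This perturbation acts in three ways: (a) the summand indexed by $i=m$ is recomputed, but being an average of kernel values it lies in $[-M,M]$ and so contributes at most $2M/n$; (b) changing $Y_m$ alters every edge in which $m$ appears as a neighbor $j$, of which there are at most $\Delta$, the maximal number of vertices for which a fixed vertex serves as a neighbor; (c) changing $X_m$ perturbs the graph topology, inserting or deleting edges incident to $m$ and to the $O(\Delta)$ vertices that gain or lose $X_m$ from their neighbor set. In (b)--(c) each affected summand changes by at most $2M/d_i\le 2M$, so after the $1/n$ prefactor and summation over the $O(\Delta)$ affected indices the total change is $O(M\Delta/n)$. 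Crucially, $\Delta$ is bounded by a constant free of $n$ under the degree hypotheses we retain (\cref{assump:degupbd} and its companions; only \cref{assump:conv_nn} is dropped): for the directed $K$-NN graph the out-degree is exactly $K$ and a packing/kissing-number argument bounds the reverse-neighbor count, while for the MST the vertex degrees are bounded by a dimensional constant. Hence the bounded-difference coefficient satisfies $c_m\le C/n$ uniformly, and McDiarmid gives
$$\P\big[|T_n^{X} - \E T_n^{X}| \ge t\big] \le 2\exp\!\Big(\tfrac{-2t^2}{\sum_m c_m^2}\Big) \le 2\exp(-C^* n t^2),$$
which is \eqref{eq:bdiffcon1}; the identical argument on $\mathcal{G}_n^{\ddot{X}}$ yields the companion bound.

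The hard part will be step (c): certifying that replacing a single data point perturbs only $O(\Delta)$ edges. For $K$-NN this is immediate from the bounded reverse-neighbor count, but for the MST a local change can in principle ripple through the tree, so one must invoke the combinatorial stability of metric minimum spanning trees --- that bounded vertex degree forces the number of edges altered by a single-point perturbation to remain bounded --- which is exactly the content imported from the graph-functional assumptions. This is the only place where the geometry of $\mathcal{G}$, rather than mere boundedness of $k$, enters.

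Finally, to obtain the $O_p(1)$ conclusion I would write $\hat{\rho^2}=N_n/D_n$ with $N_n = T_n^{\ddot{X}} - T_n^{X}$ and $D_n = \tfrac1n\sum_i k(Y_i,Y_i) - T_n^{X}$, and denote by $\bar N_n,\bar D_n$ their (deterministic, $n$-dependent) means, so that $\bar N_n/\bar D_n$ is precisely the centering ratio in the statement. The two tail bounds, together with the elementary $n^{-1/2}$ concentration of the bounded i.i.d.\ average $\tfrac1n\sum_i k(Y_i,Y_i)$, give $N_n-\bar N_n = O_p(n^{-1/2})$ and $D_n-\bar D_n = O_p(n^{-1/2})$, with all numerators and centers bounded by $2M$. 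It remains to bound $\bar D_n$ away from $0$: by \cref{theo:consis}, $\bar D_n \to \E[k(Y,Y)] - T$, which is strictly positive by \cref{assump:nondegenrate} (the same nondegeneracy that makes $\rho^2$ well-defined), so $\bar D_n\ge c_0>0$ for all large $n$ and $D_n$ stays bounded below with probability tending to one. The identity
$$\frac{N_n}{D_n} - \frac{\bar N_n}{\bar D_n} = \frac{(N_n-\bar N_n)\,\bar D_n - \bar N_n\,(D_n-\bar D_n)}{D_n\,\bar D_n}$$
then has an $O_p(n^{-1/2})$ numerator and a denominator bounded away from $0$, giving $\sqrt{n}\big(\hat{\rho^2} - \bar N_n/\bar D_n\big) = O_p(1)$, as claimed.
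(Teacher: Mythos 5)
Your overall route matches the paper's: the paper's proof simply cites \cite[Proposition 3.1]{deb2020kernel} for the two sub-Gaussian bounds (that cited result is itself a bounded-difference argument, which is what you reconstruct) and then treats the ratio step as immediate. Your centering identity $\E[T_n^X]=\E[k(Y_1,Y_{N(1)})]$ (permutation invariance of the graph functional plus the tower property) is correct, and reducing \eqref{eq:bdiffcon1} to McDiarmid is the right idea. However, your bounded-difference estimate has a genuine gap: in steps (b)--(c) you discard the degree weights, bounding $2M/d_i\le 2M$, and you are then forced to require that the number $\Delta$ of affected summands be $O(1)$. The assumptions actually retained do not give this. \cref{assump:degree} and \cref{assump:degupbd} provide only $\min_i d_i^{X}\ge r_n$, an edge-perturbation bound $q_n$, and a vertex-degree bound $t_n$ with $q_n/r_n=O(1)$ and $t_n/r_n=O(1)$, where all three sequences may diverge --- for instance the $K_n$-NN graph with $K_n\to\infty$, which the paper explicitly allows; moreover your kissing-number and MST-degree arguments are specifically Euclidean, whereas the proposition covers general metric spaces and general graph functionals. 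The repair is to keep the weight you threw away: each affected summand carries a factor $\frac{1}{n d_i}\le \frac{1}{n r_n}$, and when one data point is replaced at most $C(q_n+t_n)$ summands are affected (at most $2q_n$ from edges inserted or deleted, at most $t_n$ from terms in which $m$ appears as a neighbor), so the bounded-difference coefficient obeys $c_m\le \frac{2M}{n}+\frac{CM(q_n+t_n)}{nr_n}=O(M/n)$ by the ratio conditions, and McDiarmid then yields \eqref{eq:bdiffcon1} with $C^*$ free of $n$ and $t$.

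There is a second gap in your ratio step: to bound $\bar D_n$ away from zero you invoke \cref{theo:consis} to claim $\bar D_n\to \E[k(Y,Y)]-T>0$, but that theorem requires \cref{assump:conv_nn}, which is precisely the assumption this proposition excludes. Without it, $\E[k(Y_1,Y_{N(1)})]$ need not converge to $\E[\E[k(Y_1,Y_1')|X]]$: the neighbor $X_{N(1)}$ is not forced to be close to $X_1$, and its law is not $P_X$, so the limit identification fails and \cref{assump:nondegenrate} cannot be brought to bear this way. The paper's own proof is silent on this point --- it passes from the three root-$n$ concentration statements directly to the ratio statement, implicitly taking the $n$-dependent centering ratio to be non-degenerate --- so a careful write-up should either establish $\liminf_n \bar D_n>0$ by an argument that does not use \cref{assump:conv_nn}, or record it explicitly as a standing (implicit) hypothesis; appealing to a theorem whose hypotheses have been dropped is not a valid substitute. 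Aside from these two points, your algebraic identity for $\frac{N_n}{D_n}-\frac{\bar N_n}{\bar D_n}$ and the resulting $O_p(n^{-1/2})$ bound are exactly what the paper intends.
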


The above result shows that $\hat{\rho^2}$ has a rate of convergence $n^{-1/2}$ around a limit which is not necessarily $\rho^2$. Further, by~\cref{prop:concen_moment}, it is clear that the rate of convergence of $\hat{\rho^2}$ to ${\rho^2}$ will be chiefly governed by the rates
 at which $\E[k(Y_1,Y_{N(1)})]$ and $\E [k(Y_1,Y_{\ddot{N}(1)})]$ converge to $\mathbb{E}[k(Y_1,Y_1')]$ and $\mathbb{E}[k(Y_2,Y_2')]$ respectively. As it turns out this rate of convergence is heavily dependent on the underlying graph functional $\mathcal{G}$. In~\cref{sec:rate_of_conv} we will focus on $K$-NN graphs and provide an upper bound on this rate of convergence. 
%We can in fact, as in~\cite[Section 5]{deb2020kernel}, find useful upper-bounds on this rate of convergence, focusing only on the $K$-NN graphs. However, we will refrain from doing so here, and will get back to this problem in the context of variable selection in~\cref{sec:var_select}.

\subsubsection{Computational complexity of $\hat{\rho^2}$} 
Observe that, when using the Euclidean $K$-NN graph, the computation of $\hat{\rho^2}(Y,Z|X)$ takes $O(Kn\log n)$ time. This is because the $K$-NN graph can be found in $O(Kn\log n)$ time (e.g., using the k-d tree~\cite{bentley1975multidimensional}) and the $K$-NN graph has $O(Kn)$
edges\footnote{For directed graph, there are exactly $Kn$ edges; for undirected graph, there are no more than $Kn$ edges, but each edge will be used twice in the summation.}
(thus, we just have to sum over $O(Kn)$ terms in computing each of the two main quantities in~\eqref{eq:Rho-Hat}). The computational complexity of computing Euclidean MSTs is $O(n\log n)$ in $\R^d$ when $d=1$ or 2 (see~\cite{buchin2011delaunay}), and $O(n^{2-2^{-(d+1)}}(\log n)^{1-2^{-(d+1)}})$ when $d\geq 3$ (see e.g.,~\cite{yao1982constructing}). As an MST has just $n-1$ edges, the computational complexity for computing $\hat{\rho^2}$, using the MST, is of the same order as that of finding the MST.%$O\left((n\log n)^{4/3}\right)$ in $\R^3$, 

Several authors have proposed tree-based data structures to speed up $K$-NN graph construction. Examples include ball-trees~\cite{omohundro1989five} and cover-trees~\cite{beygelzimer2006cover}. In~\cite{beygelzimer2006cover} the authors study NN graphs in general metric spaces and show that if
the dataset has a bounded expansion constant (which is a measure of its intrinsic dimensionality) the cover-tree data structure can be constructed in $O (n \log n)$ time, for bounded $K$. Thus, in a broad variety of settings, $\hat{\rho^2}$ can be computed in {\it near linear time}. 

\subsection{Rate of convergence of $\hat{\rho^2}$}\label{sec:rate_of_conv}
In this subsection, we will assume that the geometric graph functionals $\mathcal{G}_n^{{X}}$ and $\mathcal{G}_n^{\ddot{X}}$ belong to the family of $K$-NN graphs (directed or undirected) on the spaces $\X$ and $\ddot{\X}$ (assumed to be general metric spaces equipped with metrics $\rho_\X$ and $\rho_{\ddot{\X}}$) respectively. %Here we derive the rate of convergence of $\hat{\rho^2}$ to $\rho^2$ when $\X$ and $\ddot{\X}$ are general metric spaces. 
Our main result in this subsection,~\cref{thm:conv_rate}, shows that $\hat{\rho^2}$ converges to $\rho^2$, at a rate that depends on the {\it intrinsic dimensions} of $X$ and $\ddot{X}$, as opposed to the ambient dimensions of $\X$ and $\ddot{\X}$. This highlights the {\it adaptive} nature of the estimator $\hat{\rho^2}$.
Let us first define the intrinsic dimensionality of a random variable, which is a relaxation of the \emph{Assouad dimension} \cite[Section 9]{robinson2010dimensions}.
Recall that by a {\it cover} of a subset $A \subset (\X,\rho_\X)$, we mean a collection of subsets of $\X$ whose union contains $A$. Denote by $B(x^*,r) \subset \X$ the closed ball centered at $x^* \in \X$ with radius $r>0$, and by ${\rm supp}(X)$ the support of the random variable $X$.
\begin{defn}[Intrinsic dimension of $X$]\label{defn:intrinsic_dim}
   Let $X$ be a random variable taking values in a metric space $\X$. $X$ is said to have intrinsic dimension at most $d$, with constant $C >0$ at $x^* \in \X$, if for any $t>0$, $B(x^*,t)\cap {\rm supp} (X)$ can be covered with at most $C(t/\varepsilon)^d$ closed balls of radius $\varepsilon$ in $\X$, for any $\varepsilon  \in (0, t]$.
\end{defn}
This notion of the intrinsic dimensionality of $X$ extends the usual notion of dimension of a Euclidean set; see e.g.,~\cite[Definition 3.3.1]{chen2018explaining} and~\cite[Definition 5.1]{deb2020kernel}. For example, any probability measure on $\R^d$ has intrinsic dimension at most $d$. Moreover, if $X$ is supported on a $d_0$-dimensional hyperplane (where $d_0 \le d$), then $X$ has intrinsic dimension at most $d_0$.
Further, if $X$ is supported on a $d_0$-dimensional manifold which is {bi-Lipschitz}\footnote{By \emph{bi-Lipschitz} we mean that there exists $L>0$ such that the bijection $\varphi:\Omega \subset \R^{d_0} \to (\mathcal{M}, \rho_{\mathcal{M}})$, where $\mathcal{M}$ is a manifold with metric $\rho_{\mathcal{M}}$,  satisfies $L^{-1}\|x_1-x_2\|\leq \rho_{\mathcal{M}}(\varphi (x_1),\varphi(x_2))\leq L \|x_1-x_2\|$ for all $x_1,x_2\in\Omega$.} to some $\Omega\subset\R^{d_0}$, then
$X$ has intrinsic dimension at most $d_0$ \cite[Lemma 9.3]{robinson2010dimensions}.
Note that the  intrinsic dimension of $X$ need not be an integer, and can be defined on any metric space.

Let $X_1, \ldots, X_n \stackrel{iid}{\sim} P_X$, and let $\mathcal{G}_n^{{X}}$ be the $K_n$-NN  (here $\{K_n\}_{n \ge 1}$ is assumed to be a fixed sequence). We will assume the following conditions: 
\begin{assump}\label{assump:cont_dist}
   $\rho_\X (X_1,X_2)$ has a continous distribution.
\end{assump}
\begin{assump}\label{assump:intrin_dim}
    $X$ has intrinsic dimension at most $d$ with constant $C_1$ at $x^* \in \X$.
\end{assump}
\begin{assump}\label{assump:knndegbound}
    Let $s_n$ be the number of points having $X_1$ as a $K_n$-NN. Suppose that $\frac{s_n}{K_n} \le C_2$ a.s., for some constant $C_2 >0$ and all $n \ge 1$. 
    \end{assump}
\begin{assump}\label{assump:exp_decay}
    There exist $\alpha,C_3,C_4>0$ such that $\p \left(\rho_\X (X_1,x^*)\geq t \right)\leq C_3\exp(-C_4 t^\alpha)$ for all $t>0$, where $x^* \in \X$ is defined in Assumption \ref{assump:intrin_dim}.
 \end{assump}
\begin{assump}\label{assump:smooth}
   Set $$g(x):=\E[k (Y,\cdot)|X=x] \qquad \quad \mathrm{for} \; x \in \X.$$ There exist $\beta_1\geq 0$, $\beta_2\in(0,1]$, $C_5>0$ such that for any $x_1,x_2,\tilde{x}_1,\tilde{x}_2 \in \X$,
   $$\begin{aligned}
       \left| \langle g(x_1),g(x_2)\rangle_{\h_\Y} - \langle g(\tilde{x}_1),g(\tilde{x}_2)\rangle_{\h_\Y}\right|\leq C_5 \big( 1 + \rho_\X(x^*,x_1)^{\beta_1} + \rho_\X (x^*,x_2)^{\beta_1} \\
       + \rho_\X (x^*,\tilde{x}_1)^{\beta_1} + \rho_\X (x^*,\tilde{x}_2)^{\beta_1} \big)\left(\rho_\X (x_1,\tilde{x}_1)^{\beta_2}+\rho_\X ({x}_2 , \tilde{x}_2)^{\beta_2}\right),
   \end{aligned}$$
 where $x^* \in \X$ is defined in Assumption \ref{assump:intrin_dim}.
\end{assump}

\begin{remark}[On the assumptions]\label{rk:compare_assump}
Assumption \ref{assump:cont_dist} guarantees that the $K_n$-NN graph is uniquely defined. If $\X$ is a Euclidean space, Assumption \ref{assump:knndegbound} is satisfied because the number of points having $X_1$ as a $K$-NN is bounded by $KC(d)$, where $C(d)$ is a constant depending only on the dimension $d$ of $\X$ \cite[Lemma 8.4]{yukich1998Euclidean}.
%    This boundedness is also crucial in establishing the consistency of the graph-based estimator.
Assumption~\ref{assump:exp_decay} just says that $X_1$ satisfies a tail decay condition that can be even slower than sub-exponential. Assumption~\ref{assump:smooth} is a technical condition on the smoothness of conditional expectation $g(\cdot)$. Without such an assumption, the rate of convergence of $\hat{\rho^2}$ to $\rho^2$ may be arbitrarily slow~\cite{azadkia2019simple}.
    See \cite[Proposition 5.1]{deb2020kernel} for sufficient conditions under which Assumption \ref{assump:smooth} holds. Note that similar assumptions were also made in \cite{azadkia2019simple}; in fact our assumptions are less stringent in the sense that they allow for: (a) Any general metric space $\X$, (b) tail decay rates of $X$ slower than sub-exponential, and (c) $\beta_2$ to vary in $(0,1]$.
\end{remark}
The following result (see~\cref{sec:pf_conv_rate} for a proof) gives an upper bound on the rate of convergence of $\hat{\rho^2}$. In particular, it shows that, if $d >2$ is an upper bound on the intrinsic dimensions of $X$ and $(X,Z)$ then $\hat{\rho^2}$ converges to $\rho^2$ at the rate $O_p\left(n^{- {\beta_2}/{d}} \right)$, up to a logarithmic factor (provided $K_n$ grows no faster than a power of $\log n$). Note that in certain situations, while the actual dimension of $\X$ (resp. $\ddot{\X}$) may be large, the intrinsic dimensionality of $X$ (resp. $\ddot{X}$) may be much smaller --- the rate of convergence of $\hat{\rho^2}$ automatically adapts to the unknown intrinsic dimensions of $X$ and $\ddot{X}$.

\begin{theorem}[Adaptive rate of convergence]\label{thm:conv_rate}
   Suppose $k (Y_1,Y_1)$ has sub-exponential tail\footnote{i.e., $\p (k(Y_1,Y_1)\geq t)\leq L_1 e^{-L_2t}$ (for all $t >0$), for some $L_1,L_2>0$.}. Let $K_n=o( n\,(\log n)^{-1})$. Suppose that Assumptions~\ref{assump:cont_dist}--\ref{assump:smooth} hold for $(Y,X)$ and also for $(Y,\ddot{X})$ (i.e., by replacing $X$ with $\ddot{X}$ in each of the Assumptions~\ref{assump:cont_dist}--\ref{assump:smooth}) with the same constants $\alpha$ and $\beta_2$ (in Assumptions~\ref{assump:exp_decay} and \ref{assump:smooth}). Define
   $$\nu_{n}:=\begin{cases}\frac{(\log{n})^2}{n}+\frac{K_n}{n}(\log{n})^{2\beta_2/d+2\beta_2/\alpha} & \mbox{if } d<2\beta_2, \\ \frac{(\log{n})^2}{n}+\frac{K_n}{n}(\log{n})^{2+d/\alpha} & \mbox{if } d=2\beta_2, \\ \frac{(\log{n})^2}{n}+\left(\frac{K_n}{n}\right)^{2\beta_2/d}(\log{n})^{2\beta_2/d+2\beta_2/\alpha} & \mbox{if } d>2\beta_2,\end{cases}$$
   where $d$ is the maximum of the intrinsic dimensions of $X$ and $\ddot{X}$ (in Assumption \ref{assump:intrin_dim}). Then $$\hat{\rho^2}(Y,Z|X) = {\rho}^2(Y,Z|X) + O_p\left(\sqrt{\nu_n}\right).$$
\end{theorem}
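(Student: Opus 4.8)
The plan is to write the estimator in~\eqref{eq:Rho-Hat} as $\hat{\rho^2} = (A_n - B_n)/(C_n - B_n)$, where
$$B_n = \frac{1}{n}\sum_{i=1}^n \frac{1}{d_i^X}\sum_{j:(i,j)\in\mathcal{E}(\mathcal{G}_n^X)} k(Y_i,Y_j), \qquad C_n = \frac1n\sum_{i=1}^n k(Y_i,Y_i),$$
and $A_n$ is the analogous average over $\mathcal{G}_n^{\ddot{X}}$, and to compare them with the population targets $A = \E[\E[k(Y_2,Y_2')\mid X,Z]]$, $B = \E[\E[k(Y_1,Y_1')\mid X]]$, $C = \E[k(Y,Y)]$ from~\eqref{eq:K-Exp}, so that $\rho^2 = (A-B)/(C-B)$. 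First I would reduce the problem to the three scalar errors $A_n - A$, $B_n - B$, $C_n - C$. Since $C - B = \E[\mathrm{MMD}^2(\delta_Y, P_{Y|X})]$ is a fixed strictly positive number under~\cref{assump:nondegenrate}, and $C_n - B_n \overset{p}{\to} C - B$ by~\cref{thm:graph_consistency}, a standard manipulation of the ratio yields $|\hat{\rho^2} - \rho^2| = O_p(|A_n - A| + |B_n - B| + |C_n - C|)$. The term $|C_n - C|$ is $O_p(n^{-1/2})$ by the law of large numbers for the sub-exponential summands $k(Y_i,Y_i)$, and is absorbed into the $(\log n)^2/n$ part of $\nu_n$. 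As the treatment of $A_n - A$ on $\ddot{X}$ is identical (its intrinsic dimension being at most $d$), everything reduces to bounding $B_n - B$.

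For a directed $K_n$-NN graph (so $d_i^X = K_n$) I would decompose $B_n - B = (B_n - \E[B_n]) + (\E[B_n] - B)$ into a fluctuation and a bias term. The bias is where the RKHS structure enters. Writing $g(x) = \E[k(Y,\cdot)\mid X = x]$ as in~\cref{assump:smooth}, the conditional independence of $Y_1,Y_1'$ given $X$ gives $\E[k(Y_1,Y_1')\mid X] = \|g(X)\|_{\h_\Y}^2$, while for a uniformly chosen $K_n$-NN index $N(1)$ of $X_1$ one has $\E[B_n] = \E[k(Y_1, Y_{N(1)})] = \E[\langle g(X_1), g(X_{N(1)})\rangle_{\h_\Y}]$. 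Hence the bias equals $\E[\langle g(X_1),\, g(X_{N(1)}) - g(X_1)\rangle_{\h_\Y}]$, which~\cref{assump:smooth} bounds by
$$C_5\,\E\!\left[\bigl(1 + \rho_\X(x^*,X_1)^{\beta_1} + \rho_\X(x^*,X_{N(1)})^{\beta_1}\bigr)\,\rho_\X(X_1,X_{N(1)})^{\beta_2}\right].$$
Controlling this expectation is the crux: the $K_n$-NN radius $\rho_\X(X_1,X_{N(1)})$ is governed by the intrinsic dimension through~\cref{assump:intrin_dim}, giving moment bounds of order $(K_n/n)^{\beta_2/d}$ in the bulk, while the tail-growth factor $\rho_\X(x^*,\cdot)^{\beta_1}$ is handled by the tail-decay condition~\cref{assump:exp_decay}. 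Splitting the expectation over the ball $\{\rho_\X(x^*,\cdot)\le (\log n/C_4)^{1/\alpha}\}$, whose complement has probability $\lesssim n^{-1}$, and its exterior produces both the logarithmic powers (including the exponents $\cdot/\alpha$) and the eventual case distinction in $\nu_n$.

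For the fluctuation $B_n - \E[B_n]$ I would condition on the design $X_1,\ldots,X_n$, under which the $Y_i$'s are independent, and run a bounded-differences argument exactly as in the proof of~\cref{prop:concen_moment}; the bounded-degree condition~\cref{assump:knndegbound} (each $X_i$ is a $K_n$-NN of at most $C_2 K_n$ points) ensures that each coordinate influences $B_n$ by only $O(1/n)$, and~\cref{assump:cont_dist} guarantees the graph is a.s.\ well defined. Because the kernel is only sub-exponential rather than bounded, I would first truncate $k$ at level $\asymp \log n$, bound the light tail directly, and apply the concentration to the truncated statistic, producing the $O_p((\log n)/\sqrt n)$ rate, i.e.\ the $(\log n)^2/n$ term in $\nu_n$; an analogous concentration over the $X_i$'s controls the fluctuation of the conditional mean $\E[B_n \mid X_1,\ldots,X_n]$ around $\E[B_n]$, contributing the coarser $K_n/n$ variance bound. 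Combining, the squared bias is of order $(K_n/n)^{2\beta_2/d}$ and the variance of order at most $K_n/n$, both up to the logarithmic factors above; since $(K_n/n)^{2\beta_2/d}$ exceeds $K_n/n$ precisely when $d>2\beta_2$, balancing them gives $|B_n - B| = O_p(\sqrt{\nu_n})$, the three cases of $\nu_n$ recording whether the bias ($d>2\beta_2$) or the variance ($d\le 2\beta_2$) dominates. Feeding this through the ratio reduction of the first paragraph completes the argument.

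I expect the main obstacle to be the simultaneous control of the tail-growth factor and the $K_n$-NN radius in the bias term, i.e.\ the evaluation of $\E[(1 + \rho_\X(x^*,\cdot)^{\beta_1})\,\rho_\X(X_1,X_{N(1)})^{\beta_2}]$: it is precisely this integral --- split into an inner region of radius $\asymp(\log n)^{1/\alpha}$ and its tail --- that produces the three regimes $d \lessgtr 2\beta_2$ together with the exact powers of $\log n$, and it requires carefully matching the decay of~\cref{assump:exp_decay} against the intrinsic-dimension covering bound of~\cref{assump:intrin_dim}. The fluctuation analysis under only sub-exponential tails, which forces truncation and careful bookkeeping of the induced logarithmic factors, is the secondary technical difficulty.
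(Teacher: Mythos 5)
Your proposal is correct in outline, and its outer skeleton --- writing $\hat{\rho^2}=(A_n-B_n)/(C_n-B_n)$, noting that the population denominator $C-B=\E[{\rm MMD}^2(\delta_Y,P_{Y|X})]$ is strictly positive, and reducing the ratio error to the three term-wise errors --- is exactly the paper's proof. The difference lies in what happens next: the paper does not prove the rate for the two graph-based averages at all, but simply invokes the companion results \cite[Theorem 5.1 and Corollary 5.1]{deb2020kernel}, which assert that each graph average is within $O_p(\sqrt{\nu_n})$ of its population limit, and then combines these with the $O_p(n^{-1/2})$ rate of the diagonal term and $S>0$. Your sketch instead reconstructs the content of the cited result --- the bias bound via \cref{assump:smooth} and the identity $\E[k(Y_1,Y_1')|X]=\|g(X)\|_{\h_\Y}^2$, intrinsic-dimension covering control of the $K_n$-NN radius, a bulk/tail split at radius $\asymp(\log n)^{1/\alpha}$ from \cref{assump:exp_decay}, and conditional McDiarmid with truncation of the sub-exponential kernel at level $\asymp\log n$ using \cref{assump:knndegbound} --- which is essentially the strategy of \cite{deb2020kernel}; your route buys self-containedness at the cost of redoing that paper's work. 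One accounting point in your sketch is off, though harmlessly so: the $\frac{K_n}{n}$ term in $\nu_n$ for $d\le 2\beta_2$ does not come from design-fluctuation concentration (bounded differences over the $X_i$'s gives each coordinate an $O(\mathrm{polylog}/n)$ influence, hence an $O_p(\mathrm{polylog}/\sqrt{n})$ fluctuation, just as for the $Y_i$'s); rather, the case split at $d=2\beta_2$ reflects an $L^2$-type analysis in which the governing quantity is the second moment $\E\big[\rho_\X(X_1,X_{N(1)})^{2\beta_2}\big]$, and for $d<2\beta_2$ this second moment is dominated by the rare event (probability $\asymp K_n\,\mathrm{polylog}/n$) that the $K_n$-NN radius is macroscopic, not by the typical radius $(K_n/n)^{1/d}$. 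Since the theorem asserts only an upper bound, this discrepancy does not sink your plan: a careful execution of your first-moment bias bound, of order $(K_n\,\mathrm{polylog}/n)^{\min(\beta_2/d,\,1)}$, together with the $O_p(\mathrm{polylog}/\sqrt{n})$ fluctuation, is dominated by $\sqrt{\nu_n}$ in every regime (because $\beta_2/d>1/2$ exactly when $d<2\beta_2$), so you would still prove the statement --- indeed possibly a slightly sharper one --- but the precise three-case form of $\nu_n$ would not emerge from the bias-versus-variance crossover as you describe it.
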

Although~\cref{thm:conv_rate} has many similarities with~\cite[Theorem 4.1]{azadkia2019simple}, our result is more general on various fronts: (a) We can handle $\X$, $\Y$, $\Z$ taking values in general metric spaces; (b) our upper bound $\nu_n$ depends on the intrinsic dimensions of $X$ and $\ddot{X}$ as opposed to their ambient dimensions; (c) our assumptions are less stringent (as discussed in~\cref{rk:compare_assump}); (d) our upper bound $\nu_n$ is also sharper in the logarithmic factor. A similar result, as~\cref{thm:conv_rate}, can be found in~\cite{deb2020kernel} for estimating $\rho^2(Y,Z|\emptyset)$.

%\red{This bound of error is also smaller compared with that of the estimator in \cite{azadkia2019simple}, even just in terms of the usual Euclidean dimension, since the power of $\log n$ does not grow with $d$ if $K_n\equiv K$ is fixed.}

\section{Estimating KPC using RKHS methods}\label{sec:Est}
As the population version of KPC $\rho^2$ (in~\eqref{eq:K-Exp}) is expressed in terms of the kernel function, it is natural to ask if the RKHS framework can be directly used to estimate $\rho^2$. This is precisely what we do in our second estimation strategy. 
%Observe that the MMD between two probability distributions is the distance between their \emph{kernel mean embeddings} in the RKHS (see e.g.,~\cite{muandet2017kernel, Berlinet2004,smola2007hilbert}), and the \emph{kernel mean embedding} of a conditional distribution, which is usually called the \emph{conditional mean embedding} (CME), can be expressed by means of \emph{cross-covariance operators}; see e.g.~\cite{baker1973joint, fukumizu2004dimensionality, song2009hilbert, song2013kernel, klebanov2019rigorous}. As cross-covariance operators can be naturally estimated by their sample version, we can use this plug-in approach to estimate $\rho^2$. See Section~\ref for a formal treatment of these concepts and ideas.
We start with some notation. Recall that $(X,Y,Z) \sim P$ on $\X \times \Y \times \Z$ and $k(\cdot,\cdot)$ is a kernel on $\Y \times \Y$. We further assume in this section that $\mathcal{X}$ and $\ddot{\X} = \X\times\Z$ are equipped with separable RKHSs $\mathcal{H}_\mathcal{X}$ and $\mathcal{H}_\mathcal{\ddot{\X}}$ respectively, with kernels $k_\mathcal{X}$ and $k_\mathcal{\ddot{\X}}$.
Let $\ddot{X} = (X,Z) \sim P_{XZ}$. We also assume that $\E[ k_\X (X,X)]<\infty$ and $\E [k_{\ddot{\mathcal{X}}} (\ddot{X},\ddot{X}) ]<\infty$.
%To make things clearer, we will sometimes write $k$, the kernel on $\Y$, as $k_\Y$. 

In the following we define two concepts --- the {\it (cross)-covariance operator} and {\it conditional mean embedding} --- that will be fundamental in the developments of this section. We direct the reader to~\cref{sec:Func-Ana} for a review of some basic concepts from functional analysis which will be used throughout this section.

%\subsection{The (cross)-covariance operator and conditional mean embedding}\label{sec:CC}

\begin{defn}[Cross-covariance operator]\label{def:CC}
The {\it cross-covariance operator} $C_{XY}: \h_\Y \to \h_\X$ is the unique bounded operator that satisfies 
\begin{eqnarray}
\langle g, C_{XY} f\rangle_{\h_\X} & = & \Cov(f(Y), g(X)) \label{eq:Cross-Cov} \\
& = & \E [f(Y) g(X)] - \E [f(Y)] \, \E [g(X)], \quad \mathrm{for \; all }\; f \in \h_\Y, \,g \in \h_\X. \nonumber
\end{eqnarray}
\end{defn}
The existence of the cross-covariance operator follows from the Riesz representation theorem (see e.g.,~\cite[Theorem 1]{fukumizu2004dimensionality}). The {\it covariance operator} of $X$, denoted by $C_{X}$, is obtained when the two RKHSs in Definition~\ref{def:CC} are the same, namely $\h_\X$. Note that the covariance operator $C_{X}$ is bounded, nonnegative, self-adjoint, and trace-class if $\h_\X$ is separable and $\E[k_\X(X,X)] <\infty$ (see~\cref{sec:Func-Ana} and Lemma \ref{trace_class}).

The following explicit representation of the cross-covariance operator $C_{XY}$ will be useful:
\begin{equation}\label{eq:Cov}
C_{XY} = \E[k_\X(\cdot, X) \otimes k(\cdot, Y)] - \E[k_\X(\cdot, X)] \otimes \E[k(\cdot, Y)]
\end{equation}
%\red{where for two elements in $f,g \in \h_X$, by $f \otimes g$ is an operator such that $f \otimes g (h) = \langle g,h\rangle f$ }; see~\cref{rem:C-Cov}.
where we have identified the tensor product space $\mathcal{H}_\X\otimes \mathcal{H}_\Y$ with the space of Hilbert-Schmidt operators from $\mathcal{H}_\Y$ to $\mathcal{H}_\X$, such that
$k_\X(\cdot,u)\otimes k (\cdot,v)(h) := h(v) k_\X(\cdot, u)$, for all $u\in \X,v\in\Y$ and $h \in \h_\Y$. See~\cref{rem:C-Cov} for more details on how \eqref{eq:Cov} is derived.

%Note that here we have identified the tensor product space $\mathcal{H}_\X\otimes \mathcal{H}_\X$ with the space of Hilbert-Schmidt operators from $\mathcal{H}_\X$ to $\mathcal{H}_\X$, with
%$k_\X(\cdot,u)\otimes k_\X(\cdot,v)(h) :=k_\X(\cdot, u)h(v)$, for all $u,v \in \X$ and $h \in \h_\X$. 

\begin{defn}[Conditional mean embedding (CME)]\label{defn:CME}
The CME $\mu_{Y|x}\in \h_\Y$, for $x \in \X$, is defined as the kernel mean embedding of the conditional distribution of $Y$ given $X = x$, i.e., $\mu_{Y|x}=\E_{Y\sim P_{Y|x}} [k(Y,\cdot)]$.
\end{defn}

CMEs have proven to be a powerful tool in various machine learning applications, such as dimension reduction~\cite{fukumizu2004dimensionality}, dynamic systems~\cite{song2009hilbert}, hidden Markov models~\cite{song2010hilbert}, and Bayesian inference~\cite{fukumizu2013kernel}; see the recent paper~\cite{klebanov2019rigorous} for a rigorous treatment.

%The CME offers a way to perform conditioning of probability distributions on $\X$ and $\Y$ by means of linear algebra in the corresponding RKHSs $\h_X$ and $\h_\Y$.
Under certain assumptions, cross-covariance operators can be used to provide simpler expressions of CMEs; see e.g.,~\cite{klebanov2019rigorous}.
The following assumption is crucial for this purpose.
\begin{assump}\label{assump:CME}
    For any $g\in\mathcal{H}_\Y$, there exists $h\in \mathcal{H}_\X$ such that $\E[g(Y)|X=\cdot]-h(\cdot)$ is constant $P_X$-a.e.
\end{assump}

\begin{lemma}[{\cite[Theorem 4.3]{klebanov2019rigorous}}]\label{center_CME} 
Suppose $\E [k_\X(X,X)]<\infty$, $\E [k (Y,Y)]<\infty$, and both $\h_\X,\h_\Y$ are separable.
    Let $C_{XY}$ and $C_X$ be the usual cross-covariance and covariance operators respectively. Further let $\ran C_X$ denote the range of $C_X$ and let $C_X^{\dagger}$ denote the {\it Moore-Penrose inverse} (see e.g.,~\cite[Definition 2.2]{engl1996regularization}) of $C_X$. Suppose further that Assumption \ref{assump:CME} holds.
    Then $\ran C_{XY} \subset \ran C_X$, $C_X^{\dagger}C_{XY}$ is a bounded operator, and for $P_X$-a.e. $x\in\X$,
\begin{equation}\label{eq:CME} 
\mu_{Y|x}=\mu_Y + (C_X^\dagger C_{XY})^* (k (x,\cdot) - \mu_X)
\end{equation}
where for a bounded operator $A$, $A^{*}$ is the {\it adjoint} of $A$, and $\mu_X$ and $\mu_Y$ are the kernel mean embeddings of $P_X$ and $P_Y$ respectively (i.e., $\mu_X = \E[k_\X(X,\cdot)] \in \h_\X$).
\end{lemma}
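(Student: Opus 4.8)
The plan is to reduce the statement to a single algebraic identity relating $C_{XY}$, $C_X$, and the conditional-expectation operator, to verify the claimed formula weakly (after pairing with an arbitrary $g \in \h_\Y$), and only then upgrade to an $\h_\Y$-valued almost-everywhere identity using separability. The first step is the range inclusion $\ran C_{XY} \subset \ran C_X$, which is precisely where Assumption~\ref{assump:CME} enters. Fix $f \in \h_\Y$; by that assumption there exist $g_f \in \h_\X$ and a constant $c_f$ with $\E[f(Y)\mid X=\cdot] = g_f(\cdot) + c_f$ holding $P_X$-a.e. For arbitrary $h \in \h_\X$, the tower property together with the invariance of covariance under additive constants gives
\[
\langle h, C_{XY} f\rangle_{\h_\X} = \Cov(f(Y), h(X)) = \Cov(g_f(X), h(X)) = \langle h, C_X g_f\rangle_{\h_\X},
\]
and since $h$ is arbitrary, $C_{XY} f = C_X g_f \in \ran C_X$. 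In particular $\ran C_{XY} \subset \ran C_X \subset \dom(C_X^\dagger)$, so $A := C_X^\dagger C_{XY}$ is defined on all of $\h_\Y$.

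Next I would show $A$ is bounded via the closed graph theorem. The Moore--Penrose inverse $C_X^\dagger$ of the bounded operator $C_X$ is closed, and $C_{XY}$ is bounded; hence if $f_n \to f$ and $A f_n \to w$, continuity gives $C_{XY} f_n \to C_{XY} f$, and closedness of $C_X^\dagger$ forces $C_{XY} f \in \dom(C_X^\dagger)$ with $A f = C_X^\dagger C_{XY} f = w$. Thus $A$ has closed graph, and being everywhere defined on the Hilbert space $\h_\Y$ it is bounded; its adjoint $A^*: \h_\X \to \h_\Y$ is then bounded as well, so $\mu_Y + A^*(k_\X(x,\cdot) - \mu_X)$ is a genuine element of $\h_\Y$ for every $x$.

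The heart of the argument is to verify the identity after pairing with a fixed $g \in \h_\Y$. By the reproducing property $\langle k_\X(x,\cdot), Ag\rangle_{\h_\X} = (Ag)(x)$, and $\langle \mu_X, Ag\rangle_{\h_\X} = \E[(Ag)(X)]$, so the right-hand side pairs to $\E[g(Y)] + (Ag)(x) - \E[(Ag)(X)]$. The crux is identifying $(Ag)(x)$: applying the first step with $f=g$ gives $C_{XY} g = C_X g_g$, whence $Ag = C_X^\dagger C_X g_g$ is the orthogonal projection of $g_g$ onto $(\ker C_X)^\perp$. Since $\ker C_X = \{h : \Var(h(X)) = 0\}$ consists exactly of functions that are $P_X$-a.e.\ constant, this projection alters $g_g$ only by such a constant, so $(Ag)(X) = \E[g(Y)\mid X] - (\text{const})$ a.e. Substituting, the additive constants cancel against $\E[(Ag)(X)]$, and the right-hand side pairs to exactly $\E[g(Y)\mid X=x] = \langle \mu_{Y|x}, g\rangle_{\h_\Y}$ for $P_X$-a.e.\ $x$, where the exceptional null set may depend on $g$.

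Finally I would remove the dependence of the null set on $g$ using separability of $\h_\Y$ (Assumption~\ref{assump:separable_RKHS}). Taking a countable dense set $\{g_m\} \subset \h_\Y$ and the union of the corresponding null sets produces a single $P_X$-null set off which the weak identity holds for every $g_m$, and hence, by continuity of the inner product and density, for all $g \in \h_\Y$; this forces the $\h_\Y$-valued equality $\mu_{Y|x} = \mu_Y + A^*(k_\X(x,\cdot) - \mu_X)$ for $P_X$-a.e.\ $x$. I expect the main obstacles to be the boundedness step and the careful bookkeeping of additive constants: Assumption~\ref{assump:CME} pins down each conditional expectation only up to a $P_X$-a.e.\ constant, and one must check throughout (both in the range inclusion and in the cancellation against $\E[(Ag)(X)]$) that these constants are absorbed consistently rather than contributing an error term.
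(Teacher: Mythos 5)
The paper never proves this lemma internally: it is imported verbatim from~\cite[Theorem 4.3]{klebanov2019rigorous}, so there is no in-paper argument to compare against, and your blind reconstruction is in effect a proof of the cited external result. Having checked it, your argument is correct, and it follows the same logical skeleton as the reference: (i) Assumption~\ref{assump:CME}, combined with the invariance of covariance under $P_X$-a.e.\ additive constants, gives $C_{XY}f = C_X g_f$ and hence $\ran C_{XY} \subset \ran C_X$; (ii) boundedness of $A = C_X^\dagger C_{XY}$ follows from closedness of the Moore--Penrose inverse plus the closed graph theorem, since $A$ is everywhere defined; (iii) the weak identity uses the standard fact $C_X^\dagger C_X = P_{(\ker C_X)^\perp}$ together with the characterization $\ker C_X = \{h \in \h_\X : h \text{ is } P_X\text{-a.e.\ constant}\}$ (which holds because $\langle h, C_X h\rangle_{\h_\X} = \Var(h(X))$ and covariance Cauchy--Schwarz), so that the stray constants from Assumption~\ref{assump:CME} and from the projection both cancel against $\E[(Ag)(X)]$; (iv) separability of $\h_\Y$ merges the $g$-dependent null sets into one. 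Your bookkeeping of the additive constants, which you correctly identify as the delicate point, does close: the same constant appears in $(Ag)(x)$ and in $\E[(Ag)(X)]$, hence cancels identically. Two points you leave implicit but which are trivially supplied: the CME $\mu_{Y|x}$ exists in $\h_\Y$ for $P_X$-a.e.\ $x$ because $\E[\sqrt{k(Y,Y)}] < \infty$ forces $\E[\sqrt{k(Y,Y)}\,|\,X=x] < \infty$ a.e.; and the pairing $\langle \mu_{Y|x}, g\rangle_{\h_\Y} = \E[g(Y)\,|\,X=x]$ must be read through the regular conditional distribution $P_{Y|x}$, which agrees $P_X$-a.e.\ with the version of conditional expectation appearing in Assumption~\ref{assump:CME} -- another $g$-dependent null set that your separability step already absorbs.
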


%The \emph{uncentered cross-covariance operator} is defined as 
%$$\langle f,{^uC}_{XY}g \rangle = \E\left[f(X)g(Y)\right].$$
%\begin{theorem}
%    \label{uncenter_CME} (Uncentered CME).
%    Suppose for any $g\in\mathcal{H}_\Y$, a version of $\E[g(Y)|X=\cdot]\in\mathcal{H}_\X$.
%    Then $\ran {^uC}_{XY} \subset \ran {^uC}_X$, ${^u C}_X^\dagger {^uC}_{XY}$ is a bounded linear operator and for $P_X$-a.e. $x\in\X$,
%    $$\mu_{Y|X=x} = \left({^u C}_X^\dagger {^uC}_{XY}\right)^* k_\X(x,\cdot).$$
%\end{theorem}

\begin{remark}
Note that we do not require $k_\X(\cdot,\cdot)$ and $k(\cdot,\cdot)$ to be characteristic for~\eqref{eq:CME} to be valid. See \cite{klebanov2019rigorous} for other sufficient conditions, different from Assumption \ref{assump:CME}, that guarantee~\eqref{eq:CME}.
In particular, if the support of $X$ is finite and $k_\X$ is characteristic, then Assumption \ref{assump:CME} holds (see Lemma \ref{finite_support} for a proof). The CME formula given in~\eqref{eq:CME} is the centered version which uses the centered (cross)-covariance operators $C_X$ and $C_{XY}$. Uncentered covariance operators have also been used to define CMEs in the existing literature (see e.g.,~\cite{song2010hilbert, song2010nonparametric, fukumizu2013kernel}); see~\cref{sec:Un-CME} for a discussion. But it is known that the centered CME formula~\eqref{eq:CME}  requires less restrictive assumptions and hence is preferable; see~\cref{rk:uncenter}.
Our simulation results also validated this observation, and hence in this paper we advocate the use of the centered CME formula (as in~\eqref{eq:CME}).
\end{remark}

The following result (which follows from~\cref{center_CME}) shows that $\rho^2(Y,Z|X)$ can be expressed in terms of CMEs, which in turn, can be explicitly simplified in terms of (cross)-covariance operators under appropriate assumptions (as in~\eqref{eq:CME}). This will form our basis for estimation of $\rho^2(Y,Z|X)$ using the RKHS framework --- we will replace each of the terms in~\eqref{eq:eta-Cen} below with their sample counterparts to obtain the estimator $\tilde{\rho^2}$ (see~\eqref{eq:eta-Cen-Est} below).
\begin{proposition}\label{prop:Kernel-Version}
Suppose that the assumptions in Lemma \ref{center_CME} hold for $(Y,X)$ and $(Y, \ddot{X})$ (i.e. replacing $X$ with $\ddot{X}$). Then $\rho^2(Y,Z|X)$ in~\eqref{eq:eta} can be simplified as
\begin{eqnarray}\label{eq:eta-Cen}
    \rho^2 & = &\frac{ \E \big[\|\mu_{Y|XZ} - \mu_{Y|X}\|_{\h_\Y}^2\big]}{\E [\|k (Y,\cdot)  - \mu_{Y|X}\|_{\h_\Y}^2]} \nonumber \\
    & = & \frac{\E\Big[\big\|(C_{\ddot{X}}^\dagger C_{\ddot{X}Y})^* \big(k_{\ddot{\X}}(\ddot{X},\cdot ) - \mu_{\ddot{X}}\big)- (C_X^\dagger C_{XY})^* \left(k_{{\X}}({X},\cdot)-\mu_{X} \right) \big\|^2_{\mathcal{H}_\Y} \Big] }{\E\Big[\big\|k (Y,\cdot) -\mu_Y - (C_X^\dagger C_{XY})^* \left(k_{{\X}}({X},\cdot)-\mu_{X} \right)\big\|_{\mathcal{H}_\Y}^2 \Big]}.
\end{eqnarray}
Here, for $x \in \X$ and $z \in \Z$, $\mu_{Y|xz}$ is the CME of $Y$ given $X=x$ and $Z=z$.
\end{proposition}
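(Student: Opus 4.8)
The plan is to establish the two displayed equalities in turn. The first rewrites the definition of $\rho^2$ in~\eqref{eq:eta} by recognizing, via Definition~\ref{defn:MMD}, that each squared MMD is the squared $\h_\Y$-distance between the corresponding mean embeddings. The second then substitutes the explicit representation of the conditional mean embeddings supplied by Lemma~\ref{center_CME}, after which the common mean element $\mu_Y$ cancels.

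For the first equality, recall from Definition~\ref{defn:MMD} that ${\rm MMD}(Q_1,Q_2) = \|\mu_{Q_1} - \mu_{Q_2}\|_{\h_\Y}$. The mean embeddings of $P_{Y|XZ}$ and $P_{Y|X}$ are, by Definition~\ref{defn:CME}, $\mu_{Y|XZ}$ and $\mu_{Y|X}$. For the denominator, I would observe that the mean embedding of the Dirac measure $\delta_Y$ is $k(Y,\cdot)$: by the reproducing property $\E_{\delta_Y}[f(W)] = f(Y) = \langle k(Y,\cdot), f\rangle_{\h_\Y}$ for every $f\in\h_\Y$, so $k(Y,\cdot)$ is the unique element satisfying~\eqref{eq:Embedding}. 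Substituting these three embeddings into~\eqref{eq:eta} gives the first claimed form. To be careful, these embeddings exist because $\E[k(Y,Y)]<\infty$ forces $\E[\sqrt{k(Y,Y)}]<\infty$, and the tower property then gives $\E_{P_{Y|x}}[\sqrt{k(W,W)}]<\infty$ for $P_X$-a.e.\ $x$ (and likewise conditioning on $\ddot{X}$), so the conditional mean elements are well-defined almost everywhere.

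For the second equality, I would apply Lemma~\ref{center_CME} twice --- once to $(Y,X)$ and once to $(Y,\ddot{X})$, which is exactly what the hypothesis (its assumptions holding for both pairs) licenses --- to obtain, for $P_X$-a.e.\ $x$ and $P_{\ddot{X}}$-a.e.\ $\ddot{x}=(x,z)$,
\[
\mu_{Y|x} = \mu_Y + (C_X^\dagger C_{XY})^*\big(k_\X(x,\cdot) - \mu_X\big), \qquad \mu_{Y|xz} = \mu_Y + (C_{\ddot{X}}^\dagger C_{\ddot{X}Y})^*\big(k_{\ddot{\X}}(\ddot{x},\cdot) - \mu_{\ddot{X}}\big),
\]
using that conditioning on $\ddot{X}=(x,z)$ coincides with conditioning on $X=x,Z=z$. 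Subtracting these in the numerator cancels $\mu_Y$ and yields precisely the numerator of~\eqref{eq:eta-Cen}, while inserting the first into $k(Y,\cdot) - \mu_{Y|X}$ yields the denominator. Because both CME identities hold only almost everywhere, I would stress that this still suffices: the outer expectations in~\eqref{eq:eta-Cen} integrate over $X$ and $\ddot{X}$, so a.e.\ equality of the $\h_\Y$-valued integrands transfers to equality of the expectations. The only real work, and the place to be cautious, is the bookkeeping that Lemma~\ref{center_CME}'s hypotheses --- notably $\ran C_{XY}\subset\ran C_X$ with $C_X^\dagger C_{XY}$ bounded, and their $\ddot{X}$-analogues --- are genuinely in force for both pairs so that both formulas apply, and that the resulting maps are measurable enough for the Bochner expectations to make sense.
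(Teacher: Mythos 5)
Your proposal is correct and follows exactly the route the paper intends: the paper treats Proposition~\ref{prop:Kernel-Version} as an immediate consequence of Lemma~\ref{center_CME}, with the first equality (rewriting each ${\rm MMD}^2$ as a squared $\h_\Y$-distance between mean embeddings, using $\mu_{\delta_Y}=k(Y,\cdot)$) already spelled out at the start of the proof of Lemma~\ref{lem:eta-2}, and the second equality obtained by substituting the CME formula~\eqref{eq:CME} for both $(Y,X)$ and $(Y,\ddot{X})$ so that $\mu_Y$ cancels in the numerator. Your additional care about a.e.\ validity of the CME identities and well-definedness of the conditional embeddings is sound and only makes explicit what the paper leaves implicit.
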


\subsection{Estimation of $\rho^2$ by $\tilde{\rho^2}$}\label{sec:Est-kernel}
Suppose that we have i.i.d.~data $\{(X_i,Y_i,Z_i): i =1,\ldots, n\}$ from $P$ on $\X \times \Y \times \Z$. Let us first consider the estimation of the covariance operator $C_X$. 
The empirical covariance operator $\hat{C}_{X}$ is easily estimated by the sample analogue of~\eqref{eq:Cov}, i.e., by replacing the expectations in~\eqref{eq:Cov} by their empirical counterparts:
\begin{equation}\label{eq:Est-Cov}
    \hat{C}_{X} :=\frac{1}{n}\sum_{i=1}^n k_{\X}(X_i,\cdot)\otimes k_{\X}(X_i,\cdot) - \hat{\mu}_{X} \otimes \hat{\mu}_{X}
\end{equation}
where $$\hat{\mu}_{X}  := \frac{1}{n}\sum_{i=1}^n k_{\X}(X_i,\cdot)$$ is the estimator of the kernel mean embedding $\mu_{X} = \E[k_\X(X,\cdot)] \in \h_\X$. 
Similarly, the cross-covariance operator $C_{YX}$ can be estimated by 
$$\hat{C}_{YX} :=\frac{1}{n}\sum_{i=1}^n k (Y_i,\cdot) \otimes k_\X(X_i,\cdot) - \hat \mu_Y \otimes \hat \mu_X ,$$
where $\hat \mu_Y := \frac{1}{n}\sum_{i=1}^n k (Y_i,\cdot)$ is the estimator of the kernel mean embedding $\mu_{Y} = \E[k (Y,\cdot)] \in \h_\Y$. %Note that here we have identified the tensor product space $\mathcal{H}_\Y\otimes \mathcal{H}_\X$ with the space of Hilbert-Schmidt operators from $\mathcal{H}_\X$ to $\mathcal{H}_\Y$, with $k(y,\cdot)\otimes k_\X(x,\cdot)(h) :=k(y,\cdot)h(x)$, for all $x \in \X,y \in \Y$ and $h \in \h_\X$.
Further note that $\ran {\hat C_X}$ is spanned by the set $\{k_{\X}(X_i,\cdot): i=1,\ldots, n\}$, which implies that $\hat C_X$ is not invertible in general, since $\h_\X$ is typically infinite-dimensional. In fact, estimating the inverse of the compact operator $C_X$ is in general an ill-posed inverse problem \cite[Section 8.6]{manton2014primer}.
%\red{Let $\{e_i\}_{i\ge 1}$ be an eigenbasis of $C_X$ with corresponding eigenvalues $\{\lambda_i\}_{i \ge 1}$. Then by the spectral theorem, we have $\lambda_i \to 0$ as $i\to\infty$. The Moore-Penrose inverse $C_X^\dagger: e_i\mapsto \frac{e_i}{\lambda_i}\ (\lambda_i>0)$ is typically unbounded}.
Hence the Tikhonov approach is often used for regularization which estimates $C_X^\dagger$ by $(\hat{C}_X + \varepsilon I)^{-1}$, for a tuning parameter $\varepsilon >0$ (e.g.,~\cite{song2010hilbert, song2010nonparametric, fukumizu2013kernel}).
%\red{$C_X$ can be injective if $\h_\X$ does not contain $P_X$-a.e. constant function (in such case $\lambda_i>0$ for all $i$), but the range of $C_X$ is typically not the entire Hilbert space. (Though it can be dense.)} 
%\red{A possibly better estimation of $\hat{C}_X^{-1}$ might be $\hat{C}_X^{-1}:=(\hat{C}_X^2 + \varepsilon I)^{-1}\hat{C}_X$, since
%    $(C_X^2 + \varepsilon I)^{-1}C_X$ converges to $C^{\dagger}$ pointwise as $\varepsilon\to 0$ on the domain of $C^\dagger$, i.e. $\ran C_X\oplus (\ran C_X)^\perp$, but $(C_X+\varepsilon I)^{-1}$ diverges to infinity on $(\ran C_X)^\perp = \ker C_X^* = \ker C_X $.
%    However, in our context, this may not be a problem. }
Thus, $(C_X^\dagger C_{XY})^*$ can be estimated by $\big( (\hat{C}_X + \varepsilon I)^{-1}\hat{C}_{XY}\big)^* = \hat{C}_{YX}(\hat{C}_X + \varepsilon I)^{-1}$. $\rho^2$ in~\eqref{eq:eta-Cen} can therefore be naturally estimated empirically by
\begin{equation}\label{eq:eta-Cen-Est}
\tilde{\rho^2} := \frac{\frac{1}{n}\sum_{i=1}^n \|\hat{\mu}_{Y|\ddot{X}_i}  - \hat{\mu}_{Y|{X_i}}\|^2_{\mathcal{H}_\Y} }{\frac{1}{n} \sum_{i=1}^n \|k (Y_i,\cdot)- \hat{\mu}_{Y|{X_i}}\|^2_{\mathcal{H}_\Y}},
\end{equation}
where, for $x \in \X$, and $\ddot{x} \in \ddot{\X}$,
\begin{eqnarray}
\hat{\mu}_{Y|\ddot{x}} & := & \hat \mu_Y  + \hat{C}_{Y\ddot{X}}(\hat{C}_{\ddot{X}} + \varepsilon I)^{-1} \big( k_{\ddot{\X}}(\ddot{x},\cdot ) - \hat{\mu}_{\ddot{X}}\big), \nonumber \\
\hat{\mu}_{Y|{x}} & := & \hat \mu_Y  + \hat{C}_{Y{X}}(\hat{C}_{{X}} + \varepsilon I)^{-1} \big( k_{{\X}}({x},\cdot ) - \hat{\mu}_{{X}}\big) \label{eq:mu_Y|x}
\end{eqnarray}
and $\hat{\mu}_{\ddot{X}} := \frac{1}{n}\sum_{i=1}^n k_{\ddot{\X}}(\ddot{X}_i,\cdot)$ (here ${\ddot{X}}_i \equiv (X_i,Z_i)$, for all $i$). Note that $\tilde{\rho^2}$ is always nonnegative, but it is not guaranteed to be less than or equal to $1$. In practice, since we know that $\rho^2\in[0,1]$, we can always truncate $\tilde{\rho^2}$ at $1$ when it exceeds $1$.
Note that as opposed to the graph-based estimator $\hat{\rho^2}$, $\tilde{\rho^2}$ is always nonrandom. It does not involve tie-breaking for the $K$-NN graph in real data set.

Although the expression for $\tilde{\rho^2}$ in~\eqref{eq:eta-Cen-Est} looks complicated, it can be simplified considerably; see~\cref{lem:eta} below (and~\cref{sec:eta} for a proof). Before we describe the result, let us introduce some notation. We denote by $K_X$, $K_Y$ and $K_{\ddot{X}}$ the $n \times n$ kernel matrices, where for $i, j \in \{1,\ldots, n\}$,  $$(K_X)_{ij}=k_\mathcal{X}(X_i,X_j), \qquad (K_Y)_{ij}=k (Y_i,Y_j), \qquad (K_{\ddot{X}})_{ij}=k_\mathcal{{\ddot{X}}}({\ddot{X}}_i,{\ddot{X}}_j).$$ Let $H :=I-\frac{1}{n}\mathbf{1}\mathbf{1}^\top$ be the centering matrix (here $\mathbf{1} = (1,\ldots, 1) \in \R^n$ and $I \equiv I_n$ denotes the $n \times n$ identity matrix). Then, $$\tilde{K}_X := HK_X H, \qquad \tilde{K}_Y := HK_Y H, \qquad \tilde{K}_{\ddot{X}} := HK_{\ddot{X}} H$$ are the corresponding centered kernel matrices.
%In the following lemma we derive a simple expression for $\hat \rho^2(Y,Z|X)$ that can be readily used for its computation.
\begin{proposition}\label{lem:eta}
Fix $\varepsilon >0$. Let
$$\begin{aligned}
    M &:= \tilde{K}_X ( \tilde{K}_X + n\varepsilon I)^{-1} - \tilde{K}_{\ddot{X}}  ( \tilde{K}_{\ddot{X}} + n\varepsilon I )^{-1}= n\varepsilon \left(( \tilde{K}_{\ddot{X}} + n\varepsilon I )^{-1} - ( \tilde{K}_{X} + n\varepsilon I )^{-1}\right), \\
     N &:=I-\tilde{K}_X ( \tilde{K}_X + n\varepsilon I )^{-1} =  n\varepsilon ( \tilde{K}_X + n\varepsilon I )^{-1}.
\end{aligned}$$ Then, $\tilde{\rho^2}$ in~\eqref{eq:eta-Cen-Est} can be expressed as
\begin{equation}\label{eq:Compute-Rho}
\tilde{\rho^2}=\frac{{\rm Tr}(M^\top \tilde{K}_Y M)}{{\rm Tr}(N^\top  \tilde{K}_Y N)},
\end{equation}
where ${\rm Tr}(\cdot)$ denotes the trace of a matrix. 
\end{proposition}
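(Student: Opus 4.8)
The plan is to rewrite each (conditional) mean embedding appearing in \eqref{eq:eta-Cen-Est} as a finite linear combination of the centered feature maps of $Y$, so that all the $\h_\Y$-norms collapse into quadratic forms in the centered kernel matrix $\tilde{K}_Y$. To this end, first I would introduce the feature operators $\Phi_X:\R^n\to\h_\X$, $\Phi_{\ddot X}:\R^n\to\h_{\ddot\X}$ and $\Phi_Y:\R^n\to\h_\Y$ defined on the standard basis vectors $e_i\in\R^n$ by $\Phi_X e_i=k_\X(X_i,\cdot)$, $\Phi_{\ddot X} e_i=k_{\ddot\X}(\ddot X_i,\cdot)$ and $\Phi_Y e_i=k(Y_i,\cdot)$. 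Writing $\tilde\Phi_X:=\Phi_X H$ (and similarly $\tilde\Phi_{\ddot X}$, $\tilde\Phi_Y$), a direct computation from \eqref{eq:Est-Cov} using $H^2=H$ gives $\hat C_X=\tfrac1n\tilde\Phi_X\tilde\Phi_X^*$ and $\hat C_{YX}=\tfrac1n\tilde\Phi_Y\tilde\Phi_X^*$, while $\tilde\Phi_X^*\tilde\Phi_X=HK_XH=\tilde K_X$ and $\tilde\Phi_Y^*\tilde\Phi_Y=\tilde K_Y$. The one identity I would record explicitly is that the centered feature satisfies $k_\X(X_i,\cdot)-\hat\mu_X=\Phi_X(e_i-\tfrac1n\vo)=\tilde\Phi_X e_i$, and likewise on $\ddot\X$.

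Second, I would simplify $\hat\mu_{Y|X_i}-\hat\mu_Y$. Plugging $x=X_i$ into \eqref{eq:mu_Y|x} yields $\hat\mu_{Y|X_i}-\hat\mu_Y=\tfrac1n\tilde\Phi_Y\tilde\Phi_X^*\bigl(\tfrac1n\tilde\Phi_X\tilde\Phi_X^*+\varepsilon I\bigr)^{-1}\tilde\Phi_X e_i$. The key step is the operator push-through identity $A^*\bigl(\tfrac1n AA^*+\varepsilon I\bigr)^{-1}=\bigl(\tfrac1n A^*A+\varepsilon I\bigr)^{-1}A^*$, applied with $A=\tilde\Phi_X$, which moves the infinite-dimensional resolvent onto the finite-dimensional Gram side and turns $\tilde\Phi_X^*\bigl(\tfrac1n\tilde\Phi_X\tilde\Phi_X^*+\varepsilon I\bigr)^{-1}$ into $\bigl(\tfrac1n\tilde K_X+\varepsilon I\bigr)^{-1}\tilde\Phi_X^*$. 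Combined with $\tilde\Phi_X^*\tilde\Phi_X=\tilde K_X$ and the elementary rescaling $\tfrac1n\bigl(\tfrac1n\tilde K_X+\varepsilon I\bigr)^{-1}=(\tilde K_X+n\varepsilon I)^{-1}$, this collapses to $\hat\mu_{Y|X_i}-\hat\mu_Y=\tilde\Phi_Y\,\tilde K_X(\tilde K_X+n\varepsilon I)^{-1}e_i$ (using that $\tilde K_X$ commutes with its resolvent). Repeating the argument with $\ddot X$ gives $\hat\mu_{Y|\ddot X_i}-\hat\mu_Y=\tilde\Phi_Y\,\tilde K_{\ddot X}(\tilde K_{\ddot X}+n\varepsilon I)^{-1}e_i$.

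Third, I would assemble the two terms. Subtracting the last two displays, the $\hat\mu_Y$ parts cancel and $\hat\mu_{Y|\ddot X_i}-\hat\mu_{Y|X_i}=-\tilde\Phi_Y M e_i$ with $M$ as in the statement (the two displayed forms of $M$ coinciding via $\tilde K(\tilde K+n\varepsilon I)^{-1}=I-n\varepsilon(\tilde K+n\varepsilon I)^{-1}$). Similarly, $k(Y_i,\cdot)-\hat\mu_{Y|X_i}=(k(Y_i,\cdot)-\hat\mu_Y)-(\hat\mu_{Y|X_i}-\hat\mu_Y)=\tilde\Phi_Y e_i-\tilde\Phi_Y\tilde K_X(\tilde K_X+n\varepsilon I)^{-1}e_i=\tilde\Phi_Y N e_i$ with $N=I-\tilde K_X(\tilde K_X+n\varepsilon I)^{-1}=n\varepsilon(\tilde K_X+n\varepsilon I)^{-1}$. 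Since $\|\tilde\Phi_Y v\|_{\h_\Y}^2=v^\top\tilde\Phi_Y^*\tilde\Phi_Y v=v^\top\tilde K_Y v$ for every $v\in\R^n$, each squared norm becomes $e_i^\top M^\top\tilde K_Y M e_i$ (respectively $e_i^\top N^\top\tilde K_Y N e_i$), and summing over $i$ via $\sum_{i=1}^n e_i^\top A e_i={\rm Tr}(A)$ produces the numerator $\tfrac1n{\rm Tr}(M^\top\tilde K_Y M)$ and denominator $\tfrac1n{\rm Tr}(N^\top\tilde K_Y N)$; the factors of $\tfrac1n$ cancel, giving \eqref{eq:Compute-Rho}.

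The main obstacle is the rigorous justification of the push-through identity at the level of operators on the (possibly infinite-dimensional) RKHS: one must verify that $\tfrac1n\tilde\Phi_X\tilde\Phi_X^*+\varepsilon I$ and $\tfrac1n\tilde K_X+\varepsilon I$ are both boundedly invertible, which is immediate for $\varepsilon>0$ because the empirical covariance operator is nonnegative and self-adjoint so that $\tfrac1n\tilde\Phi_X\tilde\Phi_X^*+\varepsilon I\succeq\varepsilon I$, and that $\tilde\Phi_X$ is bounded with adjoint acting by $h\mapsto(\langle k_\X(X_i,\cdot)-\hat\mu_X,\,h\rangle_{\h_\X})_{i=1}^n$. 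Once this is in place, everything else is finite-dimensional linear algebra together with the bookkeeping of centering via $H$.
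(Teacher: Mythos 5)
Your proof is correct and follows essentially the same route as the paper's: both express the empirical (cross-)covariance operators through feature/sampling operators, apply the push-through identity $(AB+\varepsilon I)^{-1}A = A(BA+\varepsilon I)^{-1}$ to trade the RKHS resolvent for the Gram-matrix resolvent, and then collapse the squared $\h_\Y$-norms into traces of quadratic forms in $\tilde{K}_Y$. The only cosmetic difference is that you center the feature operators up front via $\tilde\Phi = \Phi H$, so the centered kernel matrices appear immediately, whereas the paper carries the centering matrix $H$ through the computation and invokes $H^2=H$ at the end; this is a bookkeeping choice, not a different argument.
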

A few remarks are now in order.

\begin{remark}[Regularization parameter(s)]
In the above result we used the same $\varepsilon >0$ to estimate $C_X^\dagger$ and $C_{\ddot{X}}^\dagger$ by $(\hat{C}_X + \varepsilon I)^{-1}$ and $(\hat{C}_{\ddot{X}} + \varepsilon I)^{-1}$ respectively. We can also use different $\varepsilon$'s and can consider the estimators $(\hat{C}_X+\varepsilon_1 I)^{-1}$ and $(\hat{C}_{\ddot{X}}+\varepsilon_2 I)^{-1}$, for some $\varepsilon_1,\varepsilon_2 >0$.~\cref{lem:eta} still holds in this case where we replace $M$ and $N$ by $\tilde{K}_X \big( \tilde{K}_X + n\varepsilon_1 I\big)^{-1} - \tilde{K}_{\ddot{X}} \big( \tilde{K}_{\ddot{X}} + n\varepsilon_2 I\big)^{-1}$ and $I-\tilde{K}_X \left( \tilde{K}_X + n\varepsilon_1 I\right)^{-1}$ respectively. 
\end{remark}

%{The consistency result in later section also holds as long as both $\varepsilon_1$ and $\varepsilon_2$ tend to 0 at a rate slower than $n^{-1/2}$, with almost identical proof.}

\begin{remark}[Kernel measure of association]\label{rem:Kernel-MAc}
    If $X$ is not present, then ${\rho}^2(Y,Z|\emptyset)$ yields a measure of association between the two variables $Y$ and $Z$. Our estimation strategy readily yields an empirical estimator of $\rho^2(Y,Z|\emptyset)$, namely,
    \begin{equation}\label{eq:CME_emptyX}
        \tilde{\rho^2}(Y,Z|\emptyset) = \frac{{\rm Tr}(M^\top \tilde{K}_Y M)}{{\rm Tr}(N^\top \tilde{K}_Y N)},
    \end{equation}
    where $M = \tilde{K}_Z \left( \tilde{K}_Z + n\varepsilon I\right)^{-1}$ and $N=I$. This estimator can be viewed as the kernel analogue to the graph-based estimation strategy employed in~\cite{deb2020kernel} to approximate ${\rho}^2(Y,Z|\emptyset)$. 
\end{remark}

\begin{remark}[Uncentered estimates of CMEs]\label{rem:Uncentered-CME}
Instead of using the centered estimates of CMEs, as in~\eqref{eq:CME}, to approximate $\tilde{\rho^2}$ (as in~\eqref{eq:eta-Cen-Est}), one could use their uncentered analogues; see~\cref{sec:Un-CME} for a discussion on this where an explicit expression for the corresponding `uncentered' estimator $\tilde{\rho^2_u}$ of $\rho^2$ is derived. In~\cref{prop:kernel_ridge} (in~\cref{sec:Un-CME}) we further show that $\tilde{\rho^2_u}$ has an interesting connection to kernel ridge regression.
\end{remark}

\begin{remark}[Approximate computation of $\tilde{\rho^2}$]\label{rk:approx_inchol}
The exact computation of $\tilde{\rho^2}$ costs $O(n^3)$ time as we will have to invert $n \times n$ matrices (see \eqref{eq:Compute-Rho}). A fast approximation of $\tilde{\rho^2}$ can be done using the method of \emph{incomplete Cholesky decomposition} \cite{bach2002kernel}. In particular, if we use incomplete Cholesky decomposition of all the three kernel matrices --- $K_X, K_Y$ and $K_{\ddot{X}}$ --- with ranks less than (or equal to) $r$, then the desired approximation of $\tilde{\rho^2}$ can be computed in time $O(nr^2)$; see~\cref{sec:approx_compute} for the details.
\end{remark}

An interesting property of $\tilde{\rho^2}$ is that it reduces to the empirical classical partial correlation squared when linear kernels are used and $\varepsilon\to 0$; this is stated in~\cref{prop:reduce_classi} (see \cref{sec:pf_reduce_classi} for a proof). Thus, $\tilde{\rho^2}$ can indeed be seen as a natural generalization of squared partial correlation.
\begin{proposition}\label{prop:reduce_classi}
    Suppose $\Y =\Z = \R$, $\X = \R^d$, with linear kernels used for all the three spaces $\Y,\X$, and $\ddot{\X}$. If $X$ and $\ddot{X}$ have nonsingular sample covariance matrices, then $\tilde{\rho^2}$ reduces to the classical empirical partial correlation squared as $\varepsilon\to 0$, i.e., 
    $$\lim_{\varepsilon\to 0} \tilde{\rho^2}(Y,Z|X) = \hat{\rho^2}_{YZ\cdot X}.$$
\end{proposition}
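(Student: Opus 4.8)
The plan is to evaluate the trace ratio~\eqref{eq:Compute-Rho} under linear kernels and show that, as the Tikhonov parameter $\varepsilon\to 0$, the matrices $M$ and $N$ collapse to differences of orthogonal projections, whereupon the ratio becomes the familiar residual-based expression for the squared sample partial correlation. Stack the observations as rows of the data matrices $\mathbf{X}\in\R^{n\times d}$, $\mathbf{Z}\in\R^{n}$ and $\mathbf{Y}\in\R^{n}$, and set $\ddot{\mathbf{X}}:=[\mathbf{X}\;\;\mathbf{Z}]\in\R^{n\times(d+1)}$. With linear kernels, $K_X=\mathbf{X}\mathbf{X}^\top$, $K_Y=\mathbf{Y}\mathbf{Y}^\top$ and $K_{\ddot X}=\ddot{\mathbf{X}}\ddot{\mathbf{X}}^\top$, so after centering $\tilde K_X=(H\mathbf{X})(H\mathbf{X})^\top$, $\tilde K_{\ddot X}=(H\ddot{\mathbf{X}})(H\ddot{\mathbf{X}})^\top$ and $\tilde K_Y=\mathbf{Y}_c\mathbf{Y}_c^\top$, where $\mathbf{Y}_c:=H\mathbf{Y}$ and $\mathbf{Z}_c:=H\mathbf{Z}$. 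Let $\Pi_X$ and $\Pi_{\ddot X}$ denote the orthogonal projections onto $\operatorname{col}(H\mathbf{X})=\ran\tilde K_X$ and $\operatorname{col}(H\ddot{\mathbf{X}})=\ran\tilde K_{\ddot X}$ respectively.

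First I would take the limit $\varepsilon\to0$. By the spectral theorem, for a symmetric positive semidefinite matrix $A$ the eigenvalue filter $\lambda/(\lambda+n\varepsilon)$ tends to $1$ when $\lambda>0$ and to $0$ when $\lambda=0$, so $A(A+n\varepsilon I)^{-1}$ converges to the orthogonal projection onto $\ran A$. Applying this to $\tilde K_X$ and $\tilde K_{\ddot X}$ gives $M\to \Pi_X-\Pi_{\ddot X}$ and $N\to I-\Pi_X$. Since $\tilde K_Y$ is rank one and the limiting matrices are symmetric, the cyclic property of the trace collapses the numerator and denominator to scalar quadratic forms in $\mathbf{Y}_c$, and using the idempotency of $I-\Pi_X$ in the denominator we obtain
\[
\lim_{\varepsilon\to0}\tilde{\rho^2}=\frac{\mathbf{Y}_c^\top(\Pi_X-\Pi_{\ddot X})^2\mathbf{Y}_c}{\mathbf{Y}_c^\top(I-\Pi_X)\mathbf{Y}_c}.
\]

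Next I would exploit the nesting $\operatorname{col}(H\mathbf{X})\subseteq\operatorname{col}(H\ddot{\mathbf{X}})$ (adjoining the column $H\mathbf{Z}$ can only enlarge the column space), which forces $\Pi_X\Pi_{\ddot X}=\Pi_{\ddot X}\Pi_X=\Pi_X$ and hence $(\Pi_X-\Pi_{\ddot X})^2=\Pi_{\ddot X}-\Pi_X$. The operator $\Pi_{\ddot X}-\Pi_X$ is itself the orthogonal projection onto the one-dimensional orthogonal complement of $\operatorname{col}(H\mathbf{X})$ inside $\operatorname{col}(H\ddot{\mathbf{X}})$, which is spanned by the regression residual $\mathbf{r}_Z:=(I-\Pi_X)\mathbf{Z}_c$; thus $\Pi_{\ddot X}-\Pi_X=\mathbf{r}_Z\mathbf{r}_Z^\top/\|\mathbf{r}_Z\|^2$. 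Writing $\mathbf{r}_Y:=(I-\Pi_X)\mathbf{Y}_c$, the numerator becomes $(\mathbf{r}_Y^\top\mathbf{r}_Z)^2/\|\mathbf{r}_Z\|^2$ and the denominator $\|\mathbf{r}_Y\|^2$, so the limit equals $(\mathbf{r}_Y^\top\mathbf{r}_Z)^2/(\|\mathbf{r}_Y\|^2\|\mathbf{r}_Z\|^2)$. Since $\mathbf{r}_Y$ and $\mathbf{r}_Z$ are exactly the residuals from the (centered) least-squares regressions of $Y$ and of $Z$ on $X$, this ratio is precisely the squared sample correlation between them, i.e.\ $\hat{\rho^2}_{YZ\cdot X}$, which completes the argument.

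I expect the only genuine subtleties to be bookkeeping ones. The $\varepsilon\to0$ limit is harmless because the spectral filter is uniformly bounded and converges pointwise on a finite spectrum, but the role of the hypotheses must be pinned down carefully: nonsingularity of the sample covariance of $\ddot X$ guarantees that $H\ddot{\mathbf{X}}$ has full column rank, so $\mathbf{Z}_c\notin\operatorname{col}(H\mathbf{X})$, i.e.\ $\mathbf{r}_Z\neq 0$, which is exactly what makes $\Pi_{\ddot X}-\Pi_X$ a bona fide rank-one projection and keeps $\|\mathbf{r}_Z\|^2>0$ so that $\hat{\rho^2}_{YZ\cdot X}$ is well-defined; nonsingularity of the sample covariance of $X$ plays the analogous role for $\Pi_X$. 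I would also flag that the common factor $\|\mathbf{r}_Y\|^2$ in the two denominators must be nonzero (equivalently, $Y$ is not a sample-linear function of $X$), which is the finite-sample analogue of Assumption~\ref{assump:nondegenrate} and the standard requirement for the empirical partial correlation to exist.
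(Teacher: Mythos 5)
Your proof is correct, and it reaches the same projection-ratio limit as the paper but by a genuinely different route. The paper also begins by collapsing the trace ratio via the rank-one structure of $\tilde{K}_Y$, but it then applies the Woodbury identity $(\mathbf{X}_c\mathbf{X}_c^\top + n\varepsilon I)^{-1} = \frac{1}{n\varepsilon}I - \frac{1}{n\varepsilon}\mathbf{X}_c(n\varepsilon I + \mathbf{X}_c^\top\mathbf{X}_c)^{-1}\mathbf{X}_c^\top$ to push the regularized inverses down to $d\times d$ (and $(d+1)\times(d+1)$) matrices, so that the $\varepsilon\to 0$ limit becomes the convergence of $\mathbf{X}_c(n\varepsilon I+\mathbf{X}_c^\top\mathbf{X}_c)^{-1}\mathbf{X}_c^\top$ to the least-squares hat matrix $\mathbf{X}_c(\mathbf{X}_c^\top\mathbf{X}_c)^{-1}\mathbf{X}_c^\top$; this is exactly where the paper consumes the nonsingularity hypotheses. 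You instead take the limit at the level of the $n\times n$ matrices via the spectral filter $\lambda/(\lambda+n\varepsilon)$, so that $A(A+n\varepsilon I)^{-1}$ converges to the orthogonal projection onto $\ran A$ with no invertibility needed; in your argument nonsingularity enters only to guarantee $\mathbf{r}_Z\neq 0$, and (as you essentially note) the hypothesis on $X$ is then subsumed by the one on $\ddot{X}$ --- a small sharpening. The endgame also differs: the paper identifies the projection ratio with $\mathrm{Cor}(\mathbf{r}_Y,\mathbf{r}_Z)^2$ through an orthogonality trick (with $v := \Pi_{\ddot{X}}\mathbf{Y}_c - \Pi_X\mathbf{Y}_c$ one has $\mathbf{r}_Y - v \perp v$, so the squared correlation equals $\|v\|^2/\|\mathbf{r}_Y\|^2$), whereas you use the nested-projection identity $(\Pi_X-\Pi_{\ddot{X}})^2 = \Pi_{\ddot{X}}-\Pi_X$ together with the explicit rank-one representation $\Pi_{\ddot{X}}-\Pi_X = \mathbf{r}_Z\mathbf{r}_Z^\top/\|\mathbf{r}_Z\|^2$, which delivers the inner-product form $(\mathbf{r}_Y^\top\mathbf{r}_Z)^2/(\|\mathbf{r}_Y\|^2\|\mathbf{r}_Z\|^2)$ directly. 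Both routes are sound; yours buys a limit statement that exists as a projection ratio even in degenerate cases plus an explicit formula for the limiting operator, while the paper's Woodbury route stays closer to classical regression algebra and the hat-matrix picture. Your flagging of the tacit requirement $\|\mathbf{r}_Y\|>0$ is apt: the paper's proof needs it too and leaves it implicit.
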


\subsection{Consistency results}\label{sec:CME_consistency}
We first state a result that shows the consistency of the CME estimator $\hat{\mu}_{Y|{\cdot}}$ in~\eqref{eq:mu_Y|x}. In particular, we show in~\cref{thm:CME_result} (see~\cref{pf:kernel_consistency} for a proof) that $\hat{\mu}_{Y|{\cdot}}$ is consistent in estimating ${\mu}_{Y|{\cdot}}$ in the averaged $\|\cdot\|_{\h_\Y}^2$-loss. This answers an open question mentioned in~\cite[Section 8]{klebanov2019rigorous} and may be of independent interest. 
\begin{theorem}\label{thm:CME_result}
Suppose the CME formula \eqref{eq:CME} holds, and the regularization parameter $\varepsilon \equiv \varepsilon_n\to 0^+$ (as $n \to \infty$) at a rate slower than $n^{-1/2}$ (i.e., $\varepsilon_n n^{1/2}\to\infty$). Then $$\frac{1}{n} \sum_{i=1}^n \|\hat \mu_{Y|X_i} - \mu_{Y|X_i}\|_{\h_\Y}^2 \stackrel{p}{\to} 0.$$
\end{theorem}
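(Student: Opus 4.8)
The plan is to subtract the exact CME formula~\eqref{eq:CME} from its empirical version~\eqref{eq:mu_Y|x} and control the error through a bias--variance split. Write $\mu_{Y|x}=\mu_Y+A(k_\X(x,\cdot)-\mu_X)$ with the bounded operator $A:=(C_X^\dagger C_{XY})^\ast=C_{YX}C_X^\dagger$ (bounded by~\cref{center_CME}), $\hat\mu_{Y|x}=\hat\mu_Y+\hat A_\varepsilon(k_\X(x,\cdot)-\hat\mu_X)$ with $\hat A_\varepsilon:=\hat C_{YX}(\hat C_X+\varepsilon I)^{-1}$, and insert the population-regularized operator $A_\varepsilon:=C_{YX}(C_X+\varepsilon I)^{-1}$. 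With $\phi_i:=k_\X(X_i,\cdot)$ I would decompose
\[
\hat\mu_{Y|X_i}-\mu_{Y|X_i}=(\hat\mu_Y-\mu_Y)+(\hat A_\varepsilon-A_\varepsilon)(\phi_i-\mu_X)+(A_\varepsilon-A)(\phi_i-\mu_X)+\hat A_\varepsilon(\mu_X-\hat\mu_X).
\]
By $\|\sum_{l=1}^4 a_l\|_{\h_\Y}^2\le 4\sum_{l=1}^4\|a_l\|_{\h_\Y}^2$, it suffices to drive each of the four averages $n^{-1}\sum_i\|\cdot\|_{\h_\Y}^2$ to $0$ in probability.

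The first and fourth terms are constant in $i$. The first equals $\|\hat\mu_Y-\mu_Y\|_{\h_\Y}^2=O_p(n^{-1})$, since $\E\|\hat\mu_Y-\mu_Y\|_{\h_\Y}^2=n^{-1}(\E[k(Y,Y)]-\|\mu_Y\|_{\h_\Y}^2)$. For the fourth, $\|\hat A_\varepsilon\|\le\|\hat C_{YX}\|/\varepsilon=O_p(\varepsilon^{-1})$ and $\|\hat\mu_X-\mu_X\|=O_p(n^{-1/2})$ give $O_p(\varepsilon^{-2}n^{-1})$, which vanishes precisely because the hypothesis $\varepsilon_n n^{1/2}\to\infty$ forces $\varepsilon^2 n\to\infty$; it is this centering correction that pins down the rate assumed in the theorem. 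The remaining two terms both exploit the factorization $C_{YX}=AC_X$: the inclusion $\ran C_{XY}\subset\ran C_X$ from~\cref{center_CME} gives $\ker C_X\subset\ker C_{YX}$, hence $C_{YX}C_X^\dagger C_X=C_{YX}$, i.e.\ $C_{YX}=AC_X$.

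For the deterministic bias term, $A_\varepsilon-A=AC_X(C_X+\varepsilon I)^{-1}-A=-\varepsilon A(C_X+\varepsilon I)^{-1}$. Since the $X_i$ are i.i.d.\ and $\E[(\phi-\mu_X)\otimes(\phi-\mu_X)]=C_X$,
\[
\E\Big[\tfrac1n\sum_{i=1}^n\|(A_\varepsilon-A)(\phi_i-\mu_X)\|_{\h_\Y}^2\Big]=\|(A_\varepsilon-A)C_X^{1/2}\|_\HS^2\le\|A\|^2\sum_{j}\frac{\varepsilon^2\lambda_j}{(\lambda_j+\varepsilon)^2},
\]
where $\{\lambda_j\}$ are the eigenvalues of the trace-class operator $C_X$. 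Each summand is dominated by $\lambda_j$ (summable) and tends to $0$ as $\varepsilon\to0$, so dominated convergence sends the right-hand side to $0$; Markov's inequality then yields convergence in probability.

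The crux is the stochastic term $n^{-1}\sum_i\|(\hat A_\varepsilon-A_\varepsilon)(\phi_i-\mu_X)\|_{\h_\Y}^2=\|(\hat A_\varepsilon-A_\varepsilon)\hat\Sigma^{1/2}\|_\HS^2$, where $\hat\Sigma:=n^{-1}\sum_i(\phi_i-\mu_X)\otimes(\phi_i-\mu_X)=\hat C_X+(\hat\mu_X-\mu_X)\otimes(\hat\mu_X-\mu_X)$; the rank-one correction contributes a negligible $O_p(\varepsilon^{-2}n^{-1})$, leaving $\|(\hat A_\varepsilon-A_\varepsilon)\hat C_X^{1/2}\|_\HS$ to bound. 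Here lies the main difficulty: a crude operator-norm bound gives only $O_p(\varepsilon^{-2}n^{-1/2})$, which would need $\varepsilon\gg n^{-1/4}$, far stronger than assumed. The remedy is to retain the $\hat C_X^{1/2}$ factor and use the spectral bound $\|(\hat C_X+\varepsilon I)^{-1}\hat C_X^{1/2}\|\le\tfrac12\varepsilon^{-1/2}$ together with the factorization. Splitting, via the resolvent identity,
\[
\hat A_\varepsilon-A_\varepsilon=(\hat C_{YX}-C_{YX})(\hat C_X+\varepsilon I)^{-1}-AC_X(C_X+\varepsilon I)^{-1}(\hat C_X-C_X)(\hat C_X+\varepsilon I)^{-1},
\]
right-multiplying by $\hat C_X^{1/2}$, and using $\|C_X(C_X+\varepsilon I)^{-1}\|\le1$, $\|A\|<\infty$, and the standard Hilbert--Schmidt rates $\|\hat C_X-C_X\|_\HS,\|\hat C_{YX}-C_{YX}\|_\HS=O_p(n^{-1/2})$ (valid under a finite second moment of $k_\X(X,X)$), both pieces are $O_p((n\varepsilon)^{-1/2})$; hence the term is $O_p((n\varepsilon)^{-1})\to0$ since $n\varepsilon\to\infty$. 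Combining the four bounds completes the proof. The decisive steps are the $C_X^{1/2}$-weighting, which buys the extra $\varepsilon^{1/2}$, and the range inclusion, which tames one resolvent; without either, the rate assumption $\varepsilon_n n^{1/2}\to\infty$ would not suffice.
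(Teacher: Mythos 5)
Your proof is correct, and it rests on the same two pillars as the paper's: a four-way decomposition of $\hat{\mu}_{Y|X_i}-\mu_{Y|X_i}$ into mean, stochastic, bias, and centering pieces, and the factorization $C_{YX}=AC_X$ obtained from the range inclusion $\ran C_{XY}\subset\ran C_X$ of \cref{center_CME}, which is what prevents one of the resolvents from costing a factor $\varepsilon^{-1}$. Your bias computation, $\|(A_\varepsilon-A)C_X^{1/2}\|_{\HS}^2\le\|A\|^2\sum_j \varepsilon^2\lambda_j/(\lambda_j+\varepsilon)^2\to 0$ by dominated convergence plus Markov, is exactly the paper's Steps 1--2. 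Where you genuinely depart is in the stochastic term and in the bookkeeping. The paper (Step 3) uses only crude operator-norm bounds, $\|(\hat{C}_X+\varepsilon I)^{-1}\|_{\op}\le\varepsilon^{-1}$ and $\|C_X(C_X+\varepsilon I)^{-1}\|_{\op}\le 1$, and concludes that the stochastic contribution is $O_p(\varepsilon^{-2}n^{-1})$; you instead keep the empirical weighting, rewrite the average as $\|(\hat{A}_\varepsilon-A_\varepsilon)\hat{\Sigma}^{1/2}\|_{\HS}^2$, and exploit $\|(\hat{C}_X+\varepsilon I)^{-1}\hat{C}_X^{1/2}\|_{\op}\le\tfrac12\varepsilon^{-1/2}$ to get the sharper $O_p((n\varepsilon)^{-1})$, which already vanishes when $n\varepsilon\to\infty$. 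Your bookkeeping also differs: you center at $\mu_X$ throughout and absorb the discrepancy into the single term $\hat{A}_\varepsilon(\mu_X-\hat{\mu}_X)=O_p(\varepsilon^{-2}n^{-1})$, whereas the paper centers at $\hat{\mu}_X$ in its Step 3, swaps to $\mu_X$ in Step 4 using the boundedness of $A^*$, and chains the approximations through its \cref{same_limit} rather than through a direct $\|\sum_l a_l\|^2\le 4\sum_l\|a_l\|^2$ expansion. Both routes invoke the same empirical rates ($\|\hat{C}_{YX}-C_{YX}\|_{\HS},\|\hat{C}_X-C_X\|_{\HS}=O_p(n^{-1/2})$, as in \cref{consistency_emp_cross_cov}), so neither is more demanding on moments in any essential way.

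One side remark of yours is inaccurate, though it does not affect the validity of your argument: you claim that without the $\hat{C}_X^{1/2}$-weighting the hypothesis $\varepsilon_n n^{1/2}\to\infty$ would not suffice. It does suffice. Once the factorization $C_{YX}=AC_X$ is in place, the plain operator-norm bound on the resolvent-difference piece is $\|A\|\cdot\|C_X(C_X+\varepsilon I)^{-1}\|_{\op}\cdot\|\hat{C}_X-C_X\|\cdot\|(\hat{C}_X+\varepsilon I)^{-1}\|_{\op}=O_p(\varepsilon^{-1}n^{-1/2})$ --- not the $O_p(\varepsilon^{-2}n^{-1/2})$ you quote, which is the bound \emph{without} the factorization --- and this is precisely how the paper's Step 3 closes under the stated rate. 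So the weighting is a refinement (it shows the variance term only needs $n\varepsilon\to\infty$), not a necessity; in your proof the binding constraint $\varepsilon^2 n\to\infty$ comes instead from the centering and rank-one correction terms.
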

%\begin{proof}
%    This is proved in 
%    $\|\cdot\|_{HS}$ is the Hilbert-Schmidt norm. It is known that the operator norm $\|\cdot\|\leq \|\cdot\|_{HS}$.
%\end{proof}

As a consequence of~\cref{thm:CME_result}, our RKHS-based estimator $\tilde{\rho^2}$ (see~\eqref{eq:eta-Cen-Est}) is consistent for estimating $\rho^2$ (as in~\eqref{eq:eta}). This result is formally stated below in~\cref{thm:kernel_consistency} (and proved in~\cref{pf:kernel_consistency}).

\begin{theorem}\label{thm:kernel_consistency}
    Suppose the CME formula \eqref{eq:CME} holds for both $(Y,X)$ and $(Y,\ddot{X})$. Let the regularization parameter $\varepsilon \equiv \varepsilon_n\to 0^+$ (as $n \to \infty$) at a rate slower than $n^{-1/2}$. Then $$\tilde{\rho^2}(Y,Z|X) \overset{p}{\to}\rho^2(Y,Z|X).$$
\end{theorem}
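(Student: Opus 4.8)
The plan is to show that the numerator and denominator of $\tilde{\rho^2}$ (in~\eqref{eq:eta-Cen-Est}) converge in probability to the numerator and denominator of the representation~\eqref{eq:eta-Cen} of $\rho^2$ obtained in~\cref{prop:Kernel-Version}, and then to conclude by Slutsky's theorem. The workhorse is~\cref{thm:CME_result}: under the stated hypotheses it gives $\frac{1}{n}\sum_{i=1}^n \|\hat{\mu}_{Y|X_i} - \mu_{Y|X_i}\|^2_{\h_\Y} \overset{p}{\to} 0$, and since the CME formula~\eqref{eq:CME} is assumed for $(Y,\ddot{X})$ as well, the identical statement holds with $X$ replaced by $\ddot{X}$, namely $\frac{1}{n}\sum_{i=1}^n \|\hat{\mu}_{Y|\ddot{X}_i} - \mu_{Y|\ddot{X}_i}\|^2_{\h_\Y} \overset{p}{\to} 0$.

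First, for the denominator I would split $k(Y_i,\cdot) - \hat{\mu}_{Y|X_i} = (k(Y_i,\cdot) - \mu_{Y|X_i}) + (\mu_{Y|X_i} - \hat{\mu}_{Y|X_i})$ and expand the squared $\h_\Y$-norm. The leading term $\frac{1}{n}\sum_{i=1}^n \|k(Y_i,\cdot) - \mu_{Y|X_i}\|^2_{\h_\Y}$ is an i.i.d.\ average of a deterministic functional of $(X_i,Y_i)$ whose mean is finite (bounding by $\E[k(Y,Y)]<\infty$ via Cauchy--Schwarz and Jensen), so by the law of large numbers it converges to $\E[\|k(Y,\cdot) - \mu_{Y|X}\|^2_{\h_\Y}]$, the denominator of $\rho^2$. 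The term $\frac{1}{n}\sum_{i=1}^n \|\mu_{Y|X_i} - \hat{\mu}_{Y|X_i}\|^2_{\h_\Y}$ is $o_p(1)$ by~\cref{thm:CME_result}, and the cross term is dominated, via Cauchy--Schwarz, by the product of the square roots of the previous two averages, hence also $\overset{p}{\to}0$.

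Second, for the numerator I would use $\hat{\mu}_{Y|\ddot{X}_i} - \hat{\mu}_{Y|X_i} = (\mu_{Y|\ddot{X}_i} - \mu_{Y|X_i}) + (\hat{\mu}_{Y|\ddot{X}_i} - \mu_{Y|\ddot{X}_i}) - (\hat{\mu}_{Y|X_i} - \mu_{Y|X_i})$ and again expand. The leading term $\frac{1}{n}\sum_{i=1}^n \|\mu_{Y|\ddot{X}_i} - \mu_{Y|X_i}\|^2_{\h_\Y}$ is an i.i.d.\ average converging by the law of large numbers to $\E[\|\mu_{Y|XZ}-\mu_{Y|X}\|^2_{\h_\Y}]$, the numerator of $\rho^2$. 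The two remaining squared-error averages vanish in probability by~\cref{thm:CME_result} applied to $(Y,\ddot{X})$ and to $(Y,X)$, and every cross term is handled by Cauchy--Schwarz exactly as above.

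Combining these with Slutsky's theorem yields $\tilde{\rho^2}\overset{p}{\to}\rho^2$, provided the limiting denominator $\E[\|k(Y,\cdot)-\mu_{Y|X}\|^2_{\h_\Y}]$ is strictly positive; this follows from~\cref{assump:nondegenrate} (it is precisely the content of well-definedness in~\cref{lem:Well-Def}, since $Y$ is not a.s.\ a measurable function of $X$). The only place requiring genuine care is the negligibility of the cross terms, which reduces to the routine observation that the ``signal'' averages converge to finite limits while the ``error'' averages are $o_p(1)$; once~\cref{thm:CME_result} is invoked, the rest is a direct application of the law of large numbers and the continuous mapping theorem.
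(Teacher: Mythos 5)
Your proposal is correct and follows essentially the same route as the paper: the paper packages your ``signal plus error'' decomposition and cross-term control into its Lemma~\ref{same_limit} (proved there by an AM--GM argument in place of your Cauchy--Schwarz step, which is an equivalent device), then applies it with Theorem~\ref{thm:CME_result} and the law of large numbers to the numerator and denominator exactly as you do. The only cosmetic difference is that the paper leaves the positivity of the limiting denominator implicit (via Lemma~\ref{lem:Well-Def} and the standing Assumptions~\ref{assump:characteristic_kernel}--\ref{assump:nondegenrate}), whereas you state it explicitly.
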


% \red{\begin{remark}[Comparison with~\cite{song2010nonparametric}]
%     Consistency result on CME has been explored, but here only $\varepsilon_n n^{1/2}\to\infty$ is required. In \cite{song2010nonparametric}, assuming $C_{YX} C_X^{-3/2}$ is Hilbert-Schmidt, it is proved that
%     $\| (C_{X}^\dagger C_{XY})^* - \hat{C}_{YX}(\hat{C}_X+\varepsilon_n I)^{-1}\|_{\HS}\to 0$ in probability if $\varepsilon_n n^{1/3}\to\infty$.
%     However, assuming $C_{YX} C_X^{-3/2}$ being Hilbert-Schmidt can be unrealistic. Consider the case where $Y=X$. Then $C_{YX} C_X^{-3/2} = C_X^{-1/2}$ is typically an unbounded operator whenever $\h_{\X}$ is infinite-dimensional.
%     In general, we cannot even expect $\| (C_{X}^\dagger C_{XY})^* - \hat{C}_{YX}(\hat{C}_X+\varepsilon_n I)^{-1}\|\to 0$ in the operator norm. Again consider $Y=X$: $\| I - \hat{C}_{X}(\hat{C}_X+\varepsilon_n I)^{-1}\| = \|\varepsilon_n (\hat{C}_X + \varepsilon_n I)^{-1}\|.$ Note that $\hat{C}_X$ is finite-rank\footnote{A finite-rank operator is a bounded linear operator between Banach spaces whose range is finite-dimensional.}, having only finitely many nonzero eigenvalues. Hence $\|\varepsilon_n (\hat{C}_X + \varepsilon_n I)^{-1}\| \equiv 1$ whenever $\h_{\X}$ is infinite-dimensional.
%     However, when acting on $k_\X (X,\cdot)-\mu_X$, from Step 1 and Step 3 of the proof, we do have
%     $$\|\big( (C_{X}^\dagger C_{XY})^* - \hat{C}_{YX}(\hat{C}_X+\varepsilon_n I)^{-1} \big) (k_\X (X,\cdot)-\mu_X) \|^2 \overset{p}{\to} 0$$
%     as $\varepsilon_n n^{1/2} \to \infty$, on which is our consistency result based.
% \end{remark}}

\begin{remark}
From the forms of $\mu_{Y|x}$ (see~\eqref{eq:CME})  and $\hat{\mu}_{Y|{x}}$ (see~\eqref {eq:mu_Y|x}),  one might conjecture whether $\hat{C}_{YX}(\hat{C}_X+\varepsilon_n I)^{-1}$ could converge to $(C_{X}^\dagger C_{XY})^*$ in some sense (e.g., in the Hilbert-Schmidt norm or operator norm), as was explored in \cite{song2010nonparametric}. However, such a convergence is rarely possible. In particular, it does not hold when $\h_{\X}$ is infinite-dimensional\footnote{Consider $Y=X$. Then
$\| (C_{X}^\dagger C_{XY})^* - \hat{C}_{YX}(\hat{C}_X+\varepsilon_n I)^{-1} \|_{\rm op} = \| I - \hat{C}_{X}(\hat{C}_X+\varepsilon_n I)^{-1}\|_{\rm op} = \|\varepsilon_n (\hat{C}_X + \varepsilon_n I)^{-1}\|_{\rm op}\equiv 1$. The last equality follows as: (i) $\hat{C}_X$ is  finite-rank (recall that a finite-rank operator is a bounded linear operator between Banach spaces whose range is finite-dimensional) having at least one zero eigenvalue; (ii) any eigenvalue of $\varepsilon_n (\hat{C}_X + \varepsilon_n I)^{-1}$ has the form $\frac{\varepsilon_n}{\lambda + \varepsilon_n}$, where $\lambda$ is an eigenvalue of $\hat{C}_X$; and (iii) taking $\lambda=0$ yields the desired conclusion. Consequently the convergence of Hilbert-Schmidt norm is also impossible as $\|\cdot\|_{\op}\leq \|\cdot\|_{\HS}$.}.
\end{remark}

\section{Variable selection using KPC}\label{sec:application}
%\subsection{Variable selection}\label{sec:var_select}

Suppose that we have a regression problem with $p$ predictor variables $X_1,\ldots,X_p$ and a response variable $Y$. Here the response $Y \in \Y$ is allowed to be continuous/categorical, multivariate and even non-Euclidean~\cite{fukumizu2008kernel, fukumizu2009characteristic,danafar2010characteristic,hron2016simplicial,petersen2016functional,petersen2019wasserstein,ramosay2002applied,tsagris2015regression}, as long as a kernel function can be defined on $\Y \times \Y$. Similarly, the predictors $X_1,\ldots,X_p$ could also be non-Euclidean; we just want each $X_i$ to take values in some metric space $\X_i$. In regression the goal is to study the effect of the predictors on the response $Y$. We can postulate the following general model:
\begin{equation}\label{eq:Reg-Mdl}
Y = f(X_1,\ldots, X_p, \epsilon)
\end{equation}
where $\epsilon$ (the unobserved error) is independent of $(X_1,\ldots, X_p)$ and $f$ is an unknown function. 

The problem of {\it variable selection} is to select a subset of predictive variables from $X_1,\ldots,X_p$ to explain the response $Y$ in the simplest possible way,
as suggested by the principle of Occam's razor: ``Never posit pluralities without necessity" \cite{schaffer2015not}.
%This means redundant predictive variables should be removed.
%In this section we propose a variable selection algorithm for the regression model~\eqref{eq:Reg-Mdl}, using a {\it forward stepwise algorithm} based on our measure of conditional dependence. Our approach is inspired by the recent paper~\cite{azadkia2019simple}.
For $S\subset \{1,\ldots,p\}$, let us write $X_S := (X_j)_{j\in S}$. We assume that $X_S$ takes values in the metric space $\X_S$ (for which a natural choice would be the product metric space induced by $\{\X_i:i\in S\}$). Our goal is to find an $S\subset \{1,\ldots,p\}$ such that \begin{equation}\label{eq:Suff}
Y\indep X_{S^c}|X_S.
\end{equation}
Such an $S$ (satisfying~\eqref{eq:Suff}) is called a \emph{sufficient subset} \cite{vergara2014review,azadkia2019simple}. Ideally, we would want to select a sufficient subset $S$ that has the smallest cardinality,
so that we can write $Y= \E [f(X_1,\ldots,X_p, \epsilon)|X_S,\epsilon] =:  g(X_S, \epsilon)$ in~\eqref{eq:Reg-Mdl}. %\red{Sometimes, even if \eqref{eq:Suff} does not hold exactly, it would be desirable to detect if the conditional dependence of $Y$ on $X_{S^c}$ given $X_S$ is weak, in which case effective prediction would still be possible}. Recall that our proposed KPC $\rho^2$ is able to detect conditional independence (it is 0 if and only if conditional independence holds) as well as measure the strength of conditional association. Hence it is natural to use (an estimate of) $\rho^2$ in the context of variable selection.

Common variable selection methods in statistics often posit a parametric model, e.g., by assuming a linear model~\cite{Breiman95,barber2015controlling,CD94, LAR04, Friedman91,ELS-2, Miller02, Lasso,Candes-Tao-07, Fan-Li-01, Ravi-09, Yuan-Lin-06, Zou-05, Zou-06}. These methods are powerful
when the underlying  parametric assumption holds true, but could have poor performance when the data generating process is more complicated.
In general nonlinear settings, although there are popular algorithms for feature selection based on machine learning methods, such as random forests and neural nets \cite{speiser2019comparison,ELS-2,amit1997shape,battiti1994using,breiman2001random,breimancj,ho1998random,vergara2014review},
the performance of these algorithms could depend heavily on how well the machine learning techniques fit the data, and often their theoretical guarantees are weaker than those of the model-based methods. The recent paper~\cite{azadkia2019simple} attempted to balance both these aspects by proposing a fully model-free forward stepwise variable selection algorithm, and formally proving the consistency of the procedure, under suitable assumptions. 

The main idea is to use our proposed KPC $\rho^2$ to detect conditional independence in~\eqref{eq:Suff} (note that $\rho^2$ is 0 if and only if conditional independence holds). 
In the following two subsections we propose two model-free variable selection algorithms --- one based on our graph-based estimator $\hat{\rho^2}$ and the other on the RKHS-based framework $\tilde{\rho^2}$. Our procedures do not make any parametric model assumptions, are easily implementable and have strong theoretical guarantees. They provide a more general framework for feature selection (when compared to~\cite{azadkia2019simple}) that can handle any kernel function $k(\cdot, \cdot)$. This flexibility indeed yields more powerful variable selection algorithms, having better finite sample performance; see~\cref{subsec:var_select} for the details. 

%Our KPC $\rho^2$ can also be easily applied in the model-X framework~\cite{candes2018panning} to perform variable selection while controlling the \emph{false discovery rate} (FDR); we discuss this in detail in Section~\ref{sec:ModelXFDR}.

%\red{A similar forward stepwise algorithm can also be designed for the RKHS-based estimator $\tilde{\rho^2}$, which also has good finite sample performance. However, when using $\tilde{\rho^2}$ for variable selection, one has to prespecify the number of variables to select. %stopping criterion no longer exists due to the nonnegative properties of $\tilde{\rho^2}$, so the number of variables to select needs to be prespecified.}

%We will later show an example where methods based on linear and additive models fail to select any of the relevant predictors, even in the complete absence of noise.

\subsection{Variable selection with graph-based estimator (KFOCI)}\label{subsec:KFOCI}
We introduce below the algorithm \emph{Kernel Feature Ordering by Conditional Independence} (KFOCI) which is a model-free forward stepwise variable selection algorithm (for  regression). The proposed algorithm has an automatic stopping criterion and yields a provably consistently variable selection method (i.e., it selects a sufficient subset of predictors with high probability) even in the high-dimensional regime under suitable assumptions; see~\cref{thm:var_select} below. KFOCI can be viewed as a generalization of the FOCI algorithm proposed in \cite{azadkia2019simple}; it provides a more general framework for feature selection that can handle any kernel function $k(\cdot, \cdot)$ and any geometric graph (including $K$-NN graphs for any $K\ge 1$).  Further, as mentioned before, the response $Y$ and the predictor variables $X_i$'s can take values in any topological space (e.g., metric space).

Let us describe the algorithm KFOCI. Suppose that predictors $X_{j_1},\cdots,X_{j_k}$, for $k \ge 0$, have already been selected by KFOCI. Quite naturally, we would like to find $X_{j_{k+1}}$ that maximizes $\rho^2(Y,X_{j_{k+1}}|X_{j_1},\ldots,X_{j_k})$.
To this end define 
\begin{equation}\label{eq:T_KFOCI}
T(S):=\mathbb{E}[\mathbb{E}[k(Y,Y')|X_S]],
\end{equation}
where we first draw $X_S$, and then draw $Y,Y'$ i.i.d.~from the conditional distribution of $Y$ given $X_S$. A closer look of the expression of $\rho^2$ in \eqref{eq:K-Exp} reveals that finding $X_{j_{k+1}}$ that maximizes $\rho^2(Y,X_{j_{k+1}}|X_{j_1},\cdots,X_{j_k})$ is equivalent to finding $X_{j_{k+1}}$ that maximizes $T(\{X_1,\ldots,X_{j_{k+1}}\})$ in~\eqref{eq:T_KFOCI}, for $j_{k+1} \in \{1,\ldots, p\} \setminus \{j_1,\ldots, j_k\}$.
Note that $T(S)$ satisfies $$T(S')\geq T(S) \qquad \mbox{whenever} \qquad S'\supset S,$$ since the numerator of $\rho^2 (Y,X_{S'\backslash S}|X_S)$ (as in~\eqref{eq:K-Exp}) is always greater than (or equal to) 0.
%$$\rho^2 (Y,X_{S'\backslash S}|X_S) = \frac{\E[\mathbb{E}[k(Y,Y')|X_{S'}]] - \E[\mathbb{E}[k(Y,Y')|X_S]]}{\E[k(Y,Y')] - \E[\mathbb{E}[k(Y,Y')|X_S]]} \geq 0.$$
If $S_0$ is a \emph{sufficient subset}, then $T(S_0)=T(\{1,\ldots,p\})\geq T(S)$, for all $S\subset \{1,\ldots,p\}$. Therefore, $T(S)$ can be viewed as measuring the importance of $S$ in predicting $Y$. 

For our implementation, we propose the use of the estimator $T_n$ (as in~\eqref{eq:statest}) instead of the unknown $T$ (in~\eqref{eq:T_KFOCI}). Note that Theorem~\ref{theo:consis} shows that $T_n(S) \equiv T_n(Y,X_S)$ is a consistent estimator of $T(S)$, for every $S \subset \{1,\ldots, p\}$. What is even more interesting is that the use of $T_n(\cdot)$ automatically yields a stopping rule --- we stop our algorithm when adding any variable does not increase our objective function, i.e., $T_n(\hat S \cup \{\ell\}) < T_n(\hat S)$, for any $\ell \in \{1,\ldots, p\} \setminus \hat S$ where $\hat S$ is the current sufficient subset. Algorithm~\ref{algo:graph} gives the pseudocode.

%\red{Therefore, given some candidate subsets of $\{1,\ldots, p\}$, we can choose the subset $S$ for which $T_n(S)$ is large.
%For example, if we want to perform dimension reduction and select exactly $p_0 \leq  p$ features, we can search for all subsets $S\subset\{1,\ldots,p\} $ of size $p_0$  to maximize $T_n(S)$.
%However, enumerating all possible subsets can be computationally prohibitive if $p$ is large.
%Alternatively, we can perform a greedy algorithm which is computationally more tractable and also provides a stopping criterion on how many variables to choose.} 

\begin{algorithm}[H]
    \label{algo:graph}
\SetKwInOut{Input}{Input}
\SetKwInOut{Output}{Output}
\SetAlgoLined
%\KwResult{Write here the result }
\KwData{$(Y_i,X_{1i},\ldots, X_{pi})$, \hspace{0.3in} for $i=1,\ldots, n$}
 {\bf Initialization}: $k=-1,\, \hat S \leftarrow \emptyset,\, \{j_0\} =\emptyset, \, T_n(\emptyset) = -\infty$\;
 \Do{$T_n(\hat S \cup \{j_{k+1}\}) \ge T_n(\hat S)\;\;\;  \mathrm{and}  \;\;\;k < p$}{
 \begin{enumerate}
\item[\bf 1.]  $k  \leftarrow k + 1$\;

\item[\bf 2.]  $\hat S \leftarrow \hat S \cup \{j_k\}$\;

\item[\bf 3.]  Choose 
$j_{k+1}\in \{1,\ldots, p\} \setminus \hat S$ such that $T_n(\hat S \cup \{j_{k+1}\})$ is {\it maximized}, i.e.,
\begin{equation*}\label{eq:Algo}
{j_{k+1}} := \arg \max_{\ell \in \{1,\ldots, p\} \setminus \hat S} T_n(\{j_1,\ldots,j_k, \ell\});
\end{equation*}

%\item[\bf 4.]  
\end{enumerate}
}
\Output{$\hat S$} 
 \caption{KFOCI --- a forward stepwise variable selection algorithm}
\end{algorithm}

%\begin{algorithm}[h!]
%Suppose $X_{j_1},\cdots,X_{j_k}$ (can be empty) have already been chosen.
%
%Perform the following Steps~1--2 for $k \geq 0$ till convergence.
%\begin{enumerate}
%\item[\bf 1.] Choose the next
%$j_{k+1}\notin\{j_1,\cdots,j_k\}$ such that $T_n(\{j_1,\cdots,j_{k+1}\})$ is maximized, i.e.,
%\begin{equation}\label{admm-updxi1}
%X_{j_{k+1}} := \arg \max_{\ell \notin\{j_1,\cdots,j_k\}} T_n(\{j_1,\cdots,j_k, \ell\}).
%\end{equation} 
%\item[\bf 2.] If at some point $T_n(\{j_1,\cdots,j_{k+1}\})<T_n(\{j_1,\cdots,j_{k}\})$ we terminate the algorithm and output $\{j_1,\cdots,j_{k}\}$ as an estimate of the \emph{sufficient subset}.
%
%\end{enumerate}
%\caption{Forward stepwise variable selection}\label{algo1}
%\end{algorithm}

In Algorithm~\ref{algo:graph}, at each step, we are actually  selecting $X_{j_{k+1}}$ to maximize $\hat{\rho^2}(Y,X_{j_{k+1}}|X_{j_1},\ldots,X_{j_k})$. Note that the stopping criterion in Algorithm~\ref{algo:graph} corresponds to the case when $\hat{\rho^2}(Y,X_{j_{k+1}}|X_{j_1},\cdots,X_{j_k}) < 0$
    for all $j_{k+1}\in \{1,\ldots,p\}\backslash \{j_1,\ldots,j_k\}$. 
The following result (see Section~\ref{pf:var_select} for a proof) shows the variable selection consistency of KFOCI. 
\begin{theorem}\label{thm:var_select}
    Suppose the following assumptions hold:
    \begin{enumerate}
            \item[(a)] There exists $\delta>0$ such that for any insufficient subset $S \subset \{1,\ldots, p\}$, there is some $j$ such that 
         $T (S\cup \{j\})\geq T(S) + \delta$.
        
        \item[(b)] Suppose that kernel satisfies $\sup_{y\in\Y} k (y,y)\leq M <\infty$. Let $\kappa :=  \lfloor \frac{M}{\delta} + 1 \rfloor$. %For any $S \subset \{1,\ldots, p\}$ of size less than or equal to $\kappa$, $X_S$ has intrinsic dimension at most $d$ (with constant $C_3$; see~\cref??). Let $s_n$ be the number of points in $\mathcal{G}_n$ (constructed using $X_S$) having $(X_S)_1$ (the first node in the graph $\mathcal{G}_n$) as a $K_n$-nearest neighbor. Suppose $\frac{s_n}{K_n}\leq C_4$ a.s. Suppose $\rho_{\X_S}((X_S)_1,(X_S)_2)$ has a continuous distribution.}
        
        \item[(c)] Suppose that the $K_n$-NN graph is used as the geometric graph in~\eqref{eq:statest} (with $K_n\leq C_6(\log n)^\gamma$ for some $C_6 >0$, $\gamma\geq 0$). For every $S \subset \{1,\ldots, p\}$ of size less than or equal to $\kappa$, we suppose that Assumptions~\ref{assump:cont_dist}-\ref{assump:smooth} hold with $X$ replaced by $X_S$, with a uniform upper bound of intrinsic dimension $d$ in Assumption~\ref{assump:intrin_dim} and the same constants $\{C_i\}_{i=1}^5,\alpha,\beta_1,\beta_2$ in Assumptions~\ref{assump:intrin_dim}-\ref{assump:smooth}.
    \end{enumerate}
    Then there 
    exist $L_1,L_2>0$ depending only on $\alpha,\beta_1,\beta_2,\gamma,\{C_i\}_{i=1}^6,d,M,\delta$ such that
    $$\p(\hat{S}\ {\rm is\ sufficient}) \geq 1-L_1p^{\kappa}e^{-L_2n}.$$
\end{theorem}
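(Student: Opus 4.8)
The plan is to show that a single high-probability concentration event forces the empirical trajectory of Algorithm~\ref{algo:graph} to track the idealized population one closely enough that the returned set $\hat S$ is sufficient. Throughout write $T_n(S)$ for the estimator in~\eqref{eq:statest} built on the $K_n$-NN graph of $\{(X_S)_i\}$, and record three structural facts about the population functional $T$ of~\eqref{eq:T_KFOCI}. It is monotone, $T(S')\ge T(S)$ for $S'\supset S$ (stated in the text). Its range is pinned down by $T(\emptyset)=\|\mu_Y\|_{\h_\Y}^2\ge 0$ and $T(\{1,\dots,p\})\le \E[k(Y,Y)]\le M$, the latter using $\|\mu_{Y|x}\|_{\h_\Y}^2\le\E[k(Y,Y)\mid X=x]$ and $\sup_y k(y,y)\le M$. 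Finally, sufficiency is \emph{upward closed}: if $S$ is sufficient then so is every $S'\supseteq S$, whence $T(S\cup\{j\})=T(S)$ for sufficient $S$, and the output $\hat S$ can be insufficient only if \emph{every} set visited along the (strictly increasing) trajectory is insufficient.

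Fix a tolerance $\eta\in\bigl(0,(\delta-M/\kappa)/2\bigr)$; this interval is nonempty precisely because $\kappa=\lfloor M/\delta+1\rfloor>M/\delta$, which is the only place the exact value of $\kappa$ is used. Let $\mathcal E:=\{\,|T_n(S)-T(S)|<\eta\ \text{for all }S\text{ with }|S|\le\kappa\,\}$. The heart of the argument is deterministic: on $\mathcal E$, $\hat S$ is sufficient. Suppose not, and let $\emptyset=\hat S_0\subsetneq\hat S_1\subsetneq\cdots\subsetneq\hat S_m=\hat S$ be the trajectory, all of whose sets are then insufficient. For each insufficient $\hat S_k$, assumption (a) gives $j^*_k$ with $T(\hat S_k\cup\{j^*_k\})\ge T(\hat S_k)+\delta$; since the algorithm picks the $T_n$-maximizer over singletons added to $\hat S_k$, for $0\le k\le\kappa-1$ one chains $T(\hat S_{k+1})\ge T_n(\hat S_{k+1})-\eta\ge T_n(\hat S_k\cup\{j^*_k\})-\eta\ge T(\hat S_k\cup\{j^*_k\})-2\eta\ge T(\hat S_k)+\delta-2\eta$, where every invocation of $\mathcal E$ is at a set of size $k+1\le\kappa$. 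If the trajectory reached a set of size $\kappa$, telescoping with $T(\hat S_0)\ge 0$ would give $T(\hat S_\kappa)\ge\kappa(\delta-2\eta)>M$, contradicting $T(\hat S_\kappa)\le M$; hence $|\hat S|\le\kappa-1$. But then, at the insufficient stopping set $\hat S$, the good variable $j^*$ satisfies (again by $\mathcal E$, now at sizes $\le\kappa$) $T_n(\hat S\cup\{j^*\})\ge T_n(\hat S)+\delta-2\eta>T_n(\hat S)$, so the while-condition of Algorithm~\ref{algo:graph} holds and the algorithm cannot have stopped at $\hat S$ (stopping with $k=p$ is impossible since the full set is sufficient). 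This contradiction proves $\hat S$ is sufficient on $\mathcal E$.

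It remains to bound $\P(\mathcal E^c)$. For each fixed $S$ I split $T_n(S)-T(S)=\bigl(T_n(S)-\E[T_n(S)]\bigr)+\bigl(\E[T_n(S)]-T(S)\bigr)$. The fluctuation term is controlled by \cref{prop:concen_moment}, whose sub-Gaussian constant $C^*$ is uniform over $|S|\le\kappa$ by the uniform constants in assumption (c) and the bounded kernel, giving $\P(|T_n(S)-\E T_n(S)|\ge\eta/2)\le 2\exp(-C^* n\eta^2/4)$. The bias $\E[T_n(S)]-T(S)=\E[k(Y_1,Y_{N(1)})]-\E[k(Y_1,Y_1')]$ is exactly the quantity shown to vanish in the proof of \cref{thm:conv_rate}; assumption (c) renders this bound uniform over all $S$ with $|S|\le\kappa$, so there is an $n_0$ (depending only on the listed constants) past which the bias is below $\eta/2$ for every such $S$. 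A union bound over the at most $(\kappa+1)p^{\kappa}$ subsets of size $\le\kappa$ then yields $\P(\mathcal E^c)\le L_1 p^{\kappa}e^{-L_2 n}$ for $n\ge n_0$, with $L_1,L_2$ depending only on $\alpha,\beta_1,\beta_2,\gamma,\{C_i\}_{i=1}^6,d,M,\delta$; for $n<n_0$ the claimed inequality is made vacuous by enlarging $L_1$. Combined with the previous paragraph, $\P(\hat S\text{ is sufficient})\ge\P(\mathcal E)\ge 1-L_1 p^{\kappa}e^{-L_2 n}$.

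The main obstacle, and the only genuinely delicate point, is confining the union bound to subsets of size $\le\kappa$ rather than all $2^{p}$ subsets. Because the empirical algorithm can \emph{overshoot}, continuing to add junk variables after sufficiency is attained, the raw number of iterations cannot be bounded a priori. The device above circumvents this by conditioning on the \emph{bad outcome} ($\hat S$ insufficient), which through upward-closedness forces the entire trajectory to consist of insufficient sets, and then exploiting the strict slack $\kappa>M/\delta$ with a small enough $\eta$ to cap that trajectory's length at $\kappa-1$. The secondary technical burden is securing a single pair $(C^*,n_0)$ valid across all size-$\le\kappa$ subsets, which is precisely what the uniform constants in assumption (c) provide.
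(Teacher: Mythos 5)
Your proof is correct and follows essentially the same route as the paper's: the same pigeonhole argument (each step taken from an insufficient set raises $T$ by nearly $\delta$ while $T\le M$ caps the number of such steps at $\kappa$), the same treatment of the stopping rule via assumption (a), and the same bias-plus-fluctuation decomposition of $T_n(S)-T(S)$ with a union bound over the at most $O(p^{\kappa})$ subsets of size at most $\kappa$, absorbing small $n$ into the constants. The only differences are organizational: the paper argues constructively through the \emph{un-stopped} ordering (first proving $S_\kappa$ is sufficient, then relating it to $\hat{S}$) with two slack parameters $\varepsilon_1,\varepsilon_2$, whereas you run a contradiction directly on the stopped trajectory with a single tolerance $\eta$, invoking upward-closedness of sufficiency explicitly.
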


Suppose the above algorithm selects $\hat{S}$. Then~\cref{thm:var_select} shows that if $n\gg \log p$, then the algorithm selects a sufficient subset with high probability. In particular, in the low dimensional setting (where $p$ is fixed), the algorithm selects a sufficient subset with probability that goes to 1 exponentially fast.~\cref{thm:var_select} is in the same spirit as \cite[Theorem 6.1]{azadkia2019simple}, but allows for the predictors and response variable to be metric-space valued, and offers the flexibility of using any kernel and a general $K$-NN graph functional (note that the FOCI algorithm in \cite{azadkia2019simple} used the 1-NN graph). %Our simulation studies in Section~\ref{sec:simulation} show that allowing a larger number of neighbors can yield much superior and more stable results. In that sense, even in the case of Euclidean predictors and responses, our framework is more general.
This flexibility leads to better finite sample performance for KFOCI, when compared with FOCI~\cite{azadkia2019simple}. Even in parametric settings, our performance is comparable to (and sometimes even better than) classical methods such as the Lasso~\cite{Lasso} and the Dantzig selector~\cite{Candes-Tao-07}; see~\cref{subsec:var_select} for the detailed simulation studies.

\begin{remark}[On our assumptions]
Condition (a) is essentially a sparsity assumption, which is also assumed in \cite[Theorem 6.1]{azadkia2019simple}.
Note that as the kernel $k(\cdot,\cdot)$ is bounded by $M$ (by Assumption (b)), $T(S)=\mathbb{E}[\mathbb{E}[k(Y,Y')|X_S]]$ is also bounded by $M$.
    Thus (a) implies that there exists a sufficient subset of size less than $\lfloor \frac{M}{\delta} + 1\rfloor$.
Another implication of (a) is a lower bound assumption on the signal strength of important predictors, indicating that we can expect an improvement of $\delta >0$ in terms of $T(\cdot)$ for each iteration of KFOCI. Condition (c) can be viewed as a uniform version of Assumptions \ref{assump:cont_dist}-\ref{assump:smooth}, in the sense that those assumptions need to hold with the same constants uniformly over all subsets of $\{1,\ldots, p\}$ with cardinality no larger than $\kappa$.
Similar to Remark~\ref{rk:compare_assump} Assumption (c) is less stringent when compared to the assumptions made in \cite[Theorem 6.1]{azadkia2019simple} in the sense that it allows for: (i) any general metric space $\X_S$ (recall $X_S$ takes values in $\X_S$), (ii) tail decay rates of $X_S$ slower than sub-exponential, and (iii) $\beta_2$ to vary in $(0,1]$.
\end{remark}

\subsection{Variable selection using the RKHS-based estimator}
As in~\cref{subsec:KFOCI}, we can also develop a similar model-free forward stepwise variable selection algorithm using the RKHS-based estimator $\tilde{\rho^2}$ (instead of $\hat{\rho^2}$). Algorithm~\ref{algo:CME} gives the pseudocode of the proposed procedure. As $\tilde{\rho^2}$ is always nonnegative (see~\eqref{eq:eta-Cen-Est} and~\eqref{eq:Compute-Rho}) one can no longer specify an automatic stopping criterion as in Algorithm~\ref{algo:graph}. Thus, in Algorithm~\ref{algo:CME} we have to prespecify the number of variables $p_0 \le p$ to be chosen a priori. Note that to use Algorithm~\ref{algo:CME} one must also specify a kernel function $k_S$, for every $S\subset \{1,\ldots, p\}$ with cardinality $|S| \le p_0$.
For the most common case where all the $X_i$'s are real-valued and are suitably normalized (e.g., all $X_i$'s have mean 0 and variance 1), an automatic choice of $k_S$ can be the Gaussian kernel with empirically chosen bandwidth\footnote{Example: For $x,x' \in \R^{|S|}$, $k_S(x,x') :=\exp\left(\|x-x'\|^2_{\R^{|S|}}/(2s^2)\right)$, where $s$ is the median of pairwise distances $\{\|(X_S)_i-(X_S)_j\|_{\R^{|S|}}\}_{i<j}$.}.
%\red{However we can always take $k_S$ to be the product kernel, thereby reducing the problem to specifying a kernel for each $X_i$, $i=1,\ldots, p$.}
In our numerical studies we see that Algorithm~\ref{algo:CME} can also detect complex nonlinear conditional dependencies  and has good finite sample performance; see~\cref{subsec:var_select} for the details.
\begin{algorithm}[H]
    \label{algo:CME}
\SetKwInput{KwInput}{Input}
\SetKwInOut{Output}{Output}
\SetAlgoLined
%\KwResult{Write here the result }
\KwInput{The number of variables $p_0\leq p$ to select; a kernel function $k(\cdot,\cdot)$ on $\Y \times \Y$;
a kernel $k_S$ for each $X_S$, $S \subset \{1,\ldots, p\}$ with $|S| \le p_0$,
and regularization parameter $\varepsilon>0$ for computing $\tilde{\rho^2}$.}
\KwData{$(Y_i,X_{1i},\ldots, X_{pi})$, \hspace{0.3in} for $i=1,\ldots, n$}
 {\bf Initialization}: $k=-1,\, \tilde S \leftarrow \emptyset,\, \{j_0\} =\emptyset$\;
 \Do{$k < p_0$}{
 \begin{enumerate}
\item[\bf 1.]  $k  \leftarrow k + 1$\;

\item[\bf 2.]  $\tilde S \leftarrow \tilde S \cup \{j_k\}$\;

\item[\bf 3.]  Choose the next
$j_{k+1}\in \{1,\ldots, p\} \setminus \tilde S$ such that $\tilde{\rho^2}(Y,X_{j_{k+1}}|\tilde{S})$ is {\it maximized}, i.e.,
\begin{equation*}\label{eq:Algo-2}
{j_{k+1}} := \arg \max_{\ell \in \{1,\ldots, p\} \setminus \tilde S} \tilde{\rho^2}(Y,X_{j_{k+1}}|X_{j_1},\ldots,X_{j_k});
\end{equation*}

%\item[\bf 4.]  
\end{enumerate}
}
\Output{$\tilde S$} 
 \caption{Forward stepwise variable selection algorithm using $\tilde \rho^2$}
\end{algorithm}

\section{Finite sample performance of our methods}\label{sec:simulation}
In this section, we report the finite sample performance of $\hat{\rho^2}$, $\tilde{\rho^2}$ and the related variable selection algorithms, on both simulated and real data examples. We consider both Euclidean and non-Euclidean responses $Y$ in our examples. Even when restricted to Euclidean settings, our algorithms achieve superior performance compared to existing methods. All our finite sample experiments are reproducible using our \texttt{R} package \texttt{KPC}\footnote{\url{https://github.com/zh2395/KPC}}.
%It implements KPC when $X,Z$ are Euclidean, and $Y$ possibly taking values in general spaces can be stored as vectors. Users are also free to define arbitrary kernels.
\subsection{Examples with simulated data}\label{sec:Simul}
\subsubsection{Consistency of $\hat{\rho^2}$ and $\tilde{\rho^2}$}\label{sec:Simul-Consis}
Here we examine the consistency of our two empirical estimators $\hat{\rho^2}$ and $\tilde{\rho^2}$.
As has been mentioned earlier, the consistency of $\hat{\rho^2}$ requires only very weak moment assumptions on the kernel (see~\cref{thm:graph_consistency}), whereas the consistency of $\tilde{\rho^2}$ depends on the validity of the CME formula (in~\eqref{eq:CME}) which in turn depends on the hard to verify Assumption~\ref{assump:CME}.
We first restrict ourselves to the Euclidean setting and consider the following models: 
\begin{itemize}
    \item {\bf Model I}: $X,Z\ i.i.d.~N(0,1)$, $\qquad \qquad\, \;Y=X+Z+N(1,1)$.
    \item {\bf Model II}: $X,Z\ i.i.d.~N(0,1)$, $\qquad \qquad Y\sim {\rm Bernoulli}(e^{-Z^2/2})$.
    \item {\bf Model III}: $X,Z\ i.i.d.~{\rm Uniform}[0,1]$, $\quad Y=X+Z\ ({\rm mod}\ 1)$.
\end{itemize}
\noindent {\bf Model I}: Here we let $k,k_\X$ and $k_{\ddot{\X}}$ be linear kernels. As we are in a Gaussian setting, both our estimators $\hat{\rho^2}$ and $\tilde{\rho^2}$ are consistent\footnote{Note that the assumptions in~\cref{thm:graph_consistency} hold in this setting and thus the graph-based estimator $\hat{\rho^2}$ is consistent. Further, Assumption~\ref{assump:CME} holds which implies that $\tilde{\rho^2}$ is also consistent (by~\cref{thm:kernel_consistency}). To check that Assumption~\ref{assump:CME} holds, we notice that the RKHS associated with the linear kernel $k_\X$ is $\h_\X=\{f(x)=a^\top x|a\in\X\}$ --- the space of all linear functions on $\X = \R$. So $\E[a^\top Y|X] = a^\top (X+1) \in \h_\X +\R$. Note that Assumption~\ref{assump:CME}
also holds for $(Y,\ddot{X})$ by the same argument. For the RKHS-based estimator using the uncentered CME (see~\cref{rem:Uncentered-CME}), the analogous sufficient condition (like Assumption~\ref{assump:CME}; see Remark \ref{rk:uncenter}) does not hold, since for $a\neq 0$, $a^\top x + a\notin\h_\X$.}. One can check that $\rho^2(Y,Z|X) = 0.5$. For the graph-based estimator $\hat{\rho^2}$, we use directed 1-NN and 2-NN graphs. For $\tilde{\rho^2}$, we set $\varepsilon_n = 10^{-3}\cdot n^{-0.4}$ for all the three models considered here (which satisfies the condition for consistency in Theorem~\ref{thm:kernel_consistency}). Although the linear kernel is not characteristic, as $(Y,Z,X)$ is jointly normal, all the desired properties of $\rho^2(Y,Z|X)$ in~\cref{thm:Eta} hold (see~\cref{rk:lin_ker_gaussian}); and $\rho^2$ is equal to the squared partial correlation coefficient (see~\cref{prop:class_parcor}-(d)). The left panel of~\cref{EX1} shows, for different sample sizes $n$, the mean and 2.5\%, 97.5\% quantiles of the various estimators of $\rho^2$ (obtained from 1000 replications). It can be seen that all the estimators except the RKHS-based estimator using the uncentered CME formula (see~\cref{rem:Uncentered-CME}; also see~\cref{sec:Un-CME}) are consistent, converging to the true value $\rho^2=0.5$. As expected, $\hat{\rho^2}$ constructed from the 1-NN graph has less bias but higher variance compared to $\hat{\rho^2}$ constructed using the 2-NN graph. Note that $\tilde{\rho^2}$ achieves almost the same statistical performance as the squared partial correlation coefficient; a consequence of~\cref{prop:reduce_classi}. As model I describes a Gaussian setting, the classical partial correlation coefficient (and $\tilde{\rho^2}$) has the best performance. Notice that $\hat{\rho^2}$, in spite of being fully nonparametric, also achieves good performance. \newline

\noindent {\bf Model II}: We let $k(y_1,y_2) := \mathbf{1}{\{y_1=y_2\}}$ be the discrete kernel. In this case, $\rho^2=\frac{2\sqrt{6}+2\sqrt{3}-3\sqrt{2}-3}{3}\approx 0.37$. As the kernel $k(\cdot,\cdot)$ is bounded, $\hat{\rho^2}$ is automatically consistent (see~\cref{thm:graph_consistency}). To compute the RKHS-based estimator $\tilde{\rho^2}$ we take $k_\X(x_1,x_2) = e^{-|x_1-x_2|^2/2}$ (a Gaussian kernel) and $k_{\ddot{\X}}\left((x_1,z_1),(x_2,z_2)\right)=\left(e^{-|x_1-x_2|^2/2} + 1 \right)e^{-|z_1-z_2|^2/2}$ (a product kernel). %and set $\varepsilon_n = 10^{-2}\cdot n^{-0.4}$.
One can also check that Assumption~\ref{assump:CME} holds in this case, and thus $\tilde{\rho^2}$ is consistent (by Theorem~\ref{thm:kernel_consistency}).
The behavior of all the estimators, as the sample size increases, is shown in the middle panel of~\cref{EX1}.
Note that the classical partial correlation coefficient fails to capture the conditional dependence and is almost 0. \newline

\noindent {\bf Model III}: We let $k$, $k_\X$, and $k_{\ddot{\X}}$ be Gaussian kernels (see~\eqref{rem:excharker}) with different bandwidths\footnote{Here $k(\cdot,\cdot) = k_\X(\cdot,\cdot) = \exp(-5|\cdot - \cdot|^2)$ and $k_{\ddot{\X}} (\cdot,\cdot)  =\exp(-2\|\cdot - \cdot\|^2)$. These bandwidths are just arbitrary choices that approximately fit to the scale of the data.}.
This model has also been considered in \cite{azadkia2019simple}. Note that here $\rho^2(Y,Z|X)=1$ since $Y$ is a measurable function of $X$ and $Z$. Here $\hat{\rho^2}$ (constructed using 1-NN and 2-NN graphs) is consistent as the Gaussian kernel is bounded (which is a sufficient condition for~\cref{thm:graph_consistency} to hold).
From the right panel of~\cref{EX1} we see that both of the graph-based estimators are very close to 1. 
Assumption~\ref{assump:CME} for the CME formula holds with $(Y,X)$ (since $Y$ and $X$ are independent) but it does not hold for $(Y,\ddot{X})$\footnote{Note that as $k_{\ddot{\X}}(\cdot,\cdot)$ is continuous on $\ddot{\X}\times \ddot{\X}$ with $\ddot{\X}$ compact, all the functions in $\h_{\ddot{\X}}$ are continuous~\cite[Chapter 3, Theorem 2]{cucker2002mathematical}. But for any $g \in \h_\Y$, and $h \in \h_{\ddot{\X}}$ (which is continuous), we have $\E[g(Y)|\ddot{X}=\ddot{x}]-h(\ddot{x}) = g(\ddot{x}) - h(\ddot{x})$ which is discontinuous and cannot be a constant $P_{\ddot{X}}$-a.e.}.
Therefore,~\cref{thm:kernel_consistency} does not hold and $\tilde{\rho^2}$ cannot be guaranteed to be consistent. As can be seen from the right panel of~\cref{EX1}, $\tilde{\rho^2}$ %(computed with $\varepsilon_n = 10^{-4}\cdot n^{-0.4}$)
does not seem to converge to 1. %; in fact, $\tilde{\rho^2}$ seems to  asymptote even for $n=20000$, $\tilde{\rho^2}$ \red{only slightly increases to 0.79.}
However, compared to the classical partial correlation, which is almost 0, both $\hat{\rho^2}$ and $\tilde{\rho^2}$ provide evidence that $Y$ is highly dependent on $Z$, conditional on $X$. \newline

% \begin{figure}
%     \centering
%     \includegraphics[width = 1\textwidth]{EX1_2.pdf}
%     \caption{The performance of different estimators of $\rho^2$ as the sample increases. The solid lines show the mean for each estimator computed from 1000 replications; the dashed lines are the 2.5\% and 97.5\% quantiles. Here `1NN' and `2NN' denote the two graph-based estimators (constructed using 1-NN and 2-NN graphs), `RKHS' denotes $\tilde{\rho^2}$, `pcor\_sq' denotes the squared empirical partial correlation, and `uncentered' denotes the RKHS-based estimator constructed using the uncentered CME  (see~\cref{rem:Uncentered-CME}; also see~\cref{sec:Un-CME}).  %{\bf Left}: the RKHS estimator $\tilde{\rho^2}$ (using centered CME formula) and the graph-based estimators $\hat{\rho^2}$ (with 1NN, 2NN graphs) are both consistent, with true $\rho^2(Y,Z|X)=0.5$ which coincides with the classical partial correlation squared. The RKHS estimator using uncentered CME formula (described in Remark \ref{sec:Un-CME}) is not consistent. Right: the consistency of graph-based estimators (1NN, 2NN) is guaranteed but that of the RKHS estimator $\tilde{\rho^2}$ is not. The true $\rho^2(Y,Z|X) = 1$. Classical partial correlation fails to capture the conditional dependency on the right plot.
%     }
%     \label{EX1}
% \end{figure}

\begin{figure}
    \centering
    \includegraphics[width = 1\textwidth]{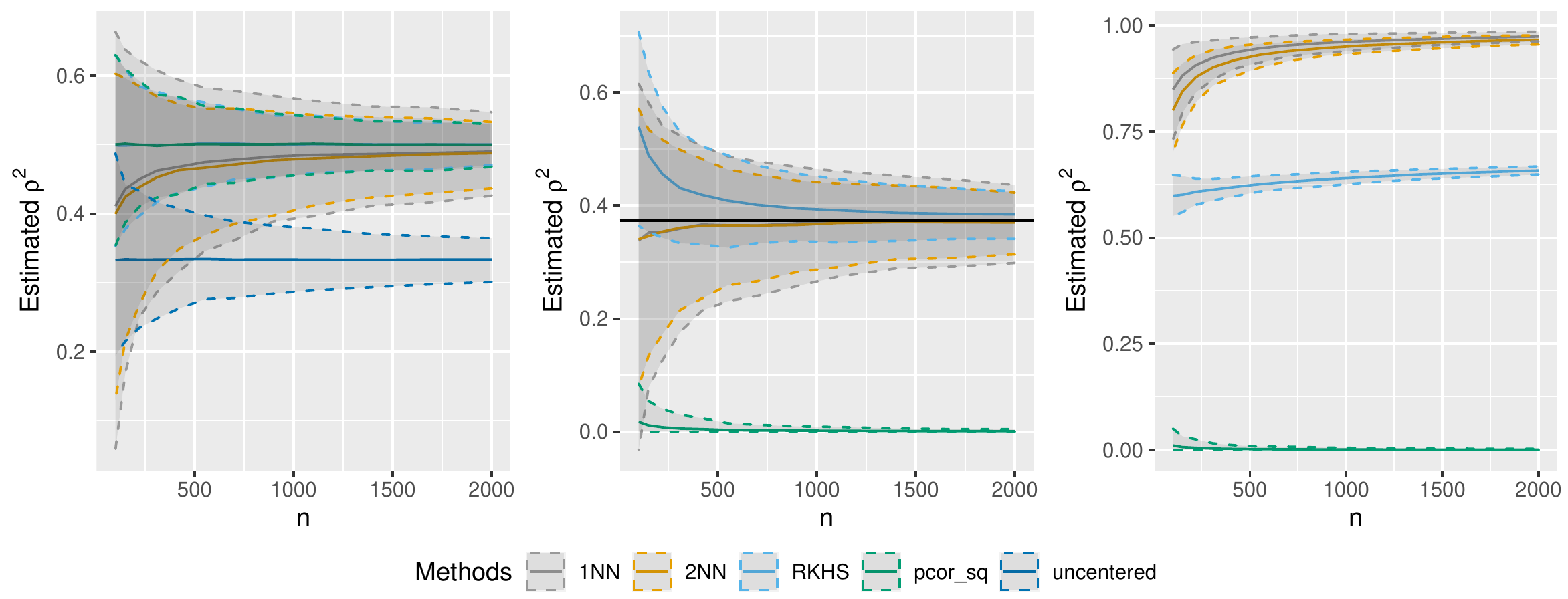}
    \caption{The performance of different estimators of $\rho^2$ as the sample increases. The solid lines show the mean for each estimator computed from 1000 replications; the dashed lines are the corresponding 2.5\% and 97.5\% quantiles. Here `1NN' and `2NN' denote the two graph-based estimators constructed using 1-NN and 2-NN graphs, `RKHS' denotes $\tilde{\rho^2}$, `pcor\_sq' denotes the squared partial correlation, and `uncentered' denotes the RKHS-based estimator constructed using the uncentered CME  (see~\cref{rem:Uncentered-CME}; also see~\cref{sec:Un-CME}).  %{\bf Left}: the RKHS estimator $\tilde{\rho^2}$ (using centered CME formula) and the graph-based estimators $\hat{\rho^2}$ (with 1NN, 2NN graphs) are both consistent, with true $\rho^2(Y,Z|X)=0.5$ which coincides with the classical partial correlation squared. The RKHS estimator using uncentered CME formula (described in Remark \ref{sec:Un-CME}) is not consistent. Right: the consistency of graph-based estimators (1NN, 2NN) is guaranteed but that of the RKHS estimator $\tilde{\rho^2}$ is not. The true $\rho^2(Y,Z|X) = 1$. Classical partial correlation fails to capture the conditional dependency on the right plot.
    The left, middle and right panels correspond to models I, II, and III respectively.}
    \label{EX1}
\end{figure}

% \begin{figure}
%     \centering
%     \begin{subfigure}[b]{1\linewidth}
%         \includegraphics[width=\linewidth]{consistency1.pdf}
%         \caption{Model I.}
%       \end{subfigure}
%       \begin{subfigure}[b]{1\linewidth}
%         \includegraphics[width=\linewidth]{consistency2.pdf}
%         \caption{Model II (left) and model III (right).}
%       \end{subfigure}
%     \caption{The performance of different estimators of $\rho^2$ as the sample increases. The solid lines show the mean for each estimator computed from 1000 replications; the dashed lines are the 2.5\% and 97.5\% quantiles. Here `1NN' and `2NN' denote the two graph-based estimators (constructed using 1-NN and 2-NN graphs), `RKHS' denotes $\tilde{\rho^2}$, `pcor\_sq' denotes the squared empirical partial correlation, and `uncentered' denotes the RKHS-based estimator constructed using the uncentered CME  (see~\cref{rem:Uncentered-CME}; also see~\cref{sec:Un-CME}).  %{\bf Left}: the RKHS estimator $\tilde{\rho^2}$ (using centered CME formula) and the graph-based estimators $\hat{\rho^2}$ (with 1NN, 2NN graphs) are both consistent, with true $\rho^2(Y,Z|X)=0.5$ which coincides with the classical partial correlation squared. The RKHS estimator using uncentered CME formula (described in Remark \ref{sec:Un-CME}) is not consistent. Right: the consistency of graph-based estimators (1NN, 2NN) is guaranteed but that of the RKHS estimator $\tilde{\rho^2}$ is not. The true $\rho^2(Y,Z|X) = 1$. Classical partial correlation fails to capture the conditional dependency on the right plot.
%     }
%     \label{EX1}
% \end{figure}

\noindent{\bf A non-Euclidean example}: Next, we consider the case where $\Y$ is the \emph{special orthogonal group} ${\rm SO}(3)$, the space consisting of $3\times 3$ orthogonal matrices with determinant 1.
${\rm SO(3)}$ has been used to characterize the rotation of tectonic plates in geophysics \cite{hanna2000fitting} as well as in the studies of human kinematics and robotics \cite{stavdahl2005optimal}.
We use the following characteristic kernel on ${\rm SO}(3)$ (given by \cite{fukumizu2009characteristic}):
\begin{equation}\label{eq:SO3ker}
    k(A,B) := \frac{\pi \theta(\pi-\theta)}{8\sin (\theta)},
\end{equation}
where $e^{\pm \sqrt{-1}\theta}\ (0\leq \theta\leq \pi)$ are the eigenvalues of $B^{-1}A$, i.e., $\cos \theta = \frac{{\rm Tr}(B^{-1}A)-1}{2}$.
Define the rotation around $x$- and $z$-axis as $R_1,R_3:\R\to {\rm SO}(3)$, defined by (for $x,z \in \R$)
$$R_1(x) :=\begin{pmatrix}
    1 &0 &0\\
   0 &\cos (x) & -\sin (x)\\
    0&\sin (x) & \cos(x)\\
\end{pmatrix},\qquad  R_3(z) := \begin{pmatrix}
    \cos (z) &-\sin (z) &0\\
   \sin (z) &\cos (z) & 0\\
    0&0& 1\\
\end{pmatrix}.$$
Let $X,Z\overset{i.i.d.}{\sim} N(0,1)$. Consider the two models:
\begin{itemize}
    \item {\bf Model IV}: $Y_1 = R_1(X)R_3(Z)$. Here $Y_1$ is a function of $X$ and $Z$, and $\rho^2(Y_1,Z|X)=1$.
    \item {\bf Model V}: $Y_2 = R_1(X)R_3(\varepsilon)$, for an independent $\varepsilon\sim N(0,1)$. In this model, $Y_2\indep Z|X$, and thus $\rho^2(Y_2,Z|X)=0$.
\end{itemize}
Figure \ref{fig:SO3} shows the means and 95\% confidence bands for $\hat{\rho^2}(Y_1,Z|X)$ and $\hat{\rho^2}(Y_2,Z|X)$ constructed using 1-NN and 2-NN graphs from 5000 replications.
It can be seen that as $n$ increases, $\hat{\rho^2}(Y_1,Z|X)$ gets very close to 1 (which provides evidence that $Y_1$ is a function of $Z$ given $X$), and $\hat{\rho^2}(Y_2,Z|X)$ comes very close to 0 (suggesting that $Y_2$ is conditionally independent of $Z$ given $X$).

\begin{figure}
    \centering
    \includegraphics[width = 0.8\textwidth]{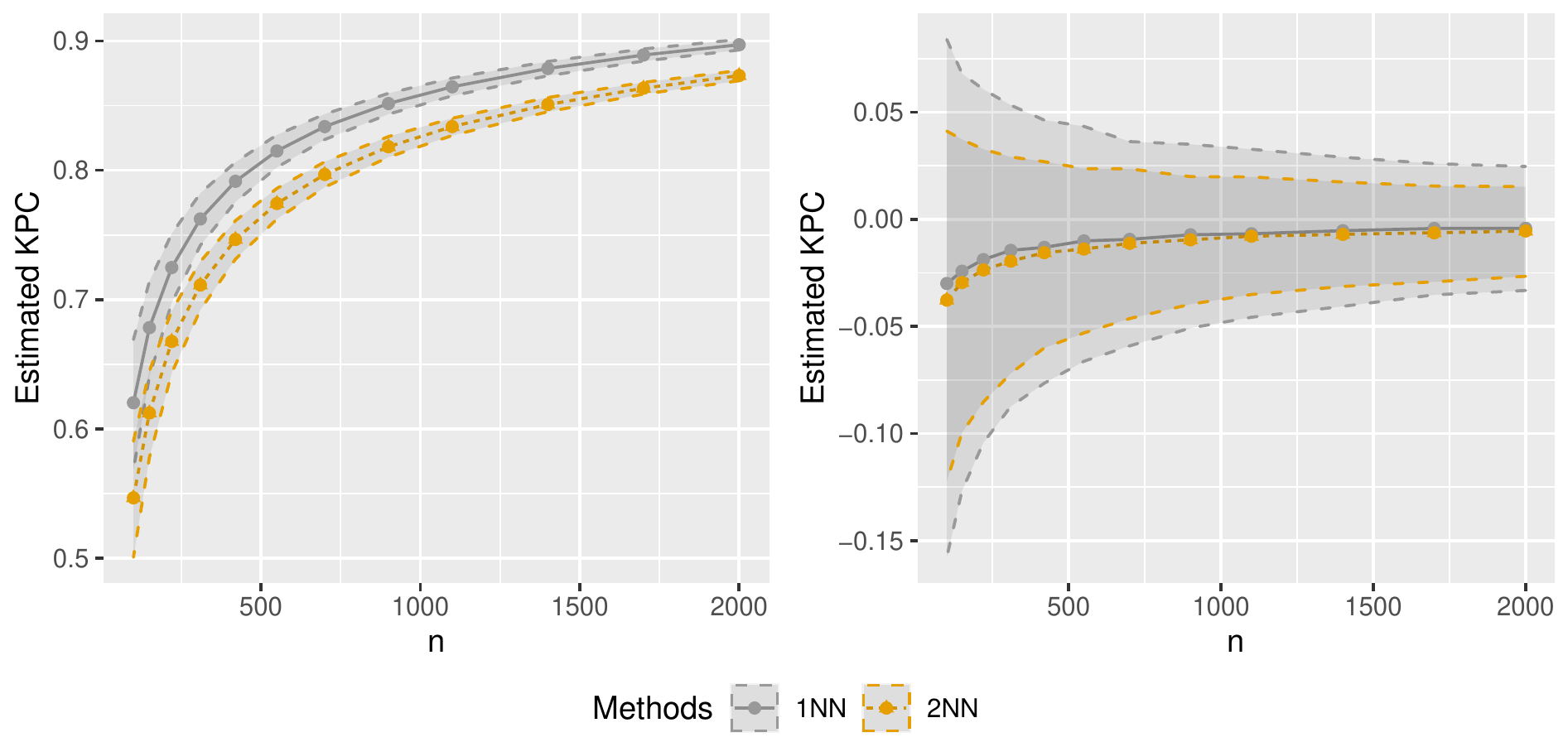}
    \caption{The performance of $\hat{\rho^2}$ (constructed using 1-NN and 2-NN graphs) as  sample increases. The solid lines show the mean of $\hat{\rho^2}$ from 5000 replications and the confidence bands show the corresponding 2.5\% and 97.5\% quantiles as a function of the sample size $n$. The left panel corresponds to model IV, while the right panel corresponds to model V.}
    \label{fig:SO3}
\end{figure}

We did not plot $\tilde{\rho^2}$ as it is not consistent and also quite sensitive to the choice of the regularization parameter $\varepsilon$; see~\cref{table:tilde} where we report $\tilde{\rho^2}$ for models IV and V when $n=1000$, $k_\X (x,x') = \exp(-|x-x'|^2)$, and $k_{\ddot{\X}} (x,x') = \exp(-\|x-x'\|^2/2)$.
%It can be seen that $\tilde{\rho^2}(Y_i,Z|X)$ $(i=1,2)$ are not stable as $\varepsilon$ ranges from $10^{-3}$ to $10^{-9}$. \red{This gives rise to the open problem on how to choose the regularization parameters in the kernel literature.}
However, $\tilde{\rho^2}(Y_1,Z|X)$ is always much larger than $\tilde{\rho^2}(Y_2,Z|X)$, indicating the conditional association between $Y_1$ and $Z$ is much stronger than that between $Y_2$ and $Z$ when controlling for $X$. \newline

\begin{table}%[h!]
    \centering
    \caption{$\tilde{\rho^2}$ for $n=1000$ observations with different $\varepsilon$'s.}
    \label{table:tilde}
    \begin{tabular}{|c|cccccccc|} 
     \hline
     Estimands  & Estimators  & $\varepsilon = 10^{-3}$& $10^{-4}$& $10^{-5}$  & $10^{-6}$ & $10^{-7}$&$10^{-8}$&$10^{-9}$\\
     \hline
     $\rho^2(Y_1,Z|X)=1$&$\tilde{\rho^2}(Y_1,Z|X)$ &0.427 &0.548&0.620& 0.665 &0.697&0.724&0.747\\
     \hline
     $\rho^2(Y_2,Z|X)=0$ & $\tilde{\rho^2}(Y_2,Z|X)$ & 0.027&0.038&0.052&0.067&0.080& 0.093&0.106\\
     \hline
    \end{tabular}
\end{table}

 \noindent {\bf Summary}: In all the simulation examples we see that the graph-based estimator $\hat{\rho^2}$ has very good performance. It is able to capture different kinds of nonlinear conditional dependencies, under minimal assumptions. The RKHS-based estimator $\tilde{\rho^2}$ also performs quite well, although it need not be consistent always (as Assumption~\ref{assump:CME} may not hold in certain applications).
 Further, as both $\hat{\rho^2}$ and $\tilde{\rho^2}$ can handle non-Euclidean responses, we believe that they are useful tools for detecting conditional dependencies in any application. %As far as we are aware, there is no other measure like $\rho^2$ applicable in such generality.

\subsubsection{Variable selection}\label{subsec:var_select}
In this subsection we examine the performance of our proposed variable selection procedures --- KFOCI (Algorithm \ref{algo:graph}) and Algorithm \ref{algo:CME} ---  in a variety of settings.
Our examples include both low-dimensional and high-dimensional models.
Note that KFOCI can automatically determine the number of variables to select, while Algorithm \ref{algo:CME} requires prespecifying the number of variables to be chosen.
%\red{We will also consider both two cases.}

We consider the following models with $n=200$, $p=10$, and $X =(X_1,\ldots, X_p)\sim N(0,I_p)\in\R^p$:
\begin{itemize}
    \item LM (linear model): $Y = 3X_1 + 2X_2 - X_3 + N(0,1)$.
    \item GAM (generalized additive model): $Y = \sin(X_1) + 2\cos(X_2) + e^{X_3} + N(0,1)$.
    \item Nonlin1 (nonlinear model studied in \cite{azadkia2019simple}): $Y=X_1X_2 + \sin(X_1X_3)$.
    \item Nonlin2 (heavy-tailed): $Y=\frac{2\log (X_1^2 + X_2^4)}{\cos (X_1) + \sin(X_3)} + \epsilon\;\;$ where $\epsilon \sim t_1$, the $t$-distribution with 1 degree of freedom.
    \item Nonlin3 (non-additive noise): $Y=|X_1 + U|^{\sin(X_2-X_3)}$, where $U \sim {\rm Uniform}[0,1]$.
    \item SO(3) (non-Euclidean response): $Y=R_1(X_1)R_3(X_2X_3)\in {\rm SO}(3)$.
\end{itemize}
In all the examples, the noise variable is assumed to be independent of $X$. The above models cover different kinds of linear and nonlinear relationships between $Y$ and $X$.

We first compare the performance of KFOCI with other competing methods.  
We implement the KFOCI algorithm with the directed 1-NN graph and the 10-NN graph using Algorithm~\ref{algo:graph}.
For all the models except when the response takes values in SO(3), we use the Gaussian kernel with empirically chosen bandwidth, i.e., $k(y,y')=\exp\left(\frac{|y-y'|^2}{2s^2}\right)$, where $s$ is the median of pairwise distances $\{|y_i-y_j|\}_{i<j}$.
When $\Y =$ SO(3), we use the kernel in~\eqref{eq:SO3ker}. A natural competitor of KFOCI is FOCI~\cite{azadkia2019simple} which is implemented in the \texttt{R} package \texttt{FOCI} \cite{RpackFOCI}. We also compare KFOCI with `ols (penter=0.01)' which is the forward stepwise variable selection algorithm in linear regression (where a variable with the smallest $p$-value less than 0.01 enters the model at every stage), implemented using the function \texttt{ols\_step\_forward\_p} in the \texttt{R} package \texttt{olsrr}~\cite{hebbali2018olsrr}.
We also consider `VSURF', variable selection using random forests, implemented using the \texttt{R} package  \texttt{VSURF}\footnote{For \texttt{VSURF}, we take the variables obtained at the ``interpretation step", which aims to select all variables related to the response for interpretation purpose.}~\cite{genuer2015vsurf}.
Note that for all the considered models, $X_1,X_2,X_3$ are the ``correct" variables.
In~\cref{table:varselect} we report: (i) The proportion of times $\{X_1,X_2,X_3\}$ is exactly selected, (ii) the proportion of times $\{X_1,X_2,X_3\}$ is selected with possibly other variables, and (iii) the average number of variables selected, by the different methods in 100 replications. 
It can be seen from~\cref{table:varselect} that KFOCI achieves the best performance in all the nonlinear settings considered; in particular, KFOCI with the 10-NN graph selects exactly $\{X_1,X_2,X_3\}$ more than 90\% of the times in all the nonlinear examples and also has good performance in the linear setting. Although FOCI uses the 1-NN graph in its algorithm, it has inferior performance compared to KFOCI with 1-NN; this indicates that the Gaussian kernel may be better at detecting various conditional associations than the special kernel used in FOCI (see~\cref{lem:AC19}).

\begin{table}%[h!]
    \centering
    \caption{Performance of the various variable selection algorithms. The reported numbers are: The proportion of times $\{X_1,X_2,X_3\}$ is exactly selected / the proportion of times $\{X_1,X_2,X_3\}$ is selected possibly with other variables / the average number of variables selected. The method with the best performance is highlighted in bold.  Here ``---" means that the method cannot deal with responses in the space ${\rm SO}(3)$.}
    \label{table:varselect}
    \begin{tabular}{|c|ccccc|} 
     \hline
     Models  & KFOCI (1-NN)  & KFOCI (10-NN) & FOCI \cite{azadkia2019simple} &ols (penter=0.01) & VSURF \cite{genuer2015vsurf}\\
     \hline
     LM & 0.87/0.98/3.09  &0.81/0.81/2.81 &0.57/0.87/3.25 &\textbf{0.95/1.00}/3.05 &0.82/0.82/2.82\\
     GAM &0.39/0.74/3.28  &\textbf{0.92/0.93}/2.94 &0.15/0.64/3.62 &0.03/0.04/2.03 &0.71/0.71/2.68\\
     Nonlin1 &0.88/0.95/2.98  &\textbf{1.00/1.00}/3.00 &0.56/0.72/2.87 &0.00/0.00/1.06 &0.21/0.21/2.21\\
     Nonlin2 &0.41/0.79/3.36  &\textbf{0.93/0.97}/3.01 &0.22/0.53/2.93 &0.00/0.00/1.04 &0.00/0.03/2.67\\
     Nonlin3 &0.53/0.82/3.16  &\textbf{1.00/1.00}/3.00 &0.28/0.52/2.74 &0.00/0.00/1.07 &0.06/0.23/2.90\\
     SO(3)&\textbf{1.00/1.00}/3.00 &0.97/0.97/2.94 & --- & ---&---\\
    \hline
    \end{tabular}
\end{table}
Next, we consider the case where the number of variables to select is set by the oracle as 3.
For KFOCI, we still use 1-NN and 10-NN graphs as before, but without imposing the automatic stopping criterion (denoted by `1-NN', `10-NN' in~\cref{table:varselect2}).
For Algorithm \ref{algo:CME} (denoted by `KPC (RKHS)' in~\cref{table:varselect2}), we set the kernel on $\Y$ as the same kernel for the methods `1-NN'/`10-NN'; the kernel on $\X_S$ is taken as $k_{\X_S}(x,x')=\exp\left(\|x-x'\|_{\R^{|S|}}^2/|S| \right)$, and $\varepsilon = 10^{-3}$.
We compare our methods with `FWDselect (GAM)', the forward stepwise variable selection algorithm for general additive models, using the function \texttt{selection} in the \texttt{R} package \texttt{FWDselect} \cite{sestelo2016fwdselect}. We also compare with \texttt{varimp}, which selects three variables with the highest importance scores in the random forest model, implemented by the function \texttt{varimp} in the \texttt{R} package \texttt{party}~\cite{hothorn2010party} (using default settings).
In~\cref{table:varselect2} we report: (i) The proportion of times $\{X_1,X_2,X_3\}$ is exactly selected, and (ii) the average number of correct variables among the 3 selected variables $\hat{S}$ (i.e.,~$|\hat{S}\cap \{X_1,X_2,X_3\}|$),  by the different methods (in 100 replications).
% The proportion of $X_1,X_2,X_3$ being exactly selected, the average number of correct variables among 3 selected variables (i.e. the average of $|\hat{S}\cap \{X_1,X_2,X_3\}|$), by different methods among 100 replications are reported in Table \ref{table:varselect2}.
% ``FWDselect (GAM)" is the forward stepwise variable selection method implemented by the function \texttt{selection} in the R package FWDselect \cite{sestelo2016fwdselect}, with general additive model (GAM) and other default settings.
% ``varimp" selects three variables with the highest importance scores in the random forest model, implemented by the function \texttt{varimp} in the R package party \cite{hothorn2010party}, with default settings as well.
It can be seen that our methods achieve superior performance compared to the other algorithms.
In particular, `10-NN' and `KPC (RKHS)' select exactly $X_1,X_2,X_3$ more than 99\% of the times, in all the models. \newline
\begin{table}%[h!]
    \centering
    \caption{Performance of the various variable selection algorithms. The number of variables to be selected is set by the oracle as 3. The reported numbers are: The proportion of times $\{X_1,X_2,X_3\}$ is exactly selected / the average number of correct variables among the 3 selected variables (i.e. $|\hat{S}\cap \{X_1,X_2,X_3\}|$). The method with the best performance is highlighted in bold.
    }
    \label{table:varselect2}
    \begin{tabular}{|c|cccccc|} 
     \hline
     Models  & 1-NN  & 10-NN & KPC (RKHS) & FOCI & FWDselect (GAM) & varimp (RF)\\
     \hline
     LM &\textbf{1.00/3.00}  &\textbf{1.00/3.00} &\textbf{1.00/3.00} &0.93/2.93 &\textbf{1.00/3.00} &\textbf{1.00/3.00}\\
     GAM &0.78/2.78 &0.99/2.99 &\textbf{1.00/3.00} &0.54/2.52 &\textbf{1.00/3.00} &0.97/2.97\\
     Nonlin1 &0.88/2.86 &\textbf{1.00/3.00} &\textbf{1.00/3.00}  &0.60/2.24 &0.05/1.67 &0.66/2.66\\
     Nonlin2 &0.64/2.54 &\textbf{1.00/3.00} &0.99/2.99  &0.40/2.18 &0.02/1.15 &0.00/1.11\\
     Nonlin3 &0.68/2.57 &\textbf{1.00/3.00} &\textbf{1.00/3.00}  &0.46/1.93 &0.10/1.69 &0.47/2.33\\
     SO(3)&\textbf{1.00/3.00} &\textbf{1.00/3.00} &\textbf{1.00/3.00} &--- & ---&---\\
    \hline
    \end{tabular}
\end{table}

\noindent\textbf{High-dimensional setting}: To study the performance of the methods in the high-dimensional setting we increase $p$ from $10$ to $1000$ keeping $n=200$ and consider the same models as in the beginning of~\cref{subsec:var_select}. We compare our method KFOCI with popular high-dimensional variable selection algorithms such as Lasso \cite{Lasso} and Dantzig selector~\cite{Candes-Tao-07}.
Lasso was implemented using the \texttt{R} package \texttt{glmnet}~\cite{glmnet} and the Dantzig selector was implemented using the package \texttt{hdme}~\cite{hdme}; in both cases the tuning parameters were chosen as ``lambda.1se"\footnote{``lambda.1se", as proposed in \cite{ELS-2}, is the largest value of lambda such that the mean cross-validated error is within 1 standard error of the minimum.} obtained from 5-fold cross-validation.
As in the low-dimensional setting, in~\cref{table:varselect_highdim} we report: (i) The proportion of times $\{X_1,X_2,X_3\}$ is exactly selected,  (ii) the proportion of times $\{X_1,X_2,X_3\}$ is selected with possibly other variables, and (iii) the average number of variables selected, by the different methods (in 100 replications). 
%The proportion of $X_1,X_2,X_3$ being exactly selected, the proportion of $X_1,X_2,X_3$ being selected, and the average number of variables being selected, by different methods among 100 replications are reported in Table \ref{table:varselect_highdim}.
It can be seen that our method KFOCI (`10-NN') still selects the correct variables most of the times in the high-dimensional setting and outperforms all other competitors in all models except in the linear model (LM). Even in the linear model, if the goal is to exactly select $\{X_1,X_2,X_3\}$,
then KFOCI (`10-NN') also outperforms Lasso and the Dantzig selector that are designed specifically for this setting.
In the nonlinear models, all the other methods are essentially never able to select the correct set of predictors.
\begin{table}%[h!]
    \centering
    \caption{(High-dimensional setting: $n=200$, $p=1000$)
    Performance of the various variable selection algorithms. The reported numbers are: The proportion of times $\{X_1,X_2,X_3\}$ is exactly selected / the proportion of times $\{X_1,X_2,X_3\}$ is selected possibly with other variables / the average number of variables selected. The method with the best performance is highlighted in bold.  Here ``---" means that the method cannot deal with responses in the space ${\rm SO}(3)$.
    }
    \label{table:varselect_highdim}
    \begin{tabular}{|c|ccccc|} 
     \hline
     Models  & KFOCI (1-NN)  & KFOCI (10-NN) & FOCI \cite{azadkia2019simple} &Lasso \cite{Lasso} & Dantzig \cite{Candes-Tao-07}\\
     \hline
     LM & 0.23/0.90/3.72 &\textbf{0.82}/0.82/2.82 &0.02/0.53/3.99 &0.66/\textbf{1.00}/4.19 &0.01/\textbf{1.00}/29.63\\
     GAM & 0.01/0.21/3.83  &\textbf{0.76}/\textbf{0.92}/3.14 &0.00/0.05/4.14 &0.00/0.00/1.17 &0.00/0.00/2.39\\
     Nonlin1 & 0.21/0.49/3.77  &\textbf{0.99}/\textbf{1.00}/3.01 &0.01/0.03/3.43 &0.00/0.00/0.02 &0.00/0.00/0.00\\
     Nonlin2 & 0.00/0.03/3.48  &\textbf{0.38}/\textbf{0.78}/8.82 &0.00/0.00/3.31 &0.00/0.00/0.00 &0.00/0.00/0.00\\
     Nonlin3 &0.00/0.03/3.21  &\textbf{0.87}/\textbf{0.94}/3.50  &0.00/0.01/3.30 &0.00/0.00/0.00 &0.00/0.00/0.00\\
     SO(3)& 0.82/0.83/2.84 & \textbf{0.95}/\textbf{0.95}/2.90 & --- & ---&---\\
    \hline
    \end{tabular}
\end{table}

We now fix the number of covariates to be selected to $3$ to %{\color{red} further improve the performance of KFOCI and also} 
examine how well Algorithm \ref{algo:CME} performs in the high-dimensional setting.
We compare our methods with \emph{least angle regression} (LARS) \cite{LAR04} implemented in the \texttt{R} package \texttt{lars}~\cite{Rpacklars}. We use the first 3 variables selected by LARS; and in our examples it almost always selected the first 3 variables entering the Lasso path before any of the variables left the active set. With the same choices of the kernel and $\varepsilon$ as before, in~\cref{table:varselect4} we report: (i) The proportion of times $\{X_1,X_2,X_3\}$ is exactly selected, and (ii) the average number of correct variables among the 3 selected variables  (i.e.,~$|\hat{S}\cap \{X_1,X_2,X_3\}|$),  by the different methods (in 100 replications). Besides `10-NN', KPC (RKHS) (i.e., Algorithm \ref{algo:CME}) performs very well in the high-dimensional regime, exactly selecting $\{X_1,X_2,X_3\}$ more than 90\% of the times and achieving the best performance among all the methods in all the considered models. It can also be seen that the performance of both `1-NN' and `10-NN' improves when compared to the case when the number of predictors was not prespecified (cf.~Tables~\ref{table:varselect_highdim} and~\ref{table:varselect4}).

\begin{table}%[h!]
    \centering
    \caption{(High-dimensional setting: $n=200$, $p=1000$)
    Performance of the various variable selection algorithms. The number of variables to be selected is set by the oracle as 3. The reported numbers are: The proportion of times $\{X_1,X_2,X_3\}$ is exactly selected / the average number of correct variables among the 3 selected variables (i.e. $|\hat{S}\cap \{X_1,X_2,X_3\}|$). The method with the best performance is highlighted in bold.
    }
    \label{table:varselect4}
    \begin{tabular}{|c|ccccc|} 
     \hline
     Models  & 1-NN  & 10-NN & KPC (RKHS) & FOCI \cite{azadkia2019simple} & LARS \cite{LAR04}\\
     \hline
     LM & 0.88/2.88& \textbf{1.00/3.00}&\textbf{1.00/3.00}&0.42/2.42&\textbf{1.00/3.00}\\
     GAM & 0.14/1.81 & 0.96/2.96 & \textbf{0.98/2.98} & 0.02/1.09 & 0.01/1.82\\
     Nonlin1 & 0.24/1.18 & 0.99/2.99 &\textbf{1.00/3.00}  & 0.02/0.11 &0.00/0.11\\
     Nonlin2 & 0.02/0.26 & 0.72/2.42 & \textbf{0.92/2.88} &0.00/0.10 &0.00/0.02\\
     Nonlin3 & 0.02/0.28 & 0.89/2.79& \textbf{1.00/3.00} & 0.01/0.07 &0.00/0.24\\
     SO(3)& 0.83/2.75 & \textbf{1.00/3.00}& \textbf{1.00/3.00}&--- &---\\
    \hline
    \end{tabular}
\end{table}

\noindent {\bf Summary}: The comparison with the various variable section methods reveal that KFOCI
and Algorithm \ref{algo:CME} (with the Gaussian kernel) have excellent performance, both in the low- and high-dimensional regimes, in both linear and nonlinear models.
Further, KFOCI has a stopping criterion that can automatically determine the number of variables to select. %, and has only a small number of tuning parameters: a kernel on $\Y$ and the geometric graph defined on the feature space.}
As mentioned above, even in the high-dimensional linear regression setting, our model-free approach KFOCI yields comparable, and sometimes better results than the Lasso and the Dantzig selector. %Coupled with the fact that KFOCI can handle responses taking values in general metric spaces, we truly have an indispensable  variable selection algorithm.

%Besides, it can be seen that using 10-NN graph in KFOCI could improve the performance in many of the tasks we considered (compared with using 1-NN graph).
%{\color{brown} This improvement will be more significant when we later move into the high dimensional regime.}

\subsubsection{Testing for conditional independence under the model-X framework}\label{sec:Testing}
In this section we briefly consider the problem of testing the hypothesis of conditional independence between $Y$ and $Z$ given $X$. Although there is a huge literature for this problem (see e.g.,~\cite{linton1996conditional,bergsma2004testing, su2007consistent,su2008nonparametric, fukumizu2008kernel, song2009testing, huang2010testing, su2014testing, doran2014permutation, wang2015conditional, PatraEtAl-16, runge2018conditional, Ke2019, shah2020hardness} and the references therein) most of the proposed methods only provide approximate level-$\alpha$ tests that involve delicate choice of tuning parameter(s). Our estimators of $\rho^2$, $\hat{\rho^2}$ and $\tilde{\rho^2}$, can also be used as test statistics for this testing problem; however their exact distributions, under the null hypothesis, is currently unknown. Nevertheless, under the model-X framework~\cite{candes2018panning}, which assumes a known conditional distribution of $Z$ given $X$, we can use both $\hat{\rho^2}$ and $\tilde{\rho^2}$ to develop {\it exact level-$\alpha$} tests for testing the hypothesis of conditional independence.
%{\color{brown} To carry out tests, there are strategies such as permutation-within-group \cite{fukumizu2008kernel}
%or local Bootstrap \cite{wang2015conditional}. These methods can provide approximate level-$\alpha$ test,
%and they also involve certain tuning parameters such as the number of bins for permutation-within-group and the bandwidth of the kernel used for local Bootstrap.
%As they are both approximately simulating the conditional distribution $Z|X$ from the data to generate the null distribution,
%another possible way is to assume, under the model-X framework~\cite{candes2018panning}, that this conditional distribution is known, so that
%exact level-$\alpha$ test can be carried out.
%}
%We can however, under the model-X framework~\cite{candes2018panning}, which assumes a known conditional distribution of $Z$ given $X$, use both $\hat{\rho^2}$ and $\tilde{\rho^2}$ to test the hypothesis of conditional independence.
In Section~\ref{sec:ModelXFDR} we further show that our estimators of $\rho^2$ can be easily applied in the model-X framework~\cite{candes2018panning} to perform variable selection while controlling the \emph{false discovery rate} (FDR).

Consider testing the null hypothesis $H_0:Y\indep Z|X$, given i.i.d.~data $\{(X_i,Y_i,Z_i)\}_{i=1}^n$ from $P$ using the model-X conditional randomization test~\cite{candes2018panning, berrett2019conditional}. Let us denote the observed data matrices by $\mathbf{X},\mathbf{Y},\mathbf{Z}$. Suppose that $T(\mathbf{X},\mathbf{Y},\mathbf{Z})$ is the test statistic used which rejects $H_0$ for large values of $T$; e.g., we can take $T = \hat{\rho^2}$ or $T = \tilde{\rho^2}$. To simulate the null distribution of $T$, we generate $Z_i^{(j)}$ (for $j=1,\ldots,B$) from the conditional distribution $Z|X =X_i$ (which is assumed to be known). Let us denote $\{Z_i^{(j)}\}_{i=1}^n$ by $\mathbf{Z}^{(j)}$. Thus, $\{T(\mathbf{X},\mathbf{Y},\mathbf{Z}^{(j)})\}_{j=1}^{B}$ represents $B$ realizations of the test statistic under the null hypothesis. Then, 
$$p := \frac{1+\sum_{j=1}^B 1_{T(\mathbf{X},\mathbf{Y},\mathbf{Z}^{(j)})\geq T(\mathbf{X},\mathbf{Y},\mathbf{Z})} }{1+B}$$ yields a valid $p$-value for testing the hypothesis of conditional independence; a consequence of~\cite{candes2018panning,berrett2019conditional}. We may reject $H_0$ if $p\leq 0.05$.

We illustrate the power of our proposed testing procedure using a small simulation study. Consider $n=200$, $X\sim N(0,1)$ and the following models:
\begin{itemize}
    \item (Additive) $Z = X+U$ where $U \sim {\rm Uniform}[-1,1]$ ($X \indep U$), and $$Y=\gamma \sin (ZX) + (1-\gamma)\left(e^X X^{-2} + \epsilon\right), \quad \mbox{ where } \epsilon \sim N(0,1).$$
    \item (Multiplicative) $Z = XU$ where $U \sim N(0,1)$ ($X \indep U$), and $$Y=|\tanh(X) + \epsilon|^{1-\gamma}\cdot [\cosh(ZX)]^\gamma, \quad \mbox{ where } \epsilon \sim N(0,1).$$
\end{itemize}
%The choice of this example is just to introduce general nonlinear relationship between variables.
The above two models are just arbitrary choices that cover different types of nonlinear relationships. Here $\gamma$ can be viewed as a parameter that captures the strength of association between $Y$ and $Z$ given $X$: When $\gamma=0$, $Y\indep Z|X$ and when $\gamma=1$, $Y$ is a function of $Z$ given $X$.

\begin{figure}
    \centering
    \includegraphics[width = 1\textwidth]{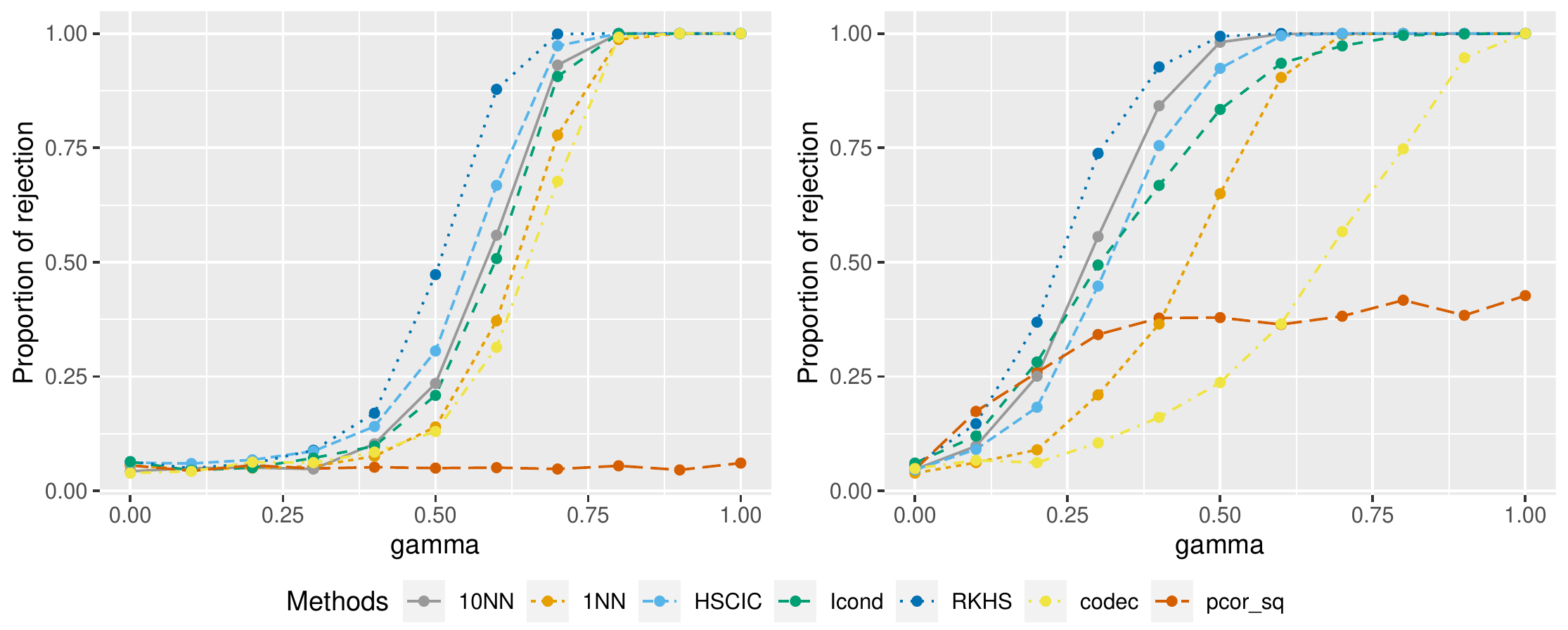}
    \caption{The proportion of rejections for different methods as $\gamma$ varies between 0 and 1 for the two examples. The left panel corresponds to the additive model and the right panel corresponds to the multiplicative model.}
    \label{fig:PowerCurve}
\end{figure}
In~\cref{fig:PowerCurve} we illustrate the performance of various test statistics for testing the null hypothesis of conditional independence in the above examples. Here we consider $T$ to be the classical partial correlation squared, $I^{COND}$~\cite[Eq.~(7)]{fukumizu2008kernel}, HS${\rm \ddot{C}}$IC \cite[Eq.~(24)]{sheng2019distance}, CODEC \cite{azadkia2019simple}, our graph-based estimators $\hat{\rho^2}(Y,Z|X)$ (with directed 1-NN, 10-NN graphs), and our RKHS-based estimator $\tilde{\rho^2}(Y,Z|X)$. To make the results comparable, we will use the same Gaussian kernel function $\exp\left(-{\|x-x'\|^2_{\R^d}}/{d}\right)$ when we are in $\R^d$ (for $d=1,2$) for different methods\footnote{
More specifically, we will use the same kernel on $\Y$ defined as $k(\cdot,\cdot) = \exp(-|\cdot - \cdot|^2)$ for $\hat{\rho^2}$, $\tilde{\rho^2}$, $I^{COND}$, HS${\rm \ddot{C}}$IC; the same kernel on $\X$ defined as $k_{\X}(\cdot,\cdot) = \exp(-|\cdot - \cdot|^2)$ for $\tilde{\rho^2}$, $I^{COND}$, HS${\rm \ddot{C}}$IC; the same kernel on $\ddot{\X}=\X\times\Z$ defined as $k_{\ddot{\X}}(\cdot,\cdot) = \exp(-\|\cdot - \cdot\|^2/2)$ for $\tilde{\rho^2}$, $I^{COND}$, HS${\rm \ddot{C}}$IC;
and kernels $k_\Z (\cdot,\cdot) = \exp(-|\cdot - \cdot|^2)$, $k_{\Y\times \X} (\cdot,\cdot) =k_{\Y\times \Z}(\cdot,\cdot) = \exp(-\|\cdot - \cdot\|^2/2) $ for $I^{COND}$.}. We set $B$ to be 100. For $\tilde{\rho^2}(Y,Z|X)$, $I^{COND}$ and HS${\rm \ddot{C}}$IC, the regularization parameter $\varepsilon$ (resp. $\lambda$) is set as $10^{-3}$.
\cref{fig:PowerCurve} shows the proportion of rejections from 1000 replications for different methods and $\gamma$ varying between 0 and 1.

It can be seen that when $H_0$ is true, i.e., $\gamma=0$, all methods have size $0.05$, as expected. When $H_0$ is violated, i.e., $\gamma>0$, our methods tend to have higher power. In both models, $\tilde{\rho^2}$ (i.e., `RKHS' in~\cref{fig:PowerCurve}) achieves the highest power among all the methods considered. $\hat{\rho^2}$ with 10-NN graph achieves the second highest power in the multiplicative model and the third highest power in the additive model.
For the additive model, HS${\rm \ddot{C}}$IC, which is defined as the Hilbert-Schmidt norm of a conditional cross-covariance operator, has the second highest power.
This is not very surprising as HS${\rm \ddot{C}}$IC also used cross-covariance operators to detect conditional dependence, similar to $\tilde{\rho^2}$.
CODEC, which also uses 1-NN graphs, has similar performance to KPC with 1-NN graph in the additive model, but inferior performance in the multiplicative model.
By increasing the number of neighbors from 1 to 10, it can be seen that $\hat{\rho^2}$ gains better performance. Classical partial correlation, however, cannot detect the nonlinear conditional associations in the examples.

\subsection{Real data examples}\label{sec:Real-Data} 
In this subsection we study the performance of our proposed methods on real data. The real data sets considered involve continuous, discrete, and non-Euclidean response variables. We also compare and contrast the performance of our methods with a number of existing and useful alternatives. In the following, unless otherwise specified, for the kernels  $k,k_\X,k_{\ddot{\X}}$, we use the Gaussian kernel with empirically chosen bandwidth, i.e., $k(x,x')=\exp\left(\frac{\|x-x'\|^2}{2s^2}\right)$, where $s$ is the median of the pairwise distances $\{\|x_i-x_j\|\}_{i<j}$. We also normalize each real-valued variable to have mean 0 and variance 1. \newline
\begin{table}%[h!]
    \centering
    \caption{The variables selected by different methods for the surgical data set.}
    \label{tab:stepforward}
    \begin{tabular}{lcl} 
     \hline
     Methods   & \# variables selected &Selected variables\\
     \hline
     Stepwise forward regression& 4/8 & enzyme\_test, pindex, alc\_heavy, bcs\\
     Best subset (BIC, ${\rm PRESS}_p$)   & 4/8 &enzyme\_test, pindex, alc\_heavy, bcs\\
     KFOCI (1-NN, 2-NN, 3-NN)  & 4/8 &enzyme\_test, pindex, liver\_test, alc\_heavy\\
     VSURF \cite{genuer2015vsurf} (interpretation step) & 4/8 & enzyme\_test, liver\_test, pindex, alc\_heavy\\
     VSURF \cite{genuer2015vsurf} (prediction step) & 3/8 & enzyme\_test, liver\_test, pindex\\   
     FOCI \cite{azadkia2019simple} & 3/8& enzyme\_test, liver\_test, alc\_heavy\\
     npvarselec \cite{zambom2017package} &4/8&pindex, enzyme\_test, liver\_test, alc\_heavy\\
     MMPC \cite{tsagris2018feature} & 4/8 & pindex, enzyme\_test, liver\_test, alc\_heavy\\
    \hline
    \end{tabular}
\end{table}

\noindent\textbf{Surgical data}: The \emph{surgical} data, available in the \texttt{R} package \texttt{olsrr} \cite{hebbali2018olsrr}, consists of survival data for $n=54$ patients undergoing liver operation along with 8 covariates\footnote{
    The 8 covariates are:
bcs (blood clotting score),
pindex (prognostic index),
enzyme\_test (enzyme function test score),
liver\_test (liver function test score),
age,
gender,
indicator variable for gender,
alc\_mod
(indicator variable for history of alcohol use),
and alc\_heavy
(indicator variable for history of alcohol use).}. % \red{It is a typical example where linear regression methods perform well.}
The response $Y$ is the survival time.
Investigators have found that a linear model taking $\log (Y)$ as response and 4 out of the 8 covariates as predictors describe the data well (see \cite[Section 9.4]{kutner2005applied} and also \cite[Section 10.6]{kutner2005applied}).
In particular, this linear model can be obtained by
stepwise forward regression with any ${ penter}\in[0.001,0.1]$\footnote{Recall that the variable with the smallest $p$-value less than `penter' will enter the model at every stage.},
and it is also the best submodel in terms of BIC and ${\rm PRESS}_p$ \cite[Section 9.4]{kutner2005applied}.
We also work with the transformed response $\log (Y)$.
We compare this linear model with our KFOCI and other variable selection methods such as FOCI \cite{azadkia2019simple}, \texttt{VSURF} \cite{genuer2015vsurf}, \texttt{npvarselec} \cite{zambom2017package}, and \texttt{MMPC} \cite{tsagris2018feature} (all using their default settings).
The selected variables, obtained by the different methods, are shown in Table~\ref{tab:stepforward}.
It can be seen the variables selected by KFOCI (with 1-NN, 2-NN, 3-NN graphs) are very similar to those
selected by the carefully analyzed linear regression approach\footnote{Note that `bcs' selected by the final linear model in~\cite{kutner2005applied} is replaced by `liver\_test' by KFOCI.}.
% \red{But it can be seen that for all other variable selection methods, `bcs' is not selected and `liver\_test' is selected, which can be an indication that `liver\_test' may be selected instead of `bcs'. Note that the linear model may not be perfect, as in \cite[Section 10.6]{kutner2005applied} a number of potential problems are identified.}
Further, KFOCI selects the same set of variables as many well-implemented variable selection algorithms such as \texttt{VSURF} \cite{genuer2015vsurf}, \texttt{npvarselec} \cite{zambom2017package}, and \texttt{MMPC} \cite{tsagris2018feature}.
\newline

\noindent\textbf{Spambase data}: This data set is available from the UCI Machine Learning Repository \cite{Dua:2019},
consisting of a response denoting whether an e-mail is spam or not, along with 57 covariates (this example has also been studied in~\cite{azadkia2019simple}).
The number of instances is $n=4601$. We first use Algorithm \ref{algo:graph} (KFOCI) to select a subset of variables.
Directed 1-NN and 10-NN graphs are considered. % Note that in this case $y$ can take either 0 or 1, so there is no difference in choosing the bandwidth of the Gaussian kernel.
Since there is randomness in breaking ties while computing NNs, we show in the histogram in~\cref{Fig:combined_spam} the number of variables selected by `1-NN' and `10-NN' in 200 repetitions.
\begin{figure}
    \centering
    \includegraphics[width = 1\textwidth]{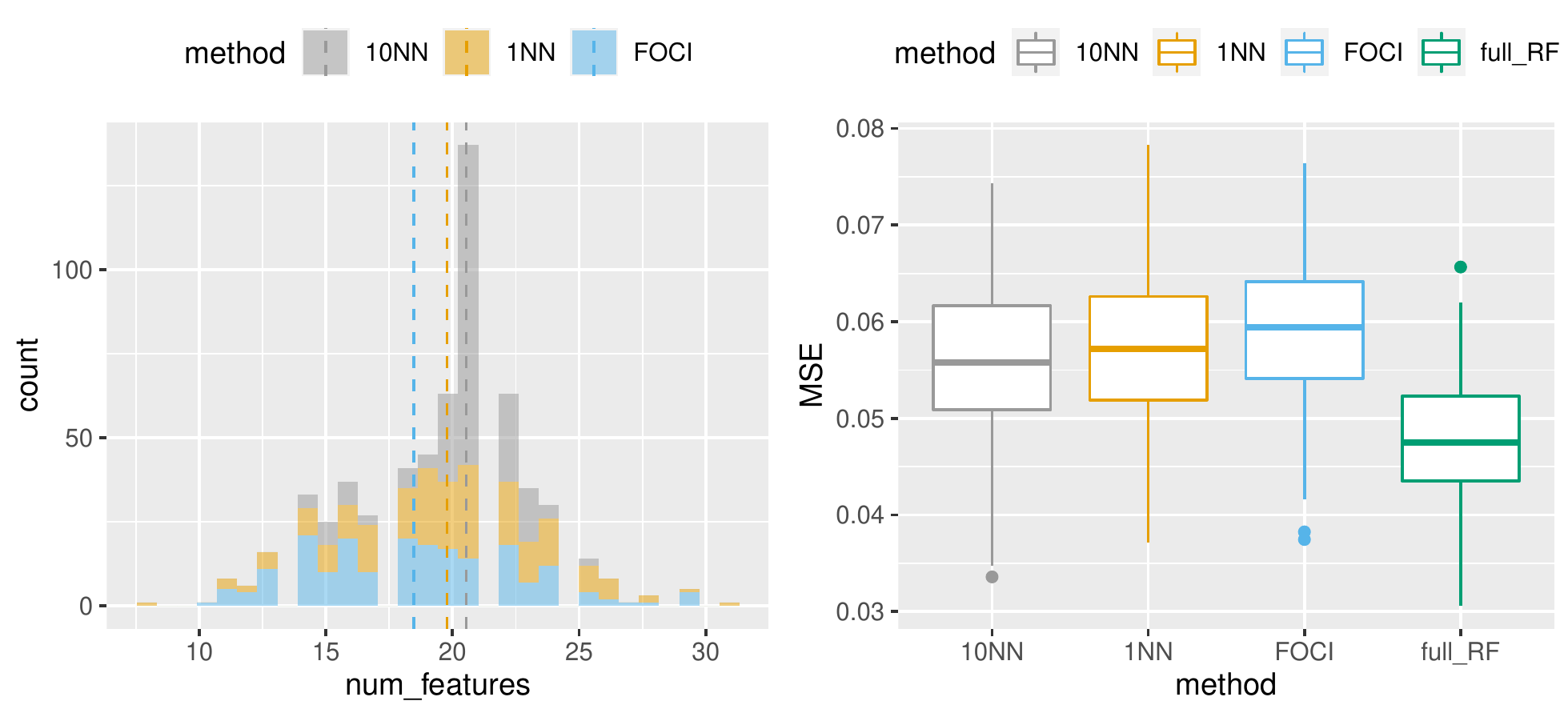}
    \caption{Left: Histogram of the number of selected variables (from 200 replications). The dashed lines show the mean for each group. Right: MSE after fitting random forest on test set for different methods (using 200 replications). For FOCI and KFOCI (`1-NN' and `10-NN'), the random forest algorithm was fit on the selected covariates. The full\_RF fits random forest with the entire set of covariates.}
    \label{Fig:combined_spam}
\end{figure}
% \begin{figure}
%     \centering
%     \includegraphics[width = 1\textwidth]{spam_hist.pdf}
%     \caption{The number of selected variables among 200 replications. The dashed lines show the mean for each group.}
%     \label{spam_hist}
% \end{figure}
The mean (resp. standard deviation) of the number of selected variables are 18.5, 19.8, 20.5 (resp. 4.1, 3.7, 2.1) for FOCI, `1-NN', `10-NN' respectively.
%\red{Hence `2-NN' tends to produce more stable results, which is excepted.}
We then assign each data point with probability 0.8 (0.2) to training (test) set,
and fit a random forest model (implemented in the \texttt{R} package \texttt{randomForest} \cite{rcolorbrewer2018package} using default settings) with only the selected covariates.
The mean squared error (MSE) on the test set is reported on the right panel of Figure~\ref{Fig:combined_spam}.
% \begin{figure}
%     \centering
%     \includegraphics[width = 1\textwidth]{spam_MSE.pdf}
%     \caption{The mean squared error of the random forest on test set for different methods among 200 replications. For FOCI and KPC, the random forest was fit on the selected covariates. The full\_RF fit the random forest with the whole set of covariates.}
%     \label{fig:spam_MSE}
% \end{figure}
Note that as there is randomness in breaking ties, splitting the training/test sets, as well as fitting the random forest, we repeat the whole procedure 200 times and present the box-plot for the MSEs.

It can be seen that in terms of MSE, `1-NN' is better than FOCI, and `10-NN' is better than `1-NN'.
The random forest fit on the whole set of covariates achieves the best performance in prediction as expected.
However, note that our methods use only about $1/3$ of the 57 covariates to achieve a prediction performance which is not much worse than that of the random forest fit with all the covariates. \newline

\noindent\textbf{Medical data}:
The medical data in \cite[Section 3.1.4]{edwards2012introduction} consists of three variables from 35 patients: Creatinine clearance $C$, digoxin clearance $D$, and urine flow $U$. The data set was also analyzed in \cite[Section 3]{fukumizu2008kernel}.
From medical knowledge, $D\indep U|C$ \cite{fukumizu2008kernel}.
$D$ should also be associated with $C$ controlling for $U$, since both creatinine $C$ and digoxin $D$ are mainly eliminated by the kidneys and can be viewed as a measure of kidney function.
Hence we expect the conditional association measure $\rho^2(D,U|C)$ to be close to 0 and $\rho^2(D,C|U)$ to be significantly different from 0.
This is affirmed by Table \ref{table:med_data}.
We used the directed 1-NN graph to compute $\hat{\rho^2}$, as the data set is quite small. Also, 
$\varepsilon$ is set to $10^{-2}$ for computing $\tilde{\rho^2}$.
As ties occur when finding the 1-NN for CODEC \cite{azadkia2019simple} and KPC (`1-NN'), which are broken uniformly at random, the results in~\cref{table:med_data} for these two methods are the means obtained from 10000 replications.
As mentioned in \cite{fukumizu2008kernel}, this is an example where the classical partial correlation fails to detect the conditional independence between $D$ and $U$ given $C$.
CODEC also fails to detect the conditional independence.
But it can be seen that KPC (`1-NN') successfully detects it, with $\hat{\rho^2}(D,U|C)$ being very close to 0. \newline

\begin{table}%[h!]
    \centering
    \caption{Conditional association measures applied to the medical data. The four estimators considered here are: $\hat{\rho^2}$ (i.e., KPC (`1-NN') with directed 1-NN graph), $\tilde{\rho^2}$ (i.e., KPC (RKHS)), CODEC (the conditional dependence measure in~\cite{azadkia2019simple}), and the classical partial correlation squared. Note that we have slightly abused the notation: the $\rho^2$ of partial correlation squared and CODEC are actually different from the $\rho^2$ of KPC, but they are all between $[0,1]$, with smaller values indicating conditional independence and larger values indicating conditional association.}
    \label{table:med_data}
    \begin{tabular}{|l|c|c|} 
     \hline
     Estimands  & $\rho^2(D,U|C)$ & $\rho^2 (D,C|U)$ \\
     \hline
     Ground truth & $D\indep U|C$ & $D$ is associated with $C$ controlling $U$\\
     \hline
    Partial correlation squared & 0.23& 0.58\\
    CODEC & 0.38 &   $0.48$\\
    KPC (1-NN) & 0.04 & $0.34$\\
    KPC (RKHS) & 0.15 & 0.39\\
    \hline
    \end{tabular}
\end{table}

\noindent\textbf{Election data (histogram-valued response)}:
Consider the 2017 Korean presidential election data collected by \url{https://github.com/OhmyNews/2017-Election},
which has been analyzed in the recent paper \cite{jeon2020additive}.
The data set consists of the voting results earned by the top five candidates from 250 electoral districts in Korea.
Since the top three candidates from three major parties representing progressivism, conservatism and centrism earned most of the votes,
we will focus on the proportion of votes earned by each of these three candidates among them, i.e., $Y=(Y_1,Y_2,Y_3)$, where $Y_1,Y_2,Y_3\geq 0$ and $Y_1+Y_2+Y_3 = 1$.
The demographic information \emph{average age} ($X_1$), \emph{average years of education} ($X_2$), \emph{average housing price per square meter} ($X_3$) and \emph{average paid national health insurance premium} ($X_4$) are available for each electoral district.
%We first normalize $X_i$ $i=1,2,3,4$ to have mean 0 and variance 1.
Note that $Y$ can be viewed as a histogram-valued variable for which the following two characteristic kernels on $[0,\infty)^3$ are available~\cite{fukumizu2009characteristic}:
\begin{enumerate}
    \item $k_1(a,b) = \prod_{i=1}^3 (a_i+b_i+1)^{-1},$
    \item $k_2(a,b) = e^{-\sum_{i=1}^3 \sqrt{a_i+b_i}},$
\end{enumerate}
where $a=(a_1,a_2,a_3)$ and $b=(b_1,b_2,b_3)$. Since $Y\in\R^3$, we can also use a Gaussian kernel. Note that $Y$, taking values in the probability simplex, is an example of compositional data. Besides political science, such data are prevalent in many other fields such as sedimentology \cite{hijazi2009dirichlet,aitchison1986composition}, hydrochemistry \cite{otero2005relative}, economics \cite{morais2018market}, and bioinformatics \cite{xia2013microbiome,chen2016two,shi2016regression}.
Regression methods have been developed with compositional data as covariates \cite{shi2016regression,Lin2014compositional,susin2020variable}, as response \cite{aitchison1986composition,lyengar2002semi,hijazi2009dirichlet,glahn2012simplicial,tsagris2015regression,tsagris2020alpha}, or as both covariates and response \cite{chen2017complinreg}.
But there are few variable selection methods proposed for data with compositional response and Euclidean covariates, as in our case here.
Since our method tackles random variables taking values in general topological spaces, it readily yields a variable selection method for this problem.

% {\color{brown}After normalizing $\{X_i\}_{i=1}^4$, we further append 4 independent standard normal artificial predictors $\{X_i\}_{i=5}^8$.
% }
The variables selected by KFOCI with different kernels and $K$-NN graphs are given in Table~\ref{tab:KFOCIHistselected}.
\begin{table}%[h!]
    \centering
    \caption{The selected variables with different kernels and different $K$-NN graphs.}
    \label{tab:KFOCIHistselected}
    \begin{tabular}{cccccc} 
     \hline
     Kernels & 1-NN & 2-NN  & 3-NN & 4-NN & 5-NN\\
     \hline
     Gaussian &1 2 3 4 &1 2 4 3&1 2 4 3&1 2 4 & 1 2 4\\
     $k_1$ & 1 2& 1 2& 1 2 4 3& 1 2 4 3& 1 2 4\\
     $k_2$ & 1 2&1 2&2 1 4 3& 1 2 4 3& 1 2 4\\
     \hline
    \end{tabular}
\end{table}
% {\color{brown}It can be seen that no artificial predictor is ever selected.
% }
It can be seen that in all cases, $X_1$ and $X_2$ are selected, indicating that they may be more relevant for predicting $Y$ than $X_3$, $X_4$.
This agrees with \cite[Figure 3]{jeon2020additive}, where an additive regression was fit to the election data.
It was also found that as people are more educated, from `low' to `medium' educational level, their political orientation becomes more conservative,
but interestingly it is reversed for people as they move from `medium' to `high' educational level \cite{jeon2020additive}.
This nonlinear relationship is also captured by KPC, as $X_2$ is always selected.
% {\color{brown}In the following we remove all artificial predictors $\{X_i\}_{i=5}^8$ and use the Gaussian kernel.
% }
Note that $X_3$ and $X_4$ are both measures of wealth.
It is natural to conjecture if one of them, say $X_4$ alone, would be enough in predicting $Y$.
We can answer this by examining how large $\rho^2 (Y,X_3|X_{-3})$ is: If the estimate of $\rho^2 (Y,X_3|X_{-3})$ is close to 0 it provides evidence that $Y$ is conditionally independent of $X_3$ given $X_{-3} = \{X_1,X_2,X_4\}$.
The estimates of $\rho^2 (Y,X_i|X_{-i})$ are given in Table~\ref{tab:2estimates}.
\begin{table}%[h!]
    \centering
    \caption{The estimates of $\rho^2 (Y,X_i|X_{-i})$ with different $K$-NN graphs and regularizers.} %$X_1$: average age; $X_2$: average years of education; $X_3$: average housing price per square meter; $X_4$: average paid national health insurance premium. As $X_3$ and $X_4$ are both measurements of richness, the conditional association between each of them and $Y$ given all other variables (i.e. $\hat{\rho^2}(Y,X_i|X_{-i})$ $i=3,4$) is weaker than that of $X_1$ and $X_2$. $\tilde{\rho^2}$ can be sensitive to the choice of $\varepsilon$. But for suitably chosen $\varepsilon$, say $\varepsilon=10^{-4}$, it yields similar estimates to those of the graph-based method, i.e. $\hat{\rho^2}$.}
    \label{tab:2estimates}
    \begin{tabular}{cccccc} 
     \hline
     Estimators & 1-NN & 2-NN  & 3-NN & 4-NN & 5-NN\\
     \hline
     $\hat{\rho^2}(Y,X_1| X_{-1})$ &0.07 &0.15  &0.17 &0.14 &0.13\\
     $\hat{\rho^2}(Y,X_2|X_{-2})$ & 0.19& 0.15 & 0.09& 0.05& 0.04\\
     $\hat{\rho^2}(Y,X_3|X_{-3})$ &0.10& 0.06& 0.01& -0.02& -0.02\\
     $\hat{\rho^2}(Y,X_4|X_{-4})$ &0.07& 0.07& 0.04& 0.03& 0.01\\
     \hline
     Estimators  & $\varepsilon = 10^{-3}$  & ${\varepsilon = 10^{-4}}$ & $\varepsilon = 10^{-5}$ & $\varepsilon = 10^{-6}$&$\varepsilon = 10^{-7}$\\
     \hline
     $\tilde{\rho^2}(Y,X_1|X_{-1})$ &0.08& 0.15 &0.19 &0.23 &0.28\\
     $\tilde{\rho^2}(Y,X_2|X_{-2})$ & 0.03 & 0.07 &0.11&0.17&0.22\\
     $\tilde{\rho^2}(Y,X_3|X_{-3})$  &0.04 & 0.06 &0.10 &0.15&0.18\\
     $\tilde{\rho^2}(Y,X_4|X_{-4})$& 0.03 & 0.05 &0.08 &0.13&0.18\\
     \hline
    \end{tabular}
\end{table}
It can be seen that for $i=3,4$, the conditional association of $Y$ and $X_i$ given all other variables is weaker than that of $i=1,2$; this agrees with the intuition that $X_3$ and $X_4$ are both capturing the same latent factor --- wealth. In~\cref{tab:2estimates} we also give values of $\tilde{\rho^2}$ as we change $\varepsilon$. For a suitably chosen $\varepsilon$, say $\varepsilon=10^{-4}$, $\tilde{\rho^2}$ is similar to $\hat{\rho^2}$.

\appendix
\section{Some general discussions}\label{sec:gendis}
In this section we elaborate on some parts of the main text that were initially deferred so as not to impede the flow of the paper. 
	
\subsection{Some remarks}\label{Sec:Remarks}
\begin{remark}[About Assumption \ref{assump:separable_RKHS}]\label{rem:Separable} 
Assumption \ref{assump:separable_RKHS} is needed to ensure following: (a) The feature map $y\mapsto k(y,\cdot)$ is measurable\footnote{Note that the canonical feature map $y\mapsto k(y,\cdot)$ is measurable if $\h_\Y$ is separable (see e.g.,~\cite[Lemma 4.25]{SVM}), and 
    $\mathcal{H}_\Y$ is separable if $\Y$ is separable and $k(\cdot,\cdot)$ is continuous on $\Y\times\Y$ (see e.g.,~\cite[Theorem 7]{hein2004kernels}).}; (b) any Borel measurable function $g:\Y \to \h_\Y$ is strongly measurable\footnote{A function is strongly measurable if it is Borel measurable and has a separable range.} (when $\h_\Y$ is separable) and the Bochner integral (see e.g.,~\cite[Appendix E]{Cohn2013measure} for its formal definition) $\E [g(Y)]$ is well-defined whenever $\E \|g(Y)\|_{\h_\Y}<\infty$\footnote{Thus, if $\E\sqrt{k(Y,Y)}<\infty$, $\mu_Y :=\E k(Y,\cdot)$ is well-defined as a Bochner integral. As for any bounded linear functional $T:\h_\Y\to \R$, we have $\E \left[T(g(Y))\right]=T\left(\E[g(Y)]\right)$, we can write $\langle \mu_Y, f \rangle_{\h_\Y} = \langle \E[k(Y, \cdot)], f \rangle_{\h_\Y}   = \E [\langle k(Y,\cdot),f \rangle_{\h_\Y} ] = \E f(Y)$.};
(c) the relevant cross-covariance operators are Hilbert-Schmidt (see Section~\ref{sec:Func-Ana} for the details). Thus, for simplicity, we assume the more natural Assumption \ref{assump:separable_RKHS} rather than the conditions (a)-(c) above.
\end{remark}

\begin{remark}\label{footnote:degenerate_imply_function}
Observe that if for a.e.~$x$, $Y|X=x$ is degenerate, then $Y$ is a measurable function of $X$ a.s. To see this, let $Q(x,\cdot)$ be a regular conditional distribution of $Y$ given $X=x$. Then there exists $A\subset \X$ such that $\p(X\in A)=1$ and $Q(x,\cdot)$ ---  the transition kernel --- is degenerate for all $x\in A$.  Let $f(x) \in \Y$ be the support (which is a single element in $\Y$) of $Q(x,\cdot)$ if $x\in A$, and some fixed $y_0 \in \Y$ if $x\in A^c$. Then $f:\X \to \Y$ is a measurable function because up to a null set $f^{-1}(B) = \{x \in \X : Q(x,B)=1\}$ is a measurable set for any measurable $B \subset \Y$. Now, note that $Y = f(X)$ a.s. so $Y$ is a measurable function of $X$. 
\end{remark}

\subsection{Uncentered CME estimators}\label{sec:Un-CME}

%\begin{remark}\label{rk:uncentered_estimator}
We used the `centered' CMEs (see~\eqref{eq:CME}) to derive an expression for the estimator of $\rho^2$ since that requires less restrictive assumptions; see~\cite{klebanov2019rigorous}. But using `uncentered CMEs', i.e., using uncentered (cross)-covariance operators to construct an expression of the CME (as in~\eqref{eq:CME}) is also possible. Under appropriate assumptions, the uncentered (cross)-covariance operators yield the following uncentered CME formula (see~\cite[Theorem 5.3]{klebanov2019rigorous}):
\begin{equation}\label{eq:Uncen-CME}
\mu_{Y|X=x} = \left({^u C}_X^\dagger {^uC}_{XY}\right)^* k_\mathcal{X}(x,\cdot),
\end{equation}
where ${^uC}_{XY}$ (resp. ${^u C}_X$) is the uncentered cross-covariance (resp. covariance) operator defined as $\langle f,{^uC}_{XY}g \rangle = \mathbb{E}\left[f(X)g(Y)\right]$. The same methodology as in~\cref{sec:Est-kernel} 
    shows that one can simply replace every occurence $\tilde{K}$ by the corresponding $K$ (e.g., $\tilde{K}_X$ by $K_X$) to obtain the following simplified formula: $$\tilde{\rho^2_u}=\frac{{\rm Tr}(M^\top K_Y M)}{{\rm Tr}(N^\top K_Y N)},$$ where $M= {K}_X \left( {K}_X + \varepsilon n I\right)^{-1} - {K}_{\ddot{X}} \left( {K}_{\ddot{X}} + \varepsilon n I\right)^{-1}$ and $N=I-{K}_X \left( {K}_X + \varepsilon n I\right)^{-1}$.
%\red{Compared with the centered estimator proposed in Proposition~\ref{lem:eta},

We see that the centered estimator $\tilde{\rho^2}$ (in~\eqref{eq:eta-Cen-Est}) can be obtained by putting a `tilde' on all the kernel matrices (i.e., $K_X, K_Y, K_{\ddot{X}}$) in $\tilde{\rho^2_u}$, and this can be viewed as performing a `centralization' of the feature maps of the corresponding kernel, i.e., if $K = \Phi \Phi^\top$, then $\tilde{K} = HKH = H \Phi \Phi^\top H = \Phi_c \Phi_c^\top$, with $\Phi_c:=H\Phi$ being the centralized feature.
%\red{The consistency result in the later section also holds as long as we additionally assume the uncentered CME formula holds.}
%\end{remark}

\begin{remark}[Sufficient conditions for uncentered CME]\label{rk:uncenter}
A sufficient condition for~\eqref{eq:Uncen-CME} to be valid is: For any $g\in\mathcal{H}_\mathcal{Y}$, a version of $\mathbb{E}[g(Y)|X=\cdot]\in\mathcal{H}_\mathcal{X}$. In the case when $X$ and $Y$ are independent, $\E[g(Y)|X=\cdot]$ is~a constant ($P_X$-a.e.). But it is well-known that the RKHS with the Gaussian kernel does not contain nonzero constant functions whenever $\X$ has non-empty interior; see e.g.,~\cite{minh2010some}. This fact provides an example where the CME is not guaranteed to be expressed using uncentered (cross)-covariance operators (as in~\eqref{eq:Uncen-CME}). Note that Assumption~\ref{assump:CME} always holds when $X$ and $Y$ are independent. However, in general, sufficient conditions for explicit expressions of CMEs in terms of centered and uncentered (cross)-covariance operators (as in~\eqref{eq:CME}) are usually restrictive and hard to verify (see~\cite{klebanov2019rigorous}).
\end{remark}

The following result (see Section \ref{kernel_ridge} for a proof) shows that $\tilde{\rho^2_u}$ has an interesting connection to kernel ridge regression.

\begin{proposition}\label{prop:kernel_ridge}
    Suppose $\Y = \mathbb{R}$ is equipped with the linear kernel $k (u,v) = u v$, for $u,v \in \Y$. Then,
            $$\tilde{\rho^2_u} = \frac{\|\hat{\mathbf{Y}}_x - \hat{\mathbf{Y}}_{xz}\|^2}{\|\hat{\mathbf{Y}}_x - \mathbf{Y}\|^2},$$ where $\mathbf{Y}=(Y_1,\cdots, Y_n)$, $\hat{\mathbf{Y}}_x$ is the kernel ridge regression estimator when regressing $Y$ on $X$ and $\hat{\mathbf{Y}}_{xz}$ is the kernel ridge regression estimator when regressing $Y$ on $\ddot{X}$, both with regularization parameter $n\varepsilon$.
\end{proposition}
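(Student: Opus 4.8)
The plan is to start from the explicit expression for the uncentered estimator derived in~\cref{sec:Un-CME}, namely
\[
\tilde{\rho^2_u}=\frac{{\rm Tr}(M^\top K_Y M)}{{\rm Tr}(N^\top K_Y N)},
\]
with $M = K_X(K_X+n\varepsilon I)^{-1}-K_{\ddot{X}}(K_{\ddot{X}}+n\varepsilon I)^{-1}$ and $N = I - K_X(K_X+n\varepsilon I)^{-1}$, and to specialize it to the linear kernel on $\Y=\R$. The first observation is that when $k(u,v)=uv$ the response kernel matrix is rank one, $K_Y = \mathbf{Y}\mathbf{Y}^\top$, since $(K_Y)_{ij}=Y_iY_j$.

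Next I would use this factorization to collapse the two traces into squared Euclidean norms. For any $n\times n$ matrix $A$ we have ${\rm Tr}(A^\top \mathbf{Y}\mathbf{Y}^\top A) = {\rm Tr}\big((A^\top\mathbf{Y})(A^\top\mathbf{Y})^\top\big) = \|A^\top\mathbf{Y}\|^2$, so that $\tilde{\rho^2_u} = \|M^\top \mathbf{Y}\|^2 / \|N^\top \mathbf{Y}\|^2$. The key structural fact to record here is that both $M$ and $N$ are symmetric: each factor $K_X(K_X+n\varepsilon I)^{-1}$ (and likewise $K_{\ddot{X}}(K_{\ddot{X}}+n\varepsilon I)^{-1}$) is a function of a symmetric positive semidefinite kernel matrix and hence symmetric, while $M$ and $N$ are obtained from these by subtraction and by adding/subtracting $I$. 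Consequently $M^\top\mathbf{Y}=M\mathbf{Y}$ and $N^\top\mathbf{Y}=N\mathbf{Y}$.

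Finally I would identify these two vectors with kernel ridge regression fits. The kernel ridge regression fitted values for regressing $\mathbf{Y}$ on $X$ with penalty $n\varepsilon$ are $\hat{\mathbf{Y}}_x = K_X(K_X+n\varepsilon I)^{-1}\mathbf{Y}$, and analogously $\hat{\mathbf{Y}}_{xz} = K_{\ddot{X}}(K_{\ddot{X}}+n\varepsilon I)^{-1}\mathbf{Y}$. Hence $M\mathbf{Y} = \hat{\mathbf{Y}}_x - \hat{\mathbf{Y}}_{xz}$ and $N\mathbf{Y} = \mathbf{Y} - \hat{\mathbf{Y}}_x$, and substituting into the ratio gives $\tilde{\rho^2_u} = \|\hat{\mathbf{Y}}_x - \hat{\mathbf{Y}}_{xz}\|^2 / \|\hat{\mathbf{Y}}_x - \mathbf{Y}\|^2$, which is exactly the claimed identity.

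This argument is essentially bookkeeping, and I do not expect a serious obstacle. The only points requiring a moment's care are the rank-one factorization $K_Y=\mathbf{Y}\mathbf{Y}^\top$, the symmetry of $M$ and $N$ (which is what allows the transposes to be dropped so that $M^\top\mathbf{Y}$ and $N^\top\mathbf{Y}$ become the regression residual-type vectors), and matching the normalization convention so that the implied ridge penalty is precisely $n\varepsilon$ as stated in the proposition. It is worth noting that it is the \emph{uncentered} estimator $\tilde{\rho^2_u}$, rather than the centered $\tilde{\rho^2}$, that lines up with ridge regression here, since the latter uses the uncentered kernel matrices $K_X,K_Y,K_{\ddot{X}}$ exactly as they appear in the kernel ridge regression hat matrices.
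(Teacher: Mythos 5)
Your proposal is correct and follows essentially the same route as the paper's proof: both use $K_Y=\mathbf{Y}\mathbf{Y}^\top$ for the linear kernel, the symmetry of $M$ and $N$ to collapse the traces into $\|M\mathbf{Y}\|^2/\|N\mathbf{Y}\|^2$, and the identification $K(K+n\varepsilon I)^{-1}\mathbf{Y}$ with the kernel ridge regression fit. Your write-up just makes explicit two details the paper leaves implicit (the trace-to-norm identity and why $K_X(K_X+n\varepsilon I)^{-1}$ is symmetric via commutativity), which is fine.
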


\subsection{Invariance and continuity of $\rho^2$}\label{sec:Inv-Cont}
Informally, \emph{invariance} means that the measure should be unaffected under a ``suitable'' class of transformations and \emph{continuity} means that whenever a sequence of measures $P_n$ converges to $P$ (in an ``appropriate" sense), the sequence of values of the dependence measure for $P_n$'s should converge to that of $P$. We show, in the following result (proved in~\cref{sec:pf_inv_conti}) that $\rho^2$ satisfies these properties for a class of kernels.
\begin{proposition}\label{prop:inv_conti}
The following two properties hold for $\rho^2(Y,Z|X)$:
\begin{enumerate}
    \item \emph{Invariance.} $\rho^2(Y,Z|X)$ is invariant to any bijective transformation of $X$, and any bijective transformation of $Z$. If $\Y =\R^d$ and the kernel is of the form~\eqref{eq:h-1-2-3} then $\rho^2$ is also invariant to orthogonal transformations and translations of $Y$.
    \item \emph{Continuity.}  Let $\Y$ be a separable metric space and let $(X_n,Y_n, Z_n) \sim P_n$. Let $(Y_{n,1},Y_{n,1}')$, $(Y_{n,2},Y_{n,2}')$ be generated from the distribution of $(X_n,Y_n, Z_n)$ as described in~\eqref{eq:Y_1} and~\eqref{eq:Y_2}. If $(X_n,Y_n,Z_n, Y_{n,1},Y_{n,1}', Y_{n,2},Y_{n,2}')\overset{d}{\to} (X,Y,Z, Y_{1},Y_{1}', Y_{2},Y_{2}')$ where $(X,Y,Z) \sim P$, and $\limsup_{n\to\infty}\E [k^{1+\varepsilon}(Y_{n,i},Y_{n,i}')]<\infty$ for $i=1,2$, and $\limsup_{n\to\infty}\E [k^{1+\varepsilon}(Y_{n},Y_{n})]<\infty$ for some $\varepsilon>0$, then $\rho^2(Y_n,Z_n|X_n)\to\rho^2(Y,Z|X)$ as $n\to\infty$.  
\end{enumerate}
\end{proposition}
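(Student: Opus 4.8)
The plan is to treat the two properties separately, handling invariance first (which is essentially algebraic) and then continuity (which requires a uniform-integrability argument). Throughout I would work from the moment representations of $\rho^2$ already derived, namely \eqref{eq:K-Exp}, \eqref{eta_formula}, and the MMD identity \eqref{eq:MMD}.

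\textbf{Invariance.} First I would observe that $\rho^2(Y,Z|X)$ depends on $(X,Z)$ only through the conditional laws $P_{Y|X}$ and $P_{Y|XZ}$ together with the outer expectations over $X$ and $(X,Z)$. For a bijective (bi-measurable) transformation $\psi$ of $X$ one has $\sigma(\psi(X)) = \sigma(X)$, whence $P_{Y|\psi(X)} = P_{Y|X}$, and $\sigma(\psi(X),Z)=\sigma(X,Z)$ gives $P_{Y|\psi(X),Z} = P_{Y|X,Z}$; since the averaging is over the same underlying randomness, neither the numerator nor the denominator of \eqref{eq:eta} changes. The same argument applied to a bijection $\phi$ of $Z$ (using $\sigma(X,\phi(Z)) = \sigma(X,Z)$) gives invariance in $Z$.

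For orthogonal transformations and translations of $Y$ when $\Y = \R^d$ and $k$ has the form \eqref{eq:h-1-2-3}, I would use the representation \eqref{eq:MMD}. Writing $S,S' \overset{iid}{\sim} Q_1$ and $T,T' \overset{iid}{\sim} Q_2$ and substituting $k(u,v) = h_1(u)+h_2(v)+h_3(\|u-v\|)$, the contributions of $h_1$ and $h_2$ collapse to $\E_{Q_1}[h_2-h_1]+\E_{Q_2}[h_1-h_2]$, which vanishes because symmetry of $k$ forces $h_1-h_2$ to be constant. Hence
\[
{\rm MMD}^2(Q_1,Q_2) = \E[h_3(\|S-S'\|)] + \E[h_3(\|T-T'\|)] - 2\,\E[h_3(\|S-T\|)],
\]
which depends on $Q_1,Q_2$ only through pairwise distances. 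Applying $Y \mapsto QY+b$ with $Q$ orthogonal preserves every such distance (including those involving the atom of $\delta_Y$ in the denominator), so each ${\rm MMD}^2$ in \eqref{eq:eta} is unchanged realization-by-realization, and invariance follows.

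\textbf{Continuity.} Here I would work from the first representation in \eqref{eq:K-Exp}, writing
\[
\rho^2(Y_n,Z_n|X_n) = \frac{\E[k(Y_{n,2},Y_{n,2}')] - \E[k(Y_{n,1},Y_{n,1}')]}{\E[k(Y_n,Y_n)] - \E[k(Y_{n,1},Y_{n,1}')]},
\]
and establishing convergence of each of the three expectations. The assumed joint convergence in distribution, combined with continuity of $k$ on $\Y\times\Y$ and the continuous mapping theorem, yields $k(Y_{n,i},Y_{n,i}') \overset{d}{\to} k(Y_i,Y_i')$ for $i=1,2$ and $k(Y_n,Y_n)\overset{d}{\to} k(Y,Y)$. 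To upgrade these to convergence of expectations I would verify uniform integrability of each sequence: the hypotheses $\limsup_n \E[k^{1+\varepsilon}(Y_{n,i},Y_{n,i}')]<\infty$ and $\limsup_n \E[k^{1+\varepsilon}(Y_n,Y_n)]<\infty$ give bounded $(1+\varepsilon)$-moments, hence uniform integrability; when $k$ can change sign I would first bound $|k(u,v)| \le \sqrt{k(u,u)\,k(v,v)}$ (Cauchy--Schwarz in $\h_\Y$) and use that $Y_{n,i},Y_{n,i}'$ both have marginal $P_{Y_n}$ to reduce the off-diagonal moments to the diagonal ones. Convergence in distribution together with uniform integrability then gives convergence of all three expectations to their counterparts under $P$. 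Finally, since $Y$ is nondegenerate given $X$ (Assumption \ref{assump:nondegenrate}), the limiting denominator $\E[k(Y,Y)] - \E[k(Y_1,Y_1')]$ is strictly positive (this is exactly the well-definedness established in Lemma \ref{lem:Well-Def}), so the ratio converges to $\rho^2(Y,Z|X)$ by the algebra of limits.

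The main obstacle is the continuity step: convergence in distribution of the $(Y_{n,i},Y_{n,i}')$ alone does not control the tails of $k(Y_{n,i},Y_{n,i}')$, so the crux is to convert the stated $(1+\varepsilon)$-moment bounds into genuine uniform integrability and to handle the possible sign changes of $k$ via the Cauchy--Schwarz reduction to diagonal moments. Once the three expectations are shown to converge and the denominator is bounded away from zero, passing to the ratio is routine.
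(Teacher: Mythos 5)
Your proposal is correct and follows essentially the same route as the paper: for invariance you reduce a kernel of the form~\eqref{eq:h-1-2-3} to the $h_3$-distance terms (you cancel $h_1,h_2$ inside each ${\rm MMD}^2$ via symmetry of $k$, whereas the paper cancels them across the equal marginals in~\eqref{eta_formula} --- the same computation), and for continuity you upgrade the assumed weak convergence to convergence of the three expectations in~\eqref{eq:K-Exp} via uniform integrability from the $(1+\varepsilon)$-moment bounds, which is exactly what the paper does via Skorokhod's representation followed by $L_1$ convergence. Your additional care with sign-changing kernels (the Cauchy--Schwarz reduction of off-diagonal to diagonal moments) and the explicit observation that the limiting denominator is strictly positive by Lemma~\ref{lem:Well-Def} are refinements the paper leaves implicit.
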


\subsection{Approximate computation of $\tilde{\rho^2}$ using incomplete Cholesky decomposition}\label{sec:approx_compute}
%An exact computation of $\tilde{\rho^2}$ costs $O(n^3)$ time as we will have to invert $n \times n$ matrices; see~\cref{lem:eta}. In this subsection we propose a fast approximation of $\tilde{\rho^2}$ using the method of  \emph{incomplete Cholesky decomposition} \cite{bach2002kernel}. In particular, if we use an incomplete  Cholesky decomposition of all the three kernel matrices --- $K_X, K_Y$ and $K_{\ddot{X}}$ --- with ranks less than (or equal to) $r$, then the desired approximation of $\tilde{\rho^2}$ can be computed in time $O(n^2+nr^2)$.

%Note that the first bottleneck in calculating $\tilde{\rho^2}$ is the computation of the $n \times n$ kernel matrices which takes $O(n^2)$ time and storage. Thus, 
For fast computation, we can approximate each of the kernel matrices by \emph{incomplete Cholesky decomposition}. For example, we can approximate $K_X\approx L_1L_1^\top$ where $L_1\in\mathbb{R}^{n\times d_1}$ is the incomplete Cholesky decomposition of the kernel matrix\footnote{Note that if the linear kernel is used and $X_i \in \R^{d_1}$, then the decomposition $K_X=\mathbf{X} \mathbf{X}^\top$ (here $\mathbf{X} := [X_1, \ldots, X_n]^\top$ is the data matrix of the $X_i$'s) is straight-forward and exact.} $K_X$ which can be computed in $O(n d_1^2)$ time (here $d_1 \le n$) even without the need to compute and store the full gram matrix $K_X$; see e.g.,~\cite{bach2002kernel}. Let $\tilde{L}_1:=HL_1$ denote the centralized feature matrix (recall that $H :=I-\frac{1}{n}\mathbf{1}\mathbf{1}^\top$ is the centering matrix). Then $\tilde{K}_X\approx \tilde{L}_1 \tilde{L}_1^\top$. Similarly,  we can approximate each of the centralized kernel matrices $\tilde{K}_{\ddot{X}}\approx \tilde{L}_2 \tilde{L}_2^\top$, $\tilde{K}_Y\approx \tilde{L}_3 \tilde{L}_3^\top$, where $L_i\in\mathbb{R}^{n\times d_i}$ for $i=2,3$.

To compute the numerator of $\tilde{\rho^2}$ (see~\eqref{eq:Compute-Rho}), 
    by the Woodbury matrix identity, we can approximate $(\tilde{K}_X + n\varepsilon I)^{-1}$ by
    $$(\tilde{L}_1\tilde{L}_1^\top + n\varepsilon I_{n})^{-1} = \frac{1}{n\varepsilon} I_n - \frac{1}{n\varepsilon} \tilde{L}_1 (n\varepsilon I_{d_1} + \tilde{L}_1^\top \tilde{L}_1)^{-1} \tilde{L}_1^\top.$$
The same strategy applied to $(\tilde{K}_{\ddot{X}} + n\varepsilon I)^{-1}$ shows $M$ can be approximated by $\check{M} := \tilde{L}_1(n\varepsilon I_{d_1} + \tilde{L}_1^\top \tilde{L}_1)^{-1} \tilde{L}_1^\top - \tilde{L}_2(n\varepsilon I_{d_2} + \tilde{L}_2^\top \tilde{L}_2)^{-1} \tilde{L}_2^\top$. Thus
$${\rm Tr}(M^\top \tilde{K}_Y M) \approx {\rm Tr}(\check{M}^\top \tilde{L}_3 \tilde{L}_3^\top \check{M}) = \|\tilde{L}_3^\top \check{M}\|_{F}^2,$$
where $\|\cdot\|_{F}$ denotes the Frobenius norm. %The whole procedure can now be calculated in $O\left(n \max_{1\leq i \leq 3} d_i^2\right)$ time.

For the denominator of $\tilde{\rho^2}$, note that $N=n\varepsilon(\tilde{K}_X+n\varepsilon I)^{-1} \approx I_n -  \tilde{L}_1 (n\varepsilon I_{d_1} + \tilde{L}_1^\top \tilde{L}_1)^{-1} \tilde{L}_1^\top =: \check{N}$. Thus, ${\rm Tr}(N^\top \tilde{K}_Y N)\approx \| \tilde{L}_3^\top \check{N}\|_F^2$. Combining the calculations above, we see that the approximate version of $\tilde{\rho^2}$ can be computed in $O\left(n \max_{1\leq i \leq 3} d_i^2\right)$ time.

\subsection{Assumptions on the geometric graph functional}\label{sec:Assump-Graph}
We will use the following assumptions on the graph functional $\mathcal{G}$; these assumptions were also made in~\cite[Section 3]{deb2020kernel}. 
	
\begin{assump}\label{assump:conv_nn}
    Given the graph $\mathcal{G}_n$, let $N(1),\ldots ,N(n)$ be  independent random variables where $N(i)$ is a uniformly sampled index from among the (out-)neighbors of $X_i$ in $\mathcal{G}_n$, i.e., $\{j:(i,j)\in \emgn \}$. Assume that $\X$ is a metric space with metric $\rho_\X$, and that
	$$\rho_{\X}(X_1,X_{N(1)})\overset{p}{\to}0,\quad {\rm as}\ n\to\infty.$$
\end{assump}
\begin{assump}\label{assump:degree}
    Assume that there exists a deterministic positive sequence $r_n\geq 1$ (may or may not be bounded), such that $$\min_{1\leq i\leq n} d_i\geq r_n\qquad \mbox{a.s.}$$
    Let $\mathcal{G}_{n,i}$ denote the graph obtained from $\mathcal{G}_n$ by replacing $X_i$ with an i.i.d. random element $X_i'$. Assume that there exists a deterministic positive sequence $q_n$ (may or may not be bounded), such that
    $$\max_{1\leq i\leq n}\max\{|\mathcal{E}(\mathcal{G}_n)\setminus \mathcal{E}(\mathcal{G}_{n,i})|,|\mathcal{E}(\mathcal{G}_{n,i})\setminus \mathcal{E}(\mathcal{G}_{n})|\}\leq q_n\ \ a.s.\qquad {\rm and}\qquad \frac{q_n}{r_n} = O(1).$$
\end{assump}
\begin{assump}\label{assump:degupbd}
There exists a deterministic sequence $\{t_n\}_{n\ge 1}$ (may or may not be bounded) such that the vertex degree (including both in- and out-degrees for directed graphs) of every point $X_i$ (for $i=1,\ldots, n$) is bounded by $t_n$, and
		$\frac{t_n}{r_n}=O(1)$.
\end{assump}

\subsection{A review of some concepts from functional analysis}\label{sec:Func-Ana}
We start with a brief review of some notions from functional analysis that will be important; see~\cite{rynne2008linear,aubin2011applied} for a detailed study of these concepts. By a {\it bounded linear operator} $A$ from a Hilbert space $\h$ to itself we mean $A:\h \to \h$ is such that, for some $L >0$, $\|Av\|_\h \le L \|v\|_\h$, for all $v \in \h$. $A$ is {\it nonnegative} if $\langle u, A u\rangle_\h \ge 0$, for all $u \in \h$. There is a unique bounded operator $A^{*}:\h \to \h$, called the {\it adjoint} of $A$, such that $\langle u, A v\rangle_\h = \langle A^* u,  v\rangle_\h$, for all $u,v \in \h$. We say that $A$ is {\it self-adjoint} if $A^{*}=A$. Suppose that $\h$ is separable with orthonormal basis $\{e_i\}_{i\ge 1}$. Then the {\it trace} of a non-negative $A$ is defined as $\sum_i \langle A e_i, e_i \rangle_\h$. For a nonnegative operator $A$ if $\sum_i \langle A e_i, e_i \rangle_\h < \infty$ then $A$ is said to be a {\it trace-class} operator. A {\it compact} operator is a linear operator $L$ from a Banach space $\F$ to another Banach space $\G$, such that the image under $L$ of any bounded subset of $\F$ is a relatively compact subset (has compact closure) of $\G$. Such an operator is necessarily a bounded operator, and so is continuous.

Let $(\ran A) := \{Av: v \in \h\}$ denote the {\it range} of the operator $A$ and $\ker A := \{v \in \h: Av =0\}$ denote the {\it kernel} of $A$. Let $A^{\dagger}$ denote the {\it Moore-Penrose inverse} of $A$ (see e.g.,~\cite[Definition 2.2]{engl1996regularization}). Note that $A^{\dagger}$ is defined as the unique linear extension of $(A|_{(\ker A)^{\perp}})^{-1}$ to a (possibly unbounded) linear operator on
$(\ran A)\oplus (\ran A)^\perp$ such that ${\ker}A^\dagger = (\ran A)^\perp$.

Let $\F$ and $\G$ be separable Hilbert spaces. Let $\{f_i\}_{i \in I}$ to be an orthonormal basis for $\F$, and let $\{g_j\}_{j \in J}$ be an orthonormal basis for $\G$; here $I$ and $J$ are indexing sets being either finite or countably infinite. 
\begin{defn}[Hilbert-Schmidt operators]\label{defn:HS}
The Hilbert-Schmidt norm of a compact operator $L: \G \to \F$ is defined to be 
% \begin{equation}\label{eq:HS-Norm}
% \|L\|_{\HS}^2 := \sum_{j \in J}\|Lg_j\|_\F^2 = \sum_{i \in I} \sum_{j \in J} | \langle Lg_j, f_i \rangle_\F|^2 
% \end{equation}
% where we use Parseval's identity on each of the norms in the first sum.
\begin{equation}\label{eq:HS-Norm}
    \|L\|_{\HS}^2 := \sum_{j \in J}\|Lg_j\|_\F^2.
    \end{equation}
The operator $L$ is {Hilbert-Schmidt} when this norm is finite.
The Hilbert-Schmidt operators mapping from $\G$ to $\F$ form a Hilbert space, written $\HS(\G, \F)$, with inner product 
% \begin{equation}\label{eq:HS-Opt}\langle L,M\rangle_{\HS} := \sum_{j \in J} \langle L g_j,M g_j \rangle_\F = \sum_{i\in I} \sum_{j \in J} \langle L g_j,f_i \rangle_\F \;\langle M g_j,f_i \rangle_\F,
% \end{equation}
\begin{equation}\label{eq:HS-Opt}\langle L,M\rangle_{\HS} := \sum_{j \in J} \langle L g_j,M g_j \rangle_\F,
\end{equation} 
which is independent of the orthonormal basis chosen. Here $L: \G \to \F$ and $M : \G \to \F$ are two Hilbert-Schmidt operators.  %({\color{red} Show that the last equality in the above display holds}).
\end{defn}

\begin{defn}[Tensor product space]\label{tensor_product_Hilbert_space}
The tensor product of two vector spaces $V,W$ over the same field $F$, denoted by $V\otimes W$, is defined as the quotient space obtained from free vector space $F(V\times W)/\sim$, where the bilinear equivalence relation $\sim$ satisfies
        \begin{enumerate}
            \item $(v,w)+(v',w)\sim (v+v',w)$ and $(v,w)+(v,w')\sim (v,w+w')$.
            \item $c(v,w)\sim(cv,w)\sim(v,cw)$.
        \end{enumerate}
    If $V$ and $W$ are both Hilbert space, an inner product can be defined by linearly extending
    $$\langle v_1\otimes w_1,v_2\otimes w_2\rangle_{V\otimes W} :=\langle v_1,v_2\rangle_V \cdot \langle w_1,w_2\rangle_W.$$
    We call the completion under this inner product as the tensor product space of the two Hilbert spaces, which is itself a Hilbert space.
    Suppose both $V$ and $W$ are separable. If $\{f_i\}$ is an orthonormal basis of $V$ and $\{e_i\}$ is an orthonormal basis of $W$, then
    $\{f_i\otimes e_j\}$ is an orthonormal basis of $V\otimes W$, and any element in $\h_\X\otimes\h_\Y$ can be written as $\sum_{i,j}a_{ij} f_i\otimes e_j$, with $\sum_{i,j}a_{ij}^2<\infty$.
    There exists an isometric isomorphism $\Phi: V\otimes W\to \HS(W,V)$ by linearly extending
    $\Phi(f\otimes g)(h) = \langle g,h\rangle_{W} f$. More specifically,
    \begin{equation}\label{isomorphism}
        \Phi(\sum_{i,j}a_{ij} f_i\otimes e_j)(h)=\sum_{i}\sum_j a_{ij}\langle e_j,h\rangle_{W} f_i.
    \end{equation}
    Since $\sum_i \left(\sum_j |a_{ij}\langle e_j,h\rangle_{W}| \right)^2\leq \sum_i \left(\sum_j a_{ij}^2\right) \left(\sum_j \langle e_j,h\rangle_{W}^2 \right)=\|h\|_{W}^2 \sum_{i,j}a_{ij}^2<\infty$, the right-hand-side above is well-defined.
    It is isometric since
$$\begin{aligned}
        \Big\| \Phi \big(\sum_{i,j}a_{ij}f_i\otimes e_j \big) \Big\|_{\HS}^2 &= \sum_k \Big \| \sum_{i}\sum_j a_{ij}\langle e_j,e_k\rangle_{W} f_i \Big \|_{V}^2\\
        &= \sum_k \Big \| \sum_{i} a_{ik} f_i \Big \|_{V}^2\\
        &=\sum_k \sum_i a_{ik}^2 = \Big \| \sum_{i,j} a_{ij} f_i\otimes e_j \Big \|_{V\otimes W}^2.
    \end{aligned}$$
    $\Phi$ is surjective because its range includes the dense set of finite-rank operators. The bounded inverse theorem \cite[Theorem 4.43]{rynne2008linear} then implies $\Phi$ is an isometric isomorphism. See \cite[Section 12]{aubin2011applied} for more detials regarding the Hilbert-Schmidt operators and Hilbert tensor products.
\end{defn}
\begin{remark}[Cross-covariance operator in tensor-product space]\label{rem:C-Cov} Here we show that~\eqref{eq:Cov} holds. By~\eqref{isomorphism}, for $a\in\h_\X\otimes\h_\Y$, $\langle f,\Phi(a)g\rangle_{\h_\X}=\langle a,f\otimes g\rangle_{\h_\X\times\h_\Y}$.
Denote the right-hand side of \eqref{eq:Cov} by $C$. Then 
$$\begin{aligned}
    \langle f,C g\rangle_{\h_\X} &= \langle C,f\otimes g\rangle_{\h_\X\otimes \h_\Y}\\
    &=\langle \E \left[ k_\X (X,\cdot)\otimes k (Y,\cdot)\right], f\otimes g\rangle_{\h_\X\otimes \h_\Y} -  \langle \E\left[k_\X (X,\cdot)\right],f\rangle_{\h_\X} \langle \E\left[k(Y,\cdot)\right],g\rangle_{\h_\Y}\\
    &=\E \left[\langle k_\X (X,\cdot)\otimes k (Y,\cdot), f\otimes g\rangle_{\h_\X\otimes \h_\Y} \right] -  \E\left[\langle k_\X (X,\cdot),f\rangle_{\h_\X}\right]\cdot \E\left[ \langle k (Y,\cdot),g\rangle_{\h_\Y}\right]\\
    &=\E\left[f(X)g(Y) \right] - \E [f(X)]\, \E [g(Y)].
\end{aligned}$$
In the third line we used the fact that if $X\in \X$ is Bochner integrable, then $\E\varphi (X)=\varphi(\E X)$ for any bounded linear functional $\varphi\in \X^*$ (see e.g.,~\cite[Appendix E]{Cohn2013measure}).
\end{remark}

\subsection{Model-X FDR control}\label{sec:ModelXFDR}
Consider the variable selection in regression problem described in~\cref{sec:application} where $X = (X_1,\ldots, X_p)$ is the predictor and $Y$ is the response variable. Our estimators of KPC can be used to develop model-free (nonparametric) variable selection methods that control \emph{false discovery rate} (FDR), under the model-X framework.
Classical FDR control methods often impose some assumption on $Y|X$, such as a Gaussian linear model \cite{barber2015controlling,javanmard2019false}.
The model-X framework \cite{candes2018panning}, on the other hand, allows for a complex nonlinear relationship between $Y$ and $X$, which is exactly the general scenario we study in~\cref{sec:application}.

Given i.i.d.~data $\{(X^{(i)},Y^{(i)})\}_{i=1}^n$ from the regression model~\eqref{eq:Reg-Mdl} let $\mathbf{X} := ((X_{j}^{(i)}))_{n \times p}$ be the $n \times p$ design matrix and let $\mathbf{Y} = (Y^{(1)}, \ldots, Y^{(n)})$. 
In the model-X knockoff framework, the distribution of $X$ is assumed to be known, and a knockoff variable $\tilde{X}$ is constructed satisfying
$[X,\tilde{X}]_{{\rm swap}(S)}\overset{d}{=}[X,\tilde{X}]$,  for all $S\subset \{1,\ldots,p\}$ and $\tilde{X}\indep Y|X$ where ${\rm swap}(S)$ is an operation that swaps $X_j$ with $\tilde{X}_j$ for every $j \in S$. For every $j \in \{1,\ldots, p\}$, a statistic $W_j$ is then constructed to measure the association between $X_j$ and the response $Y$,
such that a large value of $W_j$ provides evidence that $X_j$ is a signal variable.
In the seminal paper \cite{candes2018panning}, $W_j$ was set as the \emph{Lasso coefficient-difference} after fitting a lasso regression of $Y$ on $[X,\tilde{X}]$.
However, this Lasso-based statistic is designed specifically for a linear model and may not be powerful for the general regression model~\eqref{eq:Reg-Mdl}.
Using our statistic $\hat{\rho^2}$, we can set
\begin{equation}\label{eq:W_j}
W_j([\mathbf{X},\tilde{\mathbf{X}}],\mathbf{Y}) :=\hat{\rho^2} (Y,X_j|[X_{-j},\tilde{X}]) - \hat{\rho^2} (Y,\tilde{X}_j|[X,\tilde{X}_{-j}]).
\end{equation}
To maintain finite sample FDR control, $W_j$ should satisfy the \emph{flip-sign property}, i.e., for any $S\subset \{1,\ldots,p\}$,
$$W_j([\mathbf{X},\tilde{\mathbf{X}}]_{{\rm swap}(S)},\mathbf{Y})=\left\{\begin{aligned}
    &W_j([\mathbf{X},\tilde{\mathbf{X}}]_{{\rm swap}(S)},\mathbf{Y}),\quad &j\notin S,\\
    &-W_j([\mathbf{X},\tilde{\mathbf{X}}]_{{\rm swap}(S)},\mathbf{Y}),\quad &j\in S.\\
\end{aligned}\right.$$
%where $[\mathbf{X},\tilde{\mathbf{X}}]_{{\rm swap}(S)}$ now refers to swapping the entire column $j$ for $j \in S$.
%Nabarun's method gives:
%$$\hat{\eta}(Y,Z|X) = \frac{\frac{1}{n}\sum_i \sum_{(i,j)\in\mathcal{E}(\mathcal{G}_n^{\ddot{X}})} \frac{k_\Y (Y_i,Y_j)}{d_i^{\ddot{X}}} - \frac{1}{n}\sum_i \sum_{(i,j)\in\mathcal{E}(\mathcal{G}_n^{{X}})}\frac{k_\Y (Y_i,Y_j)}{d_i^{X}}}{\frac{1}{n}\sum_{i=1}^n k_\Y (Y_i,Y_i)- \frac{1}{n}\sum_i \sum_{(i,j)\in\mathcal{E}(\mathcal{G}_n^{{X}})}\frac{k_\Y (Y_i,Y_j)}{d_i^{X}}}.$$
% In the following two cases, for example, $w_j$ does have the \emph{flip-sign property}, so FDR control is valid.

The \emph{flip-sign property} is easily satisfied if the $K$-NN graph uses distance metrics treating $X_j$ and $\tilde{X}_j$ in the same way\footnote{More specifically: (i) The distance matrix $D_{[\mathbf{X},\tilde{\mathbf{X}}]}$ (i.e., $(D_{[\mathbf{X},\tilde{\mathbf{X}}]})_{ij}$ is the distance between the $i$-th row and the $j$-th row of $[\mathbf{X},\tilde{\mathbf{X}}]$) does not change if the $j$-th columns of $\mathbf{X}$ and $\tilde{\mathbf{X}}$ are swapped; (ii) with $\mathbf{X}_{-j}$ and $\tilde{\mathbf{X}}_{-j}$ fixed, $D_{[\mathbf{X},\tilde{\mathbf{X}}_{-j}]}$ as a function of $\mathbf{X}_j $ is the same as $D_{[\mathbf{X}_{-j},\tilde{\mathbf{X}}]}$ as a function of $\tilde{\mathbf{X}}_j$; (iii) for $k\neq j$, $D_{[\mathbf{X},\tilde{\mathbf{X}}_{-j}]}$ and $D_{[\mathbf{X}_{-j},\tilde{\mathbf{X}}]}$ do not change if the two columns $\mathbf{X}_k$ and $\tilde{\mathbf{X}}_k$ are swapped. Note that the Euclidean distance satisfies these properties.}.
Note that $\hat{\rho^2}$ in~\eqref{eq:W_j} can also be replaced by $\tilde{\rho^2}$ (from~\cref{sec:Est-kernel}). As long as the kernels treat $X_j$ and $\tilde{X}_j$ in the same way\footnote{More specifically: (i) The kernel matrix $K_{[\mathbf{X},\tilde{\mathbf{X}}]}$ does not change if the $j$-th columns of $\mathbf{X}$ and $\tilde{\mathbf{X}}$ are swapped; (ii) with $\mathbf{X}_{-j}$ and $\tilde{\mathbf{X}}_{-j}$ fixed, $K_{[\mathbf{X},\tilde{\mathbf{X}}_{-j}]}$ as a function of $\mathbf{X}_j $ is the same as $K_{[\mathbf{X}_{-j},\tilde{\mathbf{X}}]}$ as a function of $\tilde{\mathbf{X}}_j$; (iii) for $k\neq j$, $K_{[\mathbf{X},\tilde{\mathbf{X}}_{-j}]}$ and $K_{[\mathbf{X}_{-j},\tilde{\mathbf{X}}]}$ do not change if the two columns $\mathbf{X}_k$ and $\tilde{\mathbf{X}}_k$ are swapped. Note that the Gaussian kernel satisfies (i), (iii) and also (ii) if the bandwidths are the same for $[\mathbf{X},\tilde{\mathbf{X}}_{-j}]$ and $[\mathbf{X}_{-j},\tilde{\mathbf{X}}]$.} (e.g., when using the Gaussian kernel), the FDR control will also be valid.
% \begin{enumerate}
%     \item $\hat{\rho^2}$ from Section \ref{sec:Est_graph2}: KNN graph with a distance metric that treats each variable in a symmetric way, e.g., Euclidean distance.
%     \item $\tilde{\rho^2}$ from Section \ref{sec:Est-kernel} replacing $\hat{\rho^2}$: if the kernel used for calculating $K_{[X,\tilde{X}]}$ treats each variable in a symmetric way, e.g. Gaussian kernel, and the kernel used for ${X_j}$ is the same as that used for ${\tilde{X}_j}$.
% \end{enumerate}
In such case, if we select $$\hat{S}=\left\{j: W_j\geq \tau:=\min \Big\{t>0:\frac{\#\{j:W_j\leq -t\}}{\#\{j:W_j\geq t\}}\leq q \Big\} \right\},$$ then a modified FDR is controlled, i.e., $\E\left[\frac{|\{j\in\hat{S}\cap \h_0\}|}{|\hat{S}|+1/q} \right]\leq q$ (for $q \in (0,1)$); see~\cite{candes2018panning}. Here $\h_0$ is the set of null variables that are conditionally independent of the response given other predictor variables. If we select
$$\hat{S}_+=\left\{j: W_j\geq \tau_+ :=\min \Big\{t>0:\frac{1+ \#\{j:W_j\leq -t\}}{\#\{j:W_j\geq t\}}\leq q \Big\}\right\},$$ then the FDR is controlled, i.e.,
$\E\left[\frac{|\{j\in\hat{S}\cap \h_0\}|}{\max\{|\hat{S}|, 1\}} \right]\leq q$; see~\cite{candes2018panning}.
Note that the requirement $Y\in\R$ is also not required here.

\section{Proofs}\label{sec:Proofs}

\subsection{Proof of Lemma~\ref{lem:Well-Def}}\label{sec:Well-Def}
First observe that by Assumption \ref{assump:characteristic_kernel}, $\E \left[\|k(Y_1,\cdot)\|_{\h_\Y}^2\right] =\E \left[k(Y,Y)\right]<\infty$.
By the Cauchy-Schwarz's inequality, $\E [\E [k(Y_1,Y_1')|X]]$ is upper bounded by 
$$\E \Big [\E [\|k(Y_1,\cdot)\|_{\h_\Y}\cdot  \|k(Y_1',\cdot)\|_{\h_Y}\ |X] \Big]\leq \E \left[\E \Big[\frac{\|k(Y_1,\cdot)\|_{\h_\Y}^2 + \|k(Y_1',\cdot)\|_{\h_\Y}^2}{2} \big|X\Big]\right]<\infty.$$
Similarly, $\E\left[\E[k(Y_2,Y_2')|X,Z]\right]<\infty$. Hence the numerator of $\rho^2(Y,Z|X)$ in \eqref{eta_formula} is finite.
The denominator is also finite by Assumption \ref{assump:characteristic_kernel}.

Next we show that $\E[{\rm MMD}^2({\delta_Y},P_{Y|X})]\neq 0$.
If $\E[{\rm MMD}^2({\delta_Y},P_{Y|X})]=0$, then equivalently $\E \|k(Y,\cdot)-\E [k(Y,\cdot)|X]\|^2_{\h_\Y}=0$. So, conditional on $X=x$, $k(Y,\cdot)=\E[k(Y,\cdot)|X=x]$ for $P_X$-a.e.~$x$. Since $k$ is characteristic, $y\mapsto k(y,\cdot)$ is injective, which implies that $Y|X=x$ is degenerate for $P_X$-a.e.~$x$. This is a contradiction to Assumption \ref{assump:nondegenrate}.
\qed

\subsection{Proof of Lemma~\ref{lem:eta-2}}\label{sec:eta-2}
Let us first try to explain the notation in definition~\eqref{eq:eta}. From the definition of the MMD, $\rho^2$ in~\eqref{eq:eta} can be re-expressed as: $\rho^2 = \frac{\E \left[ \|\mu_{P_{Y|XZ}} - \mu_{P_{Y|X}} \|_{\h_\Y}^2\right]}{\E \left[\|\mu_{\delta_Y} - \mu_{P_{Y|X}} \|_{\h_\Y}^2\right]}$ where the mean embeddings above have the following expressions:
\begin{eqnarray*}
	\mu_{\delta_Y}(\cdot) & = &  k(\cdot,Y), \\
	\mu_{P_{Y|X}} (\cdot) & = &  \E_{Y_1 \sim Y|X} [k(\cdot,Y_1)] = \E[k(\cdot,Y)|X], \\
	 \mu_{P_{Y|XZ}} (\cdot) & = &  \E_{Y_2 \sim Y|XZ} [k(\cdot,Y_2)] = \E[k(\cdot,Y)|X,Z],
\end{eqnarray*}
where the expectations should be understood as Bochner integrals (see e.g.,~\cite{Diestel1974,Dinculeanu2011}). Using the notation in the statement of the lemma, we have
\begin{eqnarray*}
\E \left[\|\mu_{\delta_Y} - \mu_{P_{Y|X}} \|_{\h_\Y}^2\right] & = & \E \left[\|\mu_{\delta_Y} \|_{\h_\Y}^2\right] + \E\left[\| \mu_{P_{Y|X}} \|_{\h_\Y}^2\right]  - 2 \E \left[\langle\mu_{\delta_Y},  \mu_{P_{Y|X}} \rangle_{\h_\Y}\right] \\
& = & \E[k(Y,Y)]+\E_X  \left[\E[k(Y_1,Y_1')|X]\right]-2\E_{Y,X} [\E[k(Y,Y_1)|X] ] \\
& = & \E[k(Y,Y)] - \E \left[\E[k(Y_1,Y_1')|X]\right],
\end{eqnarray*}
where $Y_1, Y_1' |X=x \stackrel{iid}{\sim} P_{Y|x}$, $Y_1,Y$ are conditionally independent given $X$, and the second equality follows from the observations: $\| \mu_{P_{Y|X}} \|_{\h_\Y}^2  = \E[k(Y_1,Y_1')|X], \; \mbox{and}\; \langle\mu_{\delta_Y},  \mu_{P_{Y|X}} \rangle_{\h_\Y} = \E[k(Y,Y_1)|X].$
Similarly, we can show that ${\rm MMD}^2(P_{Y|XZ},P_{Y|X})$ equals
\begin{eqnarray*}
& & \E\left[\|\E [k(Y,\cdot)|X,Z] - \E [k(Y,\cdot)|X]  \|^2_{\h_\Y} \right] \\
& = & \E\left[\E[k(Y_2,Y_2')|X,Z]\right]+ \E\left[\E[k(Y_1,Y_1')|X]\right] - 2\E\left[\E_{Y_1\sim Y|X,Y_2\sim Y|X,Z} k(Y_1,Y_2)\right]\\
& = & \E\left[\E[k(Y_2,Y_2')|X,Z]\right] - \E\left[\E[k(Y_1,Y_1')|X]\right].
\end{eqnarray*}
This proves the result.
\qed

%\begin{defn}[Rank-one operators]\label{defn:Rank-1-Op}
%Given $b \in \G$ and $a \in \F$, we define the {\it tensor product} $a \otimes b$ as a rank-one operator
%from $\G$ to $\F$, defined as:
%\begin{equation}
%(a \otimes b) g \mapsto \langle g, b\rangle_\G \, a, \qquad \forall \; g \in \G.
%\end{equation}
%\end{defn}
%This is a generalization of the standard outer product in linear algebra, $(ab^\top) g = (b^\top g) a$, if all three of $a, b, g$ were vectors. We can easily show that $a \otimes b$ is a Hilbert-Schmidt operator; in fact $\|a \otimes b\|_\HS^2 = \|a\|_\F \, \|b\|_\G$. Given a second Hilbert-Schmidt operator $L \in \HS(\G, \F)$, we have the result: $\langle L, a \otimes b\rangle_\HS = \langle a, L b\rangle_\F$. Thus, for $u \in \F$ and $v \in \G$, 
%\begin{equation}\label{eq:HS-Rank-1-1}
%\langle u \otimes v, a \otimes b\rangle_\HS = \langle a, u\rangle_\F \, \langle b,v\rangle_\G.
%\end{equation}

\subsection{Proof of Theorem~\ref{thm:Eta}}\label{sec:Eta}
The proof is divided into three parts.
    
\noindent\textbf{Step 1.} We will first show that $\rho^2(Y,Z|X)\in[0,1]$. Observe that $\rho^2 \geq 0$ is clear. To show that $\rho^2 \le 1$, we will use the following result --- a version of Jensen's inequality (see~\cite[Theorems 3.6 and 3.8]{perlman1974jensen} for its proof).
\begin{lemma}[Jensen's inequality]\label{lem:Jensens} Let $\W$ be a real Banach space, $W$ be a Bochner integrable random variable taking value in $\W$, and $g:\W \to \mathbb{R}$ be a lower-semicontinuous convex function such that $g(W)$ is integrable. Then 
    $$g(\E W)\leq \E g(W).$$
    If $g$ is strictly convex\footnote{By strictly convexity we mean: $g(\lambda x +(1-\lambda) y) < \lambda g(x) + (1-\lambda) g(y),\ \forall x\neq y,\lambda\in(0,1)$.}
    and $\p(W=\E W)<1$, then $g(\E W) < \E g(W)$.
\end{lemma}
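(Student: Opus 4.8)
The plan is to reduce the Banach-space statement to scalar inequalities by testing $g$ against its continuous affine minorants, using the single defining property of the Bochner integral that it commutes with every bounded linear functional (the identity $\varphi(\E W)=\E[\varphi(W)]$ for $\varphi\in\W^*$, as already recorded in~\cref{rem:C-Cov}). First I would dispatch the non-strict inequality. Since $g$ is real-valued it is proper, and being lower-semicontinuous and convex it coincides with its biconjugate by the Fenchel--Moreau theorem; hence $g$ is the pointwise supremum of its continuous affine minorants,
\[
g(w)=\sup_{\varphi\in\W^*:\,g^*(\varphi)<\infty}\bigl(\varphi(w)-g^*(\varphi)\bigr),
\]
where $g^*$ denotes the convex conjugate and each $\ell_\varphi:=\varphi(\cdot)-g^*(\varphi)$ satisfies $\ell_\varphi\le g$ by Young's inequality. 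For every admissible $\varphi$, the Bochner property gives $\ell_\varphi(\E W)=\varphi(\E W)-g^*(\varphi)=\E[\varphi(W)]-g^*(\varphi)=\E[\ell_\varphi(W)]\le\E[g(W)]$, the last step because $\ell_\varphi\le g$. Taking the supremum over $\varphi$ then yields $g(\E W)\le\E[g(W)]$.

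For the strict statement I would instead exhibit a single affine function \emph{touching} $g$ at the point $\E W$, i.e. a subgradient $\varphi\in\partial g(\E W)$ with $g(w)\ge g(\E W)+\varphi(w-\E W)$ for all $w$. Strict convexity then upgrades this minorant to a strict one away from the contact point: were the supporting affine map to agree with $g$ at some $w_0\neq\E W$, evaluating at the midpoint $\tfrac12(\E W+w_0)$ and playing strict convexity of $g$ against affineness of the minorant would give a contradiction. Consequently the random variable $g(W)-g(\E W)-\varphi(W-\E W)$ is nonnegative always and strictly positive on the event $\{W\neq\E W\}$, which has positive probability by hypothesis; taking expectations and using $\E[\varphi(W-\E W)]=\varphi(\E W-\E W)=0$ converts this into $\E[g(W)]-g(\E W)>0$.

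I expect the one genuine obstacle to be the existence of the subgradient $\varphi\in\partial g(\E W)$, which is exactly where infinite-dimensionality bites: a finite lower-semicontinuous convex function on an infinite-dimensional Banach space need not be continuous, and its subdifferential can be empty at individual points. The natural route is a Hahn--Banach separation of the boundary point $(\E W,g(\E W))$ from the convex epigraph of $g$, followed by normalizing the separating functional into the graph form of a continuous affine minorant. In the intended application this step is free, since there $\W=\h_\Y$ and $g(w)=\|w\|_{\h_\Y}^2$ is everywhere continuous and strictly convex (its subgradient at $\E W$ is simply $2\E W$ under the Riesz identification); in the stated full generality one appeals to the Br\o ndsted--Rockafellar theorem to secure subdifferentiability, as in the cited~\cite{perlman1974jensen}.
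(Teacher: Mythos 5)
Your proposal is correct in substance, but note first that the paper does not prove \cref{lem:Jensens} at all: it states the lemma and cites \cite[Theorems 3.6 and 3.8]{perlman1974jensen} for the proof, then applies it only to the map $f \mapsto \|f\|_{\h_\Y}^2$ on the Hilbert space $\h_\Y$ (whose strict convexity is checked in a footnote). So any self-contained argument is by definition a different route. Your route is sound: for the non-strict inequality, $g$ is proper (being real-valued), lower-semicontinuous and convex, so Fenchel--Moreau gives $g=g^{**}$; each continuous affine minorant $\ell_\varphi$ commutes with the Bochner integral, $\ell_\varphi(\E W)=\E[\ell_\varphi(W)]\le\E[g(W)]$ (note $\varphi(W)$ is integrable because $\E\|W\|<\infty$), and taking the supremum finishes it. For the strict inequality, once a subgradient $\varphi\in\partial g(\E W)$ exists, your midpoint argument correctly shows the supporting affine function lies strictly below $g$ off the contact point, and integrating the a.s.\ nonnegative, positive-with-positive-probability gap $g(W)-g(\E W)-\varphi(W-\E W)$ gives strictness.

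The one place you go astray is the diagnosis of the ``genuine obstacle.'' Your premise is false: on a Banach space, a convex lower-semicontinuous function that is \emph{finite everywhere} is automatically continuous. Indeed, write $\W=\bigcup_n\{w\in\W: g(w)\le n\}$; each sublevel set is closed and convex, Baire's category theorem (here is where completeness of $\W$ enters) gives one with nonempty interior, so $g$ is locally bounded above and hence locally Lipschitz. Continuity makes the epigraph have nonempty interior, so the Hahn--Banach supporting hyperplane at $(\E W, g(\E W))$ exists and cannot be vertical, i.e.\ $\partial g(\E W)\neq\emptyset$ at every point, not just at points of a dense set. By contrast, the fix you propose for full generality --- Br\o ndsted--Rockafellar --- would \emph{not} suffice as stated: it yields only density of the points of subdifferentiability, not a subgradient at the particular point $\E W$ that your argument requires. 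So the obstacle you worried about is illusory precisely because of the lemma's hypotheses (completeness of $\W$ plus finiteness of $g$), and with the Baire-category observation in place of the Br\o ndsted--Rockafellar appeal, your proof is complete and fully general.
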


Now, observe that,
    $$\begin{aligned}
        \E[{\rm MMD}^2({\delta_Y},P_{Y|X})] &= \E \|k(Y,\cdot)-\E [k(Y,\cdot)|X]\|_{\h_\Y}^2\\
        &=\E\left[\E[\|k(Y,\cdot)-\E(k(Y,\cdot)|X)\|^2_{\h_\Y}|X,Z] \right]\\
        &\geq \E\left[\|\E[k(Y,\cdot)-\E(k(Y,\cdot)|X)|X,Z  ]\|^2_{\h_\Y} \right]\quad (\mathrm{Jensen's \; inequality})\\
        &=\E\left[\|\E(k(Y,\cdot)|X,Z)-\E(k(Y,\cdot)|X)  \|^2_{\h_\Y} \right]\\
        &=\E[{\rm MMD}^2(P_{Y|XZ},P_{Y|X})].
    \end{aligned}$$
    where we have applied the above Jensen's inequality to the function\footnote{We have the following lemma: \begin{lemma}
If $\mathcal{H}$ be a real Hilbert space, then $f:\h \to \R$ defined as $f: x \mapsto \|x\|_\h^2$ is strictly convex.
\end{lemma}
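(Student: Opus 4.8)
The plan is to verify the definition of strict convexity directly, exploiting the bilinearity of the inner product underlying the squared norm. Writing $f(x) = \langle x, x\rangle_\h = \|x\|_\h^2$, I would fix arbitrary $x \neq y$ in $\h$ and $\lambda \in (0,1)$, set $z := \lambda x + (1-\lambda)y$, and study the gap
$$\lambda f(x) + (1-\lambda) f(y) - f(z),$$
aiming to show it is strictly positive, which is precisely the strict-convexity inequality in the sense defined in the footnote to Lemma~\ref{lem:Jensens}.

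First I would expand $f(z) = \|\lambda x + (1-\lambda)y\|_\h^2$ by bilinearity and symmetry of the inner product into $\lambda^2\|x\|_\h^2 + 2\lambda(1-\lambda)\langle x,y\rangle_\h + (1-\lambda)^2\|y\|_\h^2$. Subtracting this from $\lambda\|x\|_\h^2 + (1-\lambda)\|y\|_\h^2$ and collecting coefficients using the elementary identities $\lambda - \lambda^2 = \lambda(1-\lambda)$ and $(1-\lambda)-(1-\lambda)^2 = \lambda(1-\lambda)$, the gap factors cleanly as
$$\lambda(1-\lambda)\bigl(\|x\|_\h^2 - 2\langle x,y\rangle_\h + \|y\|_\h^2\bigr) = \lambda(1-\lambda)\,\|x-y\|_\h^2.$$

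The final step is to observe that $\lambda(1-\lambda) > 0$ whenever $\lambda \in (0,1)$, and that $\|x-y\|_\h^2 > 0$ precisely because $x \neq y$; hence the gap is strictly positive, giving $f(z) < \lambda f(x) + (1-\lambda)f(y)$ as required. I do not anticipate any genuine obstacle here: the factorization is a purely algebraic consequence of the inner-product structure, and the only point demanding attention is that the \emph{strictness} of the inequality rests entirely on $x \neq y$ ensuring $\|x-y\|_\h^2$ does not vanish, which matches exactly the hypothesis in the definition of strict convexity invoked for Lemma~\ref{lem:Jensens}.
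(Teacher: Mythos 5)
Your proposal is correct and follows essentially the same route as the paper's own proof: both expand $\|\lambda x + (1-\lambda)y\|_\h^2$ via bilinearity of the inner product, reduce the strict-convexity inequality to $0 < \|x-y\|_\h^2$ (you by factoring the gap as $\lambda(1-\lambda)\|x-y\|_\h^2$, the paper by an equivalent chain of inequalities), and conclude from $x \neq y$ and $\lambda(1-\lambda) > 0$.
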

\begin{proof} Let $x\neq y \in \h$ and $\lambda \in (0,1)$. Then, $f(\lambda x +(1-\lambda) y) < \lambda f(x) + (1-\lambda) f(y)$ if and only if 
    $$\begin{aligned}
            &\quad \|\lambda x +(1-\lambda) y\|^2_\h< \lambda \|x\|^2_\h + (1-\lambda) \|y\|^2_\h\\
        &\Leftrightarrow 2\lambda(1-\lambda)\langle x,y\rangle_\h < \lambda(1-\lambda)(\|x\|^2_\h+\|y\|_\h^2)\\
        &\Leftrightarrow 0<\|x-y\|^2_\h,
    \end{aligned}$$
    which holds as $x\neq y$. This proves the claim.
\end{proof}}   $g: f \mapsto \|f\|_{\h_\Y}^2$ and $W := k(Y,\cdot)-\E [k(Y,\cdot)|X]$.
Hence $\rho^2\leq 1$. 

\noindent\textbf{Step 2.} Next we show that $\rho^2 = 1$ if and only if $Y$ is a measurable function of $Z$ and $X$.

If $\rho^2 = 1$, then for a.e.~$x,z$, the above Jensen's inequality attains equality, which means that (Lemma~\ref{lem:Jensens}):
    $$\p\left( k(Y,\cdot) = \E_{Y|X=x,Z=z}k(Y,\cdot) \big|X=x,Z=z\right)=1.$$
Hence given $X=x,Z=z$, $Y$ is degenerate. Thus, $Y$ is a measurable function of $X,Z$ (Remark \ref{footnote:degenerate_imply_function}).

Conversely, if $Y=f(X,Z)$ for some measurable function $f$, then
    $$\begin{aligned}
        \E[{\rm MMD}^2(P_{Y|XZ},P_{Y|X})] &=\E[{\rm MMD}^2(\delta_{f(X,Z)},P_{Y|X})] =\E[{\rm MMD}^2(\delta_{Y},P_{Y|X})],
    \end{aligned}$$
 and so $\rho^2(Y,Z|X)=1$.

\noindent\textbf{Step 3.} Now we have to show that $\rho^2(Y,Z|X) = 0$ if and only if $P_{Y|XZ}=P_{Y|X}$ a.s.

If $P_{Y|XZ}=P_{Y|X}$ a.s., then $\E[{\rm MMD}^2(P_{Y|XZ},P_{Y|X})]  = 0$, and thus $\rho^2(Y,Z|X)=0$.

Conversely, if $\rho^2(Y,Z|X) = 0$, since $k$ is characteristic, $P_{Y|XZ}=P_{Y|X}$ a.s. \qed

%\subsection{Proof of Proposition \ref{prop:class_parcor} and \ref{prop:monotone_lambda}}\label{pf:class_parcor}

\subsection{Proof of~\cref{prop:class_parcor}}\label{pf:1}
For notational clarity, write $r \equiv \rho_{YZ\cdot X}$. By assumption, there exist $\mu_1,\mu_2,\sigma_1,\sigma_2$ such that:
$$(Y,Z)^\top |X \sim N\left(\begin{pmatrix}
    \mu_1\\
    \mu_2
\end{pmatrix},\begin{pmatrix}
    \sigma_1^2 & r\sigma_1\sigma_2\\
    r\sigma_1\sigma_2 & \sigma_2^2\\
\end{pmatrix} \right).$$
Further, $Y|X,Z\sim N\left(\mu_1 + r\frac{\sigma_1}{\sigma_2}(Z-\mu_2),(1-r^2)\sigma_1^2\right)$.
Letting $\xi \sim N(0,2\sigma_1^2)$,  $\rho^2(Y,Z|X)$ can be simplified to
$$\rho^2(Y,Z|X)=\frac{\E [h_3 (\sqrt{1-r^2} |\xi|)] - \E [h_3 (|\xi|)]}{h_3(0) - \E [h_3(|\xi|)]},$$
which is strictly increasing in $r^2$ (as $h_3$ is strictly decreasing), if $\sigma_1^2={\rm Var}(Y|X)$ is fixed. 

Note that $\rho^2(Y,Z|X)=0$ if and only if $r=0$. Also, $\rho^2(Y,Z|X)=1$ if and only if $r^2=1$.

For the linear kernel, $h_3(u) = -u^2/2$ (here $u \ge 0$), and therefore $\rho^2(Y,Z|X)=r^2$.
\qed

\subsection{Proof of~\cref{prop:monotone_lambda}}\label{pf:2}
%By decreasing (resp. increasing) we mean nonincreasing (resp. nondecreasing). We first state and prove the following lemma.
\begin{lemma}\label{lem:sto_order}
    Suppose $Y\overset{d}{=}-Y$. Let $\xi$ be an independent noise which is symmetric and unimodal about 0. Then for fixed $c\geq 0$, $\p\left(|\lambda Y + \xi |\leq c \right)$ is a nonincreasing function of $\lambda$, for $\lambda \in [0,\infty)$.
\end{lemma}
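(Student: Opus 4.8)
The plan is to condition on $Y$ and reduce the lemma to a one-dimensional monotonicity statement about the noise $\xi$ alone. Writing $g(t) := \p(|t + \xi| \le c) = \p(\xi \in [-c - t, c - t])$, the independence of $Y$ and $\xi$ gives $\p(|\lambda Y + \xi| \le c) = \E[g(\lambda Y)]$. Thus it suffices to establish two properties of $g$: (a) $g$ is even, and (b) $g$ is nonincreasing on $[0, \infty)$ (equivalently, nonincreasing in $|t|$). Granting these, $g(\lambda Y) = g(\lambda |Y|)$ by (a), and for each fixed realization of $|Y| \ge 0$ the map $\lambda \mapsto g(\lambda |Y|)$ is nonincreasing on $[0,\infty)$ by (b), since $\lambda |Y|$ is nondecreasing in $\lambda$; taking expectations preserves the monotonicity, which is exactly the claim. (I note that the hypothesis $Y \overset{d}{=} -Y$ is not strictly needed once $g$ is known to be even, but it is harmless to retain.)

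For (a), I would use only that $\xi$ is symmetric: since $\xi \overset{d}{=} -\xi$, we have $g(-t) = \p(\xi \in [-c + t, c + t]) = \p(-\xi \in [-c+t, c+t]) = \p(\xi \in [-c - t, c - t]) = g(t)$.

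The crux is (b), and I expect this to be the main obstacle, mainly because the footnote's definition of unimodality permits an atom at the mode and only controls the absolutely continuous part. Using that decomposition, I would write $\xi$ as a mixture placing mass $p$ at $0$ and an absolutely continuous part whose density $f_1$ is even and nonincreasing on $[0,\infty)$ (even because $\xi$ is symmetric). Then $g(t) = p\,\mathbbm{1}(|t| \le c) + (1-p)\, G(t)$ with $G(t) = \int_{-c-t}^{c-t} f_1(u)\, du$. The atomic term $p\,\mathbbm{1}(|t| \le c)$ is manifestly nonincreasing in $|t|$. For the absolutely continuous term, differentiating in $t \ge 0$ gives $G'(t) = f_1(c+t) - f_1(c - t)$ (using $f_1(-c-t) = f_1(c+t)$); since $c + t \ge |c - t| \ge 0$ and $f_1$ is nonincreasing in the absolute value of its argument, $f_1(c+t) \le f_1(c-t)$, whence $G'(t) \le 0$. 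Hence $g$ is nonincreasing on $[0,\infty)$, which, combined with (a), yields that $g$ is nonincreasing in $|t|$.

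The remaining points are routine: justifying the reduction to the one-dimensional $g$ via independence and Fubini, and verifying the interval inequality $c + t \ge |c - t|$ (immediate from $(c+t)^2 - (c-t)^2 = 4ct \ge 0$). The only conceptual ingredient is the window-integral monotonicity in (b), which is precisely the intuitive fact that a window of fixed length $2c$ captures the most mass of a symmetric unimodal law when centered at the mode and steadily loses mass as it slides away; the atom is handled separately because it only reinforces this monotonicity.
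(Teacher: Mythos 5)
Your proof is correct, and at its core it follows the same route as the paper's: condition on $Y$ via independence and reduce the claim to the monotonicity of the length-$2c$ window probability of $\xi$ as the window slides away from the mode. The two write-ups diverge in how that reduction is finished, and both differences are in your favor. First, the paper folds the negative values of $Y$ into the positive ones using the hypothesis $Y \overset{d}{=} -Y$, writing $\p\left(|Y+\xi|\le c\right) = \p(Y=0)\,\p(|\xi|\le c) + 2\int_{(0,\infty)} h(y)\, dP_Y(y)$ with $h(y) := \p\left(\xi\in[y-c,y+c]\right)$, whereas you instead observe that the window function $g$ is even as a consequence of the symmetry of $\xi$ alone, so that $g(\lambda Y) = g(\lambda|Y|)$; your remark that the hypothesis $Y\overset{d}{=}-Y$ is therefore dispensable is correct. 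Second, the paper simply asserts that $h$ is nonincreasing on $[0,\infty)$, while you actually prove it from the footnote's definition of unimodality, splitting off the atom at $0$ and differentiating the absolutely continuous part; this is exactly the step where unimodality is consumed, and supplying it makes the argument self-contained. Two small points of rigor if you polish this: (i) the evenness of $f_1$ deserves the one-line justification that subtracting the symmetric atom $p\delta_0$ from the symmetric law of $\xi$ leaves a symmetric absolutely continuous part (the case $p=1$ being trivial); (ii) since $f_1$ is only defined a.e.\ and $F_1$ need not be differentiable everywhere, the identity $G'(t) = f_1(c+t)-f_1(c-t) \le 0$ should be read as holding for a.e.\ $t\ge 0$, which suffices because $G$ is absolutely continuous, so $G(t_2)-G(t_1)=\int_{t_1}^{t_2}G'(s)\,ds\le 0$ for $0\le t_1<t_2$.
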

\begin{proof}
    It suffices to show that for any $c \geq 0$, (a) $\p(|\xi|\leq c)\geq  \p\left(|Y + \xi|\leq c \right)$, and (b) for $\lambda>1$, $\p\left(|\lambda Y + \xi |\leq c \right)\leq \p\left(|Y + \xi|\leq c \right)$. Note that
    $$\begin{aligned}
        \p\left(|Y + \xi|\leq c \right) =\p(Y=0)\p(\xi \in [-c,c]) + 2\int_{(0,\infty)}\p(\xi \in[y-c,y+c]) dP_Y(y),
    \end{aligned}$$
    where we have used the fact $\p(\xi\in[-y-c,-y+c]) = \p (\xi \in [y-c,y+c]) $, and that $Y$ has a symmetric distribution.
Note that $h(y) :=\p (\xi \in [y-c,y+c])$ is a nonincreasing function on $[0,\infty)$. Hence,
    $$\begin{aligned}
        \p\left(|Y + \xi |\leq c \right) &= \p(Y=0)\p(\xi \in [-c,c]) + 2\E[h(Y)1_{Y>0}]\\
        &\geq \p(Y=0)\p(\xi \in [-c,c]) + 2\E[h(\lambda Y)1_{Y>0}]=\p\left(|\lambda Y + \xi|\leq c \right).
    \end{aligned}$$
    Also, $\p\left(|Y + \xi |\leq c \right) \leq \p(Y=0)\p(\xi \in [-c,c]) + 2\E[h(0)1_{Y>0}]=\p(|\xi|\leq c)$. These complete the proof of the lemma.
\end{proof}

Now let us prove~\cref{prop:monotone_lambda}. Recall the notation $\xi \sim \epsilon - \epsilon'$. Under the assumption of Proposition~\ref{prop:monotone_lambda},
$$\rho^2(Y,Z|X)=\frac{\E [h_3 (| \xi |)] - \E\left[\E [h_3\left(|\lambda [f(X,Z_1) - f(X,Z_2)] + \xi| \right)|X]\right] }{h_3(0) - \E\left[\E [h_3\left(|\lambda [f(X,Z_1) - f(X,Z_2)] + \xi| \right)|X]\right]},$$
By Lemma \ref{lem:sto_order} applied to $Y = f(X,Z_1) - f(X,Z_2)$, we know that conditional on $X$, $|\lambda_1 [f(X,Z_1) - f(X,Z_2)] + \xi|$ is stochastically less than $|\lambda_2 [f(X,Z_1) - f(X,Z_2)] + \xi|$ whenever $\lambda_1\leq \lambda_2$.
Since $h_3$ is a decreasing function, $\E\left[\E [h_3\left(|\lambda [f(X,Z_1) - f(X,Z_2)] + \xi| \right)|X]\right]$ is decreasing in $\lambda$, and hence 
$\rho^2(Y,Z|X)$ is increasing in $\lambda$.\qed

\subsection{Proof of Proposition \ref{prop:inv_conti}}\label{sec:pf_inv_conti}
From the form of $\rho^2(Y,Z|X)$ in~\cref{lem:eta-2} we see that $\rho^2(Y,Z|X)$ does not change by bijectively transforming $X$ and $Z$.

With the kernel given in \eqref{eq:h-1-2-3} and using the notation in~\cref{lem:eta-2}, $$\rho^2(Y,Z|X) = \frac{\E  h_3(\|Y_2-Y_2'\|)  - \E  h_3(\|Y_1-Y_1'\|)}{h_3(0) - \E h_3(\|Y_1-Y_1'\|)}.$$
Hence replacing $Y_i,Y_i'$, for $i=1,2$, by $OY_i + b,OY_i'+b$, where $O$ is an orthogonal matrix and $b$ is a vector, does not change $\rho^2(Y,Z|X)$. This shows the desired  invariance.

Next we show the continuity result. By Skorokhod's representation theorem, we can assume the convergence in distribution is actually a.s.~convergence.
The boundedness of $1+\varepsilon$ moments in our assumption implies uniform integrability for large $n$. Together with the a.s.~convergence, we have $L_1$ convergence, so $\rho^2(Y_n,Z_n|X_n)\to\rho^2(Y,Z|X)$ follows.
\qed

\subsection{Proof of~\cref{lem:AC19}}\label{sec:Cha-Stat}
We first show the general case, where $k(y_1,y_2) :=\int 1_{y_1\geq t} 1_{y_2\geq t} d P_{Y}(t)$, $y_1,y_2  \in \R$.
We use the equivalent formulation \eqref{eta_formula} to calculate $\rho^2$. For $Y_2, Y_2'$ as in~\eqref{eq:Y_2},
\begin{equation}\label{eq:connect_chatterjee}
    \begin{aligned}
        \E\left[\E[k(Y_2,Y_2')|X,Z] \right] &= \E\left[\E\left[\int 1_{Y_2\geq t}1_{Y_2'\geq t}d P_Y(t) \Big|X,Z\right] \right]\\
        &=\E\left[\int \p\left(Y\geq t|X,Z\right)^2d P_Y(t) \right].
    \end{aligned}
\end{equation}
Likewise, $\E\left[\E[k(Y_1,Y_1')|X] \right] = \E\left[\int \p\left(Y\geq t|X\right)^2d P_Y(t) \right]$. For $t \in \R$, let~$g_t(X,Z) :=\p\left(Y\geq t|X,Z\right)$. Then, we have,
$$\begin{aligned}
    \E\left[\E[k(Y_2,Y_2')|X,Z]\right] - \E\left[\E[k(Y_1,Y_1')|X]\right]&=\int \E\left[g_t(X,Z)^2 - \left(\E[g_t(X,Z)|X]\right)^2 \right]d P_Y(t)\\
    &=\int \E\left[ {\rm Var}(g_t(X,Z)|X) \right]d P_Y(t)\\
    &=\int \E\left[ {\rm Var}(\p(Y\geq t|X,Z)|X) \right]d P_Y(t),
\end{aligned}$$
which is exactly the numerator of $T(Y, Z| X)$. The denominator of $\rho^2$ in~\eqref{eta_formula} is
$$\begin{aligned}
    \E[k(Y,Y)]-\E[\E[k(Y_1,Y_1')|X]] &=\E\left[\int 1_{Y\geq t}^2 dP_Y(t) \right] - \E\left[\int \p\left(Y\geq t|X\right)^2d P_Y(t) \right]\\
    &=\int \E\left[1_{Y\geq t}^2 - \p\left(Y\geq t|X\right)^2 \right] d P_Y(t)\\
    &=\int \E\left[{\rm Var}(1_{Y\geq t}|X) \right]d P_Y(t),
\end{aligned}$$
which coincides with the denominator of $T(Y,Z|X)$.

Now, suppose $Y$ is continuous, and $k(y,y') :=\frac{1}{2}\left(|y|+|y'|-|y-y'|\right)=\min\{y,y'\}=\int_{0}^1 1_{y\geq t}1_{y'\geq t}dt$ for all $y,y'\in[0,1]$.
The right-inverse $F_Y^{-1}(t):=\inf \{y : F_Y(y)\geq t\}$ is non-decreasing and satisfies
$F_Y^{-1}(F_Y(Y))\overset{}{=}Y$ a.s. Let $Y'$ be an independent copy of $Y$. Then,
$$\begin{aligned}
    \E\left[\E[k(F_Y(Y_2),F_Y(Y_2'))|X,Z] \right] &= \E\left[\E\left[\int_0^1 1_{F_Y(Y_2)\geq t}1_{F_Y(Y_2')\geq t}dt \Big|X,Z\right] \right]\\
    &=\E\left[\int_0^1 \p\left(F_Y(Y)\geq t|X,Z\right)^2d t \right]\\
    &=\E \left[\p\left(F_Y(Y)\geq F_Y(Y')|X,Z\right)^2 \right]\\
    &=\E \left[\p\left(Y\geq Y'|X,Z\right)^2 \right]\\
    &=\E\left[\int \p\left(Y\geq t|X,Z\right)^2d P_Y(t) \right].
\end{aligned}$$
The rest of the proof is the same as that after \eqref{eq:connect_chatterjee}.
\qed

\subsection{Proof of \cref{prop:concen_moment}}\label{sec:pf_concen_moment}
By \cite[Proposition 3.1]{deb2020kernel}, both $\sqrt{n}\left(\frac{1}{n}\sum_i \sum_{(i,j)\in\mathcal{E}(\mathcal{G}_n^{\ddot{X}})} \frac{k (Y_i,Y_j)}{d_i^{\ddot{X}}} - \E k(Y_1,Y_{\ddot{N}(1)})  \right) $ and $ \sqrt{n}\left(\frac{1}{n}\sum_i \sum_{(i,j)\in\mathcal{E}(\mathcal{G}_n^{{X}})} \frac{k (Y_i,Y_j)}{d_i^{{X}}} - \E k(Y_1,Y_{{N}(1)})  \right)$ are $O_p(1)$.
Since we also have $\sqrt{n}\left(\frac{1}{n}\sum_{i=1}^n k(Y_i,Y_i) - \E k(Y_1,Y_1) \right) = O_p(1)$, the following holds:
$$\sqrt{n}\left(\hat{\rho}^2 (Y,Z|X)-\frac{\E k(Y_1,Y_{\ddot{N}(1)}) - \E k(Y_1,Y_{{N}(1)}) }{\E k(Y_1,Y_1) - \E k(Y_1,Y_{{N}(1)})} \right) = O_p(1).$$
\qed

\subsection{Proof of Theorem \ref{thm:conv_rate}}\label{sec:pf_conv_rate}
Let 
\begin{eqnarray*}
	Q_n & := & \frac{1}{n}\sum_{i=1}^n \sum_{(i,j)\in\mathcal{E}(\mathcal{G}_n^{\ddot{X}})} \frac{k (Y_i,Y_j)}{d_i^{\ddot{X}}} - \frac{1}{n}\sum_{i=1}^n \sum_{(i,j)\in\mathcal{E}(\mathcal{G}_n^{{X}})}\frac{k (Y_i,Y_j)}{d_i^{X}}, \\ S_n & := & \frac{1}{n}\sum_{i=1}^n k (Y_i,Y_i)- \frac{1}{n}\sum_{i=1}^n \sum_{(i,j)\in\mathcal{E}(\mathcal{G}_n^{{X}})}\frac{k (Y_i,Y_j)}{d_i^{X}},
\end{eqnarray*} 
and their population limits be 
$$Q:=\mathbb{E}\left[\mathbb{E}[k(Y_2,Y_2')|X,Z]\right] - \mathbb{E}\left[\mathbb{E}[k(Y_1,Y_1')|X]\right] ,
 \quad S:=\mathbb{E}[k(Y,Y)]-\mathbb{E}[\mathbb{E}[k(Y_1,Y_1')|X]].$$
Then $\hat{\rho^2}(Y,Z|X)=\frac{Q_n}{S_n}$, and $\rho^2(Y,Z|X)=\frac{Q}{S}$.
From the proof of \cite[Theorem 5.1]{deb2020kernel} and \cite[Corollary 5.1]{deb2020kernel}\footnote{Note that with the notation in \cite{deb2020kernel}, by Assumption \ref{assump:intrin_dim} on the intrinsic dimensionality of $X$, we have $N(\mu_X,B(x^*,t),\varepsilon,0)\leq C_1(t/\varepsilon)^d$, and {$t_n/K_n$} being bounded is implied by Assumption \ref{assump:knndegbound}
(in \cite{deb2020kernel}, $t_n$ is defined as the upper bound of vertex degree (including both in- and out-degrees for directed graphs) in the $K$-NN graph). Hence the same argument in \cite[Section 5.1]{deb2020kernel} works through.}, $\frac{1}{n}\sum_i \sum_{(i,j)\in\mathcal{E}(\mathcal{G}_n^{\ddot{X}})} \frac{k (Y_i,Y_j)}{d_i^{\ddot{X}}}$ and $\frac{1}{n}\sum_i \sum_{(i,j)\in\mathcal{E}(\mathcal{G}_n^{{X}})}\frac{k (Y_i,Y_j)}{d_i^{X}}$
are within $O_p\left(\sqrt{\nu_n}\right)$ distance of their theoretical limits.
Since $\frac{1}{n}\sum_{i=1}^n k(Y_i,Y_i)$ is also within $O_p(n^{-1/2})\lesssim O_p\left(\sqrt{\nu_n}\right)$ distance from its theoretical limit,
$Q_n$ (resp. $S_n$) is within $O_p\left(\sqrt{\nu_n}\right)$ distance to $Q$ (resp. $S$).
Hence
$$\left|\hat{\rho^2}(Y,Z|X) - \rho^2(Y,Z|X)\right| = \left|\frac{S\cdot Q_n - Q\cdot S_n}{S \cdot S_n}\right| = \left|\frac{S \cdot O_p\left(\sqrt{\nu_n} \right) - Q \cdot O_p\left(\sqrt{\nu_n} \right)}{S \cdot S_n} \right| = O_p\left(\sqrt{\nu_n} \right).$$
The last equality follows as $S> 0$.
\qed

\subsection{Proof of~\cref{lem:eta}}\label{sec:eta}
We use the same notation as in~\cite{sheng2019distance}. Define $S_X: \mathcal{H}_\mathcal{X}\to\mathbb{R}^n$ such that $S_X: f\mapsto (f(X_1),\cdots,f(X_n))^\top$.
Then, for $\alpha = (\alpha_1, \ldots, \alpha_n) \in \R^n$, $\langle S_X f,\alpha \rangle_{\mathbb{R}^n}=\sum_i \alpha_i f(X_i)=\langle f,S_X^* \alpha\rangle_{\mathcal{H}_\mathcal{X}}$ where
we have $S_X^*: \mathbb{R}^n\to \mathcal{H}_\mathcal{X}$ given by $S_X^*: \alpha\mapsto \sum_i \alpha_ik_{\mathcal{X}}(X_i,\cdot)$.
Similarly define $S_Y, S_Y^*,S_Z$ and $S_Z^*$. For $f\in\mathcal{H}_\mathcal{X}$,
$$\begin{aligned}
    \hat{C}_{YX}f &= \frac{1}{n}\sum_{i=1}^n k (Y_i,\cdot)f(X_i) - \left(\frac{1}{n}\sum_{i=1}^n k (Y_i,\cdot) \right) \left(\frac{1}{n}\sum_{i=1}^n f(X_i) \right) \\
    &=\frac{1}{n} S_Y^* S_X f - \frac{1}{n^2}S_Y^* \mathbf{1}\mathbf{1}^\top S_X f \\
    &=\frac{1}{n} S_Y^*(I-\frac{1}{n}\mathbf{1}\mathbf{1}^\top)S_Xf \;=\; \frac{1}{n}S_Y^* H S_X f.
\end{aligned}$$
Hence $\hat{C}_{YX}=\frac{1}{n}S_Y^* H S_X$.
Similarly, we have $$\hat{C}_{Y\ddot{X}}=\frac{1}{n}S_Y^* H S_{\ddot{X}}, \qquad \hat{C}_X=\frac{1}{n}S_X^* H S_X, \qquad \hat{C}_{\ddot{X}}=\frac{1}{n}S_{\ddot{X}}^* H S_{\ddot{X}}.$$
Note that for all $\alpha \in \R^n$, $S_X S_X^* \alpha = \sum_{i=1}^n \alpha_i S_X k_{\mathcal{X}}(X_i,\cdot)=K_X\alpha$, where $K_X$ is the kernel matrix with $(K_X)_{ij}=k_\mathcal{X}(X_i,X_j)$.
Hence $S_XS_X^* = K_X$. Letting $e_i$ denote the $i$-th unit vector in $\R^n$, for $i=1,\ldots, n$, we have 
$$\begin{aligned}
    \hat{C}_{YX}(\hat{C}_{{X}} + \varepsilon I)^{-1} (k_{{\mathcal{X}}}(X_i,\cdot)-\hat{\mu}_X) &= \hat{C}_{YX}(\hat{C}_{{X}}+\varepsilon I)^{-1} S_X^* (e_i-\frac{1}{n}\mathbf{1})\\
    &=\frac{1}{n}S_Y^* H S_X \left(\frac{1}{n}S_X^* H S_X + \varepsilon I\right)^{-1} S_X^* He_i\\
    &=\frac{1}{n}S_Y^* H S_X S_X^* \left(\frac{1}{n} H S_XS_X^* + \varepsilon I\right)^{-1}  He_i%\\
    %&=\frac{1}{n}S_Y^* H K_X \left(\frac{1}{n} H K_X + \varepsilon I\right)^{-1}  He_i
\end{aligned}$$    
where we have used the fact that, for operators $A$ and $B$, $(BA+\varepsilon I)^{-1} B = B(AB+\varepsilon I)^{-1}$, which holds by direct verification. Now, using $K_X = S_X S_X^*$, we can show that the right side of the above display equals
$$\begin{aligned}
S_Y^* H K_X \left( H K_X + n\varepsilon I\right)^{-1} H e_i
    &=S_Y^* H K_X \left( H^2 K_X + n\varepsilon I\right)^{-1} H e_i\\
    &=S_Y^* H K_X H\left( H K_X H + n\varepsilon I\right)^{-1} e_i\\
    &=S_Y^* \tilde{K}_X \left( \tilde{K}_X + n\varepsilon I\right)^{-1} e_i,
\end{aligned}$$
%In the third and seventh line we used the fact that $B(AB+\varepsilon I)^{-1}=(BA+\varepsilon I)^{-1} B$, which holds by direct verification. 
where $\tilde{K}_X := HK_X H$ is the centered kernel matrix. Similarly, we have 
$$\hat{C}_{Y\ddot{X}}(\hat{C}_{\ddot{X}} + \varepsilon I)^{-1} \left( k_{\ddot{\mathcal{X}}}(\ddot{X}_i,\cdot) - \hat{\mu}_{\ddot{X}}\right) = S_Y^* \tilde{K}_{\ddot{X}} \left( \tilde{K}_{\ddot{X}} + n\varepsilon I\right)^{-1} e_i.$$
Recalling that $M= \tilde{K}_X \big( \tilde{K}_X + n\varepsilon I\big)^{-1} - \tilde{K}_{\ddot{X}} \big( \tilde{K}_{\ddot{X}} + n\varepsilon I\big)^{-1}$, the numerator of $\tilde{\rho^2}$ reduces to $\sum_{i=1}^n \|S_Y^* M e_i\|^2_{\mathcal{H}_\mathcal{Y}}$. Note that, for $\alpha \in \R^n$, $\|S_Y^* \alpha\|^2_{\mathcal{H}_\mathcal{Y}} = \langle \sum_i \alpha_i k (Y_i,\cdot),\sum_j \alpha_j k (Y_j,\cdot)\rangle_{\mathcal{H}_\mathcal{Y}}=\sum_{i,j=1}^n \alpha_i\alpha_j k(Y_i,Y_j)=\alpha^\top K_Y\alpha$.
Thus, 
$$\sum_{i=1}^n \|S_Y^* M e_i\|^2_{\mathcal{H}_\mathcal{Y}} = \sum_{i=1}^n e_i^\top M^\top K_Y Me_i = {\rm Tr}(M^\top K_Y M) = {\rm Tr}(M^\top \tilde{K}_Y M),$$ where we have used the fact that $H^2 = H$.
Now the denominator of $\tilde{\rho^2}$ can be simplified as
$$ \sum_{i=1}^n \|S_Y^* H e_i - S_Y^* \tilde{K}_X \left( \tilde{K}_X + n\varepsilon I\right)^{-1} e_i\|^2_{\mathcal{H}_\mathcal{Y}}$$
where we have used the fact that $k (Y_i,\cdot)- \hat{\mu}_Y = S_Y^* H e_i$.
Letting $N_0 :=H-\tilde{K}_X \left( \tilde{K}_X + n\varepsilon I\right)^{-1}$, and recalling that $N=I-\tilde{K}_X \left( \tilde{K}_X + n\varepsilon I\right)^{-1}$, the above display can be expressed as
$$
 \sum_{i=1}^n \|S_Y^* N_0 e_i\|^2_{\mathcal{H}_\mathcal{Y}} =\sum_{i=1}^n e_i^\top N_0^\top K_Y N_0 e_i = {\rm Tr}(N_0^\top K_Y N_0) = {\rm Tr}(N^\top \tilde{K}_Y N).
$$
This proves the desired result. \qed
%$$\hat{\rho}^2=\frac{\frac{1}{n}\sum_{i=1}^n \|\hat{C}_{Y\ddot{X}}(\hat{C}_{\ddot{X}} + \varepsilon I)^{-1}k_{\ddot{\mathcal{X}}}(\ddot{X}_i,\cdot ) - \hat{C}_{YX}(\hat{C}_{{X}}+ \varepsilon I)^{-1}k_{{\mathcal{X}}}(X_i,\cdot)\|^2_{\mathcal{H}_\mathcal{Y}} }{\frac{1}{n}\sum_{i=1}^n \|k_{\mathcal{Y}}(Y_i,\cdot)- \hat{\mu}_Y -  \hat{C}_{YX}(\hat{C}_{{X}} + \varepsilon I)^{-1}k_{{\mathcal{X}}}(X_i,\cdot)\|^2_{\mathcal{H}_\mathcal{Y}}}=\frac{{\rm Tr}(M^\top \tilde{K}_Y M)}{{\rm Tr}(N^\top \tilde{K}_Y N)}.$$

\subsection{Proof of Proposition~\ref{prop:kernel_ridge}}\label{kernel_ridge}
    Recall that uncentered estimator $$\tilde{\rho^2_u} =\frac{{\rm Tr}(M^\top K_Y M)}{{\rm Tr}(N^\top K_Y N)},$$
    where $M= {K}_X \left( {K}_X + n\varepsilon I\right)^{-1} - K_{\ddot{X}} \left({K}_{\ddot{X}} + n\varepsilon I\right)^{-1}$ and $N=I-{K}_X \left({K}_X + n\varepsilon I\right)^{-1}$.
    By assumption we have $K_Y = \mathbf{Y} \mathbf{Y}^\top$. As $M,N$ are symmetric matrices,
    \begin{equation}\label{eq:ker_ridge}
        \tilde{\rho^2_u} =\frac{{\rm Tr}(M^\top \mathbf{Y} \mathbf{Y}^\top M)}{{\rm Tr}(N^\top \mathbf{Y} \mathbf{Y}^\top N)} = \frac{\|M\mathbf{Y}\|^2 }{\|N \mathbf{Y}\|^2 } = \frac{\big\|K_X (K_X + n\varepsilon I)^{-1} \mathbf{Y} -  K_{\ddot{X}}(K_{\ddot{X}} + n\varepsilon I)^{-1} \mathbf{Y}\big\|^2 }{\|\mathbf{Y} - K_X(K_X + n\varepsilon I)^{-1} \mathbf{Y}\|^2 }.
    \end{equation}
Kernel ridge regression yields $\hat{\mathbf{Y}} = K(K+\lambda I)^{-1}\mathbf{Y}$ for a generic kernel matrix and regularization parameter $\lambda$ (see e.g.,~\cite[Section 7.3.4]{kung_2014}). Hence the above display reduces to $$\tilde{\rho^2_u} = \frac{\|\hat{\mathbf{Y}}_x - \hat{\mathbf{Y}}_{xz}\|^2}{\|\hat{\mathbf{Y}}_x - \mathbf{Y}\|^2},$$ with the regularization parameter $\rho = n\varepsilon$.

\subsection{Proof of Proposition~\ref{prop:reduce_classi}}\label{sec:pf_reduce_classi}
The same argument as \eqref{eq:ker_ridge} shows that the centered estimator $\tilde{\rho^2}$ can be expressed as 
\begin{eqnarray*}
\frac{\big\|\tilde{K}_X (\tilde{K}_X + n\varepsilon I)^{-1} H\mathbf{Y} -  \tilde{K}_{\ddot{X}}(\tilde{K}_{\ddot{X}} + n\varepsilon I)^{-1} H\mathbf{Y}\big\|^2}{\|H\mathbf{Y} - \tilde{K}_X(\tilde{K}_X + n\varepsilon I)^{-1} H\mathbf{Y}\|^2} = \frac{\Big\|\big[(\tilde{K}_X + n\varepsilon)^{-1} - (\tilde{K}_{\ddot{X}}+n\varepsilon I)^{-1} \big]H\mathbf{Y}\Big\|^2}{\| (\tilde{K}_X+n\varepsilon I)^{-1}H\mathbf{Y}\|^2}.
\end{eqnarray*}
With linear kernels, $K_X = \mathbf{X}\mathbf{X}^\top$ and $K_{\ddot{X}} = \ddot{\mathbf{X}}\ddot{\mathbf{X}}^\top$.
    Let $\mathbf{Y}_c=H\mathbf{Y}$, $\mathbf{X}_c=H\mathbf{X}$, $\ddot{\mathbf{X}}_c=H\ddot{\mathbf{X}}$ be the centered versions
    of $\mathbf{Y}$, $\mathbf{X}$, $\ddot{\mathbf{X}}$ by subtracting the mean from all columns.
    The empirical classical partial correlation first fits two linear regressions of $\mathbf{Y}$ on $\mathbf{X}$ and $\mathbf{Z}$ on $\mathbf{X}$ (an intercept term is added so the design matrix is $[\mathbf{1}_n\ \mathbf{X}]$), and then outputs the correlation of the two resulting residuals $r_Y,r_Z$.
    Therefore, the partial correlation remains unchanged when we replace $\mathbf{Y},\mathbf{Z}$ by $\mathbf{Y}_c,\mathbf{Z}_c$.
    Without loss of generality, suppose $\mathbf{Y}=\mathbf{Y}_c$, $\mathbf{Z}=\mathbf{Z}_c$ have been centered.
    By the matrix identity $$(\mathbf{X}_c\mathbf{X}_c^\top +n\varepsilon I)^{-1} = \frac{1}{n\varepsilon} I - \frac{1}{n\varepsilon} \mathbf{X}_c(n\varepsilon I+ \mathbf{X}_c^\top \mathbf{X}_c)^{-1}\mathbf{X}_c^\top,$$
    we have 
\begin{eqnarray*}
\hat{\rho^2}(Y,Z|X) & = & \frac{\left\|\left[(\mathbf{X}_c\mathbf{X}_c^\top + n\varepsilon)^{-1} - (\ddot{\mathbf{X}}_c\ddot{\mathbf{X}}_c^\top +n\varepsilon I)^{-1} \right]\mathbf{Y}_c\right\|^2}{\| (\mathbf{X}_c\mathbf{X}_c^\top+n\varepsilon I)^{-1}\mathbf{Y}_c\|^2} \\ & =& \frac{\left\|\left[\mathbf{X}_c (n\varepsilon I+ \mathbf{X}_c^\top \mathbf{X}_c)^{-1}\mathbf{X}_c^\top - \ddot{\mathbf{X}}_c (n\varepsilon I + \ddot{\mathbf{X}}_c^\top \ddot{\mathbf{X}}_c)^{-1}\ddot{\mathbf{X}}_c^\top  \right]\mathbf{Y}_c\right\|^2}{\| (I - \mathbf{X}_c (n\varepsilon I + \mathbf{X}_c^\top \mathbf{X}_c)^{-1}\mathbf{X}_c^\top)\mathbf{Y}_c \|^2} \\& \overset{\varepsilon\to 0}{\longrightarrow} & \frac{\|{\rm Proj}_{\mathbf{X}_c}\mathbf{Y}_c - {\rm Proj}_{\ddot{\mathbf{X}}_c}\mathbf{Y}_c \|^2}{\|\mathbf{Y}_c - {\rm Proj}_{\mathbf{X}_c}\mathbf{Y}_c\|^2},
\end{eqnarray*}
where ${\rm Proj}_{\mathbf{X}_c}\mathbf{Y}_c$ is the projection of $\mathbf{Y}_c$ onto the column space of $\mathbf{X}_c$.
    Note that $r_{Z}$ is in the column space of $\ddot{\mathbf{X}}_c$ and is orthogonal to the column space of $\mathbf{X}_c$, and $\mathbf{Y}_c - {\rm Proj}_{\mathbf{X}_c}\mathbf{Y}_c = r_{Y}$.
    Hence ${\rm Cor}(r_Y,r_Z)={\rm Cor}(\mathbf{Y}_c - {\rm Proj}_{\mathbf{X}_c}\mathbf{Y}_c,r_Z) = {\rm Cor}(\mathbf{Y}_c - {\rm Proj}_{\mathbf{X}_c}\mathbf{Y}_c,{\rm Proj}_{\ddot{\mathbf{X}}_c}\mathbf{Y}_c -{\rm Proj}_{\mathbf{X}_c}\mathbf{Y}_c)$.
    Note that $\left(\mathbf{Y}_c - {\rm Proj}_{\mathbf{X}_c}\mathbf{Y}_c\right) - \left({\rm Proj}_{\ddot{\mathbf{X}}_c}\mathbf{Y}_c -{\rm Proj}_{\mathbf{X}_c}\mathbf{Y}_c \right)$
    is orthogonal to ${\rm Proj}_{\ddot{\mathbf{X}}_c}\mathbf{Y}_c -{\rm Proj}_{\mathbf{X}_c}\mathbf{Y}_c$.
    Hence ${\rm Cor}(r_Y,r_Z)^2 = {\rm Cor}(\mathbf{Y}_c - {\rm Proj}_{\mathbf{X}_c}\mathbf{Y}_c,{\rm Proj}_{\ddot{\mathbf{X}}_c}\mathbf{Y}_c -{\rm Proj}_{\mathbf{X}_c}\mathbf{Y}_c)^2 = \frac{\|{\rm Proj}_{\mathbf{X}_c}\mathbf{Y}_c - {\rm Proj}_{\ddot{\mathbf{X}}_c}\mathbf{Y}_c \|^2}{\|\mathbf{Y}_c - {\rm Proj}_{\mathbf{X}_c}\mathbf{Y}_c\|^2}$.
\qed

\subsection{Proofs of Theorems~\ref{thm:CME_result} and~\ref{thm:kernel_consistency}}\label{pf:kernel_consistency}
We start with some preliminaries. The following well-known result (see e.g.,~\cite[Lemma 5]{fukumizu2007statistical}) shows that the cross-covariance operator $C_{YX}$ and its sample version are close in the Hilbert-Schmidt norm and are consequently close in the operator norm, since $\|\cdot\|_{\rm op}\leq \|\cdot \|_{\HS}$. 
\begin{lemma}\label{consistency_emp_cross_cov}
Suppose that $\E[k_\X(X,X)] <\infty$ and $\E[k (Y,Y)] <\infty$, and $\h_\X,\h_\Y$ are both separable. Then $\|\hat{C}_{YX}-C_{YX}\|_{\HS} = O_p(n^{-1/2}).$
\end{lemma}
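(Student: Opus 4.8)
The plan is to exploit the isometric isomorphism between the space of Hilbert--Schmidt operators $\HS(\h_\X,\h_\Y)$ and the tensor product Hilbert space $\h_\Y \otimes \h_\X$ (Definition~\ref{tensor_product_Hilbert_space}), under which $\|\cdot\|_{\HS}$ becomes the tensor-product norm and both $C_{YX}$ and $\hat{C}_{YX}$ become differences of means of Hilbert-space-valued random elements. Concretely, set $\xi := k(\cdot,Y)\otimes k_\X(\cdot,X) \in \h_\Y\otimes\h_\X$, with i.i.d.\ copies $\xi_i := k(\cdot,Y_i)\otimes k_\X(\cdot,X_i)$. Since taking adjoints leaves $\|\cdot\|_{\HS}$ unchanged, working with $C_{YX}$ is equivalent to working with $C_{XY}$ in~\eqref{eq:Cov}, and we have $C_{YX} = \E[\xi] - \mu_Y\otimes\mu_X$ and $\hat{C}_{YX} = \frac1n\sum_{i=1}^n \xi_i - \hat{\mu}_Y\otimes\hat{\mu}_X$.

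First I would decompose
\begin{equation*}
\hat{C}_{YX} - C_{YX} = \Big(\tfrac1n\textstyle\sum_{i=1}^n \xi_i - \E[\xi]\Big) - \big(\hat{\mu}_Y\otimes\hat{\mu}_X - \mu_Y\otimes\mu_X\big),
\end{equation*}
and bound the two brackets separately by $O_p(n^{-1/2})$. For the second bracket, bilinearity gives $\hat{\mu}_Y\otimes\hat{\mu}_X - \mu_Y\otimes\mu_X = (\hat{\mu}_Y-\mu_Y)\otimes\hat{\mu}_X + \mu_Y\otimes(\hat{\mu}_X-\mu_X)$, so by the cross-norm property $\|u\otimes v\|_{\h_\Y\otimes\h_\X} = \|u\|_{\h_\Y}\|v\|_{\h_\X}$,
\begin{equation*}
\|\hat{\mu}_Y\otimes\hat{\mu}_X - \mu_Y\otimes\mu_X\|_{\HS} \le \|\hat{\mu}_Y-\mu_Y\|_{\h_\Y}\,\|\hat{\mu}_X\|_{\h_\X} + \|\mu_Y\|_{\h_\Y}\,\|\hat{\mu}_X-\mu_X\|_{\h_\X}.
\end{equation*}
Here $\|\hat{\mu}_X-\mu_X\|_{\h_\X}=O_p(n^{-1/2})$ is the standard rate for the empirical kernel mean embedding: since $\hat{\mu}_X$ is the sample mean of the i.i.d.\ elements $k_\X(\cdot,X_i)$ with $\E\|k_\X(\cdot,X)\|_{\h_\X}^2 = \E[k_\X(X,X)]<\infty$, a direct variance computation gives $\E\|\hat{\mu}_X-\mu_X\|_{\h_\X}^2 = O(n^{-1})$, and Markov's inequality yields the rate; the same argument handles $\hat{\mu}_Y-\mu_Y$ using $\E[k(Y,Y)]<\infty$. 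Consequently $\|\hat{\mu}_X\|_{\h_\X}=O_p(1)$ and $\|\mu_Y\|_{\h_\Y}<\infty$, so the whole second bracket is $O_p(n^{-1/2})$.

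For the first bracket, which is an empirical mean of the i.i.d.\ $\h_\Y\otimes\h_\X$-valued elements $\xi_i$, I would again use the variance identity for Hilbert-space means,
\begin{equation*}
\E\Big\|\tfrac1n\textstyle\sum_{i=1}^n\xi_i - \E[\xi]\Big\|_{\HS}^2 = \tfrac1n\big(\E\|\xi\|_{\HS}^2 - \|\E[\xi]\|_{\HS}^2\big),
\end{equation*}
followed by Markov's inequality, to obtain the $O_p(n^{-1/2})$ rate. The crux, and the main obstacle, is that $\E\|\xi\|_{\HS}^2 = \E[k(Y,Y)\,k_\X(X,X)]$, so this leading term is governed precisely by the product moment $\E[k(Y,Y)\,k_\X(X,X)]$; this is the one place where finiteness of the individual moments $\E[k(Y,Y)]$ and $\E[k_\X(X,X)]$ is not by itself obviously enough, and one must either invoke the product-moment control established in \cite{fukumizu2007statistical} or note that in our intended applications the kernels are bounded, say $\sup_y k(y,y)\le M$, whence $\E\|\xi\|_{\HS}^2 \le M\,\E[k_\X(X,X)] < \infty$ trivially. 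Combining the two $O_p(n^{-1/2})$ bounds by the triangle inequality in $\|\cdot\|_{\HS}$ completes the argument; measurability of $\xi$ and well-definedness of all the (Bochner-type) expectations are ensured by the separability of $\h_\X$ and $\h_\Y$ (Assumption~\ref{assump:separable_RKHS}, cf.~Remark~\ref{rem:Separable}).
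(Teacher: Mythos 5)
The paper itself gives no proof of this lemma: it is invoked as ``well-known'' with a pointer to \citet[Lemma 5]{fukumizu2007statistical}. Your proposal supplies the standard direct argument that underlies such citations, and most of it is sound: the identification of the Hilbert--Schmidt space with the tensor-product space, the decomposition of $\hat{C}_{YX}-C_{YX}$ into $\bigl(\tfrac1n\sum_i\xi_i-\E[\xi]\bigr)$ and $\bigl(\hat{\mu}_Y\otimes\hat{\mu}_X-\mu_Y\otimes\mu_X\bigr)$, the bilinearity/cross-norm bound for the second bracket, the $O_p(n^{-1/2})$ rate for empirical mean embeddings via the Hilbert-space variance identity plus Markov, and the appeal to separability for measurability and Bochner integrability are all correct.

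The gap is exactly where you flagged it, but your proposed escape does not close it. You cannot ``invoke the product-moment control established in \cite{fukumizu2007statistical}'': no argument can derive $\E\|\xi\|_{\HS}^2=\E[k(Y,Y)\,k_\X(X,X)]<\infty$, nor the conclusion of the lemma, from the stated marginal hypotheses, because the lemma is false under those hypotheses alone. Concretely, take $\X=\Y=\R$ with linear kernels, $Y=X$, and $X$ with $\E[X^2]<\infty$ but $X^2$ in the domain of attraction of an $\alpha$-stable law with $\alpha\in(1,2)$, so that $\E[X^4]=\infty$. Then $\E[k_\X(X,X)]=\E[k(Y,Y)]<\infty$, yet $\|\hat{C}_{YX}-C_{YX}\|_{\HS}$ is the error of the (biased) sample variance, whose dominant term $\tfrac1n\sum_i X_i^2-\E[X^2]$ is of order $n^{1/\alpha-1}\gg n^{-1/2}$; the same obstruction arises if one works with the centered elements $(k(\cdot,Y_i)-\mu_Y)\otimes(k_\X(\cdot,X_i)-\mu_X)$. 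So the honest conclusion of your argument is that the lemma holds, by exactly your proof, under the additional hypothesis $\E[k(Y,Y)\,k_\X(X,X)]<\infty$ --- automatic when one of the kernels is bounded, as in the paper's applications with Gaussian or discrete kernels --- and that this hypothesis must be added to the statement rather than outsourced to the citation. With that amendment your proof is complete; without it, no proof exists.
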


Recall the expressions for $\rho^2$ and $\tilde{\rho^2}$ in~\eqref{eq:eta-Cen} and~\eqref{eq:eta-Cen-Est} respectively. 
%    $$\begin{aligned}
%        \rho^2 &= \frac{\mathbb{E}\left[\|(C_{\ddot{X}}^\dagger C_{\ddot{X}Y})^* \left(k_{\ddot{\mathcal{X}}}(\ddot{X},\cdot ) - \mu_{\ddot{X}} \right)- (C_X^\dagger C_{XY})^* \left(k_{{\mathcal{X}}}({X},\cdot)-\mu_{X} \right) \|^2_{\mathcal{H}_\mathcal{Y}} \right] }{\mathbb{E}\left[\|k_{\mathcal{Y}}(Y,\cdot) - {\color{blue}\mu_Y} - (C_X^\dagger C_{XY})^* \left(k_{{\mathcal{X}}}({X},\cdot)-\mu_{X} \right)\|_{\mathcal{H}_\mathcal{Y}}^2 \right]},\\
%        \hat{\rho}^2 &=  \frac{\frac{1}{n}\sum_{i=1}^n \|\hat{C}_{Y\ddot{X}}(\hat{C}_{\ddot{X}} + \varepsilon I)^{-1} \left( k_{\ddot{\mathcal{X}}}(\ddot{X}_i,\cdot ) - \hat{\mu}_{\ddot{X}}\right) - \hat{C}_{YX}(\hat{C}_{{X}} + \varepsilon I)^{-1} \left(k_{{\mathcal{X}}}(X_i,\cdot)-\hat{\mu}_X\right)\|^2_{\mathcal{H}_\mathcal{Y}} }{\frac{1}{n}\sum_{i=1}^n \|k_{\mathcal{Y}}(Y_i,\cdot)- {\color{blue}\hat{\mu}_Y} - \hat{C}_{YX}(\hat{C}_{{X}} + \varepsilon I)^{-1} \left(k_{{\mathcal{X}}}(X_i,\cdot)-\hat{\mu}_X\right)\|^2_{\mathcal{H}_\mathcal{Y}}}.
%    \end{aligned}$$
    We divide the proof into several steps. Recall that the operator norm of a linear map $A: \V \to \W$ (for two given normed vector spaces $\V$ and $\W$) is  defined as
$$\|A\|_{\op} :=\inf\{c\geq 0:\|Av\|\leq c\|v\|{\mbox{ for all }}v\in \V\}.$$ For notational simplicity, $\|\cdot\|$ will, by default, denote either the norm in an RKHS or the operator norm. 

\noindent\textbf{Step 1.} 
    $\quad \E\left[\|[(C_X^\dagger C_{XY})^* - C_{YX}(C_X+\varepsilon I)^{-1}](k_\X(X,\cdot) - \mu_X)\|^2_{\h_\Y}\right]\to 0$ as $\varepsilon\to 0^+$.

    Denote by $A :=C_X^\dagger C_{XY}$, which is a bounded linear operator by Lemma \ref{center_CME}. Further,
    $\ran C_{XY} \subset \ran C_X$ and $C_X C_X^\dagger C_X=C_X$ imply $C_X A = C_{XY}$. Hence 
\begin{eqnarray*}
       & & \E\left[\big\|[(C_X^\dagger C_{XY})^* - C_{YX}(C_X+\varepsilon I)^{-1}](k_\X(X,\cdot) - \mu_X)\big\|^2\right] \\
        &= &\E\left[\|[A^* - A^* C_X(C_X+\varepsilon I)^{-1}](k_\X(X,\cdot) - \mu_X)\|^2\right]\\
        &\leq & \|A^*\|\cdot \E\left[\|[I-C_X(C_X+\varepsilon I)^{-1}](k_\X(X,\cdot) - \mu_X)\|^2 \right]
    \end{eqnarray*}
where in the last display we have used the definition of the operator norm. Let $\{e_j\}_{j\ge 1}$ be an eigenbasis of $C_X$ with corresponding eigenvalues $\{\lambda_j\}_{j\ge 1}$. Since $C_X$ is trace-class (see Lemma \ref{trace_class}), $\sum_j \lambda_j <\infty$. Then,  $I-C_X(C_X+\varepsilon I)^{-1}$ has eigenbasis $\{e_j\}_{j\ge 1}$ with corresponding eigenvalues $\{\varepsilon( \varepsilon + \lambda_j)^{-1}\}_{j\ge 1}$. Thus,
    $$\begin{aligned}
        & \E\left[\big\|[I-C_X(C_X+\varepsilon I)^{-1}](k_\X(X,\cdot) - \mu_X)\big\|^2 \right] \\
        &=\E\left[\Big\|\sum_j \frac{\varepsilon}{\varepsilon+\lambda_j}\langle k_\X(X,\cdot) - \mu_X, e_j\rangle e_j \Big\|^2 \right] \;\, =\;\, \E\left[\sum_j \frac{\varepsilon^2 \langle k_\X(X,\cdot) - \mu_X, e_j\rangle^2}{(\varepsilon + \lambda_j)^2} \right]\\
        & = \sum_j \frac{\varepsilon^2 {\rm Var}(e_j(X))}{(\varepsilon + \lambda_j)^2} \;\; =\;\; \sum_j \frac{\varepsilon^2 \lambda_j}{(\varepsilon + \lambda_j)^2}.
    \end{aligned}$$
It is easily seen that the above quantity converges to $0$ as $\varepsilon\to 0^+$.\newline

    \noindent\textbf{Step 2.} $\frac{1}{n}\sum_{i=1}^n \big\|[(C_X^\dagger C_{XY})^* - C_{YX}(C_X+\varepsilon I)^{-1}](k_\X(X_i,\cdot) - \mu_X) \big\|^2 \overset{p}{\to} 0$ as $\varepsilon\to 0^+$.

    This is a direct consequence of Step 1 and Markov's inequality. \newline

    \noindent\textbf{Step 3.} $\frac{1}{n}\sum_{i=1}^n \big\|[\hat{C}_{YX}(\hat{C}_X+\varepsilon_n I)^{-1}-C_{YX}(C_X+\varepsilon_n I)^{-1}](k_\X(X_i,\cdot) - \hat{\mu}_X) \big\|^2\overset{p}{\to} 0$.

    This is the only step where we use $\varepsilon_n n^{1/2}\to \infty$. We will use the following simple result (proved in Section~\ref{pf:same_limit}). 
\begin{lemma}
    \label{same_limit}
Suppose that $\{U_i^{(n)}: 1\leq i\leq n\}_{n\geq 1}$ and $\{V_i^{(n)}: 1\leq i\leq n\}_{n\geq 1}$ are random elements taking values in a Hilbert space with norm $\|\cdot\|$. If $\frac{1}{n}\sum_{i=1}^n \|U_i^{(n)}-V_i^{(n)}\|^2\overset{p}{\to}0$ and $\frac{1}{n}\sum_{i=1}^n \|U_i^{(n)}\|^2 \overset{p}{\to} U$, then $\frac{1}{n}\sum_{i=1}^n \|V_i^{(n)}\|^2 \overset{p}{\to} U$.
\end{lemma}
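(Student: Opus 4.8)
The plan is to recognize the three averages of squared norms as squared norms in a single (rescaled) product Hilbert space, and then reduce everything to the triangle inequality and the continuous mapping theorem. Concretely, for each $n$ I would equip the product space $\mathcal{H}^n$ with the Hilbertian norm
\[
\|w\|_{(n)} := \Big(\tfrac{1}{n}\sum_{i=1}^n \|w_i\|^2\Big)^{1/2}, \qquad w=(w_1,\ldots,w_n)\in\mathcal{H}^n,
\]
which comes from the inner product $\langle v,w\rangle_{(n)}=\frac1n\sum_i\langle v_i,w_i\rangle$. Writing $U^{(n)}=(U_1^{(n)},\ldots,U_n^{(n)})$ and $V^{(n)}=(V_1^{(n)},\ldots,V_n^{(n)})$, the three hypothesized/targeted quantities become $\|U^{(n)}-V^{(n)}\|_{(n)}^2$, $\|U^{(n)}\|_{(n)}^2$, and $\|V^{(n)}\|_{(n)}^2$ respectively.

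First I would invoke the reverse triangle inequality in $(\mathcal{H}^n,\|\cdot\|_{(n)})$ to obtain the pathwise bound
\[
\Big|\,\|V^{(n)}\|_{(n)}-\|U^{(n)}\|_{(n)}\,\Big|\le \|U^{(n)}-V^{(n)}\|_{(n)}=\Big(\tfrac1n\sum_{i=1}^n\|U_i^{(n)}-V_i^{(n)}\|^2\Big)^{1/2}.
\]
By the first hypothesis the right-hand side converges to $0$ in probability, so $\|V^{(n)}\|_{(n)}-\|U^{(n)}\|_{(n)}\overset{p}{\to}0$. Next, since $\frac1n\sum_i\|U_i^{(n)}\|^2\overset{p}{\to}U$ and $U\ge 0$ (it is a limit in probability of nonnegative quantities), the continuous mapping theorem applied to $t\mapsto\sqrt{t}$ on $[0,\infty)$ gives $\|U^{(n)}\|_{(n)}\overset{p}{\to}\sqrt{U}$. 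Combining with the previous display via Slutsky's lemma yields $\|V^{(n)}\|_{(n)}\overset{p}{\to}\sqrt{U}$, and a final application of the continuous mapping theorem to $t\mapsto t^2$ gives $\frac1n\sum_i\|V_i^{(n)}\|^2=\|V^{(n)}\|_{(n)}^2\overset{p}{\to}U$, as required.

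This argument is entirely routine and I do not expect any genuine obstacle; the only points worth a line of care are (i) that $\|\cdot\|_{(n)}$ is a bona fide norm, so that the reverse triangle inequality is available, and (ii) that $U\ge 0$, so that $\sqrt{U}$ is well-defined, which is immediate. If one prefers to avoid the square root entirely, an equally short route is to write $\|V_i^{(n)}\|^2-\|U_i^{(n)}\|^2=(\|V_i^{(n)}\|-\|U_i^{(n)}\|)(\|V_i^{(n)}\|+\|U_i^{(n)}\|)$, bound the first factor by $\|U_i^{(n)}-V_i^{(n)}\|$, and apply Cauchy--Schwarz to control $\frac1n\sum_i\|U_i^{(n)}-V_i^{(n)}\|\,(\|U_i^{(n)}\|+\|V_i^{(n)}\|)$; this again tends to $0$ in probability since $\frac1n\sum_i\|V_i^{(n)}\|^2$ is $O_p(1)$ by the first two displays. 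Either way the conclusion follows, but the norm-based argument above is the cleanest to write down.
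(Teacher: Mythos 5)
Your proof is correct, and it takes a genuinely different route from the paper's. The paper expands term by term, writing $\|V_i^{(n)}\|^2 \le \|U_i^{(n)}\|^2 + 2\|U_i^{(n)}\|\,\|U_i^{(n)}-V_i^{(n)}\| + \|U_i^{(n)}-V_i^{(n)}\|^2$ (and the matching lower bound), reduces the problem to showing the cross term $\frac1n\sum_i 2\|U_i^{(n)}\|\,\|U_i^{(n)}-V_i^{(n)}\|$ vanishes in probability, and controls it by Young's inequality $2ab\le \delta a^2 + \delta^{-1}b^2$, obtaining a bound converging to $\delta U$ with $\delta>0$ arbitrary. You instead lift everything to the rescaled product Hilbert space $(\mathcal{H}^n,\|\cdot\|_{(n)})$, so that all three averaged sums become squared norms, and then only need the reverse triangle inequality plus the continuous mapping theorem (for $\sqrt{t}$ and $t^2$) and the fact that sums of sequences converging in probability converge to the sum of the limits; note the latter two facts do hold for a possibly random limit $U$, so your argument covers the same generality as the paper's. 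What your route buys is the elimination of the arbitrary-$\delta$ limiting argument: everything is packaged into standard structural facts about norms and convergence in probability. What the paper's route buys is transparency about exactly which quantity must vanish (the cross term) without introducing the auxiliary product-space structure; your second sketched variant (reverse triangle inequality term-wise, then Cauchy--Schwarz across the sum, using that $\frac1n\sum_i\|V_i^{(n)}\|^2=O_p(1)$) sits in between and is also a valid, complete alternative.
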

%    Note that if $\frac{1}{n}\sum_{i=1}^n \|X_i-Y_i\|^2\overset{p}{\to}0$, then $\frac{1}{n}\sum_{i=1}^n \|X_i\|^2$ and $\frac{1}{n}\sum_{i=1}^n \|Y_i\|^2$ have the same limit in probability provided any of them has a limit (see Lemma \ref{same_limit}).
For $i=1,\ldots, n$, let
\begin{eqnarray*}
    V_i^{(n)} & := & \big[\hat{C}_{YX}(\hat{C}_X+\varepsilon_n I)^{-1}-C_{YX}(C_X+\varepsilon_n I)^{-1}\big](k_\X(X_i,\cdot) - \hat{\mu}_X) \\
    & = & \Big[(\hat{C}_{YX} - C_{YX})(\hat{C}_X+\varepsilon_n I)^{-1} \\ 
    &&\qquad \quad + \; C_{YX}\big((\hat{C}_X+\varepsilon_n I)^{-1} - ({C}_X+\varepsilon I)^{-1} \big)\Big](k_\X(X_i,\cdot) - \hat{\mu}_X).
\end{eqnarray*}
Letting $U_i^{(n)} := (\hat{C}_{YX} - C_{YX})(\hat{C}_X+\varepsilon_n I)^{-1}(k_\X(X_i,\cdot) - \hat{\mu}_X)$, for $i=1,\ldots, n$, note that
    $$\begin{aligned}
        \frac{1}{n} \sum_{i=1}^n \|U_i^{(n)}\|^2 & =  \frac{1}{n}\sum_{i=1}^n \big\|(\hat{C}_{YX} - C_{YX})(\hat{C}_X+\varepsilon_n I)^{-1}(k_\X(X_i,\cdot) - \hat{\mu}_X)\big\|^2\\
        &\leq \frac{1}{n}\sum_{i=1}^n\|\hat{C}_{YX} - C_{YX}\|^2\cdot {\|(\hat{C}_X+\varepsilon_n I)^{-1}\|^2}\cdot \|k_\X(X_i,\cdot) - \hat{\mu}_X\|^2\\
        &\leq O_p\left(\frac{1}{n}\right)\cdot {\frac{1}{\varepsilon_n^2}}\cdot \frac{1}{n} \sum_{i=1}^n \|k_\X(X_i,\cdot) - \hat{\mu}_X\|^2 = O_p\left(\frac{1}{n\varepsilon_n^2}\right) = o_p(1),
    \end{aligned}$$
where we have used Lemma~\ref{consistency_emp_cross_cov}, $\|(\hat{C}_X + \varepsilon_n)^{-1}\|\leq \frac{1}{\varepsilon_n}$ (since if $\hat{C}_X$ has eigenvalues $\hat{\lambda}_i$, then $(\hat{C}_X + \varepsilon_n)^{-1}$ has eigenvalues $\frac{1}{\hat{\lambda}_i+\varepsilon_n}\leq \frac{1}{\varepsilon_n}$), and that $n \varepsilon_n^2 \to \infty$.

Thus, in view of Lemma~\ref{same_limit}, we only need to show that $$\frac{1}{n}\sum_{i=1}^n  \|V_i^{(n)}-U_i^{(n)}\|^2 = \frac{1}{n}\sum_{i=1}^n \Big\|C_{YX}\big((\hat{C}_X+\varepsilon_n I)^{-1} - ({C}_X+\varepsilon_n I)^{-1} \big)(k_\X(X_i,\cdot) - \hat{\mu}_X) \Big\|^2 \overset{p}{\to} 0.$$
    Using the equality $B^{-1}-C^{-1} = C^{-1}(C-B)B^{-1}$, and $A= C_X^\dagger C_{XY}$, $C_{YX}= A^* C_X$ as before,
    $$\begin{aligned}
        &\qquad \Big\|C_{YX}\big((\hat{C}_X+\varepsilon_n I)^{-1} - ({C}_X+\varepsilon_n I)^{-1} \big) \Big\|\\
        &= \big\|C_{YX} ({C}_X+\varepsilon_n I)^{-1} (C_X - \hat{C}_X) (\hat{C}_X+\varepsilon_n I)^{-1}\big\|\\
        &\leq \big\|A^* C_X ({C}_X+\varepsilon_n I)^{-1}\|\cdot \|C_X - \hat{C}_X\|\cdot \|(\hat{C}_X+\varepsilon_n I)^{-1}\big\|\\
        &\leq \|A^*\|\cdot \|C_X ({C}_X+\varepsilon_n I)^{-1}\| \cdot O_p(n^{-1/2}) \cdot \frac{1}{\varepsilon_n} = O_p\left(\frac{1}{\varepsilon_n \sqrt{n}} \right)=o_p(1).
    \end{aligned}$$
    This implies that $$\frac{1}{n}\sum_{i=1}^n  \|V_i^{(n)}-U_i^{(n)}\|^2 \le \Big\|C_{YX}\big((\hat{C}_X+\varepsilon_n I)^{-1} - ({C}_X+\varepsilon_n I)^{-1}\big)\Big \| \cdot  \frac{1}{n}\sum_{i=1}^n \|k_\X(X_i,\cdot) - \hat{\mu}_X \|^2 \overset{p}{\to} 0.$$
Now this step follows from Lemma~\ref{same_limit}. \newline

    \noindent\textbf{Step 4.} $\frac{1}{n}\sum_{i=1}^n \big\|C_{YX}(C_X+\varepsilon_n I)^{-1}(k_\X(X_i,\cdot) - \hat{\mu}_X) - C_{YX}(C_X+\varepsilon_n I)^{-1}(k_\X(X_i,\cdot) - {\mu}_X) \big\|^2\overset{p}{\to} 0$.\\

    This follows as
    $$\begin{aligned}
        &\quad \frac{1}{n}\sum_{i=1}^n \|C_{YX}(C_X+\varepsilon_n I)^{-1}(k_\X(X_i,\cdot) - \hat{\mu}_X) - C_{YX}(C_X+\varepsilon_n I)^{-1}(k_\X(X_i,\cdot) - {\mu}_X) \|^2\\
        &=\frac{1}{n}\sum_{i=1}^n \|C_{YX}(C_X+\varepsilon_n I)^{-1}(\hat{\mu}_X - {\mu}_X) \|^2 \;\,\leq \;\, \|A^*\|\cdot \|C_X(C_X+\varepsilon_n I)^{-1}\|\cdot \frac{1}{n}\sum_{i=1}^n \|\hat{\mu}_X - \mu_X\|^2\\
        &\leq \|A^*\|\cdot 1 \cdot \|\hat{\mu}_X-\mu_X\|^2\overset{a.s.}{\to}0,
    \end{aligned}$$
    by the strong law of large numbers~\cite{hoffmann1976law}.\\

    Now let us show Theorem~\ref{thm:CME_result}. By successive applications of Lemma \ref{same_limit},
    $$\begin{aligned}
        &\qquad \lim_{n\to \infty} \frac{1}{n} \sum_{i=1}^n \|\mu_{Y|X_i} - \hat \mu_{Y|X_i}\|_{\h_\Y}^2 \\
        &=\lim_{n\to \infty} \frac{1}{n} \sum_{i=1}^n \|\mu_Y + (C_X^\dagger C_{XY})^*\left(k(X_i,\cdot)-\mu_X \right) - \hat{\mu}_Y - \hat{C}_{YX}(\hat{C}_X + \varepsilon_n I)^{-1}\left( k(X_i,\cdot)-\hat{\mu}_X\right)\|_{\h_\Y}^2\\
        &=\lim_{n\to \infty} \frac{1}{n} \sum_{i=1}^n \|(C_X^\dagger C_{XY})^*\left(k(X_i,\cdot)-\mu_X \right) - \hat{C}_{YX}(\hat{C}_X + \varepsilon_n I)^{-1}\left( k(X_i,\cdot)-\hat{\mu}_X\right)\|_{\h_\Y}^2\\
        &\overset{\rm Step\ 3}{=}\lim_{n\to \infty} \frac{1}{n} \sum_{i=1}^n \|(C_X^\dagger C_{XY})^*\left(k(X_i,\cdot)-\mu_X \right) - {C}_{YX}({C}_X + \varepsilon_n I)^{-1}\left( k(X_i,\cdot)-\hat{\mu}_X\right)\|_{\h_\Y}^2\\
        &\overset{\rm Step\ 4}{=}\lim_{n\to \infty} \frac{1}{n} \sum_{i=1}^n \|(C_X^\dagger C_{XY})^*\left(k(X_i,\cdot)-\mu_X \right) - {C}_{YX}({C}_X + \varepsilon_n I)^{-1}\left( k(X_i,\cdot)-{\mu}_X\right)\|_{\h_\Y}^2\\
        &\overset{\rm Step\ 2}{=}0,
    \end{aligned}$$
    where the limit is in probability, and in the second equality we have used $\frac{1}{n}\sum_{i=1}^n \|\mu_Y -\hat{\mu}_Y\|^2_{\h_\Y} = \|\mu_Y -\hat{\mu}_Y\|^2_{\h_\Y} \overset{p}{\to}0$.\\

    Then we can use Theorem~\ref{thm:CME_result} and Lemma \ref{same_limit} to show Theorem \ref{thm:kernel_consistency}. By the expression of $\tilde{\rho^2}$ in \eqref{eq:eta-Cen-Est}, we see that the numerator
    $$\begin{aligned}
        \lim_{n\to\infty }\frac{1}{n}\sum_{i=1}^n \|\hat{\mu}_{Y|\ddot{X}_i}  - \hat{\mu}_{Y|{X_i}}\|^2_{\mathcal{H}_\Y}=\lim_{n\to\infty } \frac{1}{n}\sum_{i=1}^n \|{\mu}_{Y|\ddot{X}_i}  - {\mu}_{Y|{X_i}}\|^2_{\mathcal{H}_\Y} = \E[\|\mu_{Y|\ddot{X}}-\mu_{Y|X}\|^2_{\h_\Y}],
    \end{aligned}$$
    and the denominator
    $$\lim_{n\to\infty} \frac{1}{n} \sum_{i=1}^n \|k (Y_i,\cdot)- \hat{\mu}_{Y|{X_i}}\|^2_{\mathcal{H}_\Y} = \lim_{n\to\infty} \frac{1}{n} \sum_{i=1}^n \|k (Y_i,\cdot)- {\mu}_{Y|{X_i}}\|^2_{\mathcal{H}_\Y} = \E[\|k(Y,\cdot)-\mu_{Y|X}\|_{\h_\Y}^2].$$
    Both limits are in probability, and they are exactly the numerator and denominator of $\rho^2(Y,Z|X)$ respectively.
\qed

\subsection{Proof of Theorem~\ref{thm:var_select}}\label{pf:var_select}
The proof is similar to that of \cite[Theorem 6.1]{azadkia2019simple}. Let $j_1,\ldots,j_p$ be the complete ordering of all variables produced by the algorithm without imposing the stopping rule. Let 
$S_k := \{j_1,\ldots,j_k\}$ for $1\leq k\leq p$, $S_0:=\emptyset$, $S_k:= S_p$ for $k>p$, and recall that $\kappa = \lfloor \frac{M}{\delta} + 1 \rfloor$.
Let $\varepsilon_1,\varepsilon_2>0$ be small such that $((1-\varepsilon_2)\delta - 2\varepsilon_1)\lfloor \frac{M}{\delta} + 1 \rfloor > M$ and $(1-\varepsilon_2)\delta + 2\varepsilon_1 <\delta$.
Note that $\varepsilon_1,\varepsilon_2$ only depend on $\delta$ and $M$.
Define:
\begin{enumerate}
    \item Event $E_0$: $S_{\kappa}$ is sufficient.
    \item Event $E$: $|T_n(S_k) - T(S_k)|\leq \varepsilon_1$ for $1\leq k\leq \kappa $.
\end{enumerate}
We will show that $E$ implies the selected subset is sufficient and the probability of $E$ is large.
\begin{lemma}
    \label{suff_lemma_1}
    Suppose $E$ happens, and for some $1\leq k\leq \kappa$
    \begin{equation}
        \label{suff_lem}
        T_n(S_k) - T_n(S_{k-1}) \leq (1-\varepsilon_2)\delta.
    \end{equation}
    Then $S_{k-1}$ is sufficient.
\end{lemma}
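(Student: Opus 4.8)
The plan is to argue by contradiction, combining the greedy maximality built into the algorithm with the signal-strength hypothesis, condition (a) of \cref{thm:var_select}. The two structural facts I would use are: (i) the index $j_k$ is chosen to maximize $\ell\mapsto T_n(S_{k-1}\cup\{\ell\})$, so that $T_n(S_k)=T_n(S_{k-1}\cup\{j_k\})\ge T_n(S_{k-1}\cup\{\ell\})$ for every $\ell\notin S_{k-1}$; and (ii) on the event $E$ we may replace $T_n$ by its population analogue $T$ up to an error $\varepsilon_1$ on the relevant subsets. Assuming $E$ and the increment bound \eqref{suff_lem}, I will show that supposing $S_{k-1}$ to be \emph{insufficient} forces \eqref{suff_lem} to fail.

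First I would suppose, toward a contradiction, that $S_{k-1}$ is insufficient. By condition (a) there is an index $j^*\notin S_{k-1}$ with $T(S_{k-1}\cup\{j^*\})\ge T(S_{k-1})+\delta$. The key step is to transfer this population gap to the empirical functional using $E$, which controls $|T_n(S)-T(S)|\le\varepsilon_1$ on all the subsets involved (the path sets $S_{k-1},S_k$ and the competitor $S_{k-1}\cup\{j^*\}$, each of cardinality at most $\kappa$). On $E$,
\[
T_n(S_{k-1}\cup\{j^*\}) \;\ge\; T(S_{k-1}\cup\{j^*\})-\varepsilon_1 \;\ge\; T(S_{k-1})+\delta-\varepsilon_1 \;\ge\; T_n(S_{k-1})+\delta-2\varepsilon_1 ,
\]
where the last inequality again uses $E$ at $S_{k-1}$. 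Invoking the greedy maximality (i), $T_n(S_k)\ge T_n(S_{k-1}\cup\{j^*\})$, so $T_n(S_k)-T_n(S_{k-1})\ge \delta-2\varepsilon_1$. The prescribed choice of constants, $(1-\varepsilon_2)\delta+2\varepsilon_1<\delta$ (equivalently $\delta-2\varepsilon_1>(1-\varepsilon_2)\delta$), then yields $T_n(S_k)-T_n(S_{k-1})>(1-\varepsilon_2)\delta$, contradicting \eqref{suff_lem}. Hence $S_{k-1}$ must be sufficient.

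I expect the only genuine subtlety to be bookkeeping about which sets $E$ must cover: the argument requires $|T_n-T|\le\varepsilon_1$ not merely along the algorithmic path but also at the competitor $S_{k-1}\cup\{j^*\}$, a subset of size $k\le\kappa$. This is precisely why $E$ should be read as uniform over all subsets of cardinality at most $\kappa$, and it is the source of the $p^{\kappa}$ union-bound factor in the final probability estimate of \cref{thm:var_select}; pointwise consistency of each $T_n(S)$ is already supplied by \cref{theo:consis}, while the uniform control is what will require the sub-Gaussian concentration of \cref{prop:concen_moment}. A minor edge case is $k=1$, where $S_{k-1}=\emptyset$; there one simply interprets $T(\emptyset)=\E[k(Y,Y')]$ for i.i.d.\ $Y,Y'$ and its empirical counterpart accordingly, and the same chain of inequalities goes through.
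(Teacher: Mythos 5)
Your proof is correct and is essentially the paper's own argument run in contrapositive form: the same chain of steps --- event $E$ applied twice, greedy maximality of $j_k$, the hypothesis \eqref{suff_lem}, and the constant choice $(1-\varepsilon_2)\delta+2\varepsilon_1<\delta$ --- appears in the paper written directly as $T(S_{k-1}\cup\{j\})-T(S_{k-1})<\delta$ for every $j\notin S_{k-1}$, from which condition (a) yields sufficiency. Your remark that $E$ must be read as uniform over all subsets of cardinality at most $\kappa$ (not merely the path sets $S_1,\ldots,S_\kappa$) is also exactly right, and it is precisely what the paper's union bound with the $p^{\kappa}$ factor establishes.
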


\begin{proof}
    If $k>p$, then there is nothing to prove. Suppose $k\leq p$.
    Since the algorithm chooses $j_k\in \{1,\ldots p\}\backslash S_{k-1}$ to maximize $T_n(S_{k-1}\cup\{j\})$, we have for all $j\in \{1,\ldots p\}\backslash S_{k-1}$:
    $$\begin{aligned}
        T(S_{k-1}\cup\{j\}) - T(S_{k-1}) & \leq T_n(S_{k-1}\cup\{j\}) - T_n(S_{k-1}) + 2\varepsilon_1\\
            &\leq T_n(S_k) - T_n(S_{k-1}) + 2\varepsilon_1\\
               &\leq (1-\varepsilon_2)\delta + 2\varepsilon_1 < \delta.
    \end{aligned}$$
    By the definition of $\delta$, $S_{k-1}$ is sufficient.
\end{proof}

\begin{lemma}
    \label{two_E}
    Event $E$ implies $E_0$.
\end{lemma}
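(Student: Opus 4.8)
The plan is to argue by contradiction: assuming event $E$ holds, I suppose $S_\kappa$ is \emph{not} sufficient and then show that the increments of $T_n$ along the chain $S_0\subset S_1\subset\cdots\subset S_\kappa$ must all be large, which after telescoping contradicts the boundedness of $T$. The engine of the argument is \cref{suff_lemma_1}, so the whole lemma is essentially a telescoping bookkeeping once the notion of sufficiency is recast in terms of $T$.

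First I would record the algebraic characterization of sufficiency. By property (ii) of \cref{thm:Eta}, a subset $S$ is sufficient, i.e. $Y\indep X_{S^c}\mid X_S$, precisely when $\rho^2(Y,X_{S^c}\mid X_S)=0$; and since the numerator of this quantity (as in~\eqref{eq:K-Exp}) equals $T(\{1,\ldots,p\})-T(S)$, sufficiency of $S$ is equivalent to $T(S)=T(\{1,\ldots,p\})$. Because $T$ is monotone ($T(S')\ge T(S)$ whenever $S'\supset S$), the sufficient subsets are exactly the maximizers of $T$; in particular, if any $S_{k-1}\subseteq S_\kappa$ were sufficient then $T(S_{k-1})=T(\{1,\ldots,p\})\le T(S_\kappa)\le T(\{1,\ldots,p\})$ would force $S_\kappa$ to be sufficient as well. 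Taking the contrapositive, non-sufficiency of $S_\kappa$ implies that none of $S_0,\ldots,S_{\kappa-1}$ is sufficient. (If $\kappa>p$ then $S_\kappa=\{1,\ldots,p\}$ is trivially sufficient, so I may assume $\kappa\le p$, and since the full set is sufficient, $S_\kappa\neq\{1,\ldots,p\}$ forces the chain $S_0\subsetneq\cdots\subsetneq S_\kappa$ to be strictly increasing, making the increments genuine.)

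Next I would invoke the contrapositive of \cref{suff_lemma_1}: on $E$, the non-sufficiency of $S_{k-1}$ forces $T_n(S_k)-T_n(S_{k-1})>(1-\varepsilon_2)\delta$ for every $1\le k\le\kappa$. Since $E$ also gives $|T_n(S_k)-T(S_k)|\le\varepsilon_1$, passing to the population quantity costs at most $2\varepsilon_1$ per increment, so $T(S_k)-T(S_{k-1})>(1-\varepsilon_2)\delta-2\varepsilon_1$. Telescoping over $k=1,\ldots,\kappa$ yields $T(S_\kappa)-T(S_0)>\kappa\bigl((1-\varepsilon_2)\delta-2\varepsilon_1\bigr)>M$, the last inequality being exactly the defining property of $\varepsilon_1,\varepsilon_2$. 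On the other hand, since $\sup_y k(y,y)\le M$ gives $0\le T(\cdot)\le M$ (note $T(S)=\E[\|\mu_{P_{Y\mid X_S}}\|_{\h_\Y}^2]\ge0$), we have $T(S_\kappa)-T(S_0)\le M$, a contradiction. Hence $S_\kappa$ is sufficient, i.e. $E\subseteq E_0$.

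The only genuinely delicate point is the first step — converting ``sufficient'' into the clean statement $T(S)=T(\{1,\ldots,p\})$ and using monotonicity to make sufficiency upward closed along the chain — since everything downstream is routine telescoping calibrated by the choice of $\varepsilon_1,\varepsilon_2$. A minor care is needed at the boundary term $T(S_0)=T(\emptyset)=\E[k(Y,Y')]=\|\mu_Y\|_{\h_\Y}^2\ge0$: here the convention $T_n(\emptyset)=-\infty$ used only to initialize the algorithm must be replaced by the honest nonnegative population value $T(\emptyset)$, which is what makes $T(S_\kappa)-T(S_0)\le M$ valid.
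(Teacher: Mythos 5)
Your proof is correct and takes essentially the same route as the paper: the paper's own proof is your argument written as a case split rather than a contradiction --- either some increment $T_n(S_k)-T_n(S_{k-1})\le(1-\varepsilon_2)\delta$ occurs, in which case Lemma~\ref{suff_lemma_1} makes $S_{k-1}$ (hence $S_\kappa$) sufficient, or all increments exceed $(1-\varepsilon_2)\delta$, in which case the identical telescoping bound $T(S_\kappa)\ge \kappa\left((1-\varepsilon_2)\delta-2\varepsilon_1\right)>M$ contradicts the bound $T\le M$ coming from the bounded kernel. The only difference is cosmetic: you justify explicitly (via the characterization of sufficiency through $T$ together with monotonicity of $T$) the upward-closure step ``$S_{k-1}$ sufficient $\Rightarrow S_\kappa$ sufficient,'' which the paper invokes tacitly.
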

\begin{proof}
    Suppose $E$ holds. If \eqref{suff_lem} holds for some $1\leq k\leq \kappa$, then by Lemma \ref{suff_lemma_1},
    $S_{k-1}$ is sufficient and $E_0$ holds.
    Suppose \eqref{suff_lem} is violated for all $1\leq k\leq \kappa $. Then 
    $$T(S_k) - T(S_{k-1}) \geq T_n(S_k) - T_n(S_{k-1}) - 2\varepsilon_1 > (1-\varepsilon_2)\delta -2\varepsilon_1.$$
    Hence,
    $$\begin{aligned}
        T(S_\kappa) &=\sum_{k=1}^{\kappa} (T(S_k) - T(S_{k-1})) + T(S_0) \geq \kappa \cdot \left((1-\varepsilon_2)\delta -2\varepsilon_1 \right)+ 0 >M,
    \end{aligned}$$
    by the constuction of $\varepsilon_1,\varepsilon_2$. This yields a contradiction since $T(S_\kappa)$ cannot be greater than $M$, the bound of the kernel.
\end{proof}

\begin{lemma}
    Event $E$ implies that $\hat{S}$ is sufficient.
\end{lemma}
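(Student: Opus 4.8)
The plan is to identify the set returned by Algorithm~\ref{algo:graph} with a prefix of the complete ordering $j_1,\dots,j_p$ and then argue sufficiency by a short case analysis on how many variables the stopping rule retains. First I would note that imposing the stopping rule only truncates the greedy ordering: at every step the chosen variable is the same maximizer of $T_n(\cdot)$, whether or not the loop has already halted. Hence the output may be written as $\hat S = S_{\hat k}$ for some $\hat k$, and by the form of the \texttt{while}-condition the loop terminates either because it exhausts all predictors ($\hat k = p$) or because the next greedy increment is strictly negative, $T_n(S_{\hat k+1}) < T_n(S_{\hat k})$.

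In the exhaustion case $\hat k = p$ the output is the full index set $\{1,\dots,p\}$, which is trivially sufficient. Otherwise the loop stopped because of a strictly negative increment. Here I would use that $\delta>0$ and $\varepsilon_2\in(0,1)$ force $(1-\varepsilon_2)\delta>0$, so the stopping inequality gives $T_n(S_{\hat k+1})-T_n(S_{\hat k}) < 0 \le (1-\varepsilon_2)\delta$, which is precisely hypothesis~\eqref{suff_lem} of Lemma~\ref{suff_lemma_1}. Provided $\hat k+1\le\kappa$, that lemma (applied with $k=\hat k+1$) yields that $S_{\hat k}=\hat S$ is sufficient.

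It remains to treat the complementary regime $\hat k \ge \kappa$ (equivalently $\hat k+1>\kappa$). Since Lemma~\ref{two_E} already shows that $E$ implies $E_0$, i.e.\ that $S_\kappa$ is sufficient, and since $S_{\hat k}\supseteq S_\kappa$ when $\hat k\ge\kappa$, the conclusion will follow from \emph{monotonicity of sufficiency}: any superset of a sufficient subset is itself sufficient. I regard this as the one genuinely substantive ingredient, and I would justify it in one of two equivalent ways. The cleaner route uses $T$: a subset $S$ is sufficient iff $T(S)=T(\{1,\dots,p\})$ (its maximal value), and $T$ is nondecreasing under inclusion, so $T(\{1,\dots,p\})=T(S_\kappa)\le T(S_{\hat k})\le T(\{1,\dots,p\})$ forces $T(S_{\hat k})=T(\{1,\dots,p\})$ and hence $\hat S$ sufficient. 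Alternatively, one may argue directly at the level of conditional independence, using the standard identity that $Y\indep (X_{S'\setminus S},X_{(S')^c})\mid X_S$ implies $Y\indep X_{(S')^c}\mid X_{S'}$ for $S\subseteq S'$. Combining the three cases shows that, whenever $E$ holds, $\hat S$ is sufficient, which is the claim. The only point requiring care throughout is that the algorithm's actual stopping criterion (strictly negative increment) is stronger than the threshold $(1-\varepsilon_2)\delta$ appearing in Lemma~\ref{suff_lemma_1}, so the late-stopping regime must be covered separately via $E_0$ and monotonicity rather than by the lemma alone.
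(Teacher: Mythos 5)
Your proof is correct and takes essentially the same approach as the paper's: the early-stopping case ($\hat{k}<\kappa$) is handled by applying Lemma~\ref{suff_lemma_1} to the stopping-rule inequality $T_n(S_{\hat{k}+1})<T_n(S_{\hat{k}})$, and the late-stopping case ($\hat{k}\ge\kappa$) by Lemma~\ref{two_E} (so $S_\kappa$ is sufficient) together with monotonicity of sufficiency under inclusion. The only difference is that you spell out two points the paper leaves implicit — the trivial exhaustion case $\hat{k}=p$ and the justification (via monotonicity of $T$ or the weak-union property of conditional independence) that any superset of a sufficient subset is sufficient — both of which are valid.
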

\begin{proof}
    If the algorithm stopped after $\kappa = \lfloor \frac{M}{\delta} + 1 \rfloor$, then $S_\kappa \subset \hat{S}$.
    By Lemma \ref{two_E}, $E$ happens implies $E_0$ happens, i.e., $S_\kappa$ being sufficient.
    Hence $\hat{S}$ is also sufficient.

    If the algorithm stopped at $k< \kappa$, by the stopping rule
    $$T_n(S_{k+1}) < T_n (S_{k}).$$
    Hence \eqref{suff_lem} holds, and by Lemma \ref{suff_lemma_1}, $S_{k}$ is sufficient.
\end{proof}

\begin{lemma}
    There exist $L_1,L_2$ depending only on $\alpha,\beta_1,\beta_2,\gamma,\{C_i\}_{i=1}^6,d,M,\delta$ such that
    $$\p(E)\geq 1-L_1 p^{\kappa} e^{-L_2 n}.$$
\end{lemma}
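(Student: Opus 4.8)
The plan is to control $\p(E^c)$ by a union bound over all subsets of $\{1,\ldots,p\}$ of size at most $\kappa$, and then to treat each fixed subset by combining the sub-Gaussian concentration of~\cref{prop:concen_moment} with a bias estimate that is \emph{uniform} over such subsets. First I would note that every random subset $S_k$ produced by the algorithm satisfies $|S_k| = k \le \kappa$, so
\[
E^c = \Big\{ \exists\, 1\le k \le \kappa: |T_n(S_k) - T(S_k)| > \varepsilon_1 \Big\} \subseteq \bigcup_{S \subseteq \{1,\ldots,p\},\, |S|\le \kappa} \big\{ |T_n(S) - T(S)| > \varepsilon_1 \big\}.
\]
Since the number of subsets of cardinality at most $\kappa$ is bounded by $\sum_{j=1}^\kappa \binom{p}{j} \le \kappa\, p^\kappa$, a union bound yields
\[
\p(E^c) \le \kappa\, p^\kappa \max_{|S| \le \kappa} \p\big(|T_n(S) - T(S)| > \varepsilon_1\big),
\]
so it suffices to obtain a bound, exponential in $n$ and uniform over $S$ with $|S|\le\kappa$, on the single-subset probability.

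For a fixed $S$ I would split $|T_n(S) - T(S)| \le |T_n(S) - \E[T_n(S)]| + |\E[T_n(S)] - T(S)|$. The fluctuation term $|T_n(S)-\E[T_n(S)]|$ is controlled by~\cref{prop:concen_moment}: Assumption~(b) makes the kernel bounded by $M$, and Assumption~(c) ensures that Assumptions~\ref{assump:degree}--\ref{assump:degupbd} hold for the $K_n$-NN graph on $X_S$ with constants uniform in $S$, so a bounded-differences argument gives $\p(|T_n(S) - \E[T_n(S)]| > \varepsilon_1/2) \le 2\exp(-C^* n \varepsilon_1^2/4)$ with $C^*$ free of $n$ and $S$. (The constant $C^*$ stays uniform because altering one observation changes $O(K_n)$ edges, each contributing $O(M/(K_n n))$ to $T_n$, so the bounded-difference increment is $O(M/n)$ irrespective of $K_n$.)

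The bias term is where the substantive work lies, and I expect it to be the main obstacle. Here I would invoke the bias analysis underlying~\cref{thm:conv_rate} (equivalently~\cite[Theorem 5.1 and Corollary 5.1]{deb2020kernel}): under Assumptions~\ref{assump:cont_dist}--\ref{assump:smooth}, with $K_n \le C_6 (\log n)^\gamma$ so that $K_n/n \to 0$, one has $|\E[T_n(S)] - T(S)| = O(\sqrt{\nu_n}) \to 0$. The crucial refinement is uniformity over $S$: Assumption~(c) forces these assumptions to hold with a common intrinsic-dimension bound $d$ and the same constants $\{C_i\}_{i=1}^5,\alpha,\beta_1,\beta_2$ for every $X_S$ with $|S|\le \kappa$, which makes the bias estimate uniform over such $S$. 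Hence there is a threshold $N_0$, depending only on $\delta, M$ and the graph/tail constants (through $\varepsilon_1$ and $\nu_n$), with $|\E[T_n(S)] - T(S)| \le \varepsilon_1/2$ for all $n \ge N_0$ and all $|S|\le\kappa$, so that $\p(|T_n(S) - T(S)| > \varepsilon_1) \le 2\exp(-C^* n \varepsilon_1^2/4)$ for $n\ge N_0$.

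Finally I would assemble the constants. For $n \ge N_0$ the union bound gives $\p(E^c) \le 2\kappa\, p^\kappa \exp(-C^*\varepsilon_1^2 n/4)$, which is of the claimed form with $L_2 = C^*\varepsilon_1^2/4$ and $L_1 = 2\kappa$; for the finitely many $n < N_0$ the inequality holds trivially after enlarging $L_1$ so that $L_1 p^\kappa e^{-L_2 n}\ge 1$, which is legitimate since $N_0$ depends only on the admissible parameters. Because $\varepsilon_1$ was chosen depending only on $\delta$ and $M$, $\kappa=\lfloor M/\delta+1\rfloor$, and $C^*, N_0$ depend only on $\alpha,\beta_1,\beta_2,\gamma,\{C_i\}_{i=1}^6,d,M,\delta$, the resulting $L_1,L_2$ have the required dependence. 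The concentration and union-bound steps are essentially bookkeeping; the delicate point is propagating the uniform-in-$S$ bias bound from the rate-of-convergence machinery.
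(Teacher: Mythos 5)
Your proposal is correct and follows essentially the same route as the paper's proof: the same bias--fluctuation decomposition of $|T_n(S)-T(S)|$, the same sub-Gaussian concentration bound (the paper cites \cite[Proposition 3.1]{deb2020kernel}, underlying \cref{prop:concen_moment}), the same uniform-over-$S$ bias bound drawn from the rate-of-convergence machinery of \cite{deb2020kernel} via Assumption (c), a union bound over subsets of size at most $\kappa$, and enlargement of $L_1$ to cover small $n$. The only differences are cosmetic (order of the union bound versus the per-subset analysis, and your explicit bounded-differences justification for the uniformity of $C^*$).
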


\begin{proof}
    By \cite[Equation (C.38)]{deb2020kernel} with the notation therein and \cite[Section 5.1]{deb2020kernel}, there exists $\xi_1,\xi_2,\xi_3> 0$ depending on $\alpha,\beta_1,\beta_2,\gamma,\{C_i\}_{i=1}^6,d,M$ such that for any $S$ of size $\leq \kappa$
    $$|\E T_n(S) - T(S)|\leq \xi_1\left(\varepsilon_n^{\beta_2} + \sqrt{\nu_{1,n}} \right)\leq \xi_1\frac{(\log n)^{\xi_2}}{n^{\xi_3}}.$$
    By \cite[Proposition 3.1]{deb2020kernel}, there exists $C^*$ depending on $C_2$ and $M$ such that for any $S$ of size $\leq \kappa$
    $$\p\left( |T_n(S) - \E T(S)|\geq t \right)\leq 2\exp\left(-C^* nt^2 \right).$$
    Hence,
    $$\p \left(|T_n(S) -T(S)| \geq \xi_1\frac{(\log n)^{\xi_2}}{n^{\xi_3}}+ t \right) \leq 2 e^{-C^* nt^2}.$$
    By a union bound:
    $$\p\left(\bigcup_{|S|\leq \kappa} \Big\{|T_n(S)-T(S)| \geq \xi_1\frac{(\log n)^{\xi_2}}{n^{\xi_3}}+t\Big\}\right)\leq 2 p^\kappa e^{-C^*nt^2}.$$
    Let $t=\frac{\varepsilon_1}{2}$. For large $n$, say $n\geq n_0$, $\xi_1\frac{(\log n)^{\xi_2}}{n^{\xi_3}}<\frac{\varepsilon_1}{2}$ so
    $$\p(E)\geq 1-2 p^\kappa e^{-L_2 n}.$$
    We can adjust the constant $2$ (again depending only on $\alpha,\beta_1,\beta_2,\gamma,\{C_i\}_{i=1}^6,d,M,\delta$) so that the above inequality also holds for small $n$.
\end{proof}

Combining the previous two lemmas we obtain the proof of Theorem \ref{thm:var_select}.
\qed

\section{Some Auxiliary Lemmas}\label{sec:techlem}
%\begin{lemma}
%    \label{multi_normal}
%    Suppose $(Y,X,Z)$ are jointly Gaussian. Then using the linear kernel, $\rho^2(Y,Z|X)=0$ implies $Y\indep Z|X$.
%\end{lemma}
%\begin{proof}
%    For multivariate Gaussian distribution, the conditional distribution $(Y,Z)|X$ is Gaussian and \red{does not depend on the value of $X$}.
%    Suppose $Y=(Y_1,\cdots, Y_m)^\top $ and
%    $$(Y_i,Z)^\top |X \sim N\left(\begin{pmatrix}
%        \mu_i\\
%        \mu_z
%    \end{pmatrix},\begin{pmatrix}
%        \sigma_i^2 & \beta_i^\top\\
%        \beta_i & \Sigma_z
%    \end{pmatrix} \right).$$
%    By Remark \ref{linear_kernel},
%    $$\begin{aligned}
%        \rho^2(Y,Z|X)&=\frac{\sum_{i=1}^m \E({\rm Var}[\E(Y_i|X,Z)|X])}{\E\left[\sum_{i=1}^m{\rm Var}(Y_i|X)\right]}\\
%        &=\frac{\sum_{i=1}^m \E({\rm Var}[\mu_i+\beta_i^\top \Sigma_z^{-1}(Z-\mu_z)|X])}{\E[\sum_{i=1}^m\sigma_i^2]}\\
%        &=\frac{\sum_{i=1}^m \beta_i^\top \Sigma_z^{-1}\Sigma_z \Sigma_z^{-1}\beta_i}{\sum_{i=1}^m\sigma_i^2}\\
%        &=\frac{\sum_{i=1}^m \beta_i^\top  \Sigma_z^{-1}\beta_i}{\sum_{i=1}^m\sigma_i^2}.\\
%    \end{aligned}$$
%    $\Sigma_{z}^{-1}$ is regarded as the generalized inverse or Moore-��Penrose inverse of $\Sigma_z$.
%    However, without loss of generality, we can suppose $\Sigma_z$ to be truly invertible, since otherwise we can perform a bijective linear transformation to redcue the dimensionality of $Z$.
%    Now $\Sigma_z^{-1}$ is positive-definite and $\rho^2(Y,Z|X)=0$ implies $\beta_i=0$ for all $i$, which further indicates $Y\indep Z|X$ by the property of multivariate Gaussian distribution.
%\end{proof}

\begin{lemma}
    \label{multi_normal}
    Suppose $(Y,X,Z)$ is jointly Gaussian. Then, when using the linear kernel, $\rho^2(Y,Z|X)=0$ implies $Y\indep Z|X$.
\end{lemma}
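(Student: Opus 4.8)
The plan is to begin from the linear-kernel expression of $\rho^2$ recorded in \eqref{eq:LinKernel}, extract the algebraic consequence of $\rho^2=0$, and then feed it into the Gaussian structure. Writing $Y=(Y^{(1)},\ldots,Y^{(d)})^\top$, \eqref{eq:LinKernel} gives $\rho^2 = \big(\sum_{i=1}^d \E(\Var[\E(Y^{(i)}|X,Z)|X])\big)\big/\big(\sum_{i=1}^d \E[\Var(Y^{(i)}|X)]\big)$. The denominator is strictly positive by Assumption~\ref{assump:nondegenrate}, so $\rho^2=0$ forces the numerator to vanish. Since each summand is the expectation of a nonnegative random variable (a conditional variance), every term $\E(\Var[\E(Y^{(i)}|X,Z)|X])$ must be $0$, hence $\Var[\E(Y^{(i)}|X,Z)|X]=0$ a.s., i.e.\ $\E(Y^{(i)}|X,Z)$ is a.s.\ $\sigma(X)$-measurable. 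Taking the conditional expectation given $X$ identifies it with $\E(Y^{(i)}|X)$, so in vector form $\E(Y|X,Z)=\E(Y|X)$ a.s.

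Next I would invoke joint Gaussianity of $(Y,X,Z)$. Conditioning on $X$, the pair $(Y,Z)$ is jointly Gaussian with a conditional covariance matrix that does not depend on the conditioning value, and the conditional mean of $Y$ is affine in $Z$: $\E(Y|X,Z) = \E(Y|X) + \Cov(Y,Z|X)\,\Var(Z|X)^{\dagger}\,(Z-\E(Z|X))$, where $\dagger$ denotes the Moore--Penrose inverse (used to absorb any degenerate directions of $Z$ given $X$, which are themselves $\sigma(X)$-measurable and therefore inert). The identity $\E(Y|X,Z)=\E(Y|X)$ a.s.\ then forces $\Cov(Y,Z|X)=0$. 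Finally, for the conditionally Gaussian vector $(Y,Z)$ given $X$, a vanishing cross-covariance is equivalent to conditional independence, whence $Y\indep Z|X$, as claimed.

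I expect the only genuinely delicate point to be the Gaussian conditional-mean step: one must justify the affine regression representation of $\E(Y|X,Z)$, the fact that the relevant conditional (cross-)covariances are constant in the value of $X$, and the clean handling of degeneracy of $Z|X$ so that the equality of conditional means truly forces $\Cov(Y,Z|X)=0$ rather than merely a relation on a proper subspace. An alternative and much shorter route, valid when $Y$ and $Z$ are scalars, is to observe that Proposition~\ref{prop:class_parcor}(d) already establishes $\rho^2(Y,Z|X)=\rho_{YZ\cdot X}^2$ under joint normality; then $\rho^2=0 \iff \rho_{YZ\cdot X}=0$, and the classical equivalence of vanishing partial correlation with conditional independence for Gaussian vectors completes the argument.
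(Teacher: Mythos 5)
Your proof is correct and follows essentially the same route as the paper's: both start from the linear-kernel formula \eqref{eq:LinKernel}, use the Gaussian affine representation $\E(Y|X,Z)=\E(Y|X)+\Cov(Y,Z|X)\,\Var(Z|X)^{\dagger}(Z-\E(Z|X))$ with constant conditional (cross-)covariances, conclude $\Cov(Y,Z|X)=0$ (handling degeneracy of $Z|X$ via the Moore--Penrose inverse, which the paper does by a bijective linear reduction of $Z$), and finish with the Gaussian equivalence of vanishing covariance and independence. The only cosmetic difference is ordering: the paper substitutes the regression formula directly to get $\rho^2=\bigl(\sum_i \beta_i^\top\Sigma_z^{-1}\beta_i\bigr)/\bigl(\sum_i\sigma_i^2\bigr)$ and reads off $\beta_i=0$, whereas you first extract $\E(Y|X,Z)=\E(Y|X)$ and then deduce the same conclusion.
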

\begin{proof}
    For multivariate Gaussian distribution, the conditional distribution $(Y,Z)|X$ is Gaussian with conditional variance and covariance not depending on the value of $X$.
    Suppose $Y=(Y^{(1)},\cdots, Y^{(d)})$ and
    $$(Y^{(i)},Z)^\top |X=x \sim N\left(\begin{pmatrix}
        \mu_i(x)\\
        \mu_z(x)
    \end{pmatrix},\begin{pmatrix}
        \sigma_i^2 & \beta_i^\top\\
        \beta_i & \Sigma_z
    \end{pmatrix} \right)$$
    where $\mu_i(\cdot)$ and $\mu_z(\cdot)$ are linear functions. By \eqref{eq:LinKernel},
    $$\begin{aligned}
        \rho^2(Y,Z|X)&=\frac{\sum_{i=1}^d \E({\rm Var}[\E(Y^{(i)}|X,Z)|X])}{\E\left[\sum_{i=1}^d{\rm Var}(Y^{(i)}|X)\right]}\\
        &=\frac{\sum_{i=1}^d \E({\rm Var}[\mu_i(X)+\beta_i^\top \Sigma_z^{-1}(Z-\mu_z(X))|X])}{\E[\sum_{i=1}^d\sigma_i^2]}\\
        &=\frac{\sum_{i=1}^d \beta_i^\top \Sigma_z^{-1}\Sigma_z \Sigma_z^{-1}\beta_i}{\sum_{i=1}^d\sigma_i^2}\\
        &=\frac{\sum_{i=1}^d \beta_i^\top  \Sigma_z^{-1}\beta_i}{\sum_{i=1}^d\sigma_i^2}.\\
    \end{aligned}$$
Here $\Sigma_{z}^{-1}$ is regarded as the generalized inverse or Moore-Penrose inverse of $\Sigma_z$.
    We can suppose $Z$ is non-degenerate, otherwise $Y\indep Z|X$ is trivial.
    Without loss of generality, we may further suppose $\Sigma_z$ to be invertible, since otherwise we can perform a bijective linear transformation to reduce the dimensionality of $Z$.
    Now $\Sigma_z^{-1}$ is positive-definite and $\rho^2(Y,Z|X)=0$ implies $\beta_i=0$ for all $i$, which further indicates $Y\indep Z|X$ by the property of multivariate Gaussian distribution.
\end{proof}

\begin{lemma}
    \label{trace_class}
    Suppose $\h_\X$ is separable and $\E [k_\X (X,X)]<\infty$. Then $C_X$ is trace-class.
\end{lemma}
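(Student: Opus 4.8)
The plan is to show that the trace $\sum_{i} \langle C_X e_i, e_i\rangle_{\h_\X}$ is finite for some (equivalently, any) orthonormal basis $\{e_i\}_{i\ge 1}$ of $\h_\X$, which is exactly the defining condition for $C_X$ to be trace-class (recall $C_X$ is nonnegative and self-adjoint). The natural first step is to compute each diagonal entry $\langle C_X e_i, e_i\rangle_{\h_\X}$ using the defining property of the covariance operator, namely \eqref{eq:Cross-Cov} with $f = g = e_i$, which gives
\begin{equation*}
\langle e_i, C_X e_i\rangle_{\h_\X} = \Var(e_i(X)) = \E[e_i(X)^2] - (\E[e_i(X)])^2 \le \E[e_i(X)^2].
\end{equation*}
Thus it suffices to bound $\sum_i \E[e_i(X)^2]$ from above.

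Next I would exploit the reproducing property to rewrite each term pointwise: for fixed $x \in \X$ we have $e_i(X) = \langle e_i, k_\X(X,\cdot)\rangle_{\h_\X}$, so by Parseval's identity (valid since $\{e_i\}$ is an orthonormal basis of the separable Hilbert space $\h_\X$),
\begin{equation*}
\sum_{i} e_i(X)^2 = \sum_i \langle e_i, k_\X(X,\cdot)\rangle_{\h_\X}^2 = \|k_\X(X,\cdot)\|_{\h_\X}^2 = k_\X(X,X).
\end{equation*}
Taking expectations and interchanging the sum and the expectation (justified by the monotone convergence theorem, since each summand $e_i(X)^2$ is nonnegative) yields
\begin{equation*}
\sum_i \E[e_i(X)^2] = \E\Big[\sum_i e_i(X)^2\Big] = \E[k_\X(X,X)] < \infty,
\end{equation*}
where the final inequality is precisely the hypothesis. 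Combining with the bound from the first step gives $\sum_i \langle C_X e_i, e_i\rangle_{\h_\X} \le \E[k_\X(X,X)] < \infty$, so $C_X$ is trace-class.

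I do not anticipate a serious obstacle here; the argument is short and relies only on the reproducing property, Parseval's identity, and monotone convergence. The one point requiring a little care is the interchange of summation and expectation, which is clean because every term is nonnegative (so Tonelli/monotone convergence applies without integrability preconditions) — this is where separability of $\h_\X$ is essential, as it guarantees a countable orthonormal basis along which Parseval holds. I would also remark at the outset that $C_X$ is nonnegative and self-adjoint (as noted after \cref{def:CC}), so that ``trace-class'' is well-defined via the diagonal sum and is independent of the chosen basis; this legitimizes computing the trace in the convenient basis $\{e_i\}$.
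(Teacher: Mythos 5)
Your proof is correct and follows essentially the same route as the paper's: bound the diagonal entries $\langle C_X e_i, e_i\rangle_{\h_\X} = \Var(e_i(X)) \le \E[e_i(X)^2]$, then apply the reproducing property and Parseval's identity to sum these to $\E[\|k_\X(X,\cdot)\|_{\h_\X}^2] = \E[k_\X(X,X)] < \infty$. Your explicit justification of the sum--expectation interchange via Tonelli/monotone convergence is a small refinement the paper leaves implicit, but the argument is otherwise identical.
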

\begin{proof}
    Let $\{e_i\}_{i\ge 1}$ be an orthonormal basis of $\h_\X$. Then,
    $$\begin{aligned}
        \sum_i \langle C_X e_i, e_i \rangle_{\h_\X} &= \sum_i {\rm Var}(e_i(X))\\
        &\leq \sum_i \E [e_i(X)^2]\\
        &=\sum_i \E\left[\langle e_i,k_\X (X,\cdot)\rangle_{\h_\X}^2 \right]\\
        &=\E [\|k_\X (X,\cdot)\|^2_{\h_\X}] <\infty.
    \end{aligned}
    $$
    The last equality follows from Parseval's identity.
\end{proof}
\begin{lemma}\label{finite_support}
    Suppose all other conditions in Lemma \ref{center_CME} hold except for Assumption \ref{assump:CME}. If the support of $X$ is finite and $k_\X$ is characteristic, then Assumption \ref{assump:CME} holds.
\end{lemma}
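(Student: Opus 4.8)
\textbf{Proof proposal for Lemma~\ref{finite_support}.}

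The plan is to verify Assumption~\ref{assump:CME} directly when $X$ takes finitely many values. Write $\mathrm{supp}(X) = \{x_1, \ldots, x_m\} \subset \X$, and let $p_\ell := \p(X = x_\ell) > 0$ for $\ell = 1, \ldots, m$. The key observation is that when the support is finite, conditional expectations given $X$ are simply functions defined on the finite set $\{x_1, \ldots, x_m\}$, so Assumption~\ref{assump:CME} reduces to the question of whether every such finitely-supported function can be realized (up to an additive constant) by some $h \in \h_\X$ evaluated at the $x_\ell$'s. First I would fix an arbitrary $g \in \h_\Y$ and set $c_\ell := \E[g(Y) \mid X = x_\ell]$ for each $\ell$; this is a finite list of real numbers, and I must produce $h \in \h_\X$ with $h(x_\ell) - c_\ell$ constant in $\ell$ (equivalently, $h(x_\ell) = c_\ell$ for all $\ell$, since an additive constant is allowed).

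The heart of the argument is to show that the evaluation map $h \mapsto (h(x_1), \ldots, h(x_m))$ from $\h_\X$ to $\R^m$ is \emph{surjective}, which is exactly where the characteristic property of $k_\X$ enters. By the reproducing property, $h(x_\ell) = \langle h, k_\X(x_\ell, \cdot)\rangle_{\h_\X}$, so the range of the evaluation map is all of $\R^m$ if and only if the vectors $\{k_\X(x_\ell, \cdot)\}_{\ell=1}^m$ are linearly independent in $\h_\X$. Their Gram matrix is precisely the $m \times m$ kernel matrix $G$ with $G_{\ell \ell'} = k_\X(x_\ell, x_{\ell'})$, so linear independence is equivalent to $G$ being nonsingular. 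The plan is to deduce nonsingularity of $G$ from $k_\X$ being characteristic: if $G$ were singular, there would be a nonzero $a \in \R^m$ with $\sum_\ell a_\ell k_\X(x_\ell, \cdot) = 0$. Splitting $a$ into positive and negative parts and normalizing, I would construct two distinct finitely-supported probability measures $Q_1, Q_2$ on $\{x_1, \ldots, x_m\}$ whose mean embeddings $\mu_{Q_1}, \mu_{Q_2}$ coincide, contradicting injectivity of the mean embedding (Definition~\ref{defn:Char}). Concretely, if $\sum_\ell a_\ell k_\X(x_\ell, \cdot) = 0$ with $a \neq 0$, then writing $a = a^+ - a^-$ and rescaling each side to sum to one yields $Q_1 \neq Q_2$ with $\mu_{Q_1} = \mu_{Q_2}$, which is impossible for a characteristic kernel.

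Once $G$ is known to be nonsingular, the evaluation map is surjective onto $\R^m$, so I can solve for the unique coefficient vector $b \in \R^m$ with $G b = (c_1, \ldots, c_m)^\top$ and set $h := \sum_\ell b_\ell k_\X(x_\ell, \cdot) \in \h_\X$; then $h(x_\ell) = \sum_{\ell'} b_{\ell'} k_\X(x_\ell, x_{\ell'}) = c_\ell$ for every $\ell$, so $\E[g(Y) \mid X = \cdot] - h(\cdot)$ vanishes $P_X$-a.e.\ (hence is trivially constant). This establishes Assumption~\ref{assump:CME}. The main obstacle — and really the only nonroutine point — is the implication from the characteristic property to nonsingularity of the finite kernel matrix $G$; the rest is linear algebra on a finite-dimensional space combined with the reproducing property. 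I would take care to confirm that a characteristic kernel indeed separates finitely-supported measures, which follows immediately from injectivity of the mean embedding on the class $\mathcal{P}$ of all Borel probability measures on $\X$, since finitely-supported measures form a subclass of $\mathcal{P}$.
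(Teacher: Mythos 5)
Your strategy (a direct finite-dimensional verification of \cref{assump:CME}) is different from the paper's proof, which never touches Gram matrices: the paper shows that finite support forces $\ran C_X$ to be finite-dimensional, uses Baker's factorization $C_{XY}=C_X^{1/2}V_{XY}C_Y^{1/2}$ to conclude $\ran C_{XY}\subset \ran C_X$, and then invokes the equivalences of Klebanov et al.\ together with the characteristic property. However, your argument has a genuine gap at its central claim: it is \emph{not} true that a characteristic kernel has nonsingular Gram matrices at distinct points. Being characteristic (\cref{defn:Char}) is injectivity of the embedding on \emph{probability} measures, which is equivalent to injectivity on signed measures of \emph{total mass zero}; it says nothing about a relation $\sum_\ell a_\ell k_\X(x_\ell,\cdot)=0$ in which $\sum_\ell a_\ell\neq 0$. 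Your own construction exposes this: if $\sum_\ell a_\ell\neq 0$, then after rescaling $a^+$ and $a^-$ separately into probability vectors the two embeddings are no longer equal --- they differ by the factor $\big(\sum_\ell a^-_\ell\big)\big/\big(\sum_\ell a^+_\ell\big)$ --- and if $a^-=0$ there is no second measure to form at all. A concrete counterexample is the distance kernel \eqref{eq:Dist-kernel} with $\alpha=1$, which the paper lists as characteristic in \cref{rem:excharker}: there $k_\X(0,\cdot)\equiv 0$, so any support set containing the origin has a singular Gram matrix; indeed every $h$ in that RKHS satisfies $h(0)=\langle h,k_\X(0,\cdot)\rangle_{\h_\X}=0$, so your exact interpolation problem $h(x_\ell)=c_\ell$ is unsolvable. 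A second, related misstep is the parenthetical ``equivalently, $h(x_\ell)=c_\ell$'': dropping the additive constant is not an equivalent reformulation, because constant functions need not belong to $\h_\X$ (the paper itself notes in \cref{rk:uncenter} that the Gaussian RKHS contains no nonzero constants).

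The two defects cancel if repaired together, and the constant you discarded is exactly what saves the argument. What the characteristic property \emph{does} give --- by your two-measure construction, which is valid precisely when $\sum_\ell a_\ell=0$, since then $a^+\neq 0$, $a^-\neq 0$ and both have the same total mass --- is that $Ga=0$ and $\mathbf{1}^\top a=0$ force $a=0$, i.e.\ $\ker G\cap \mathbf{1}^{\perp}=\{0\}$. Since $G$ is symmetric, this is equivalent to $\mathrm{col}(G)+\mathrm{span}\{\mathbf{1}\}=\R^m$, so for any targets $(c_1,\ldots,c_m)$ there exist $b\in\R^m$ and $C\in\R$ with $Gb=c-C\mathbf{1}$; setting $h:=\sum_\ell b_\ell k_\X(x_\ell,\cdot)\in\h_\X$ gives $h(x_\ell)=c_\ell-C$ for all $\ell$, so $\E[g(Y)|X=\cdot]-h(\cdot)$ equals the constant $C$ on the support of $X$, which is exactly \cref{assump:CME}. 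With that correction your route becomes a legitimate, self-contained, and more elementary alternative to the paper's operator-theoretic proof; as written, however, the nonsingularity claim is false and the proof does not go through.
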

\begin{proof}
    Suppose that $X$ is supported on $\{x_i\}_{i=1}^m$ and $X$ has probability mass function $\{p(x_i)\}_{i=1}^m$. Then $C_{X}f$ has the following explicit expression:
    $$C_{X}f=\sum_{i=1}^m p(x_i)f(x_i)k_\X(x_i,\cdot) -  \E [f(X)] \left[\sum_{i=1}^m p(x_i)k_\X(x_i,\cdot)\right].$$
    Hence ${\rm ran}\ C_{X}\subset {\rm span}\{k_\X(x_i,\cdot)\}_{i=1}^m$ is finite-dimensional.
    By the decomposition $C_{XY}=C_X^{1/2}V_{XY}C_Y^{1/2}$ \cite{baker1973joint}, we have ${\rm ran}\ C_{XY}\subset {\rm ran}\ C_X^{1/2}={\rm ran}\ C_X$.
    By \cite[Theorem 4.1]{klebanov2019rigorous}, ${\rm ran}\ C_{XY}\subset {\rm ran}\ C_X$ is equivalent to Assumption C in that paper.
    Together with $k_\X$ being characteristic, \cite[Theorem 4.3]{klebanov2019rigorous} implies that the centered CME formula~\eqref{eq:CME} holds.
\end{proof}

\subsection{Proof of Lemma~\ref{same_limit}}\label{pf:same_limit}
    Since 
    $$\begin{aligned}
        \|V_i^{(n)}\|^2 &\leq \|U_i^{(n)}\|^2 + 2\|U_i^{(n)}\|\cdot \|U_i^{(n)}-V_i^{(n)}\| + \|U_i^{(n)}-V_i^{(n)}\|^2,\\
        \|V_i^{(n)}\|^2 &\geq \|U_i^{(n)}\|^2 - 2\|U_i^{(n)}\|\cdot \|U_i^{(n)}-V_i^{(n)}\| + \|U_i^{(n)}-V_i^{(n)}\|^2,\\
    \end{aligned}$$
    it suffices to show that $\frac{1}{n}\sum_{i=1}^n 2\|U_i^{(n)}\|\cdot\|U_i^{(n)}-V_i^{(n)}\| \overset{p}{\to} 0$.
    Let $\delta>0$.
    $$\frac{1}{n}\sum_{i=1}^n 2\|U_i^{(n)}\|\cdot\|U_i^{(n)}-V_i^{(n)}\| \leq \frac{1}{n}\sum_{i=1}^n \left[\delta \|U_i^{(n)}\|^2 + \frac{1}{\delta}\|U_i^{(n)}-V_i^{(n)}\|^2 \right]\overset{p}{\to}\delta U.$$
    Since $\delta>0$ is arbitrary, this concludes the proof. \qed

%\bibliographystyle{chicago}
%\bibliography{OT}

\end{document}